\RequirePackage{fix-cm}
\documentclass[a4paper]{article}
\usepackage{graphicx}
\usepackage[margin=1in]{geometry}

\usepackage[utf8]{inputenc}
\usepackage{color}
\usepackage{amsmath}
\usepackage{amsthm}
\usepackage{amssymb}
\usepackage{amsopn}
\usepackage{caption}
\usepackage{subcaption}
\usepackage{gensymb}
\usepackage[inline]{enumitem}
\usepackage{todonotes}
\usepackage{hyperref}
\usepackage{sidecap}
\usepackage{wrapfig}
\usepackage{geometry}
\usepackage{paralist}
\usepackage{thmtools}
\usepackage{thm-restate}
\usepackage{amsmath}
\usepackage{amsthm}

\newtheorem{observation}{Observation}

\newtheorem{definition}{Definition}
\newtheorem{lemma}{Lemma}
\newtheorem{corollary}{Corollary}
\newtheorem{proposition}{Proposition}
\declaretheorem[name=Theorem]{thm}

\newcommand{\embedding}{\mathcal{E}}

\let\doendproof\endproof
\renewcommand\endproof{\qed\doendproof}

\newcommand{\flip}[1]{\mathrm{flip}(#1)}
\newcommand{\mirror}[1]{\mathrm{mirror}(#1)}

\newcommand{\T}{\mathrm{T}}
\newcommand{\B}{\mathrm{B}}

\newcommand{\widebar}[1]{\mkern 
  1.5mu\overline{\mkern-1.5mu#1\mkern-1.5mu}\mkern 1.5mu}
\newcommand{\subpath}[2]{\ensuremath{#1[#2]}}

\newcommand{\join}{+}
\newcommand{\reverse}[1]{\ensuremath{\widebar{#1}}}
\newcommand{\NP}{\ensuremath{\mathcal{NP}}}

\renewcommand{\O}{\ensuremath{\mathcal{O}}}

\DeclareMathOperator{\rot}{rot}
\DeclareMathOperator{\dir}{dir}

\usepackage{etoolbox}
\newcounter{claimcount}
\setcounter{claimcount}{0}
\newenvironment{Claim}{\refstepcounter{claimcount}\noindent\textit{Claim \arabic{claimcount}:}}{}
\AtBeginEnvironment{proof}{\setcounter{claimcount}{0}}
\begin{document}

\title{A Topology-Shape-Metrics Framework for\\ Ortho-Radial
  Graph~Drawing\thanks{This manuscript is based on the two conference papers (1) \emph{L. Barth, B. Niedermann, I. Rutter, and M. Wolf. Towards a topology-shape-metrics framework for ortho-radial drawings. In Leibniz International Proceedings in Informatics. Proc. 33rd Annual ACM Symposium on Computational Geometry (SoCG '17), pages 14:1-14:16. 2017.} and (2) \emph{ 
 B. Niedermann, I. Rutter, and M. Wolf. Efficient algorithms for ortho-radial graph drawing. In volume 129 of Leibniz International Proceedings in Informatics. Proc. 35th Annual ACM Symposium on Computational Geometry (SoCG '19). Schloss Dagstuhl-Leibniz-Zentrum fuer Informatik, 2019.}}
}

\author{Lukas Barth\thanks{Karlsruhe Institute of Technology,  lukas.barth@kit.edu} \and Benjamin Niedermann\thanks{University of Bonn, niedermann@uni-bonn.de} \and Ignaz Rutter\thanks{University of Passau, rutter@fim.uni-passau.de} \and Matthias 
  Wolf\thanks{Karlsruhe Institute of Technology, matthias.wolf@kit.edu}
}

\date{}

\maketitle

\begin{abstract}
  Orthogonal drawings, i.e., embeddings of graphs into grids, are a
  classic topic in Graph Drawing.  Often the goal is to find a drawing
  that minimizes the number of bends on the edges.  A key ingredient
  for bend minimization algorithms is the existence of an
  \emph{orthogonal representation} that allows to describe such
  drawings purely combinatorially by only listing the angles between
  the edges around each vertex and the directions of bends on the
  edges, but neglecting any kind of geometric information such as
  vertex coordinates or edge lengths.

  In this work, we generalize this idea to \emph{ortho-radial
    representations} of \emph{ortho-radial drawings}, which are
  embeddings into an ortho-radial grid, whose gridlines are concentric
  circles around the origin and straight-line spokes emanating from
  the origin but excluding the origin itself.  Unlike the orthogonal
  case, there exist ortho-radial representations that do not admit a
  corresponding drawing, for example so-called strictly monotone
  cycles.  An
  ortho-radial drawing is called \emph{valid} if it does not contain a
  strictly monotone cycle.  Our first main result is that an
  ortho-radial
  representation admits a corresponding drawing if and only if it is
  valid.  Previously such a characterization was only known for
  ortho-radial drawings of paths, cycles, and theta
  graphs~\cite{hht-orthoradial-09}, and in the special case of
  rectangular drawings of cubic graphs~\cite{hhmt-rrdcp-10}, where the
  contour of each face is required to be a rectangle.  Additionally,
  we give a quadratic-time algorithm that tests for a given
  ortho-radial representation whether it is valid, and we show how to
  draw a valid ortho-radial representation in the same running time.

  Altogether, this reduces the problem of computing a minimum-bend
  ortho-radial drawing to the task of computing a valid ortho-radial
  representation with the minimum number of bends, and hence
  establishes an ortho-radial analogue of the topology-shape-metrics
  framework for planar orthogonal drawings by Tamassia~\cite{t-emn-87}.
\end{abstract}

\section{Introduction}
\label{sec:introduction}
Grid drawings of graphs embed graphs into grids such that vertices map
to grid points and edges map to internally disjoint curves on the grid
lines that connect their endpoints.  Orthogonal grids, whose grid
lines are horizontal and vertical lines, are popular and widely used
in graph drawing. Among other applications, orthogonal graph drawings are
used in VLSI design (e.g.,~\cite{Valiant1981,Bhatt1984}), diagrams
(e.g.,~\cite{Batini1986,Gutwenger2003,Eiglsperger2004,Wybrow2010}),
and network layouts (e.g.,~\cite{Ruegg2014,Kieffer2016}).  They have
been extensively studied with respect to their construction and
properties
(e.g.,~\cite{Tamassia1991,Biedl1996,Biedl1998,Papakostas1998,Alam2017}).
Moreover, they have been generalized to arbitrary planar graphs with
degree higher than four
(e.g.,~\cite{Tamassia1988,Fossmeier1996,Biedl1997}).

\begin{figure}[t]
\begin{minipage}[b]{0.52\textwidth}
  \centering
  \begin{subfigure}[b]{0.48\textwidth}
    \centering
    \includegraphics{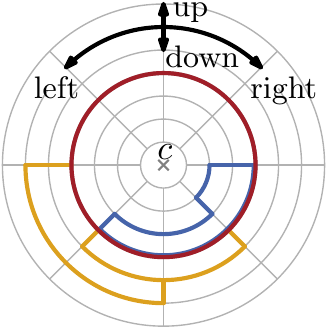}
    \caption{Ortho-radial grid.}
    \label{fig:pre:drawing-grid}
  \end{subfigure}
  \hfill
  \begin{subfigure}[b]{0.44\textwidth}
    \centering
    \includegraphics{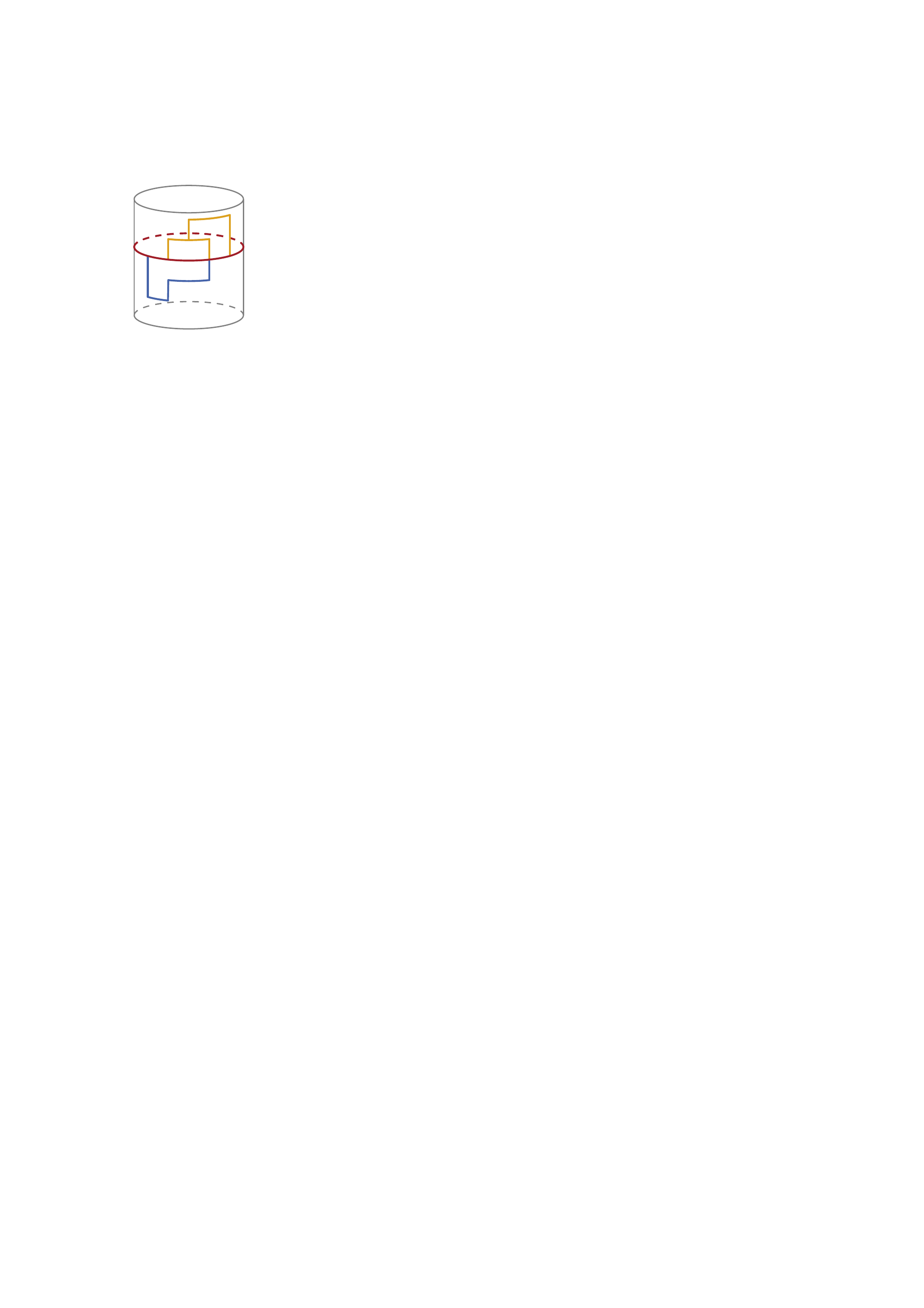}
    \caption{Cylinder drawing.}
    \label{fig:pre:drawing-cylinder}
  \end{subfigure}
  \hfill
  \caption{An ortho-radial drawing of a graph on a grid
    \protect(\subref{fig:pre:drawing-grid}) and its equivalent interpretation
    as an orthogonal drawing on a cylinder
    \protect(\subref{fig:pre:drawing-cylinder}).}
  \label{fig:pre:drawing}
\end{minipage}
\hfill
 \begin{minipage}[b]{0.46\textwidth}
   \centering
   \includegraphics[width=.9\textwidth]{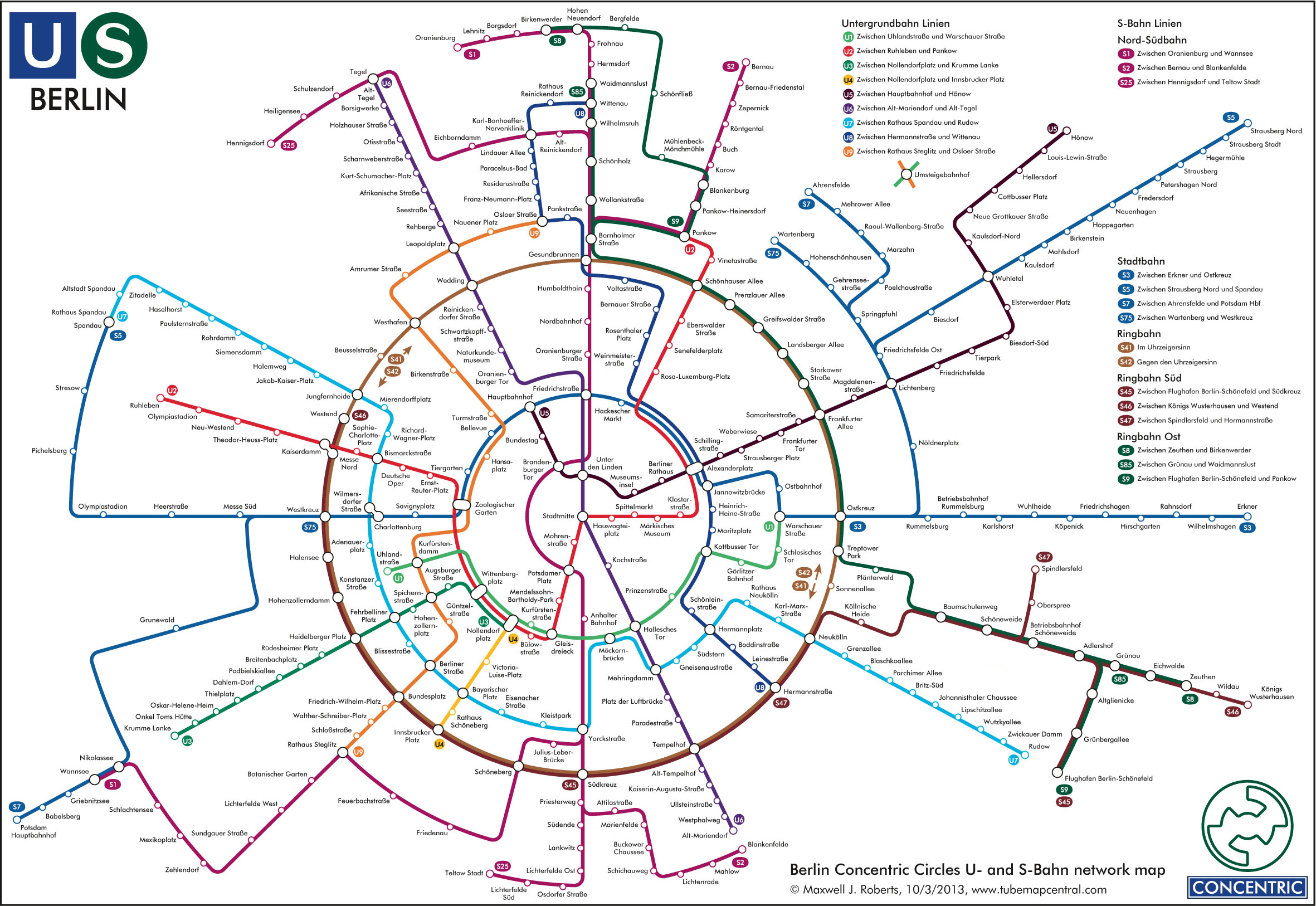}
   
   \caption{Metro map of Berlin using an ortho-radial 
   layout\protect\footnotemark{}.
   Image
     copyright by Maxwell J.~Roberts. Reproduced with permission.}
   \label{fig:intro:berlin}
 \end{minipage} 
\end{figure}
Ortho-radial drawings are a generalization
of orthogonal drawings to grids that are formed by concentric circles
around the origin and straight-line spokes from the origin, but
excluding the origin.  Equivalently, they can be viewed as graphs
drawn in an orthogonal fashion on the surface of a standing cylinder,
see Figure~\ref{fig:pre:drawing}, or a sphere without the
poles. Hence, they naturally bring orthogonal graph drawings to the
third dimension.

Among other applications, ortho-radial drawings are used to visualize
network maps; see Figure~\ref{fig:intro:berlin}.  Especially, for
metro systems of metropolitan areas they are highly suitable. Their
inherent structure emphasizes the city center, the metro lines that
run in circles as well as the metro lines that lead to suburban
areas. While the automatic creation of metro maps has been extensively
studied for other layout
styles~(e.g.,~\cite{Hong2006,Noellenburg2011,Wang2011,Fink2013}), this
is a new and wide research field for ortho-radial drawings~\cite{finketal2014concentric}.

\footnotetext{Note that ortho-radial drawings exclude the center of
  the grid, which is slightly different to the concentric circles maps
  by Maxwell J.\ Roberts.}

Adapting existing techniques and objectives from orthogonal graph
drawings is a promising step to open up that field.  One main
objective in orthogonal graph drawing is to minimize the number of
bends on the edges.  The key ingredient of a large fraction of the algorithmic
work on this problem is the \emph{orthogonal representation},
introduced by Tamassia~\cite{t-emn-87}, which describes orthogonal
drawings by listing
\begin{inparaenum}[(i)]
\item the angles formed by consecutive edges around each vertex and
\item the directions of bends along the edges.
\end{inparaenum}
Such a representation is \emph{valid} if
\begin{inparaenum}[(I)]
\item the angles around each vertex sum to $360\degree$, and
\item the sum of the angles around each face with $k$ vertices is $(k-2)\cdot 
180\degree$ for internal faces and $(k+2)\cdot 180\degree$ for the outer face.
\end{inparaenum}
The necessity of the first condition is obvious and the necessity of
the latter follows from the sum of inner/outer angles of any polygon
with $k$ corners.  It is thus clear that any orthogonal drawing yields
a valid orthogonal representation, and Tamassia~\cite{t-emn-87} showed
that the converse holds true as well; for a valid orthogonal
representation there exists a corresponding orthogonal drawing that
realizes this representation.  Moreover, the proof is constructive and
allows the efficient construction of such a drawing, a process that is
referred to as \emph{compaction}.

Altogether, this enables a three-step approach for computing orthogonal
drawings, the so-called \emph{Topology-Shape-Metrics Framework}, which
works as follows.  First, fix a \emph{topology}, i.e., a combinatorial
embedding of the graph in the plane (possibly planarizing it if it is
non-planar); second, determine the \emph{shape} of the drawing by
constructing a valid orthogonal representation with few bends; and
finally, compactify the orthogonal representation by assigning
suitable vertex coordinates and edge lengths (\emph{metrics}).  As
mentioned before, this reduces the problem of computing an orthogonal
drawing of a planar graph with a fixed embedding to the purely
combinatorial problem of finding a valid orthogonal representation,
preferably with few bends.  The task of actually creating a
corresponding drawing in polynomial time is then taken over by the
framework.  It is this approach that is at the heart of a large body
of literature on bend minimization algorithms for orthogonal drawings
(e.g.,
\cite{Bertolazzi2000,Eiglsperger2003,Cornelsen2012,Felsner2014,%
  Blasius2016,Blasius2016b,Chang2017}).

\paragraph{Contribution and Outline.}

In this paper we establish an analogous drawing framework
for ortho-radial drawings.  To this end, we introduce so-called
ortho-radial representations, which give a combinatorial description
of ortho-radial drawings, and therefore can be used to substitute
orthogonal representations in the Topology-Shape-Metrics Framework.

More precisely, our contributions are as follows.  We show that a
natural generalization of the validity conditions (I) and (II) above
is not sufficient, and introduce a third, less local condition that
excludes so-called \emph{strictly monotone cycles}, which do not admit
an ortho-radial drawing.  We prove that these three conditions together
fully characterize ortho-radial drawings.  Before that,
characterizations for bend-free ortho-radial drawings were only known
for paths, cycles and theta graphs~\cite{hht-orthoradial-09}. Further,
for the special case that each internal face is a rectangle, a
characterization for cubic graphs was known~\cite{hhmt-rrdcp-10}.

On the algorithmic side, we show that testing whether a given
ortho-radial representation is drawable can be done in $\O(n^2)$ time,
and a corresponding drawing can be obtained in the same running time.
While this does not yet directly allow us to compute ortho-radial
drawings with few bends, our result paves the way for a purely
combinatorial treatment of bend minimization in ortho-radial drawings,
thus enabling the same type of tools that have proven highly
successful in minimizing bends in orthogonal drawings. Recently,
Niedermann and Rutter~\cite{Niedermann2020} presented such a tool based on an integer
linear programming formulation showing that the topology-shape-metrics
framework for ortho-radial drawings is capable of handling real-world
networks such as metro systems.

We formally introduce ortho-radial drawings and ortho-radial
representations in Section~\ref{sec:representations-drawings}, where
we also establish basic properties that will be used throughout this
paper.  Section~\ref{sec:labeling-properties} introduces basic
properties of labelings that are used to describe ortho-radial
representations.  In Sections~\ref{sec:characterization-rect}
and~\ref{sec:rectangulation} we prove that ortho-radial
representations are drawable if and only if they are valid.  In
Section~\ref{sec:finding_monotone_cycles} we give a validity test for
ortho-radial representations that runs in $\O(n^{2})$ time.
Afterwards, in Section~\ref{sec:efficient_rectangulation}, we revisit
the rectangulation procedure from~Section~\ref{sec:rectangulation} and
show that using the techniques from
Section~\ref{sec:finding_monotone_cycles} it can be implemented to run
in $\O(n^{2})$ time, improving over a naive application which would
yield running time $\O(n^{4})$.  This enables a purely combinatorial
treatment of ortho-radial drawings.
In Section~\ref{sec:bend-minimization} we show that computing bend-minimal
ortho-radial representations is $\NP$-complete regardless of whether the
embedding of the graph is fixed or not.  We conclude with a summary and
some open questions in Section~\ref{sec:conclusion}.

\section{Preliminaries}
\label{sec:preliminaries}
 Let $G$ be a plane graph with combinatorial embedding
$\embedding$ and outer face $f_o$. The embedding $\embedding$ fixes
for each vertex $v$ of $G$ the counterclockwise order of the edges
incident to $v$ around the vertex $v$.  A \emph{path} in $G$ may
contain vertices multiple times, and a \emph{cycle}~$C$ may contain
vertices multiple times but may not cross itself in the sense that the
pairs of edges along which $C$ enters and leaves a vertex $v$ do not
alternate in the cyclic order of edges around $v$ in the
embedding~$\embedding$. We consider all paths and cycles to be
directed. We represent a path $P$ as the sequence $v_1\dots v_k$ of
its vertices in the order as they appear on $P$. Similarly, we
represent a cycle as the sequence $v_1\dots v_k$ of its vertices in
the order as they appear on $C$, where $v_1$ is arbitrarily chosen.
For any path $P=v_1\dots v_k$ its \emph{reverse} is
$\reverse{P}=v_k\dots v_1$. The concatenation of two paths $P_1$ and
$P_2$ is written as $P_1+P_2$. For two edges $uv$ and $wx$ on a path
$P$ the \emph{subpath} from $uv$ to $wx$ is the unique path on $P$
that starts with $uv$ and ends with $wx$, and we denote it by
$P[uv, wx]$.  If $P$ contains $u$ (or $x$) only once, we may write $u$
instead of $uv$ (or $x$ instead of $wx$). In particular, if $P$ is
simple $P[u, x]$ denotes the subpath of $P$ from $u$ to $x$.  For a
cycle~$C$, we similarly denote its reverse by $\reverse{C}$, and for
edges $uv$ and $wx$ on $C$ the subpath of $C$ from $uv$ to $wx$ in the
direction of $C$ is denoted by $C[uv, wx]$.

Moreover, a path or cycle is \emph{simple} if it contains all vertices
at most once.  A \emph{facial walk} $C$ of a face~$f$ is a cycle in
$G$ that describes the boundary of $f$, i.e., the cycle $C$ consists
of edges of $f$ and for any subpath $uvw$ of $C$ the edge $uv$
precedes $vw$ in the cyclic order of edges around $v$ that is defined
by $\embedding$.  Any simple cycle $C$ separates two sets of
faces. One of these sets contains the outer face~$f_o$, and we call
these faces together with the vertices and edges incident to them the
\emph{exterior} of $C$.  Conversely the faces of the other set and
their incident vertices and edges form the \emph{interior} of
$C$. Note that $C$ belongs to both its interior and its exterior.
Unless specified explicitly, a simple cycle~$C$ is directed such that
its interior lies to the right of~$C$. Finally, a path $P$ \emph{respects} a
cycle $C$ if $P$ lies in the exterior of $C$.

\section{Ortho-Radial Drawings and Representations}
\label{sec:representations-drawings}

Let $G=(V,E)$ be a planar, connected 4-graph with $n$ vertices, where
a graph is a 4-graph if it has maximum degree four. An
\emph{ortho-radial drawing}~$\Delta$ of $G$ is a plane drawing on an
ortho-radial grid $\mathcal G$ such that each vertex of $G$ is a grid
point of $\mathcal G$ and each edge of $G$ is a curve on $\mathcal
G$. We observe that in any ortho-radial drawing there is an unbounded
face~$f_o$ and a face~$f_c$ that contains the center of the grid; we
call the former the \emph{outer face} and the latter the \emph{central
  face}; in our figures we mark the central face using a small
``x''. All other faces are \emph{regular}. We remark that $f_c$ and
$f_o$ are not necessarily distinct. We further distinguish
two types of simple cycles. If the central face lies in the interior
of a simple cycle, the cycle is \emph{essential} and otherwise
\emph{non-essential}.

In this paper, we assume that we are given $G$, a fixed combinatorial
embedding $\mathcal E$ of $G$ and two (not necessarily distinct) faces
$f_c$ and $f_o$ of $\mathcal E$.  We seek an ortho-radial drawing
$\Delta$ of $G$ such that the combinatorial embedding of $\Delta$ is
$\embedding$, the face $f_c$ is the central face of $\Delta$ and $f_o$
is the outer face of $\Delta$. We call the tuple
$I=(G,\mathcal E,f_c,f_o)$ an \emph{instance} of ortho-radial graph
drawing and $\Delta$ a drawing of $I$.

We observe that the definition of ortho-radial drawings allows edges
to have bends, i.e., an edge may consist of a sequence of
straight-line segments and circular arcs. In this paper, we focus on
ortho-radial drawings without bends; we call such drawings
\emph{bend-free}. Hence, each edge is either part of a radial ray or
of a concentric circle of $\mathcal G$.  This is not a restriction as
any ortho-radial drawing can be turned into a bend-free drawing by
replacing bends with subdivision vertices.

In a bend-free ortho-radial drawing of $G$ each edge has a
\emph{geometric direction} in the sense that is drawn either clockwise,
counterclockwise, towards the center or away from the center.  Hence,
using the metaphor of a cylinder, the edges point \emph{right},
\emph{left}, \emph{down} or \emph{up}, respectively. Moreover,
\emph{horizontal edges} point left or right, while \emph{vertical
  edges} point up or down; see Figure~\ref{fig:pre:drawing}.

\begin{figure}[t]
\centering
  \includegraphics{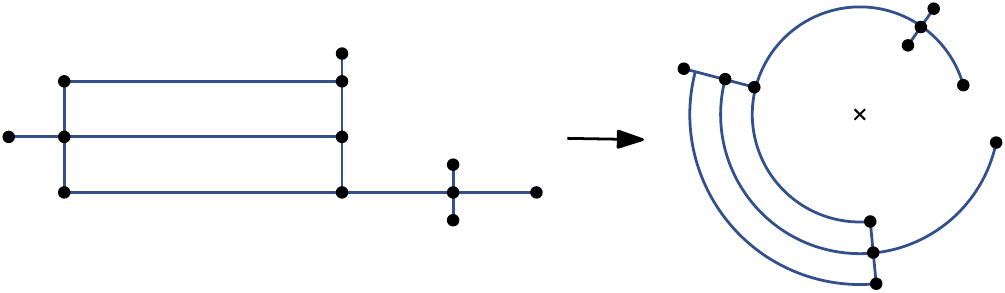}
  \caption{The orthogonal drawing is transformed into an ortho-radial drawing by \emph{bending} the horizontal edges into concentric circular arcs, while vertical edges become segments of rays that emanate from the center of the ortho-radial grid.}
  \label{fig:orthogonal-to-orthoradial}
\end{figure}

We further observe that if the central and outer face are identical
then an ortho-radial drawing can be intepreted as a distorted
orthogonal drawing, in which the horizontal edges are bended to
circular arcs, while the vertical edges remain straight segments; see
Figure~\ref{fig:orthogonal-to-orthoradial} for an example. Hence, utilizing the framework for
orthogonal drawings by Tamassia~\cite{t-emn-87}, this allows us to
easily create an ortho-radial drawing of an instance $I$ for the case
that the central face $f_c$ and the outer face~$f_o$ are the same. Hence, we assume
$f_c\neq f_o$ in the remainder of this work, which changes the problem
of finding an ortho-radial drawing of $I$ substantially.

\begin{figure}[t]
  \centering
  \begin{subfigure}[b]{0.32\textwidth}
      \includegraphics[page=1]{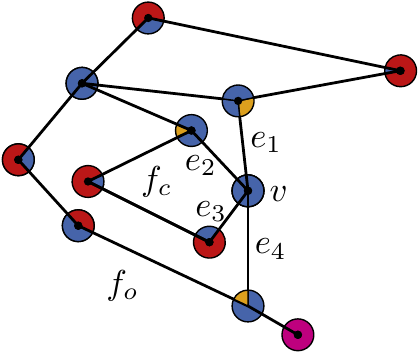}      
\end{subfigure}
    \begin{subfigure}[b]{0.32\textwidth}
\includegraphics[page=2]{./fig/prelim}
\end{subfigure}
    \begin{subfigure}[b]{0.10\textwidth}
\includegraphics[page=3]{./fig/prelim}
\end{subfigure}
\hfill
  \caption{A combinatorial embedding (left) and a ortho-radial drawing (right) of a graph. For each combinatorial angle a rotation is given.}
  \label{fig:prelim:combi}
\end{figure}

We first introduce concepts that help us to combinatorially describe
the ortho-radial drawing $\Delta$. Let $v$ be a vertex of $G$ and let
$\mathcal E(v)$ be the counterclockwise order of the edges in
$\mathcal E$ around $v$. A \emph{combinatorial angle at~$v$} is a pair
of edges~$(e_1,e_2)$ that are both incident to~$v$ and such that~$e_1$
immediately precedes~$e_2$ in~$\mathcal E(v)$; see Figure~\ref{fig:prelim:combi}.  An
\emph{angle assignment}~$\Gamma$ of an instance
$I=(G,\embedding,f_c,f_o)$ assigns to each combinatorial angle
$(e_1,e_2)$ of $\embedding$ a \emph{rotation}
$\rot(e_1,e_2) \in \{-2,-1,0,1\}$.  For an ortho-radial drawing $\Delta$
of $I$ we can derive an angle assignment that defines
$\rot(e_1,e_2) = 2-2\alpha/\pi$ for each angle~$(e_1,e_2)$ at~$v$,
where $\alpha$ is the counterclockwise geometric angle between~$e_1$
and~$e_2$ in~$\Delta$. Hence, the rotation of a combinatorial angle counts the
number of right turns that are taken when going from $e_1$ to $e_2$
via $v$, where negative numbers correspond to left turns; see
Figure~\ref{fig:prelim:combi}. In particular, in case that $e_1=e_2$ we derive
$\rot(e_1,e_2)=-2$ from $\Delta$, i.e., $v$ contributes two left
turns.  But conversely, we cannot derive an ortho-radial drawing from
every angle assignment.

For a face $f$ of $\embedding$ with facial walk $v_1\dots v_{k}$
around $f$ (where $f$ is oriented in clockwise order) we define
$\rot(f)=\sum_{i=1}^{k} \rot(v_{i-1}v_{i},v_{i}v_{i+1})$, where we define $v_0:=v_k$ and $v_{k+1}:=v_1$.  Every angle assignment~$\Gamma$ that is derived from a
bend-free ortho-radial drawing is locally consistent in the following
sense~\cite{hht-orthoradial-09}.

\begin{definition}\label{def:local-conditions}
  An angle assignment is \emph{locally consistent} if it
  satisfies the following two conditions.
\begin{compactenum}
\item\label{cond:repr:sum_of_angles} For each vertex, the sum of the
  rotations around~$v$ is~$2(\deg(v)-2)$.
\item\label{cond:repr:rotation_faces}
  For each face $f$, we have
  \[
  \rot(f)=
  \begin{cases}
  4, & \text{$f$ is a regular face} \\
  0, & \text{$f$ is the outer or the central face but not both} \\
  -4,& \text{$f$ is both the outer and the central face.} \\
  \end{cases}
  \]
\end{compactenum}
\end{definition}

\begin{figure}
  \centering
  \includegraphics[scale=1.2]{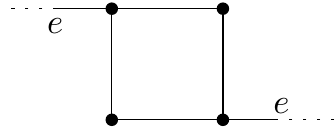}
  \caption{In this drawing, the angles around vertices sum up to $360^\circ$, and also the sum of angles for each face is as expected for an ortho-radial drawing. However, the graph does not have an ortho-radial drawing without bends.}
  \label{fig:intro:noteasy}
\end{figure} 

We call a locally consistent angle assignment of~$I$ an
\emph{ortho-radial representation} of $I$.  Unlike for orthogonal
representations Condition~(\ref{cond:repr:sum_of_angles}) and
Condition~(\ref{cond:repr:rotation_faces}) do not guarantee that for
an ortho-radial representation of $I$ there is an ortho-radial drawing
of $I$ having the same angles; see Figure~\ref{fig:intro:noteasy}.  In
this paper, we introduce a third more global condition that
characterizes all ortho-radial representations of $I$ that can be
drawn. To that end, we first introduce  basic concepts on rotations and
directions in ortho-radial representations in
Section~\ref{sec:rotations-directions}, which we then use to
define this global condition in
Section~\ref{sec:drawable-ortho-radial-represenations}.

\subsection{Rotations and Directions in
  Ortho-Radial Representations}\label{sec:rotations-directions}

We transfer two basic properties of ortho-radial drawings to
ortho-radial representations. First, the rotations of all cycles are
either $0$ or $4$. Second, fixing the geometric direction of a single
edge $e^\star$, fixes the geometric directions of all edges.  We call
$e^\star$ a \emph{reference edge} and assume that it points to the
right and lies on the outer face of $\embedding$.

For two edges $e=uv$ and $e'=vw$ we define the rotation between them
as $\rot(uvw)=\sum_{i=1}^{k-1}\rot(e_i,e_{i+1})-2(k - 2)$, where
$e=e_1,\dots,e_k=e'$ are the edges that are incident to $v$ and lie
between $e$ and $e'$ in counterclockwise order; see Figure~\ref{fig:prelim:rotation-path}(a). 

\begin{figure}[t]
  \centering
  \begin{subfigure}[t]{0.49\textwidth}
    \centering
     \includegraphics[page=4]{./fig/prelim}
     \caption{}
   \end{subfigure}
   \begin{subfigure}[t]{0.49\textwidth}
     \centering
     \includegraphics[page=5]{./fig/prelim}
     \caption{}
   \end{subfigure}
      \caption{Generalizaton of rotations. (a)~The rotation of the two edges $e_1=uv$ and $e_2=vw$ is $\rot(uvw)=\sum_{i=1}^3 1-2 (4-2) =-1$. (b)~The rotation of the path $P$ is $\rot(P)=-1+1-1-1+1+1=0$.}
      \label{fig:prelim:rotation-path}
\end{figure}

The rotation of a path $P=v_1\dots v_k$ is the sum of the rotations
at its internal vertices, that is
$\rot(P)=\sum_{i=2}^{k-1}\rot(v_{i-1}v_iv_{i+1})$; see Figure~\ref{fig:prelim:rotation-path}(b).
\begin{observation}\label{obs:rot_splitting_path}
  Let $P$ be a path with start vertex $s$ and end vertex
  $t$.
  \begin{compactenum}
  \item It is $ \rot(\reverse{P}) = -\rot(P)$.
  \item For every edge $e$ on $P$ it is $ \rot(P) =
    \rot(\subpath{P}{s, e}) + \rot(\subpath{P}{e,t})$.
\end{compactenum}
\end{observation}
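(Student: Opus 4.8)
The plan is to prove both statements by unwinding the definition $\rot(P)=\sum_{i=2}^{k-1}\rot(v_{i-1}v_iv_{i+1})$ into its per-vertex contributions. Statement~2 then becomes a pure sum-splitting argument, whereas statement~1 reduces to a single local identity at each internal vertex, and it is in this identity that Condition~\ref{cond:repr:sum_of_angles} of local consistency enters.

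For statement~2, write $P=v_1\dots v_k$ with $s=v_1$, $t=v_k$, and let $e=v_jv_{j+1}$ be the given edge. By definition $\subpath{P}{s,e}=v_1\dots v_{j+1}$ and $\subpath{P}{e,t}=v_j\dots v_k$; their internal vertices occupy positions $2,\dots,j$ and $j+1,\dots,k-1$, respectively, so together they are exactly the internal positions $2,\dots,k-1$ of $P$, each appearing once, and at each such position the incident pair of path-edges — hence the rotation contributed — is the same as in $P$. Splitting the defining sum at index $j$ therefore yields $\rot(P)=\rot(\subpath{P}{s,e})+\rot(\subpath{P}{e,t})$ immediately; the boundary cases where $e$ is the first or last edge of $P$ are clear since a one-edge path has rotation $0$. (Indexing by position rather than by vertex identity means that repeated vertices on $P$ cause no difficulty.)

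For statement~1, since $\reverse P=v_k\dots v_1$ we get $\rot(\reverse P)=\sum_{i=2}^{k-1}\rot(v_{i+1}v_iv_{i-1})$ after reindexing, so it suffices to prove the local identity $\rot(wvu)=-\rot(uvw)$ for every vertex $v$ of degree $d$ with distinct incident edges $e=uv$ and $e'=vw$. I would fix the counterclockwise cyclic order $f_1,\dots,f_d$ of the edges around $v$ and note that a full counterclockwise turn crosses each of the $d$ combinatorial angles at $v$ exactly once; hence the angles split into the set $A$ crossed going counterclockwise from $e$ to $e'$ and the complementary set $B$ crossed counterclockwise from $e'$ to $e$. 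Writing $p=|A|$, the edge list in the definition of $\rot(uvw)$ has $k=p+1$ edges and that of $\rot(wvu)$ has $k'=(d-p)+1$ edges, so
\begin{align*}
  \rot(uvw)+\rot(wvu)
  &= \sum_{\alpha\in A}\rot(\alpha)-2(k-2)+\sum_{\alpha\in B}\rot(\alpha)-2(k'-2)\\
  &= \sum_{\alpha\in A\cup B}\rot(\alpha) - 2\bigl((p+1)+(d-p+1)-4\bigr) = 2(d-2)-2(d-2) = 0,
\end{align*}
where the last equality uses Condition~\ref{cond:repr:sum_of_angles}. Summing the local identity over the internal vertices of $P$ gives $\rot(\reverse P)=-\rot(P)$.

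The part I expect to be fiddliest is the bookkeeping in the local identity: verifying that the two counterclockwise edge-lists genuinely partition the $d$ angles around $v$, and matching the counts $k,k'$ to $|A|,|B|$ so that the corrections $-2(k-2)$ and $-2(k'-2)$ cancel exactly against $2(\deg(v)-2)$. The only exceptional situation is $e=e'$, i.e.\ a path that immediately retraces an edge; this case is not covered by the computation above, and I would simply exclude it from the statement, as it does not arise for the paths considered here.
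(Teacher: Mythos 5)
Your proof is correct. The paper states this as an \emph{Observation} without proof, so there is no paper argument to compare against; your write-up is the natural (essentially unique) way to spell it out. Statement~2 is indeed a position-indexed sum split, and your indexing carefully avoids problems with repeated vertices. For Statement~1, you have correctly identified the crucial point that the local identity $\rot(wvu) = -\rot(uvw)$ is \emph{not} a consequence of the definition of $\rot$ alone, but genuinely uses Condition~\ref{cond:repr:sum_of_angles} of local consistency: without $\sum_\alpha \rot(\alpha) = 2(\deg v - 2)$ at each vertex, the two correction terms $-2(k-2)$ and $-2(k'-2)$ would not cancel the angle sum. This is implicit in the paper (the observation appears only after ortho-radial representations are defined to be locally consistent), and your proof makes it explicit. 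The bookkeeping $k = |A|+1$, $k' = |B|+1$, $|A|+|B| = d$ is correct, and your exclusion of the degenerate case $e = e'$ is appropriate since it does not arise for the paths considered in the paper.
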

Similarly, for a cycle $C = v_1\ldots v_k$, its rotation is the
sum of the rotations at all its vertices (where we define $v_0 = v_k$
and $v_{k+1}=v_1$), i.e.,
$\rot(C) = \sum_{i=1}^{k} \rot(v_{i-1}v_{i}v_{i+1})$. We observe that the
rotation of a face~$f$ is equal to the rotation of the cycle that we
obtain from the facial walk around $f$.

\begin{figure}[t]
      \centering
      \includegraphics{./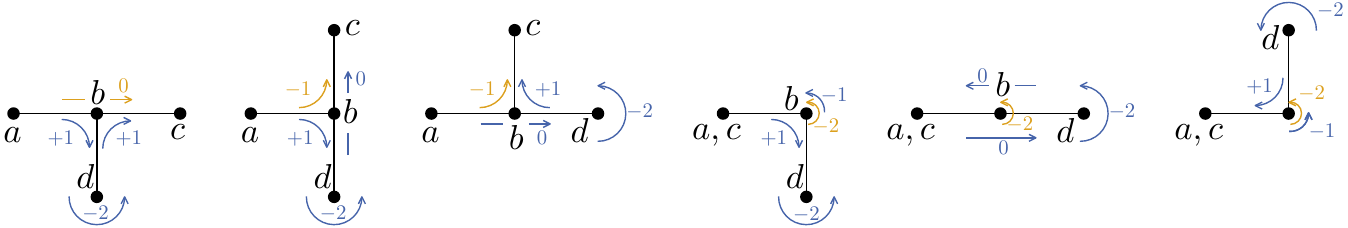}
      \caption{Illustration of proof for
        Lemma~\ref{lem:rotation-cycle}. The rotation $\rot(abc)$
        (orange) is equal to the rotation $\rot(abd)+\rot(dbc)-2$
        (blue) for the cases that $d\neq a,c$.}
      \label{fig:prelim:rotation-edges}
    \end{figure}

\begin{figure}[t]
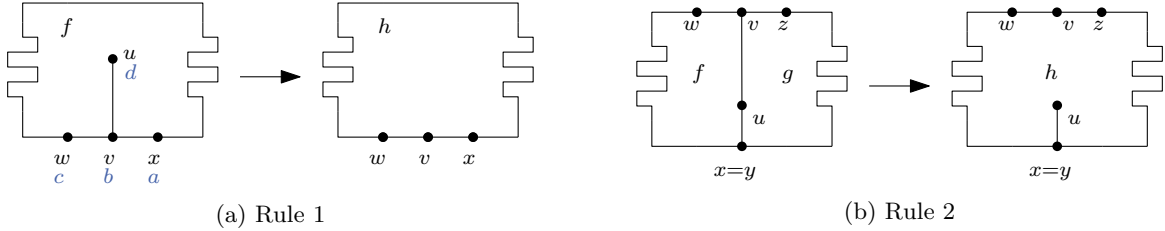

  \centering
      \begin{subfigure}[c]{0.48\textwidth}
      \centering
      \includegraphics[page=2]{fig/rotation_six_cases.pdf}
      \caption{Rule 1}
      \label{fig:prelim:rules:rule1}
    \end{subfigure}
     \hfill
      \begin{subfigure}[c]{0.48\textwidth}
      \centering
      \includegraphics[page=3]{fig/rotation_six_cases.pdf}
      \caption{Rule 2}
      \label{fig:prelim:rules:rule2}
      \end{subfigure}
      \caption{Illustration of the rules introduced in proof for
        Lemma~\ref{lem:rotation-cycle}.~(\subref{fig:prelim:rules:rule1})~The
        degree-1 vertex~$u$ is
        removed.~(\subref{fig:prelim:rules:rule1})~The edge $uv$
        separating the faces $f$ and $g$ is removed. }
      \label{fig:prelim:rules}
\end{figure}

\begin{lemma}
  \label{lem:rotation-cycle}
  Let $C$ be a simple cycle in an ortho-radial representation $\Gamma$. Then,
  $\rot(C)=0$ if $C$ is
  essential and $\rot(C)=4$ if $C$ is non-essential.
\end{lemma}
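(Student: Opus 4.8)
The plan is to reduce the statement to the face-rotation condition (Condition~\ref{cond:repr:rotation_faces}) by induction on the number of faces in the interior of $C$, using the two local reduction rules sketched in Figure~\ref{fig:prelim:rules} to simplify the part of the graph enclosed by $C$ while keeping $\rot(C)$ unchanged. Throughout, recall that $C$ is directed so that its interior lies to its right, so the interior of $C$ is "like an internal face" when $C$ is non-essential, but when $C$ is essential its interior contains the central face, which changes the target value. Concretely, I would first observe that $\rot(C)$ depends only on the subgraph induced by the interior of $C$ together with its angle assignment, and that the restriction of $\Gamma$ to this subgraph still satisfies Condition~\ref{cond:repr:sum_of_angles} at every vertex not on $C$.

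Next I would carry out the two reductions. \textbf{Rule 1:} if there is a degree-$1$ vertex $u$ strictly inside $C$, with neighbor $v$, delete $u$ and the edge $uv$; this merges the angle $(uv,vu) = (uv,vu)$ of rotation $-2$ at $v$ into the single angle spanned by the two edges of $v$ that flanked $uv$, and one checks the rotation of every cycle (in particular $C$) is unchanged because the path through $v$ is shortened in a rotation-preserving way. \textbf{Rule 2:} if an edge $uv$ lies strictly inside $C$ and separates two distinct faces $f,g$ (i.e.\ is not a bridge-like edge incident to a degree-$1$ vertex), delete it and merge $f$ and $g$; using the face-rotation identity for $f$ and $g$ and Lemma~\ref{lem:rotation-cycle}'s own local ingredient $\rot(abc) = \rot(abd)+\rot(dbc)-2$ from Figure~\ref{fig:prelim:rotation-edges}, one verifies that $\rot(C)$ is again preserved and Condition~\ref{cond:repr:sum_of_angles} still holds after adjusting the two affected vertex-angle sums. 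Applying these rules exhaustively, the interior of $C$ collapses either to the single cycle $C$ itself with no interior vertices or edges (when $C$ is non-essential) or to an annular region bounded by $C$ on the outside and, eventually, to the situation where the interior is a single face. In the non-essential case the interior face is a regular face, so $\rot(C) = 4$ by Condition~\ref{cond:repr:rotation_faces}. In the essential case the central face $f_c$ lies inside $C$; after reduction the interior boundary becomes (a facial walk of) a face that is the central face, and tracking the bookkeeping one obtains $\rot(C)=0$, matching the "central but not outer" value $0$ in Condition~\ref{cond:repr:rotation_faces} since $C$ being essential but simple keeps the outer face strictly outside.

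The main obstacle I anticipate is making the reductions genuinely rotation-preserving and well-defined: one must be careful that Rule 2 is only applied to edges whose removal keeps the graph connected and keeps $C$ a simple cycle (so the edge must be interior and not on $C$, and not create a new degree-$0$ or degree-$1$ component that Rule 1 cannot then mop up), and one must correctly handle the case where the face $f$ or $g$ being merged is itself the central face — this is exactly where the two different target values $0$ and $4$ get distinguished. A secondary subtlety is the base case: after all reductions the interior may be a single edge or a single vertex rather than a clean face, so the induction should really be set up on the total number of interior edges, with Rules 1 and 2 each strictly decreasing this quantity, terminating when $C$ has empty interior apart from itself, at which point $C$ bounds a single face whose type (regular vs.\ central) is determined by whether $C$ is non-essential or essential.
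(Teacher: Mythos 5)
Your proposal follows essentially the same approach as the paper's proof: repeatedly apply the two local reduction rules (degree-$1$-vertex removal and removal of an interior edge lying on a cycle) to eliminate the subgraph strictly inside $C$, verify via the identity $\rot(abc)=\rot(abd)+\rot(dbc)-2$ that Conditions~\ref{cond:repr:sum_of_angles} and~\ref{cond:repr:rotation_faces} are preserved at each step, and then conclude by applying Condition~\ref{cond:repr:rotation_faces} to the face that $C$ bounds in the reduced representation (regular vs.\ central depending on whether $C$ is non-essential or essential). The only cosmetic difference is that you make the preservation of $\rot(C)$ explicit whereas the paper leaves it implicit in the preservation of Condition~\ref{cond:repr:sum_of_angles}; your worry about an ``annular region'' is unfounded (there is none in the plane graph), but your end state matches the paper's.
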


\begin{proof}
  Let $H$ be the sub-graph of $G$ that is contained in the interior of
  $C$; we note that $C$ belongs to $H$. We remove $H^\circ=H-C$ from
  the ortho-radial representation by deleting the edges and vertices
  of $H^\circ$ successively. We show that in each deletion step we
  again obtain an ortho-radial representation; in particular
  Condition~\ref{cond:repr:sum_of_angles} and
  Condition~\ref{cond:repr:rotation_faces} hold. After $H^\circ$ has
  been completely removed, the cycle $C$ is a face of the resulting
  ortho-radial representation. By
  Condition~\ref{cond:repr:rotation_faces} the statement follows.
  
  We do the removal of $H^\circ$ based on two rules. The first rule removes
  a degree-1 vertex from $H^\circ$ adapting the angle assignment
  accordingly. The second rule removes an edge of $H^\circ$ that lies on a
  cycle of $H$ adapting the angle assignment accordingly.

  We show the following claims.

  \begin{Claim}\label{claim:rotation:applicable}
    If $H^\circ$ is not empty, then the first or second rule is applicable. 
  \end{Claim}

  \begin{Claim}\label{claim:rotation:first-rule}
    Applying the first rule yields a connected graph and an
    ortho-radial representation of this graph.
  \end{Claim}

  \begin{Claim}\label{claim:rotation:second-rule}
    Applying the second rule yields a connected graph and an
    ortho-radial representation of this graph.
  \end{Claim}

  We now prove Claim~\ref{claim:rotation:applicable}. Assume that the
  second rule is not applicable, but $H^\circ$ contains at least one
  vertex. We contract $C$ and the sub-graph in the exterior of $C$ to
  one vertex. As the second rule is not applicable, the result is a
  tree, which shows that there is a degree-1 vertex in $H^\circ$. Hence,
  the first rule is applicable.

  For proving Claim~\ref{claim:rotation:first-rule} and
  Claim~\ref{claim:rotation:second-rule} we first show a general
  statement on rotations. Let $a$, $b$, $c$ and $d$ vertices of $G$
  such that $d\neq c,a$ and there are the edges $ab$, $bc$ and
  $bd$. We assume that $ab$, $bc$ and $bd$ appear around $b$ in
  clockwise order. Figure~\ref{fig:prelim:rotation-edges} shows the six cases that are
  possible, from which we derive
  \begin{align}\label{lem:rotation:helping-equation}
    \rot(abc) = \rot(abd) +\rot(dbc)-2 
  \end{align}
  We now prove Claim~\ref{claim:rotation:first-rule}.  Let $u$ be a
  degree-1 vertex and let $v$ be the adjacent vertex of $u$; see
  Figure~\ref{fig:prelim:rules:rule1}. The edge $uv$ lies on a face
  $f$. Let $xv$ and $vw$ be the preceding and succeeding edges of $uv$
  on $f$, respectively. We note that, possibly, $x=w$. Let $h$ be the
  new face after deleting $v$. As $u$ has degree one, the resulting
  graph after deleting $u$ is still connected. Further, we define the
  deletion of $u$ such that the resulting angle assignment is locally
  correct, i.e., such that it satisfies
  Condition~\ref{cond:repr:sum_of_angles} at $v$.  In order to prove
  Condition~\ref{cond:repr:rotation_faces} we apply
  Equation~\eqref{lem:rotation:helping-equation} with $a=x$, $b=v$,
  $c=w$ and $d=u$ as follows.
  \begin{align*}
    \rot(f) = \rot(f[vw,xv])+\rot(xvu)-2+\rot(uvw)
             = \rot(f[vw,xv])+\rot(xvw) = \rot(h)
  \end{align*}
  If $f$ is a regular face, then $h$ is also a regular face. Hence,
  we obtain $\rot(h)=4$.  If $f$ is the central face, then $h$ is
  also the central face so that $\rot(h)=0$.

  Finally, we prove Claim~\ref{claim:rotation:second-rule}. Let $uv$
  be the edge that is removed; see
  Figure~\ref{fig:prelim:rules:rule2}. As $uv$ lies on a cycle in $H$,
  it separates two faces $f$ and $g$. We assume that $f$ lies locally
  to the left of $uv$ and $g$ lies locally to the right of $uv$. Let
  $w$ and $x$ be the preceding and succeeding vertex on $f$,
  respectively. Further, let $y$ and $z$ be the preceding and
  succeeding vertex on $g$. We note that possibly~ $w=z$ or~$x=y$.
  Let $h$ be the new face after deleting $uv$, whose boundary consists of the
  two paths $f[ux,wv]$ and $g[vz,yu]$.  As $uv$ lies on a cycle, the
  resulting graph after deleting $uv$ is still connected.  Further, we
  define the deletion of $uv$ such that the resulting angle assignment
  is locally correct, i.e., such that it satisfies
  Condition~\ref{cond:repr:sum_of_angles} at $u$ and $v$.  We prove
  Condition~\ref{cond:repr:rotation_faces} by showing
  \begin{align}\label{lem:rotation:helping-equation2}
    \rot(h) = \rot(f)+\rot(g)-4.
   \end{align}
   Since $f$ and $g$ lie in the interior of the essential cycle $C$,
   neither $f$ nor $g$ is the outer face.  If both are regular faces,
   then $h$ is also regular. From
   Equation~\eqref{lem:rotation:helping-equation2} we correctly obtain
   $\rot(h)=4$.  If one of them is the central face, then $h$ is the
   new central face. From
   Equation~\eqref{lem:rotation:helping-equation2} we obtain
   $\rot(h)=0$.  In the remainder of the proof we show
   $\rot(h) = \rot(f)+\rot(g)-4$.  For $f$, $g$ and $h$ we obtain the
   following rotations.
   \begin{align*}
     \rot(f) &= \rot(f[ux,wv])+\rot(wvu)+\rot(vux)\\
     \rot(g) &= \rot(g[vz,yu])+\rot(yuv)+\rot(uvz)\\
     \rot(h) &= \rot(f[ux,wv])+\rot(wvz)+\rot(g[vz,yu])+\rot(yux)
   \end{align*}
   Replacing $\rot(f[ux,wv])$ and $\rot(g[vz,yu])$ in the last
   equation, we obtain the next equation.
   \begin{align*}
     \rot(h) =& \rot(f)-\rot(wvu)-\rot(vux)+\rot(wvz)+\\
              & \rot(g)-\rot(yuv)-\rot(uvz)+\rot(yux)\\
             =& \rot(f)+\rot(wvz)-\rot(wvu)-\rot(uvz)+\\
              & \rot(g)+\rot(yux)-\rot(yuv)-\rot(vux)
   \end{align*}
   Applying Equation~\eqref{lem:rotation:helping-equation} twice, we replace
   $\rot(wvz)-\rot(wvu)-\rot(uvz)$ and $\rot(yux)-\rot(yuv)-\rot(vux)$
   with $-2$ each, which yields $\rot(h) = \rot(f)+\rot(g)-4$ as
   desired.
\end{proof}

  \begin{figure}[t]
  \centering
  \includegraphics[page=4]{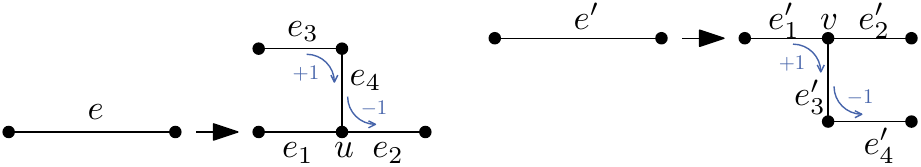}
  \caption{Illustration for proof of
    Lemma~\ref{lem:direction_cycle}. Cycle $C$ with two
    additional edges $ab$ and $cd$. If $a$ lies in the exterior of
    $C$ then $r_a=+2$ and otherwise $r_a=-2$. Similarly, if $d$ lies in the 
    exterior of $C$ then $r_d=+2$ and $r_d=-2$ otherwise.}
  \label{fig:measure-direction:cycle}
\end{figure}

The next lemma relates the rotations of two paths~$S$ and $T$ that use the same 
edges except on a cycle~$C$; see Figure~\ref{fig:measure-direction:cycle}.

\begin{lemma}\label{lem:direction_cycle}
  Let $C$ be a cycle and let $ab$ and $cd$ be two edges (with
  $b\neq c$) such that $b$ and $c$ lie on $C$, but $a$ and $d$ do not.
  Further, let $S=ab+C[b,c]+cd$ and $T=ab+\reverse{C}[b,c]+cd$. Then
  \[\rot(S)-\rot(T)=\rot(C)+r_a+r_d\,,\] where  for $z\in \{a,d\}$ we
  define $r_z=+2$ if $z$ lies in the interior of $C$ and $r_z=-2$ if
  $z$ lies in the exterior of $C$.
\end{lemma}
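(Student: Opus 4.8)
The plan is to cut the rotations of $S$ and $T$ into three pieces — the turn at $b$, the part running along $C$, and the turn at $c$ — and then to recombine them so that the $C$-part becomes $\rot(C)$ and the two end pieces become exactly $r_a$ and $r_d$. Since $r_a,r_d$ refer to the interior and exterior of $C$, I treat $C$ as a simple cycle, oriented as usual so that its interior lies to its right. Let $b^-,b^+$ be the predecessor and successor of $b$ on $C$ (so $b^-b$ and $bb^+$ are the two edges of $C$ at $b$), and let $c^-,c^+$ be those of $c$; note that if $C[b,c]$ is the single edge $bc$, then $b^+=c$, $c^-=b$ and $\rot(C[b,c])=0$.

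First I would unroll the definition of the rotation of a path to obtain
\[
  \rot(S)=\rot(abb^+)+\rot(C[b,c])+\rot(c^-cd),\qquad \rot(T)=\rot(abb^-)+\rot(\reverse{C}[b,c])+\rot(c^+cd).
\]
Second, since $\reverse{C}[b,c]=\reverse{C[c,b]}$, Observation~\ref{obs:rot_splitting_path}(1) gives $\rot(\reverse{C}[b,c])=-\rot(C[c,b])$, and directly from the definition of the rotation of a cycle, $\rot(C[b,c])+\rot(C[c,b])=\rot(C)-\rot(b^-bb^+)-\rot(c^-cc^+)$ (the two subpaths together contribute the rotation of $C$ at every vertex of $C$ other than $b$ and $c$). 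Substituting both facts into $\rot(S)-\rot(T)$ collapses the $C$-part to $\rot(C)$ and isolates the two end contributions, so that
\[
  \rot(S)-\rot(T)=\rot(C)+X_b+X_c,
\]
where $X_b:=\rot(abb^+)-\rot(abb^-)-\rot(b^-bb^+)$ and $X_c:=\rot(c^-cd)-\rot(c^+cd)-\rot(c^-cc^+)$. It remains to show $X_b=r_a$ and $X_c=r_d$.

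These two identities are the heart of the argument and are purely local. For $X_b$: because $C$ is oriented with its interior on the right, the edge $ba$ lies in the counterclockwise arc from $bb^+$ to $bb^-$ around $b$ exactly when $a$ is exterior to $C$, and in the complementary arc exactly when $a$ is interior. In the exterior case the edges $ba,bb^+,bb^-$ appear in this clockwise order around $b$, so Equation~\eqref{lem:rotation:helping-equation} yields $\rot(abb^+)=\rot(abb^-)+\rot(b^-bb^+)-2$, whence $X_b=-2$; in the interior case the edges $ba,bb^-,bb^+$ appear in this clockwise order, so Equation~\eqref{lem:rotation:helping-equation} yields $\rot(abb^-)=\rot(abb^+)+\rot(b^+bb^-)-2$, and rewriting $\rot(b^+bb^-)=-\rot(b^-bb^+)$ via Observation~\ref{obs:rot_splitting_path}(1) gives $X_b=+2$. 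Either way $X_b=r_a$. The identity $X_c=r_d$ follows from the mirror‑image analysis at $c$: there $cc^-$ and $cc^+$ are the incoming and outgoing edges of $C$, so the roles of interior and exterior are swapped in the clockwise‑order argument, but this is exactly cancelled by first rewriting $\rot(c^-cd)=-\rot(dcc^-)$ and $\rot(c^+cd)=-\rot(dcc^+)$ with Observation~\ref{obs:rot_splitting_path}(1).

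The step I expect to be the main obstacle is precisely this local claim: making rigorous that the cyclic position of the edge $ba$ relative to the two $C$-edges at $b$ is governed exactly by whether $a$ is interior or exterior (this is where the orientation convention on $C$ is used), and then carrying out the two‑case computation with Equation~\eqref{lem:rotation:helping-equation}; this is essentially a localized reprise of the case analysis already done for Lemma~\ref{lem:rotation-cycle}. The bookkeeping in the degenerate situation where $C[b,c]$ or $C[c,b]$ is a single edge (so $b^+=c$ or $c^-=b$) needs a little extra care, but causes no genuine difficulty.
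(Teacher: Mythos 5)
Your proof is correct and follows essentially the same route as the paper's: decompose $\rot(S)-\rot(T)$ into the running-along-$C$ part plus local contributions at $b$ and at $c$, and verify by a case analysis on whether $a$ (resp.\ $d$) is interior or exterior that the local contribution equals $r_a$ (resp.\ $r_d$). The only cosmetic difference is that you package the local identity as $X_b=r_a$ and derive it from Equation~\eqref{lem:rotation:helping-equation}, whereas the paper states the equivalent identity $\rot(abs)-\rot(abt)=\rot(tbs)+r_a$ directly by inspecting the six arrangements of the three edges at $b$ (Figure~\ref{fig:measure-direction:cycle:cases}).
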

\begin{proof}
 \begin{figure}[t]
   \centering
   \includegraphics[page=5]{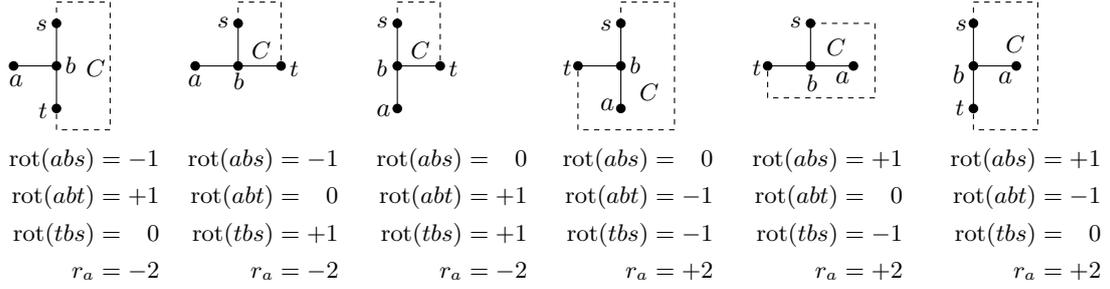}
   \caption{{Illustration for proof of
       Lemma~\ref{lem:direction_cycle}}. In all six cases it holds $\rot(abs) 
       -
     \rot(abt) = \rot(tbs) + r_a$}
   \label{fig:measure-direction:cycle:cases}
 \end{figure}
 
 Let $t$ and $s$ be the vertices immediately before and after~$b$ on~$C$. We
 observe that $a$ lies in the interior of $C$ if and only if $ab$ lies locally
 to the right of the path $sbt$. Considering all six cases how the edges $ab$,
 $sb$, and $tb$ can be arranged (see 
 Figure~\ref{fig:measure-direction:cycle:cases}), we obtain $\rot(abs) -
 \rot(abt) = \rot(tbs) + r_a$.
 Similarly, we define $s'$ and $t'$ as the vertices immediately before and
 after $c$ on $C$. Considering all cases as above we get $\rot(s'cd) -
 \rot(t'cd) = \rot(s'bt') + r_d$.
 
 Splitting $S$ and $T$ into three parts (see 
 Figure~\ref{fig:measure-direction:cycle}), we have
 \begin{align*}
 \rot(S) - \rot(T) &= \rot(abs) + \rot(C[b, c]) + \rot(s'cd)
 -\rot(abt) - \rot(\reverse{C}[b,c]) - \rot(t'cd).
 \end{align*}
 Combining the rotations at $b$ and $c$ using the observations from above, we
 get
 \begin{align*}
 \rot(S) - \rot(T) &= \rot(tbs) + r_a + \rot(C[b, c]) + \rot(C[c,b]) +
 \rot(s'ct') + r_d \\
 &= \rot(C) + r_a + r_d.
 \end{align*}
\end{proof}

\begin{figure}[t]
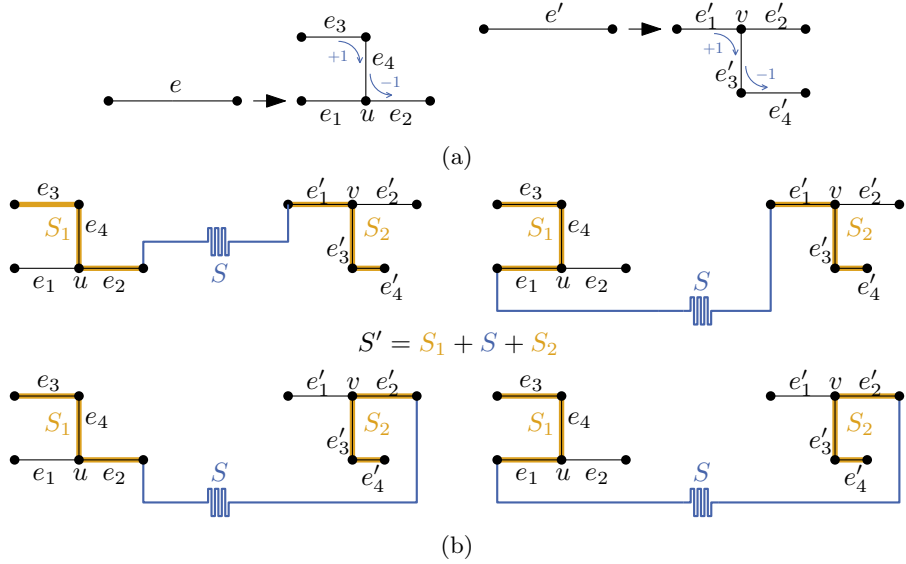

  \centering
      \begin{subfigure}[c]{\textwidth}
      \centering
      \includegraphics[page=1]{fig/lemma_directions_to_labels.pdf}
      \caption{}
      \label{fig:measure-direction:construction}
    \end{subfigure}
    
      \begin{subfigure}[c]{\textwidth}
      \centering
      \includegraphics[page=2]{fig/lemma_directions_to_labels.pdf}
      \caption{}
      \label{fig:measure-direction:paths}
      \end{subfigure}
      \caption{Illustration of the proof for        Lemma~\ref{lem:measure-direction}.~(\subref{fig:measure-direction:construction})~The edges $e$ and $e'$ are replaced by the depicted construction to reduce the number of cases to be considered.~(\subref{fig:measure-direction:paths})~Four cases are considered how the edges $e$ and $e'$ can be connected by the path $S$. The paths $S_1$ and $S_2$ are defined depending on the particular case.}
      \label{fig:measure-direction}
\end{figure}

For two edges $e$ and $e'$ let $P$ be an arbitrary path that starts at
the source or target of $e$ and ends at the source or target of $e'$,
and that neither contains $e$ nor $e'$.  We call $P$ a \emph{reference
  path} from $e$ to $e'$. We define the \emph{combinatorial direction}
of $e'=xy$ with respect to $e=uv$ and $P$ as
\[
  \dir(e,P,e') = \begin{cases}
    \rot(e+P+e') & \text{$P$ starts at $v$ and ends at $x$},\\
    \rot(\reverse{e}+P+e')+2 & \text{$P$ starts at $u$ and ends at $x$},\\
    \rot(e+P+\reverse{e'})-2 & \text{$P$ starts at $v$ and ends at $y$},\\
    \rot(\reverse{e}+P+\reverse{e'}) & \text{$P$ starts at $u$ and ends at $y$.}
  \end{cases}
\]

With the fixed direction of the reference edge~$e^\star$, it is
natural to determine the direction of any other edge $e$ by
considering the direction of any reference path from $e^\star$ to
$e$. In order to get consistent results, any two reference paths $P$
and $Q$ from $e^\star$ to $e$ must induce the same direction of~$e$,
which means that $\dir(e^\star, P, e)$ and $\dir(e^\star, Q, e)$ may
only differ by a multiple of 4.  In the following lemma we show that
this is indeed the case.

\begin{lemma}\label{lem:measure-direction}
  Let $e$ and $e'$ be two edges of an ortho-radial representation
  $\Gamma$, and let $P$ and $Q$ be two reference paths from $e$ to
  $e'$.
  \begin{compactenum}
  \item It holds $\dir(e,P,e')\equiv \dir(e,Q,e') \pmod{4}$.
  \item It holds $\dir(e,P,e') = \dir(e,Q,e')$, if there are two essential cycles $C$ and $C'$ such that
    \begin{compactenum}
    \item $C'$ lies in the interior of $C$,
    \item $e$ lies on $C$ and $e'$ lies on $C'$, and 
    \item $P$ and $Q$ lie in the interior of $C$ and in the exterior of $C'$.    
    \end{compactenum}
  \end{compactenum} 
\end{lemma}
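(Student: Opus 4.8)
The plan is to control how $\dir(e,\cdot,e')$ changes when the reference path changes, and then to read off both statements from Lemma~\ref{lem:rotation-cycle} (the rotation of a simple cycle is $0$ or $4$) together with Lemma~\ref{lem:direction_cycle} (which says exactly how rotations react to rerouting around a cycle).

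First I would use the local surgery sketched in Figure~\ref{fig:measure-direction:construction} to replace $e$ and $e'$ by short paths, giving the new low-degree vertices the rotations forced by local consistency. By the two reduction rules from the proof of Lemma~\ref{lem:rotation-cycle}, this operation preserves local consistency and changes neither the rotation of any face nor which cycles are essential, so it suffices to prove the statement in the modified instance; moreover the surgery is designed so that every reference path from $e$ to $e'$ may be assumed to leave the target of $e$, to reach the source of $e'$, and to avoid the two remaining endpoints $u$ and $y$. After this normalisation $\dir(e,P,e')=\rot(e+P+e')$ for every reference path $P$: the summands $\pm2$ distinguishing the four cases of the definition are exactly the rotation accumulated when traversing the inserted gadget (Figure~\ref{fig:measure-direction:paths}).

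Writing $\widehat P=e+P+e'$ and $\widehat Q=e+Q+e'$, these are two paths with common start vertex $u$ (the source of $e$) and common end vertex $y$ (the target of $e'$). Splitting $\widehat P$ and $\widehat Q$ at a common internal vertex of $P$ and $Q$ and using Observation~\ref{obs:rot_splitting_path}(2) to split rotations additively, one reduces to the case that $P$ and $Q$ are internally disjoint; then $\widehat C:=P+\reverse Q$ is a simple cycle through the target $v$ of $e$ and the source $x$ of $e'$ (the cases $P=Q$ and $v=x$ being trivial). Orienting $\widehat C$ so that $\widehat C[v,x]=P$ and applying Lemma~\ref{lem:direction_cycle} with $ab:=e$ and $cd:=e'$ — so that $b=v$ and $c=x$ lie on $\widehat C$ while $a=u$ and $d=y$ do not, and the paths $S,T$ of that lemma become exactly $\widehat P,\widehat Q$ — yields
\[
  \dir(e,P,e')-\dir(e,Q,e') \;=\; \rot(\widehat P)-\rot(\widehat Q) \;=\; \rot(\widehat C)+r_u+r_y
\]
with $r_u,r_y\in\{+2,-2\}$. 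Since $\rot(\widehat C)\in\{0,4\}$ by Lemma~\ref{lem:rotation-cycle}, the right-hand side is a multiple of $4$, which proves the first statement.

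For the second statement, the extra hypotheses force $P$, $Q$, and hence $\widehat C$, to lie in the region bounded by the essential cycles $C$ and $C'$, with $\widehat C$ touching $C$ only at $v$ and $C'$ only at $x$. I would argue that $u$, lying on $C$ (the outer boundary of this region), lies in the exterior of $\widehat C$, so $r_u=-2$; and that $y$, lying on $C'$, lies on the same side of $\widehat C$ as all of $C'$ and the central face, so that $y$ is in the interior of $\widehat C$ precisely when $\widehat C$ is essential. Hence $r_y=+2$ exactly when $\rot(\widehat C)=0$ and $r_y=-2$ exactly when $\rot(\widehat C)=4$, so $\rot(\widehat C)+r_u+r_y=0$ in both cases, giving $\dir(e,P,e')=\dir(e,Q,e')$. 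When reducing to internally disjoint paths in this setting, one keeps the invariant that the first edge lies on $C$ and the last edge points toward the central-face side, so the same sign analysis applies at every step.

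The algebra in the first part is routine; the main obstacle is the topological bookkeeping in the second part — making the ``region between $C$ and $C'$'' precise in the purely combinatorial model, verifying the side conditions on $u$, $y$ and the central face relative to $\widehat C$ in all configurations (including when $P$ and $Q$ wind differently around the region), and discharging degeneracies such as a reference path running along $C$ or $C'$ or through an outer endpoint. I expect the gadget of Figure~\ref{fig:measure-direction:construction} to be introduced precisely to neutralise these degeneracies, so pinning down its exact specification is the crux.
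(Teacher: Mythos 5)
Your proposal correctly identifies the two key ingredients the paper uses — Lemma~\ref{lem:rotation-cycle} (cycle rotations are $0$ or $4$) and Lemma~\ref{lem:direction_cycle} (how rotation changes when rerouting around a cycle) — and the gadget normalisation is indeed the paper's first step. But your reduction to the internally disjoint case is the weak link, and it is precisely where the paper does something you do not.

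You propose to ``split $\widehat P$ and $\widehat Q$ at a common internal vertex of $P$ and $Q$ and use Observation~\ref{obs:rot_splitting_path}(2) to split rotations additively.'' That observation splits a single path at an \emph{edge}, not two paths at a shared vertex; to split at a vertex $z$ you must choose a reference edge at $z$ to stitch the two sub-problems together, and at a degree-four vertex used twice by $P$ and twice by $Q$ there may be no free edge. Worse, when $P$ and $Q$ share several internal vertices nothing forces these to appear in the same cyclic order on $P$ and on $Q$, so there is no natural decomposition into matched pairs of internally disjoint sub-paths. The paper avoids this entirely by \emph{interpolating}: it builds a sequence $R_1=Q',R_2,\dots,R_k=P'$ where each $R_i$ is a prefix of $P'$ glued to a suffix of $Q'$, and each consecutive pair $R_i,R_{i+1}$ differs along exactly one simple cycle $C_i$. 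Each step is a single application of Lemma~\ref{lem:direction_cycle}, and no assumption on how $P$ and $Q$ interleave is needed.

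The second statement is where the gap becomes fatal rather than merely inconvenient. You note that after decomposing into internally disjoint sub-paths you would ``keep the invariant that the first edge lies on $C$ and the last edge points toward the central-face side,'' but once you split at an interior vertex $z$ the newly created endpoint of a sub-problem is $z$, which in general lies on neither $C$ nor $C'$; your sign analysis for $r_u,r_y$ then has no grip. The reason the paper's interpolation works for statement~2 is that \emph{every} intermediate path $R_i$ still starts at (the gadget attached to) $e\subseteq C$ and ends at (the gadget attached to) $e'\subseteq C'$; this is exactly what lets the paper conclude $r_v=-2$ from simplicity and respect of $C_i$, and pin $r_y$ as $\pm2$ according to whether $C_i$ is essential, because $C'$ lies entirely on one side of each $C_i$. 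Your one-shot cycle $\widehat C$ would also satisfy these side conditions, but only in the internally-disjoint case you cannot in general reduce to. So the missing idea is the step-by-step rerouting that keeps the endpoints pinned to $C$ and $C'$ throughout.
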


\begin{proof}
  First, we define a construction that helps us to reduce the
  number of cases to be considered. We subdivide $e$ by a vertex $u$
  into two edges $e_1$ and $e_2$; see
  Figure~\ref{fig:measure-direction:construction}. Further, we add a
  path consisting of two edges $e_3$ and $e_4$ such that the target of
  $e_4$ is $u$. We define that $\rot(e_3e_4)=1$ and
  $\rot(e_4e_2)=-1$. Similarly, we subdivide $e'$ by a vertex $v$ into
  two edges $e'_1$ and $e'_2$. Further, we add a path consisting of
  two edges $e'_3$ and $e'_4$ such that the source of $e'_3$ is
  $v$. We define that $\rot(e'_1e'_3)=1$ and $\rot(e'_3e'_4)=-1$.  Let
  $S$ be a reference path from $e$ to $e'$; see
  Figure~\ref{fig:measure-direction:paths}. Let $S_1$ be the path that
  starts at the source of $e_3$ and ends at the starting point of $S$
  only using edges from~$\{e_1,e_2,e_3,e_4\}$. Further, let $S_2$ be
  the path that starts at the end point of $S$ and ends at the target of
  $e'_4$ only using edges from~$\{e'_1,e'_2,e'_3,e'_4\}$. The
  \emph{extension} $S'$ of $S$ is the path $S_1+S+S_2$. The following
  claim shows that we can consider $S'$ instead of $S$ such that it is
  sufficient to consider the rotation of $S'$ instead of the
  direction~$\dir(e,S,e')$, which distinguishes four cases.
  
  \begin{Claim}\label{claim:directions:simplifications}
    $\dir(e,S,e')=\rot(S')$
  \end{Claim}

  The detailed proof of Claim~\ref{claim:directions:simplifications}
  is found at the end of this proof. In the following let $P'$ and
  $Q'$ be the extensions of $P$ and $Q$, respectively. We show that
  $\rot(P')\equiv \rot(Q') \pmod{4}$.  Moreover, for the case that
  $e=e^\star$ and $e'$ lies on an essential cycle that is respected by
  $P$ and $Q$, we show that $\rot(P') = \rot(Q')$. Altogether, due to
  Claim~\ref{claim:directions:simplifications} this proves
  Lemma~\ref{lem:measure-direction}.  We show $\rot(P') = \rot(Q')$ by
  converting $P'$ into $Q'$ successively. More precisely, we construct
  paths $R_1\dots R_k$ such that $R_i$ consists of a prefix of
  $P'$ followed by a suffix of $Q'$ such that with increasing $i$ the
  used prefix of $P'$ becomes longer, while the used suffix of $Q'$
  becomes shorter. In particular, we have $R_1=Q'$ and $R_k=P'$. We
  show that $\rot(R_i)\equiv\rot(R_{i+1}) \pmod{4}$.  If
  $R_{i}\neq P'$, we construct $R_{i+1}$ from $R_i$ as follows; see
   Figure~\ref{fig:measure-direction:R}.

  \begin{figure}[t]
  \centering
      \includegraphics[page=3]{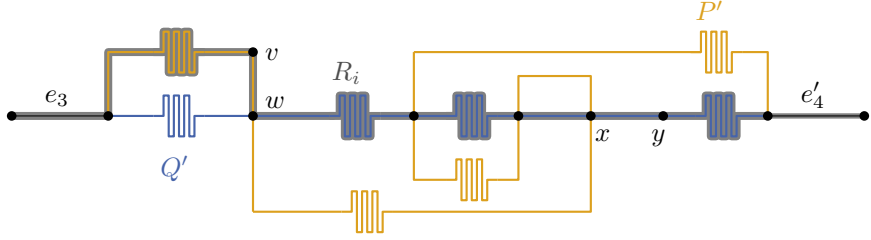}
      \caption{Illustration for proof of
        Lemma~\ref{lem:measure-direction}. The path $R_i$ (gray)
        consists of a prefix of the path $P'$ (orange) and a suffix of
        the path $Q'$ (blue).}
    \label{fig:measure-direction:R}
  \end{figure}
  
  There is a first edge $vw$ on $P'$ such that the following edge does 
  not lie on $R_{i}$. Let~$x$ be the first vertex on $P'$ after $w$
  that lies on $R_i$ and let $y$ be the vertex on $R_i$ that follows
  $x$ immediately.  As both $P'$ and $R_i$ end at the same edge, these
  vertices always exist. We define $R_{i+1} = P[e_3,x]+R_i[x,e_4']$.  We
  observe that $R_i[x,e_4']=Q'[x,e_4']$, as $R_i[w,e_4']=Q'[w,e_4']$ and $x$
  occurs on $Q'$ after $w$. Further, $R_{i+1}$ is a path as we can argue
  as follows.  We can decompose $R_{i+1}$ into three paths:
  $R^1_{i+1}=P'[e_3,w]=R_{i}[e_3,w]$, $R^2_{i+1}=P'[w,x]$ and
  $R^3_{i+1}=Q'[x,e_4']=R_{i}[x,e_4']$. The paths $R^1_{i+1}$ and
  $R^3_{i+1}$ do not intersect as both also belong to $R_i$, which is
  a path by induction. The paths $R^1_{i+1}$ and $R^2_{i+1}$ do not
  intersect (except at their common vertex $w$), because both belong
  to $P'$. The paths $R^2_{i+1}$ and $R^3_{i+1}$ do not intersect
  (except at their common vertex $x$), because by the definition of
  $x$ no vertex of $P'$ between $w$ and $x$ lies on $R_i$.

  Next, we show that $\rot(R_i)\equiv\rot(R_{i+1}) \pmod{4}$.  To that
  end consider the cycle $C_{i}$ that consists of the two paths
  $R_i[w,x]$ and $R_{i+1}[w,x]$. We orient $C_{i}$ such that the
  interior of the cycle locally lies to the right of its edges.  By
  the definition of $R_{i+1}$ we obtain
  $\rot(R_i[e_3,vw]) = \rot(R_{i+1}[e_3,vw])$ and
  $\rot(R_i[xy,e_4']) = \rot(R_{i+1}[xy,e_4'])$, as these subpaths of
  $R_i$ and $R_{i+i}$ coincide, respectively. Hence, it remains to
  show that $\rot(R_i[vw,xy]) \equiv \rot(R_{i+1}[vw,xy]) \pmod{4}$
  and $\rot(R_i[vw,xy]) = \rot(R_{i+1}[vw,xy])$ in the special case
  that $e = e^\star$ and $e'$ lies on an
  essential cycle that is respected by $P$ and $Q$.

  In general we can describe the obtained situation as follows. We are
  given a cycle and two edges $ab$ and $cd$ (with $b\neq c$) such
  that $b$ and $c$ lie on that cycle, but $a$ and $d$ not; see also 
  Figure~\ref{fig:measure-direction:cycle}. Hence, we can apply 
  Lemma~\ref{lem:direction_cycle}.

  We distinguish two cases: if the interior of
  $C_i$ lies locally to the right of $R_i$ we define 
  $S=vw+R_i[w,x]+xy$ and $T=vw+R_{i+1}[w,x]+xy$, and otherwise
  $S=vw+R_{i+1}[w,x]+xy$ and $T=vw+R_{i}[w,x]+xy$.  We only consider
  the first case, as the other case is symmetric. By
  Lemma~\ref{lem:direction_cycle} we obtain
  $\rot(S)-\rot(T)=\rot(C_i)+r_v+r_y$. As $\rot(C_i)\equiv 0 \pmod{4}$
  and $r_v,r_y\equiv 2 \pmod{4}$ we obtain
  $\rot(S)\equiv \rot(T) \pmod{4}$ and with this
  $\rot(R_i[vw,xy]) \equiv \rot(R_{i+1}[vw,xy]) \pmod{4}$. Altogether,
  in the general case we obtain $\rot(R_i) \equiv \rot(R_{i+1})$.

  Finally, we prove the second statement of the lemma.  Hence, there
  are two essential cycles $C$ and $C'$ such that
  \begin{compactenum}
    \item $C'$ lies in the interior of $C$,
    \item $e$ lies on $C$ and $e'$ lies on $C'$, and 
    \item $P$ and $Q$ lie in the interior of $C$ and in the exterior of $C'$.    
  \end{compactenum}
  In particular, the paths $P$ and $Q$ respect $C'$ as they lie in the
  exterior of $C'$.  We show that $\rot(S)=\rot(T)$. First, we observe
  that $R_i[e,vw]$ respects the cycle $C_i$ by the simplicity
  of $R_i$ and $R_{i+1}$. In particular, $vw$ lies in the exterior of $C_i$ so
  that $r_v=-2$.  We distinguish the two cases whether $C_i$ is
  essential or non-essential.  If $C_i$ is a non-essential cycle, the
  edge $xy$ is also contained in the exterior of $C_i$ as $P$ and $Q$
  end on the essential cycle~$C'$, which both respect. Hence, we
  obtain $r_y=-2$. Thus, we get by Lemmas~\ref{lem:rotation-cycle}
  and~\ref{lem:direction_cycle} that
  $\rot(S)-\rot(T)=\rot(C_i)+r_v+r_y=4-2-2=0$.  If $C_i$ is an
  essential cycle, then the cycle $C'$ is contained in the interior of
  $C_i$ as both contain the central face, and $C_i$ is composed by
  parts of paths that respect $C'$. Consequently, the edge $xy$ lies
  in the interior of $C_i$ so that we obtain $r_y=2$. By
  Lemmas~\ref{lem:rotation-cycle} and~\ref{lem:direction_cycle} we get
  $\rot(S)-\rot(T)=\rot(C_i)+r_v+r_y=0-2+2=0$.  It remains to prove
  Claim~\ref{claim:directions:simplifications}.

  \textit{Proof of Claim~\ref{claim:directions:simplifications}.} We
  distinguish the four cases given by the definition of
  $\dir(e,S,e')$; see Figure~\ref{fig:measure-direction:paths}.  If $S$ starts at the target of $e$
  and ends at the source of $e'$, we obtain
  $\dir(e,S,e')=\rot(e+S+e')=\rot(e_2+S+e'_1)$ as subdividing $e$ and
  $e'$ transfers the directions of $e$ and $e'$ to $e_2$ and $e'_1$,
  respectively. Hence, we obtain
  \begin{align*}
    \rot(S') & = \rot(S_1 + S +  S_2) \\
             & = \rot(e_3+e_4+e_2+ S + e'_1 + e'_3 +e'_4) \\
             & = \rot(e_3+e_4+e_2)+\rot(e_2+ S + e'_1)+ \rot(e'_1 + e'_3 +e'_4)\\
             & = 0 + \dir(e,S,e') + 0  = \dir(e,S,e'). 
  \end{align*}
  If $S$ starts at the source of $e$ and ends at the source of $e'$, we
  obtain
  $\dir(e,S,e')=\rot(\reverse{e}+S+e')+2=\rot(\reverse{e_1}+S+e'_1)+2$
  and with this we obtain
  \begin{align*}
    \rot(S') & = \rot(S_1 + S +  S_2) \\
             & = \rot(e_3+e_4+\reverse{e_1}+ S + e'_1 + e'_3 +e'_4) \\
             & = \rot(e_3+e_4+\reverse{e_1})+\rot(\reverse{e_1} + S + e'_1)+ \rot(e'_1 + e'_3 +e'_4)\\
             & = 2 + \dir(e,S,e')-2 + 0  = \dir(e,S,e'). 
  \end{align*}
  If $S$ starts at the target of $e$ and ends at the target of $e'$, we
  obtain
  $\dir(e,S,e')=\rot(e+S+\reverse{e'})-2=\rot(e_2+S+\reverse{e'_2})-2$
  and with this we obtain
    \begin{align*}
    \rot(S') & = \rot(S_1 + S +  S_2) \\
             & = \rot(e_3+e_4+e_2+ S + \reverse{e'_2} + e'_3 +e'_4) \\
             & = \rot(e_3+e_4+e_2)+\rot(e_2+ S + \reverse{e'_2})+ \rot(\reverse{e'_2} + e'_3 +e'_4)\\
             & = 0 + \dir(e,S,e')+ 2 - 2  = \dir(e,S,e'). 
    \end{align*}
  If $S$ starts at the source of $e$ and ends at the target of $e'$, we
  obtain
  $\dir(e,S,e')=\rot(\reverse{e}+S+\reverse{e'})=\rot(\reverse{e_1}+S+\reverse{e'_2})$
  and with this we obtain
    \begin{align*}
      \rot(S') & = \rot(S_1 + S +  S_2) \\
               & = \rot(e_3+e_4+\reverse{e_1}+ S + \reverse{e'_2} + e'_3 +e'_4) \\
               & = \rot(e_3+e_4+\reverse{e_1})+\rot(\reverse{e_1}+ S + \reverse{e'_2})+ \rot(\reverse{e'_2} + e'_3 +e'_4)\\
               & = 2 + \dir(e,S,e') - 2  = \dir(e,S,e'). 
  \end{align*}
  Altogether, this shows the claim $\dir(e,S,e')=\rot(S')$.
\end{proof}

\begin{corollary}\label{cor:measure-direction}
  If $e$ is the reference edge
  $e^\star$ and $e'$ lies on an essential cycle that is respected by
  $P$ and $Q$, then $\dir(e,P,e') = \dir(e,Q,e')$.
\end{corollary}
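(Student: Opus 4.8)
The plan is to obtain Corollary~\ref{cor:measure-direction} by upgrading the congruence of the first statement of Lemma~\ref{lem:measure-direction} to an honest equality, reusing the path-conversion argument from that lemma's proof almost verbatim. Fix an essential cycle $C'$ that contains $e'$ and is respected by both $P$ and $Q$, and recall that the proof of Lemma~\ref{lem:measure-direction} passes from the extension $Q'$ to the extension $P'$ through a sequence of paths $R_1=Q',\dots,R_k=P'$, where $R_{i+1}$ is obtained from $R_i$ by rerouting a detour along the cycle $C_i$ formed by $R_i[w,x]$ and $R_{i+1}[w,x]$, and where the change $\rot(R_i)-\rot(R_{i+1})$ is reduced (via Lemma~\ref{lem:direction_cycle}) to $\rot(C_i)+r_v+r_y$. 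In general only $\rot(C_i)\equiv 0$ and $r_v,r_y\equiv 2\pmod 4$ are available, which is exactly why the first statement of Lemma~\ref{lem:measure-direction} yields only a congruence; I claim that under the hypothesis of the corollary the three summands cancel, so that $\rot(P')=\rot(Q')$.

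First, $r_v=-2$. The prefix of $R_i$ ending at the edge $vw$ meets $C_i$ only in its endpoint $w$ (by simplicity of $R_i$ and $R_{i+1}$), hence lies, apart from $w$, entirely on one side of $C_i$; since it begins at a vertex incident to the outer face $f_o$ (the subdivision of $e^\star$ together with the attached edges used in that proof sits inside $f_o$), and $f_o$ lies in the exterior of every simple cycle, that side is the exterior, so $v$ lies exterior to $C_i$ and $r_v=-2$. This is the only place the hypothesis $e=e^\star$ is used, and it plays exactly the role that the outer cycle $C$ plays in the second statement of Lemma~\ref{lem:measure-direction}. Second, because $C_i$ is assembled from subpaths of $P'$ and $Q'$, which respect $C'$, the cycle $C_i$ lies in the exterior of $C'$; hence either $C_i$ is non-essential and $C'$ lies outside it, or $C_i$ is essential and, both cycles containing the central face, $C'$ lies inside it. Since $R_i$ terminates on $C'$ (as does $e'$), the tail of $R_i$ past $x$ meets $C_i$ only at $x$ and ends on the $C'$-side of $C_i$; so $xy$ is exterior to $C_i$ in the first case ($r_y=-2$, $\rot(C_i)+r_v+r_y=4-2-2=0$) and interior in the second ($r_y=+2$, $\rot(C_i)+r_v+r_y=0-2+2=0$), using Lemma~\ref{lem:rotation-cycle} for the value of $\rot(C_i)$. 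Thus $\rot(R_i)=\rot(R_{i+1})$ for every $i$, so $\rot(P')=\rot(Q')$, which by the identity $\dir(e,S,e')=\rot(S')$ proved inside Lemma~\ref{lem:measure-direction} is precisely $\dir(e^\star,P,e')=\dir(e^\star,Q,e')$.

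The main obstacle is the topological bookkeeping behind the value of $r_y$: one must rule out any ``interleaving'' of the simple cycle $C_i$ with the essential cycle $C'$ on the cylinder — a simple cycle contained in the exterior of an essential cycle is either non-essential with the essential cycle outside it, or essential with the essential cycle inside it — and one must verify that the final stretch of $R_i$ leaves $C_i$ only at $x$ and approaches $C'$, so that $y$ genuinely lies on the $C'$-side. Everything else is a transcription of the corresponding steps of the proof of Lemma~\ref{lem:measure-direction}. (An alternative, more heavyweight route is to append to $\Gamma$ a fresh essential cycle $C$ routed inside $f_o$ and passing through a suitable subdivision edge of $e^\star$, enclosing $G$, $C'$, $P$, and $Q$, and then invoke the second statement of Lemma~\ref{lem:measure-direction} directly; but this requires checking that the enlarged angle assignment stays locally consistent in the sense of Definition~\ref{def:local-conditions} while keeping the graph a $4$-graph, which is more delicate than the direct argument above.)
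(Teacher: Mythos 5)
Your argument is correct in substance, but it takes a more laborious route than the paper's one-line proof, and the "heavyweight alternative" you sketch and dismiss is actually much closer to what the paper does. The paper simply invokes the second statement of Lemma~\ref{lem:measure-direction} with $C$ taken to be the \emph{outermost essential cycle of $\Gamma$} (the essential cycle bounding the outer face, through which $e^\star$ already passes by the standing convention on reference edges; this is the same $C'$ used in the proof of Lemma~\ref{lem:label-simple-path}), and $C'$ taken to be the essential cycle through $e'$ that $P$ and $Q$ respect. Hypotheses (a)--(c) of that second statement then hold immediately: every essential cycle and every edge of $G$ lies in the interior of the outermost essential cycle, and $P$, $Q$ respecting $C'$ is exactly ``lying in the exterior of $C'$.'' No new cycle needs to be attached to $\Gamma$, no local consistency needs to be re-checked, and no degree constraints are at risk. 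What you have instead done is re-open the proof of the second statement of Lemma~\ref{lem:measure-direction} and re-derive $r_v=-2$ and $r_y\in\{-2,+2\}$ from scratch under the corollary's hypotheses; this reproduces essentially verbatim the two-case analysis ($C_i$ non-essential vs.\ essential) that already appears in that proof, with the role of the enclosing $C$ played implicitly by the fact that the augmented prefix starts in $f_o$. So the content is right, and your more explicit justification for $r_v=-2$ (tracing it to $f_o$ lying in the exterior of every simple cycle) is a reasonable clarification of a step the paper leaves terse, but the work is redundant: recognizing that $e^\star$ always sits on an existing outermost essential cycle collapses the corollary to a direct citation.
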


\begin{proof}
  The statement directly follows from the second statement of
  Lemma~\ref{lem:measure-direction} by assuming that $C$ is the outermost essential cycle
  of $\Gamma$ and $C'$ is the cycle containing $e'$.
\end{proof}

Using this result, the geometric directions of all edges of a given
ortho-radial representation $\Gamma$ can be determined as follows.
Let $P$ be any reference path from the reference edge $e^\star$ to any
edge $e$, the edge $e$ points right, down, left, and up if
$\dir(e^\star, P, e)$ is congruent to 0, 1, 2, and 3,
respectively. Lemma~\ref{lem:measure-direction} ensures that the
result is independent of the choice of the reference path.  In fact,
Lemma~\ref{lem:measure-direction} even gives a stronger result as we
can infer the geometric direction of one edge from the geometric
direction of another edge locally without having to resort to paths to
the reference edge. We often implicitly make use of
this observation in our proofs.

 \begin{figure}[t]
  \centering
  \includegraphics[]{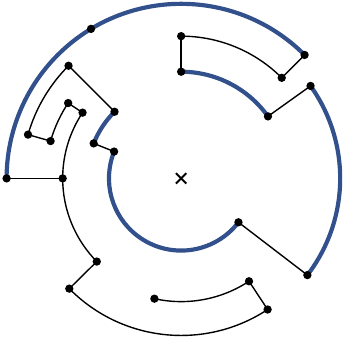}
  \caption{The outer face of an ortho-radial drawing. All outlying
    edges are marked blue.}
  \label{fig:outlying-edges}
 \end{figure}

\subsection{Drawable Ortho-Radial Representations}\label{sec:drawable-ortho-radial-represenations}
 
In this section we introduce concepts that help us to characterize the
ortho-radial representations that have an ortho-radial drawing. To
that end, consider an arbitrary bend-free ortho-radial
drawing~$\Delta$ of a plane 4-graph~$G$. As we assume throughout this
work that the outer and central face are not the same, there is an
essential cycle $C$ that lies on the outer face~$f$ of $\Delta$. Let
$e$ be a horizontal edge of $C$ that points to the right and that lies
on the outermost circle of the ortho-radial grid among all such edges,
and let $R_\Delta$ be the set of all edges $e'$ with $\dir(e,P,e')=0$
for any path $P$ on $f$; see Figure~\ref{fig:outlying-edges}. We
observe that $R_\Delta$ is independent of the choice of $e$, because
if there are multiple choices for $e$, then all of these edges are
contained in $R_\Delta$. We call the edges in $R_\Delta$ the
\emph{outlying edges} of $\Delta$. Without loss of generality, we
require that the reference edge of the according ortho-radial
representation~$\Gamma$ stems from~$R_\Delta$; as $R_\Delta$ is not
empty and all of the edges in $R_\Delta$ are possible candidates for
being a reference edge, we can always change the reference edge of
$\Gamma$ to one of the edges in $R_\Delta$. These outlying edges
possess the helpful properties that they all lie on the outermost
essential cycle, which makes them the ideal choice as reference edges.

An ortho-radial representation $\Gamma$ of a graph $G$ with reference
edge $e^\star$ is \emph{drawable} if there exists a bend-free
ortho-radial drawing $\Delta$ of $G$ embedded as specified by $\Gamma$
such that the corresponding angles in $\Delta$ and $\Gamma$ are equal
and the edge $e^\star$ is an outlying edge, i.e.,
$e^\star\in R_\Delta$.
Unlike for orthogonal representations
Condition~(\ref{cond:repr:sum_of_angles}) and
Condition~(\ref{cond:repr:rotation_faces}) do not guarantee that the
ortho-radial representation is drawable; see Figure~\ref{fig:intro:noteasy}.
Therefore, we introduce a
third condition, which is formulated in terms of labelings of
essential cycles.

Let $e$ be an edge on an essential cycle $C$ in $G$ and let $P$ be a
reference path from the reference edge $e^\star$ to $e$ that respects
$C$.  We define the \emph{label} of $e$ on $C$ as
$\ell_C(e)=\dir(e^\star,P,e)$.  By
Corollary~\ref{cor:measure-direction} the label $\ell_C(e)$ of $e$
does not depend on the choice of $P$.  We call the set of all labels
of an essential cycle its \emph{labeling}.

We call an essential cycle \emph{monotone} if either all its labels
are non-negative or all its labels are non-positive.  A monotone cycle
is a \emph{decreasing} cycle if it has at least one strictly positive
label, and it is an \emph{increasing} cycle if it has at least one
strictly negative label.  We also refer to increasing and decreasing
cycles as \emph{strictly monotone}.  An ortho-radial representation is
\emph{valid} if it contains no strictly monotone cycles.
The validity of an ortho-radial representation ensures that on each
essential cycle with at least one non-zero label there is at least one
edge pointing up and one pointing down.

A main goal of this paper is to show that a graph with a given
ortho-radial representation can be drawn if and only if the
representation is valid. Further, we show how to test validity of a given representation and how to obtain a bend-free
ortho-radial drawing from a valid ortho-radial representation in
quadratic time. Altogether, this yields our main results:

\begin{restatable*}{thm}{mrDrawable}\label{thm:main-result:drawable}
  An ortho-radial representation is drawable if and only if it is
  valid. 
\end{restatable*}

\begin{restatable*}{thm}{mrValidity}\label{thm:main-result:validity}
    Given an ortho-radial representation $\Gamma$, it can be determined
  in $\O(n^2)$ time whether $\Gamma$ is valid.  In the negative case a
  strictly monotone cycle can be computed in $\O(n^2)$  time.
\end{restatable*}

\begin{restatable*}{thm}{mrDrawing}\label{thm:main-result:drawing}
 Given a valid ortho-radial representation, a corresponding drawing can be constructed in $\O(n^2)$ time.
\end{restatable*}

We prove the three theorems in the given
order. Section~\ref{sec:transformations}--\ref{sec:characterization-rect}
deals with Theorem~\ref{thm:main-result:drawable}. In
Section~\ref{sec:rectangulation} we prove
Theorem~\ref{thm:main-result:validity}. In particular, together with
the proof of Theorem~\ref{thm:main-result:drawable} this already leads
to a version of Theorem~\ref{thm:main-result:drawing}, but with a
running time of $O(n^4)$. In Section~\ref{sec:efficient_rectangulation} we show how to achieve
$O(n^2)$ running time proving Theorem~\ref{thm:main-result:drawing}.

\section{Transformations of Ortho-Radial Representations}\label{sec:transformations}
In this section we introduce helpful tools that we use throughout this work. In the remainder of this work we assume
that we are given an ortho-radial representation~$\Gamma$ with
reference edge $e^\star$.

Since the reference edge lies on an essential cycle by definition, we can
compute the labelings of essential cycles via the rotation of paths as shown in
the following lemma. This simplifies the arguments of our proofs.

\begin{figure}[t]
  \centering
      \begin{subfigure}[c]{0.48\textwidth}
      \centering
      \includegraphics[page=1]{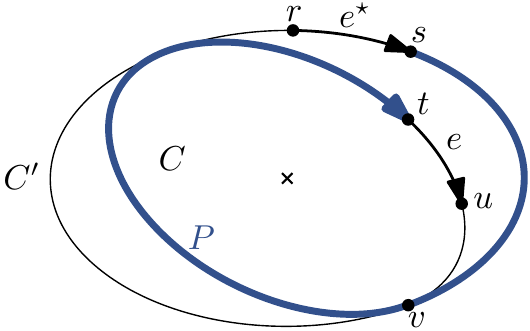}
      \caption{}
      \label{fig:reference-path:common-vertices}
    \end{subfigure}
      \begin{subfigure}[c]{0.48\textwidth}
      \centering
      \includegraphics[page=2]{fig/lemma_reference_path.pdf}
      \caption{}
      \label{fig:reference-path:disjoint}
      \end{subfigure}
      \caption{Illustration of proof for
        Lemma~\ref{lem:label-simple-path}. The outermost cycle $C'$
        contains the essential cycle $C$. There is always a reference
        path from $s$ on $C'$ to $t$ on $C$ respecting
        $C$.~(\subref{fig:reference-path:common-vertices}) The cycles
        $C$ and $C'$ have vertices in
        common.~(\subref{fig:reference-path:disjoint}) The cycles $C$
        and $C'$ are disjoint.  }
      \label{fig:reference-path}
\end{figure}

\begin{lemma}
  \label{lem:label-simple-path}
  For every edge $e$ on
  an essential cycle $C$ there is a reference path~$P$ from $e^\star$
  to $e$ such that $P$ respects $C$, starts at the target of $e^\star$
  and ends on the source of $e$. Moreover,
  $\ell_C(e)=\rot(e^\star+P+e)$.
\end{lemma}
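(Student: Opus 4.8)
For every edge $e$ on an essential cycle $C$ there is a reference path $P$ from $e^\star$ to $e$ such that $P$ respects $C$, starts at the target of $e^\star$ and ends on the source of $e$. Moreover, $\ell_C(e) = \rot(e^\star + P + e)$.

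The second assertion is almost immediate from the first. Indeed, once we have a reference path $P$ from $e^\star$ to $e$ that respects $C$, the label $\ell_C(e)$ equals $\dir(e^\star,P,e)$ by definition, and this value does not depend on the choice of such a $P$ by Corollary~\ref{cor:measure-direction} (applied to $e^\star$ and the edge $e$ on the essential cycle $C$). If moreover $P$ starts at the target of $e^\star$ and ends at the source of $e$, then the first case in the definition of $\dir$ applies and yields $\dir(e^\star,P,e)=\rot(e^\star+P+e)$. So the whole task reduces to constructing a reference path $P$ from $e^\star$ to $e$ that respects $C$, starts at the target of $e^\star$, and ends at the source of $e$.

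The plan is to exhibit $P$ explicitly. Write $e^\star=uv$ and let $t$ be the source of $e$. Since $e^\star$ lies on the outer face $f_o$ and $C$ is essential, the central face $f_c\neq f_o$ lies in the interior of $C$, so $f_o$, and with it $e^\star$ and its endpoint $v$, lie in the exterior of $C$. Let $C'$ be the outermost essential cycle of $\Gamma$; then $e^\star$ lies on $C'$ and $C$ lies in the interior of $C'$, possibly with $C=C'$. If $C=C'$, take $P$ to be the arc of $C$ that starts at $v$, follows $C$, and stops at $t$; it lies on $C$ and contains neither $e^\star$ nor $e$, since following $C$ from just behind $e^\star$ to just in front of $e$ traverses neither of these two edges. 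If $C\neq C'$, we distinguish whether $C$ and $C'$ share a vertex (Figure~\ref{fig:reference-path:common-vertices}) or are disjoint (Figure~\ref{fig:reference-path:disjoint}). In the first case pick a common vertex $w$ and let $P$ be an arc of $C'$ from $v$ to $w$ that does not traverse $e^\star$, followed by an arc of $C$ from $w$ to $t$ that does not traverse $e$. In the disjoint case, connectedness of $G$ provides a path $R$ from a vertex $a$ on $C'$ to a vertex $b$ on $C$ whose interior lies in the annular region between $C$ and $C'$ and thus avoids both cycles; we let $P$ be an arc of $C'$ from $v$ to $a$ avoiding $e^\star$, followed by $R$, followed by an arc of $C$ from $b$ to $t$ avoiding $e$. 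In every case all arcs of $C$ lie on $C$, all arcs of $C'$ lie in the exterior of $C$ (as $C$ lies inside $C'$), and $R$ lies in the exterior of $C$, so $P$ respects $C$.

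The points that actually need care are that $P$ is a genuine path, i.e.\ does not cross itself, and that the arcs can always be oriented so as to avoid the edges $e^\star$ and $e$. Non\nobreakdash-self\nobreakdash-crossing follows from planarity together with the fact that the pieces of $P$ lie on $C$, on $C'$, or strictly between them, so they can only meet at the prescribed junction vertices; this also confirms that $P$ stays in the exterior of $C$. Avoiding $e^\star$ is possible because $v$ lies on the cycle $C'$, so one of the two arcs of $C'$ from $v$ to $w$ (respectively to $a$) leaves $v$ along the edge other than $e^\star$ and, being stopped before returning to $v$, never uses $e^\star$; symmetrically $t$ lies on the cycle $C$, so one of the two arcs of $C$ reaching $t$ arrives along the edge other than $e$. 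The one mildly delicate configuration is when $e^\star$ itself lies on $C$ (so $C$ and $C'$ share the edge $e^\star$); there one again leaves $v$ along the non\nobreakdash-$e^\star$ edge of $C'$ and, if forced to, routes the closing arc of $C$ the long way around to avoid $e$, using that $C$ and $C'$ together still supply enough connectivity inside the exterior of $C$. I expect precisely this bookkeeping — keeping $P$ a non\nobreakdash-self\nobreakdash-crossing path in the exterior of $C$ while dodging the two distinguished edges $e^\star$ and $e$ — to be the only real obstacle; everything else is forced by the definitions and by Corollary~\ref{cor:measure-direction}.
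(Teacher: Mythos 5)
Your reduction of the second assertion to the first (via Corollary~\ref{cor:measure-direction} and the first case of the definition of $\dir$) is exactly what the paper does, and your high-level construction of $P$ — route along the outermost cycle $C'$, possibly through the annulus, then along $C$ — is also the paper's plan. However, there is a genuine gap in your justification that the concatenation is a simple path. In the case where $C$ and $C'$ share vertices, you \emph{pick an arbitrary} common vertex $w$ and claim that non-self-crossing "follows from planarity together with the fact that the pieces of $P$ lie on $C$, on $C'$, or strictly between them, so they can only meet at the prescribed junction vertices." That inference is false: if $C$ and $C'$ share more than one vertex, the arc of $C'$ from $v$ to $w$ and the arc of $C$ from $w$ to $t$ can certainly hit another common vertex and create a self-intersection, and planarity does nothing to prevent this. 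The paper avoids the problem precisely by \emph{not} picking $w$ arbitrarily: it takes the first common vertex of $C$ and $C'$ on $C'$ after $s$, so that the $C'$-arc meets $C$ only at that single junction vertex; the concatenation is then automatically simple.

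A smaller but related imprecision is the disjoint case: you invoke "connectedness of $G$" to get a connecting path $R$ whose interior lies strictly between $C$ and $C'$, but that existence claim is not immediate — a path produced by bare connectedness may wander into the interior of $C$ or the exterior of $C'$. The paper sidesteps this by starting from an already-existing reference path $Q$ from $e^\star$ to $e$ respecting $C$ (implicit in the well-definedness of $\ell_C$), and extracting the segment $Q[w,v]$ between the last exit from $C'$ and the first arrival on $C$; the choice of those extremal break points is again what makes the concatenation a path. You should also note that the "delicate configuration" you flag at the end (where $e^\star$ lies on $C$) is not an exceptional case once the junction vertex is chosen as above — the argument goes through uniformly.
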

\begin{proof}
  Let $Q$ be a reference path from the reference edge~$e^\star=rs$ to
  $e=tu$ respecting $C$. We construct the desired reference path $P$
  as follows.  Let $C'$ be the outermost essential cycle, i.e., $C'$
  is the unique essential cycle such that every edge of $C'$ bounds
  the outer face.  If $C$ and $C'$ have a common vertex, we define $v$
  to be the first common vertex on $C'$ after $s$; see
  Figure~\ref{fig:reference-path:common-vertices}. We set
  $P=C'[s,v] + C[v, t]$. By the choice of $v$ this concatenation is a 
  path. Moreover, it is a reference path from $e^\star$ to $e$ that
  respects $C$.  If $C$ and $C'$ are disjoint, let $v$ be the first
  vertex of $Q$ lying on $C$ and let $w$ be the last vertex of $Q$
  before $v$ that lies on $C'$; see
  Figure~\ref{fig:reference-path:disjoint}.  We set
  $P=C'[s,w] + Q[w,v] + C[v,t]$. We observe that the concatenation $P$
  is a path by the choice of $v$ and $w$, and that $P$ is a reference
  path from $e^\star$ to $e$ respecting $C$.
  
  By Corollary~\ref{cor:measure-direction} we have $\ell_{C}(e) = \dir(e^\star, P, 
  e)$ since $P$ is a reference path from $e^\star$ to $e$ respecting $C$.
  Further, since $P$ starts at the target $s$ of the reference edge $e^\star$ 
  and ends at the source $t$ of the edge $e$, we can express the direction of 
  $P$ as $\dir(e^\star, P, e)=\rot(e^\star + P + e)$. 
\end{proof}

The next lemma shows how we can change the reference edge $e^\star$ on
the essential cycle $C_o$ that is part of the outer face;
Figure~\ref{fig:reference-edge:change} illustrates the lemma.

\begin{figure}[t]
  \centering
      \begin{subfigure}[c]{0.48\textwidth}
      \centering
      \includegraphics[page=1]{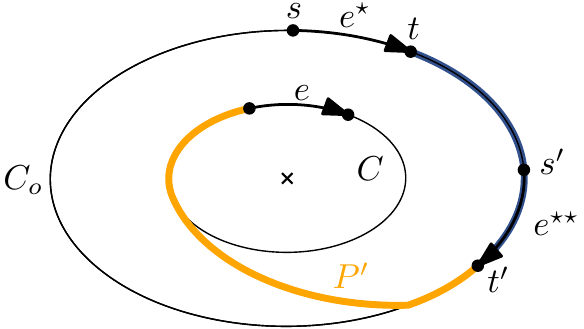}
      \caption{}
      \label{fig:reference-edge:change:case1}
    \end{subfigure}
      \begin{subfigure}[c]{0.48\textwidth}
      \centering
      \includegraphics[page=2]{fig/lemma_reference_edge.pdf}
      \caption{}
      \label{fig:reference-edge:change:case2}
      \end{subfigure}
      \caption{Illustration of proof for
        Lemma~\ref{lem:reference-edge:change}. The outermost cycle
        $C_o$ contains the essential cycle $C$. Further, there is a
        reference path $P$ from $e^\star$ to an edge $e$ on $C$.
        (\subref{fig:reference-path:common-vertices}) The path $P$
        contains the edge
        $e^{\star\star}$.~(\subref{fig:reference-path:disjoint}) The
        path $P$ does not contain the edge $e^{\star\star}$.  }
      \label{fig:reference-edge:change}
\end{figure}

\begin{lemma}\label{lem:reference-edge:change}
  Let $\Gamma$ be an ortho-radial representation and let $e^\star$ be
  the reference edge of $\Gamma$. Further, let $C_o$ be the essential
  cycle that lies on the outer face and let $e^{\star\star}$ be an edge on $C_o$ such
  that $\rot(C_o[e^\star,e^{\star\star}])=0$. For every edge $e$ on an essential cycle $C$ of $\Gamma$ it holds $\ell_C(e)=\overline{\ell}_C(e)$, where $\overline{\ell}_C$ is the labeling of $C$ with respect to $e^{\star\star}$.

  In particular, $\Gamma$ with reference edge $e^\star$ is valid if and only if $\Gamma$ with reference edge $e^{\star\star}$ is valid.
\end{lemma}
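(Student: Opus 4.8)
The plan is to lean on two facts that are already available: by Corollary~\ref{cor:measure-direction} the label $\ell_C(e)$ of an edge $e$ on an essential cycle $C$ does not depend on which reference path respecting $C$ is used to compute it, and by Lemma~\ref{lem:label-simple-path} such a reference path may be taken in a rigid normal form. I will therefore compute $\ell_C(e)$ and $\overline{\ell}_C(e)$ through reference paths that differ only by a detour along $C_o$ between $e^\star$ and $e^{\star\star}$, and observe that this detour contributes rotation exactly $\rot(C_o[e^\star,e^{\star\star}])=0$. As a preliminary I would check that $e^{\star\star}$ is an admissible reference edge: it lies on $C_o$, hence on the outer face, and if $A$ denotes $C_o[e^\star,e^{\star\star}]$ with its first edge $e^\star$ and last edge $e^{\star\star}$ deleted, then $A$ runs on $C_o$ from the target of $e^\star$ to the source of $e^{\star\star}$ and $e^\star+A+e^{\star\star}=C_o[e^\star,e^{\star\star}]$, so $\dir(e^\star,A,e^{\star\star})=\rot(C_o[e^\star,e^{\star\star}])=0$ and $e^{\star\star}$ also points to the right.

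Now fix an edge $e$ on an essential cycle $C$. Applying Lemma~\ref{lem:label-simple-path} with $e^{\star\star}$ as the reference edge yields a reference path $Q$ from $e^{\star\star}$ to $e$ that respects $C$, starts at the target of $e^{\star\star}$, ends at the source of $e$, and satisfies $\overline{\ell}_C(e)=\rot(e^{\star\star}+Q+e)$. Consider $P:=A+e^{\star\star}+Q$. Since $A$ and the edge $e^{\star\star}$ lie on $C_o$, they lie in the exterior of every essential cycle, so together with the fact that $Q$ respects $C$ the path $P$ respects $C$; moreover $P$ starts at the target of $e^\star$ and ends at the source of $e$, and, possibly after truncating $P$ at its last visit of the target of $e^\star$ (which removes any stray occurrence of $e^\star$ without destroying the previous properties), contains neither $e^\star$ nor $e$. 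Hence $P$ is a reference path from $e^\star$ to $e$ respecting $C$.

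By Corollary~\ref{cor:measure-direction} and the first case of the definition of $\dir$ we get $\ell_C(e)=\dir(e^\star,P,e)=\rot(e^\star+A+e^{\star\star}+Q+e)$. The edge $e^{\star\star}$ occurs exactly once on this path — it is not on $A$, not on $Q$ since $Q$ is a reference path from $e^{\star\star}$, and distinct from $e^\star$ and $e$ — so Observation~\ref{obs:rot_splitting_path} lets us split there:
\[
  \ell_C(e)=\rot(e^\star+A+e^{\star\star})+\rot(e^{\star\star}+Q+e)=\rot(C_o[e^\star,e^{\star\star}])+\overline{\ell}_C(e)=\overline{\ell}_C(e).
\]
This is the asserted equality. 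The ``in particular'' is then immediate: being monotone, strictly monotone, and hence valid are notions defined purely in terms of the labels $\ell_C$ of essential cycles $C$, and these labels agree for $e^\star$ and $e^{\star\star}$, so $\Gamma$ is valid with reference edge $e^\star$ if and only if it is valid with reference edge $e^{\star\star}$.

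I expect the only genuine difficulty to be bookkeeping rather than conceptual: ensuring that the concatenation $P=A+e^{\star\star}+Q$ really is an admissible reference path, i.e.\ that it does not secretly run through $e^\star$ or $e$, and dealing with degenerate configurations in which $C$ shares edges with $C_o$ or $Q$ wraps around $C_o$ past $e^\star$. Here the robustness provided by Corollary~\ref{cor:measure-direction} — any reference path respecting $C$ yields the same label — is exactly what makes these nuisances harmless, since it permits replacing $P$ by the truncated path (or by any other convenient reference path respecting $C$) without changing $\ell_C(e)$.
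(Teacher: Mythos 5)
Your high-level plan is the same as the paper's: compare $\ell_C(e)$ and $\overline{\ell}_C(e)$ by building one reference path as the concatenation of a detour along $C_o$ and a reference path for the other reference edge, and split the rotation at the connecting reference edge. However, the way you handle the degenerate cases is where the argument breaks. If $Q$ passes through $e^\star$ (which can genuinely happen: the path from Lemma~\ref{lem:label-simple-path} first travels along $C_o$ from the target of $e^{\star\star}$ until it meets $C$, and $e^\star$ may well lie on that arc), you propose to truncate $P=A\join e^{\star\star}\join Q$ at its last visit to the target of $e^\star$. But since that visit occurs strictly after $e^{\star\star}$ on $P$, the truncated path is just a suffix of $Q$ and no longer contains $e^{\star\star}$ at all. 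At that point the identity $\ell_C(e)=\rot(e^\star\join A\join e^{\star\star}\join Q\join e)$ is simply false for the truncated $P$, and the split $\rot(e^\star\join A\join e^{\star\star})+\rot(e^{\star\star}\join Q\join e)$ is not a computation of $\ell_C(e)$ for any admissible reference path. Corollary~\ref{cor:measure-direction} cannot rescue this: it lets you swap between admissible reference paths, but the untruncated $P$ is not one of them, so its rotation is not certified to equal $\ell_C(e)$.

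What is actually needed in that case is the dual calculation: write $Q=Q_1\join e^\star\join Q_2$, observe that $Q_2$ is a reference path from $e^\star$ to $e$ respecting $C$ and that $e^{\star\star}\join Q_1\join e^\star$ is a reference path from $e^{\star\star}$ to $e^\star$ lying on $C_o$, then split $\overline{\ell}_C(e)=\rot(e^{\star\star}\join Q_1\join e^\star)+\rot(e^\star\join Q_2\join e)$ and use $\rot(C_o[e^{\star\star},e^\star])=0$, which follows from $\rot(C_o[e^\star,e^{\star\star}])=0$ via Observation~\ref{obs:repr:label_difference}. This is precisely the paper's second case, and it is missing from your argument; the ``$\rot(C_o[e^{\star\star},e^\star])=0$'' observation in particular never appears. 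A smaller unaddressed issue is that $A$ itself might contain $e$ (when $e$ lies on the arc of $C_o$ between $e^\star$ and $e^{\star\star}$), which would also invalidate $P$ as a reference path; that case too has to be disposed of.
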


\begin{proof}
  Let $e$ be an edge of an essential cycle $C$ in $\Gamma$ and let $P$
  be a reference path from $e^\star$ to $e$. By
  Lemma~\ref{lem:label-simple-path} we assume that $P$ starts at the
  target of $e^\star$ and ends at the source of $e$. Without loss
  of generality we assume that only a prefix of $P$ is part of
  $C_o$. Further, let $e^\star=st$ and $e^{\star\star}=s't'$.

  If $P$ contains $e^{\star\star}$, then  
  \[\ell_C(e)=\rot(e^\star+P+e)=\rot(e^\star+P[t,e^{\star\star}])+\rot(e^{\star\star}+P'+e]),\] where $P'$ is  the suffix of $P$ that
  starts at $t'$; see Figure~\ref{fig:reference-edge:change:case1}. By
  assumption $\rot(e^\star+P[t,e^{\star\star}])=0$ and
  $\overline{\ell}_C(e)=\rot(e^{\star\star}+P'+e])$.

  If $P$ does not contain $e^{\star\star}$, then $C_o[t',e^\star]+P$
  is a reference path of $e$ with respect to $e^{\star\star}$, which
  contains $e^\star$; see Figure~\ref{fig:reference-edge:change:case2}. Swapping $e^\star$ and $e^{\star\star}$ in the
  argument of the previous case yields the claim. Note that
  $\rot(C_o[e^{\star\star},e^\star])=0$ as $\rot(C_o)=0$.
\end{proof}

In our arguments we frequently exploit certain symmetries.  For
an ortho-radial representation $\Gamma$ we introduce two new
ortho-radial representations, its \emph{flip} $\flip{\Gamma}$ and its
\emph{mirror} $\mirror{\Gamma}$.  Geometrically, viewed as a drawing on a
cylinder, a flip corresponds to rotating the cylinder by $180\degree$
around a line perpendicular to the axis of the cylinder so that is
upside down, see Figure~\ref{fig:flipping-cylinder}, whereas mirroring
corresponds to mirroring it at a plane that is parallel to the axis of
the cylinder; see Figure~\ref{fig:mirroring-cylinder}.  Intuitively,
the first transformation exchanges left/right and top/bottom,
and thus preserves monotonicity of cycles, while the second
transformation exchanges left/right but not top/bottom, and
thus maps increasing cycles to decreasing ones and vice versa.
This intuition indeed holds with the correct definitions of
$\flip{\Gamma}$ and $\mirror{\Gamma}$, but due to the non-locality of
the validity condition for ortho-radial
representations and the dependence on a reference edge this requires
some care. The following two lemmas formalize flipped and mirrored
ortho-radial representations. We denote the reverse of a directed edge $e$ by
$\reverse{e}$.

  \begin{figure}[t]
    \begin{subfigure}[c]{0.35\textwidth}
      \centering
      \includegraphics[width=\textwidth]{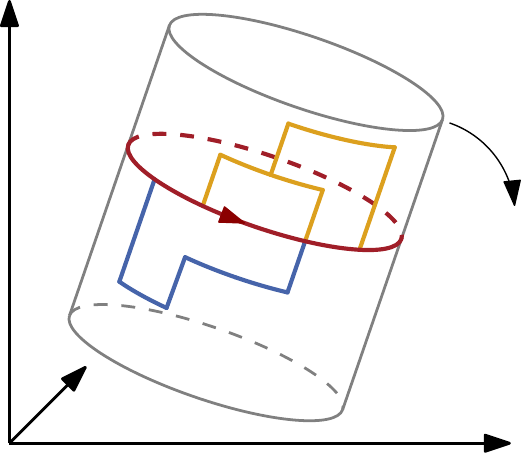}
\subcaption{Flipping the cylinder.}
    \end{subfigure} \hspace{5ex}
    \begin{subfigure}[c]{0.55\textwidth}
      \centering
      \includegraphics[page=2,width=\textwidth]{fig/flipping_cylinder.pdf}
\subcaption{The ortho-radial drawing before and after flipping.}
    \end{subfigure}
    \caption{Illustration of flipping the cylinder.}
   \label{fig:flipping-cylinder}
 \end{figure}
 
 \begin{figure}[t]
  \centering
  \includegraphics[]{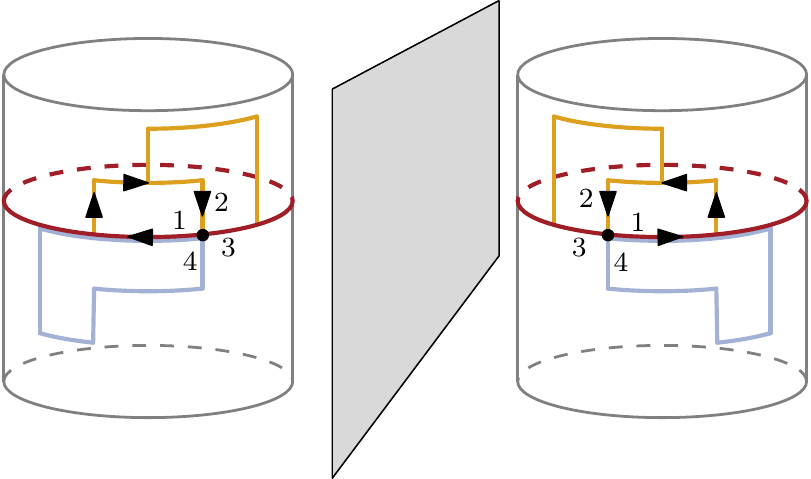}
  \caption{Mirroring the cylinder.}
  \label{fig:mirroring-cylinder}
 \end{figure}

\begin{lemma}[Flipping]
  \label{lem:flip_label}
  Let $\Gamma$ be an ortho-radial
  representation with outer face $f_o$ and central face~$f_c$. If the cycle 
  bounding the central face is not strictly monotone, there exists an ortho-radial
  representation $\flip{\Gamma}$ such that
  \begin{compactenum}
  \item $\reverse{f}_c$ is the outer face of $\flip{\Gamma}$ and 
  $\reverse{f}_o$ is the
    central face of $\flip{\Gamma}$,
  \item $\reverse{\ell}_{\reverse{C}}(\reverse{e})=\ell_{C}(e)$ for all
  essential cycles $C$ and edges $e$ on $C$, where $\reverse{\ell}$
  is the labeling in $\reverse{\Gamma}$.
  \end{compactenum}
  In particular, increasing and decreasing cycles of $\Gamma$ correspond to
  increasing and decreasing cycles of $\flip{\Gamma}$, respectively. 
\end{lemma}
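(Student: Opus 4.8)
The plan is to build $\flip{\Gamma}$ explicitly and verify the listed properties, together with the implicit requirement that $\flip{\Gamma}$ is an ortho-radial representation at all. Write $C_c$ and $C_o$ for the essential cycles bounding the central face $f_c$ and the outer face $f_o$ of $\Gamma$, so that $e^\star$ lies on $C_o$. The representation $\flip{\Gamma}$ lives on the same underlying graph~$G$, but with every directed edge $e$ replaced by its reverse $\reverse{e}$ and with the rotation system read clockwise instead of counterclockwise; this relabeling reflects turning the cylinder upside down, so that $\reverse{W}$ is a facial walk of $\flip{\Gamma}$ whenever $W$ is a facial walk of $\Gamma$. We set the central face of $\flip{\Gamma}$ to be $\reverse{f}_o$ and the outer face to be $\reverse{f}_c$, and we let $\flip{\Gamma}$ assign to each combinatorial angle the rotation that $\Gamma$ assigns to the combinatorial angle spanning the same geometric wedge of its two edges. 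With this definition Condition~\ref{cond:repr:sum_of_angles} of Definition~\ref{def:local-conditions} is immediate, since at every vertex the set of angles and their rotations is unchanged. For Condition~\ref{cond:repr:rotation_faces} one first establishes the workhorse identity that $\rot(\reverse{Q})$ in $\flip{\Gamma}$ equals $\rot(Q)$ in $\Gamma$ for every path and every cycle $Q$; this follows from the wedge-preserving definition of the rotations together with Condition~\ref{cond:repr:sum_of_angles}. Since the facial walk of the face of $\flip{\Gamma}$ that corresponds to a face $f$ of $\Gamma$ with facial walk $W$ is $\reverse{W}$, we get $\rot_{\flip{\Gamma}}(\reverse{W}) = \rot_{\Gamma}(W)$; as the flip preserves which faces are regular and maps the central (resp.\ outer) face to the outer (resp.\ central) face, the prescribed values $4$ and $0$ come out correctly. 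Hence $\flip{\Gamma}$ is an ortho-radial representation and its first stated property holds by construction.

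Here the hypothesis enters: we still need a legal reference edge for $\flip{\Gamma}$, that is, an edge of $\reverse{C}_c$ (the cycle bounding the outer face of $\flip{\Gamma}$) that points right in $\flip{\Gamma}$. Flipping and reversing an edge each exchange left/right and up/down, so their composition preserves edge directions; thus $\reverse{e}$ points right in $\flip{\Gamma}$ exactly when $e$ points right in $\Gamma$, i.e.\ when $\ell_{C_c}(e)\equiv 0\pmod{4}$. Along $C_c$ consecutive labels differ by the rotation at the vertex between them, which lies in $\{-1,0,1\}$, and by Lemma~\ref{lem:rotation-cycle} the labels return to their start after one round. If $C_c$ is not strictly monotone, then either all of its labels are $0$, or it has a strictly positive and a strictly negative label; in both cases a discrete intermediate-value argument yields an edge $e_0$ of $C_c$ with $\ell_{C_c}(e_0)=0$, and we fix $\reverse{e_0}$ as the reference edge $e^{\star\star}$ of $\flip{\Gamma}$. (If the central face is not bounded by a simple cycle the argument is the same after the standard reduction to simple cycles.)

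The main work is the label identity $\reverse{\ell}_{\reverse{C}}(\reverse{e}) = \ell_C(e)$ for every essential cycle $C$ and edge $e$ on $C$. I would compute both sides via rotations of concrete reference paths. By Lemma~\ref{lem:label-simple-path} there is a reference path $P$ from $e^\star$ to $e$ respecting $C$, starting at the target of $e^\star$ and ending at the source of $e$, with $\ell_C(e)=\rot_{\Gamma}(e^\star+P+e)$; applying the workhorse identity gives $\rot_{\flip{\Gamma}}(\reverse{e}+\reverse{P}+\reverse{e^\star})=\ell_C(e)$, i.e.\ the direction of $\reverse{e^\star}$ measured from $\reverse{e}$ along $\reverse{P}$ in $\flip{\Gamma}$ equals $\ell_C(e)$. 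It then remains to convert this into the direction of $\reverse{e}$ measured from the \emph{actual} reference edge $e^{\star\star}$ of $\flip{\Gamma}$, which by definition is $\reverse{\ell}_{\reverse{C}}(\reverse{e})$. The delicate point is that $\reverse{e^\star}$ lies on the innermost essential cycle $\reverse{C}_o$ of $\flip{\Gamma}$ whereas $e^{\star\star}=\reverse{e_0}$ lies on the outermost one $\reverse{C}_c$, so the two reference edges are at opposite ends of the nest of essential cycles and cannot be linked by merely reversing a path. I would bridge them by choosing a reference path in $\flip{\Gamma}$ from $e^{\star\star}$ to $\reverse{e}$ that stays in the exterior of $\reverse{C}$ (equivalently, the interior of $C$ in $\Gamma$), comparing it with the path above by repeated application of Lemma~\ref{lem:direction_cycle} across the essential cycles separating them, and using that $\reverse{e^\star}$ and $e^{\star\star}$ both point right (so a direction measured from either agrees) together with the path-independence granted by Corollary~\ref{cor:measure-direction}. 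The choice $\ell_{C_c}(e_0)=0$ is what makes the bridging contributions vanish, and I expect this reconciliation across all essential cycles to be the main obstacle of the proof.

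Finally, the closing claim of the lemma is immediate from the identity: it shows that the multiset of labels of $\reverse{C}$ in $\flip{\Gamma}$ equals the multiset of labels of $C$ in $\Gamma$ for every essential cycle, and every essential cycle of $\flip{\Gamma}$ arises as $\reverse{C}$ for an essential cycle $C$ of $\Gamma$; hence monotone, increasing, and decreasing cycles correspond exactly, so in particular increasing cycles correspond to increasing cycles and decreasing cycles to decreasing ones.
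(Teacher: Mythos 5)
Your overall strategy --- pick a reference edge $e_0$ on the cycle bounding $f_c$ with $\ell_{f_c}(e_0)=0$ (possible because that cycle is not strictly monotone), use $\reverse{e_0}$ as the reference edge of $\flip\Gamma$, and compare reference paths --- is the same as the paper's. But there are two real problems, the first fatal.

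\textbf{The construction of $\flip{\Gamma}$ is wrong.} You build $\flip\Gamma$ by reversing the rotation system (reading the cyclic order clockwise) and reversing directed edges. Turning the cylinder upside down is, however, an \emph{orientation-preserving} map of the surface: in coordinates $(\theta,z)$ the flip is $(\theta,z)\mapsto(-\theta,-z)$, whose Jacobian has determinant $+1$. Hence the combinatorial embedding (the counterclockwise order at each vertex) is \emph{unchanged} by a flip; only the roles of the central and the outer face are exchanged, and the reference edge must be re-selected. That is exactly what the paper does: \emph{``All other information of $\Gamma$ is transferred to $\flip{\Gamma}$ without modification.''} Your object, which reverses the embedding and also swaps the two faces, is a reflection in a horizontal plane, not a flip. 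This is not a cosmetic difference: with the correct $\flip\Gamma$ one has $\rot_{\flip\Gamma}(Q)=\rot_\Gamma(Q)$, so $\rot_{\flip\Gamma}(\reverse{Q})=-\rot_\Gamma(Q)$, whereas your ``workhorse identity'' asserts $\rot_{\flip\Gamma}(\reverse{Q})=+\rot_\Gamma(Q)$; that extra sign is exactly what your reversed embedding buys you. Worse, in your object the region lying to the right of $C$ becomes the old exterior of $C$, which contains $f_o$ (your new central face), so $C$ --- not $\reverse{C}$ --- is the correctly oriented essential cycle; the quantity $\reverse{\ell}_{\reverse{C}}(\reverse{e})$ that the lemma asks about is not even defined for your representation. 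If you nonetheless compute labels along the cycle that \emph{is} essential in your object, you land at $-\ell_C(e)$ rather than $\ell_C(e)$. So your constructed representation does not verify the lemma.

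\textbf{The bridging step is only a sketch.} You correctly identify that after computing $\rot$ along $\reverse{P}$ you have the direction of $\reverse{e^\star}$ seen from $\reverse{e}$, and that you still have to turn this around and re-anchor it at $e^{\star\star}=\reverse{e_0}$. You propose a ``repeated application of Lemma~\ref{lem:direction_cycle} across the essential cycles separating them'' without carrying it out, and flag this as ``the main obstacle.'' That is indeed the crux --- but the paper resolves it in one line: take $P$ a reference path from $e^\star$ to $e$ respecting $C$, take $Q$ a path from the target of $e$ to the source of $e_0$ in the \emph{interior} of $C$, and observe that $e^\star+P+e+Q+e_0$ (after cutting along $C$ to restore simplicity) is a reference path for $e_0$ on $f_c$, so $\rot(e^\star+P+e)+\rot(e+Q+e_0)=\ell_{f_c}(e_0)=0$. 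This gives $\ell_C(e)=\rot(e^\star+P+e)=-\rot(e+Q+e_0)=\rot(\reverse{e_0}+\reverse{Q}+\reverse{e})=\reverse{\ell}_{\reverse{C}}(\reverse{e})$, since $\reverse{Q}$ is a reference path from $e^{\star\star}$ to $\reverse{e}$ respecting $\reverse{C}$ and $\rot$ is unchanged under the (correct) flip. You should concatenate the two reference paths across $e$ rather than trying to compare them by Lemma~\ref{lem:direction_cycle}, and you also need the cutting trick for the non-simple case, which your write-up omits.
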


\newcommand{\doublestar}{\star\star}

\begin{proof}
  We define $\flip{\Gamma}$ as follows.  The central face of $\Gamma$
  becomes the outer face of $\flip{\Gamma}$ and the outer face of
  $\Gamma$ becomes the central face of $\flip{\Gamma}$.  Further, we
  choose an arbitrary edge $e^{\doublestar}$ on the central face~$f_c$ of
  $\Gamma$ with $\ell_{f_c}(e^{\doublestar}) = 0$ (such an edge exists
  since the cycle bounding the central face is not strictly monotone), and
  choose $\reverse{e^{\doublestar}}$ as the reference edge of
  $\flip{\Gamma}$.  All other information of $\Gamma$ is transferred
  to $\flip{\Gamma}$ without modification. As the local structure is unchanged, 
  $\flip{\Gamma}$ is an ortho-radial representation.
  
  The essential cycles in $\Gamma$ bijectively correspond to the essential 
  cycles in $\flip{\Gamma}$ by reversing the direction of the cycles. That 
  is, any essential cycle $C$ in $\Gamma$ corresponds to the cycle 
  $\reverse{C}$ 
  in $\flip{\Gamma}$. Note that the reversal is necessary since we always 
  consider essential cycles to be directed such that the center lies in its 
  interior, which is defined as the area locally to the right of the cycle.
  
  Consider any essential cycle $C$ in $\Gamma$. We denote the labeling
  of $C$ in $\Gamma$ by $\ell_C$ and the labeling of $\reverse{C}$ in
  $\flip{\Gamma}$ by $\reverse{\ell}_{\reverse{C}}$. We show that for
  any edge $e$ on $C$,
  $\ell_C(e) = \reverse{\ell}_{\reverse{C}}(\reverse{e})$, which in
  particular implies that any monotone cycle in $\Gamma$ corresponds
  to a monotone cycle in $\reverse{\Gamma}$ and vice versa.  By
  Lemma~\ref{lem:label-simple-path} there is a reference path $P$ from
  the target of $e^\star$ to the source of the edge~$e$
  respecting~$C$.  Similarly, there is a path $Q$ from the target of
  $e$ to the source of $e^{\doublestar}$ that lies in the interior of
  $C$. The path $\reverse{Q}$ is a reference path from
  $e^{\doublestar}$ to $\reverse{e}$ respecting $\reverse{C}$ in
  $\flip{\Gamma}$.
    
  Assume for now that $P+e+Q$ is simple. We shall see at the end how the proof 
  can be extended if this is not the case. By the choice of $e^{\doublestar}$, 
  we have
  \begin{align}
  0 &= \ell_{f_c}(e^{\doublestar}) = \rot(e^\star + P + e + Q + e^{\doublestar}) 
  = \rot(e^\star + P + e) + \rot(e + Q + e^{\doublestar})
  \end{align}
  Hence, 
  $ \rot(e^\star + P + e) = -\rot(e+Q+e^{\doublestar}) =\rot(\reverse{e^{\doublestar}} +
  \reverse{Q} + \reverse{e})$ and in total
   \begin{align}
  \reverse{\ell}_{\reverse{C}}(\reverse{e}) 
  = \rot(\reverse{e^{\doublestar}} + \reverse{Q} + 
  \reverse{e})
  = \rot(e^\star + P + e) 
  = \ell_C(e).
  \end{align}
  Thus, any monotone cycle in $\Gamma$ corresponds to a monotone cycle in 
  $\flip{\Gamma}$ and vice versa.
  
  \begin{figure}[t]
    \centering
    \begin{subfigure}[c]{0.48\textwidth}
      \centering
      \includegraphics[page=1]{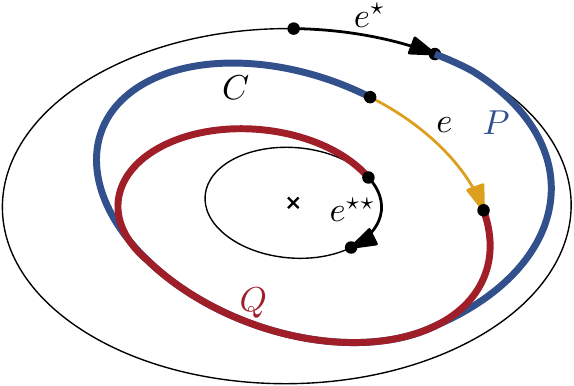}
      \subcaption{}
      \label{fig:lemma:flipping:original}
    \end{subfigure}
    \hspace{2ex}
    \begin{subfigure}[c]{0.48\textwidth}
      \centering
      \includegraphics[page=2]{fig/lemma_flipping.pdf}
      \subcaption{}
       \label{fig:lemma:flipping:copy}
    \end{subfigure}
    \caption{Illustration of proof for
      Lemma~\ref{lem:flip_label}.~(\subref{fig:lemma:flipping:original})~The
      path $P$ is a reference path from $e^{\star}$ to $e$ in $\Gamma$
      respecting $C$ and the path $Q$ is a reference path from
      $e^{\doublestar}$ to $\reverse{e}$ in $\flip{\Gamma}$ respecting
      $\reverse{C}$.~(\subref{fig:lemma:flipping:copy}) The edge $e$
      is subdivided by a vertex $x$. Afterwards, $G$ is cut at $C$
      such that the interior and the exterior get their own copies of
      $C$. The copies are connected by an edge between $x$ and $x'$.}
    \label{fig:cutting}
  \end{figure}
  
  If $P+e+Q$ is not simple, we make it simple by cutting $G$ at $C$
  such that the interior and the exterior of $C$ get their own copies
  of $C$; see Figure~\ref{fig:cutting}. We connect the two parts by an
  edge between two new vertices $x$ and $x'$ on the two copies of $e$,
  which we denote by $vw$ in the exterior part and $v'w'$ in the
  interior part.  The new edge is placed perpendicular to these
  copies.  The path $P+vxx'w'+Q$ is simple and its rotation is
  $0$. Hence, the argument above implies
  $\reverse{\ell}_{\reverse{C}}(\reverse{e}) = \ell_C(e)$.
\end{proof}

\begin{lemma}[Mirroring]
  \label{lem:mirroring_label}
  Let $\Gamma$ be an ortho-radial
  representation with outer face $f_o$ and central face $f_c$. There
  exists an ortho-radial representation $\mirror{\Gamma}$ such that
  \begin{compactenum}
  \item $\reverse{f}_o$ is the outer face of $\mirror{\Gamma}$ and 
  $\reverse{f}_c$ is the
    central face of $\mirror{\Gamma}$,
  \item $\mirror{\ell}_{\reverse{C}}(\reverse{e})=-\ell_{C}(e)$ for all
  essential cycles $C$ and edges $e$ on $C$, where $\mirror{\ell}$
  is the labeling in $\mirror{\Gamma}$.
  \end{compactenum}
  
  In particular, increasing and decreasing cycles of $\Gamma$ correspond to
  decreasing and increasing cycles of $\mirror{\Gamma}$, respectively. 
\end{lemma}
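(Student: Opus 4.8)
The plan is to follow the template of Lemma~\ref{lem:flip_label}, but here the construction is actually cleaner: mirroring fixes the central face (it does not swap it with the outer face), so no hypothesis on the central cycle is needed, and instead of choosing a fresh reference edge we may simply reverse the given one. Concretely, I would let $\mirror{\embedding}$ be the embedding obtained from $\embedding$ by reversing the counterclockwise order of the edges around every vertex, keep $f_o$ as the outer and $f_c$ as the central face, set $\rot_{\mirror{\Gamma}}(e_2,e_1) = \rot(e_1,e_2)$ for every combinatorial angle $(e_1,e_2)$ of $\embedding$, and take $\reverse{e^\star}$ as the reference edge of $\mirror{\Gamma}$. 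The edge $\reverse{e^\star}$ still bounds the outer face and, being the reverse of $e^\star$, points right, so it is admissible; since $\reverse{e^\star}$ plays in $\mirror{\Gamma}$ the same role that $e^\star$ plays in $\Gamma$, all results of Sections~\ref{sec:rotations-directions} and~\ref{sec:transformations} apply to $\mirror{\Gamma}$ as well. Because the rotation systems are only reversed, the facial walk of each face (oriented so that the face lies on its right, as in the definition of $\rot(f)$) in $\mirror{\embedding}$ is the reverse of the corresponding facial walk in $\embedding$; likewise a simple cycle $C$ has the center to its right in $\Gamma$ if and only if $\reverse{C}$ has the center to its right in $\mirror{\Gamma}$, so the essential cycles of $\mirror{\Gamma}$ are exactly the reverses of the essential cycles of $\Gamma$, and a path respects $C$ in $\Gamma$ if and only if it respects $\reverse{C}$ in $\mirror{\Gamma}$.

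The technical core is the local identity $\rot_{\mirror{\Gamma}}(abc) = -\rot(abc)$ for every directed triple $abc$ of $G$. Writing out the definition of $\rot(abc)$ and using $\rot_{\mirror{\Gamma}}(e,e') = \rot(e',e)$, the edges incident to $b$ that are traversed when passing from $ab$ to $bc$ counterclockwise in $\mirror{\embedding}$ are precisely the ones that are \emph{not} traversed when passing from $ab$ to $bc$ counterclockwise in $\embedding$; combining this with Condition~\ref{cond:repr:sum_of_angles} (the rotations around $b$ sum to $2(\deg(b)-2)$) yields the identity after a short calculation. Summing over internal vertices gives $\rot_{\mirror{\Gamma}}(P) = -\rot(P)$ for every path $P$, and the analogue for cycles; together with the identity $\rot(\reverse{P}) = -\rot(P)$ from Observation~\ref{obs:rot_splitting_path} this gives $\rot_{\mirror{\Gamma}}(\reverse{C}) = \rot(C)$ for every cycle $C$. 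Applied to facial walks (whose types are unchanged) this shows that $\mirror{\Gamma}$ satisfies Condition~\ref{cond:repr:rotation_faces}; Condition~\ref{cond:repr:sum_of_angles} for $\mirror{\Gamma}$ is immediate from $\rot_{\mirror{\Gamma}}(e_2,e_1)=\rot(e_1,e_2)$. Hence $\mirror{\Gamma}$ is an ortho-radial representation with outer face $\reverse{f}_o$ and central face $\reverse{f}_c$.

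For the label identity, fix an essential cycle $C$ of $\Gamma$ and an edge $e$ on it; then $\reverse{e}$ lies on the essential cycle $\reverse{C}$ of $\mirror{\Gamma}$. By Lemma~\ref{lem:label-simple-path} applied in $\mirror{\Gamma}$ there is a reference path $\hat P$ from $\reverse{e^\star}$ to $\reverse{e}$ that respects $\reverse{C}$, starts at the target of $\reverse{e^\star}$ and ends at the source of $\reverse{e}$, with $\mirror{\ell}_{\reverse{C}}(\reverse{e}) = \rot_{\mirror{\Gamma}}(\reverse{e^\star} + \hat P + \reverse{e})$. Now reread $\hat P$ as a path in $\Gamma$ on the same edge set. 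Its first endpoint is the target of $\reverse{e^\star}$, which is the source of $e^\star$, and its last endpoint is the source of $\reverse{e}$, which is the target of $e$; moreover $\hat P$ avoids $e^\star$ and $e$ and respects $C$ (it respects $\reverse{C}$). Hence $\hat P$ is a reference path from $e^\star$ to $e$ in $\Gamma$, so by the fourth case in the definition of $\dir$ we have $\dir(e^\star, \hat P, e) = \rot(\reverse{e^\star} + \hat P + \reverse{e})$, and by Corollary~\ref{cor:measure-direction} this equals $\ell_C(e)$. Combining with the local identity applied term by term to the path $\reverse{e^\star} + \hat P + \reverse{e}$,
\[
  \mirror{\ell}_{\reverse{C}}(\reverse{e}) \;=\; \rot_{\mirror{\Gamma}}(\reverse{e^\star} + \hat P + \reverse{e}) \;=\; -\rot(\reverse{e^\star} + \hat P + \reverse{e}) \;=\; -\ell_C(e).
\]
The statement that increasing (resp.\ decreasing) cycles of $\Gamma$ become decreasing (resp.\ increasing) cycles of $\mirror{\Gamma}$ then follows immediately from this sign change, since $C \mapsto \reverse{C}$ is a bijection between the essential cycles of $\Gamma$ and those of $\mirror{\Gamma}$.

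I expect the main obstacle to be purely one of bookkeeping: keeping straight which objects are reversed ($C$, $e$, $e^\star$, and the facial walks), checking that $\mirror{\Gamma}$ really is an ortho-radial representation, and verifying the local identity $\rot_{\mirror{\Gamma}}(abc) = -\rot(abc)$ from the definition of generalized rotations — which, notably, is not a purely local fact, as it uses Condition~\ref{cond:repr:sum_of_angles}. It is worth stressing one point where care pays off: the label equality comes out \emph{exactly} (not merely modulo $4$, which a naive ``reverse the reference path'' argument would give) precisely because the \emph{same} edge sequence $\hat P$ is used in both representations rather than being stitched together with other pieces into a single path; in particular, unlike in Lemma~\ref{lem:flip_label}, no cutting construction is required even when $\hat P$ fails to be simple.
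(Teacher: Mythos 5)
Your proposal is correct and takes essentially the same approach as the paper: you define $\mirror{\Gamma}$ by reversing the cyclic orders and setting $\rot_{\mirror{\Gamma}}(e_2,e_1)=\rot(e_1,e_2)$ with reference edge $\reverse{e^\star}$, reduce the label identity to the fact that rotations change sign under mirroring, and read it off via $\dir$ and Corollary~\ref{cor:measure-direction}. The paper asserts $\rot_\Gamma(Q)=-\rot_{\mirror{\Gamma}}(Q)$ without argument and applies it to a reference path built in $\Gamma$ (which lands in the first case of $\dir$ there and the fourth case in $\mirror{\Gamma}$), whereas you carefully derive the underlying local identity $\rot_{\mirror{\Gamma}}(abc)=-\rot(abc)$ from Condition~\ref{cond:repr:sum_of_angles} and build the reference path in $\mirror{\Gamma}$ (landing in the fourth case of $\dir$ in $\Gamma$); these are only bookkeeping differences.
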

\begin{proof}
  We define $\mirror{\Gamma}$ as follows.  We reverse the direction of
  all faces and reverse the order of the edges around each vertex.
  The outer and central face are equal to those in $\Gamma$ (except
  for the directions) and the reference edge is $\reverse{e^\star}$.
  By this definition $(e_1, e_2)$ is a combinatorial angle in $\Gamma$
  if and only if $(\reverse{e}_2, \reverse{e}_1)$ is a combinatorial
  angle in $\mirror{\Gamma}$. We define
  $\rot_{\mirror{\Gamma}}(\reverse{e}_2,\reverse{e}_1) =
  \rot_\Gamma(e_1, e_2)$, where the subscript indicates the
  ortho-radial representation that defines the rotation. Thus,
  edges that point left
  in $\Gamma$ point right in $\mirror{\Gamma}$ and vice versa, but the
  edges that point up (down) in $\Gamma$ also point up (down) in
  $\mirror{\Gamma}$. Note that this construction satisfies the
  conditions for ortho-radial representations.
 
  Let $e=tu$ be an edge on an essential cycle $C$ and let $P$ be a
  reference path from $e^\star=rs$ to $e$ that respects $C$; by
  Lemma~\ref{lem:label-simple-path} we assume that $P$ starts at~$s$
  and ends at $t$.  After mirroring, $P$ still is a reference path
  from $e^\star$ to $e$, but its rotation in~$\mirror{\Gamma}$ may be
  different from its rotation in $\Gamma$.  As above, to distinguish the
  directions and rotations of paths in $\mirror{\Gamma}$ from the ones
  in $\Gamma$, we include $\mirror{\Gamma}$ and $\Gamma$ as subscripts
  to $\rot$ and $\dir$. 

  As $P$ starts at $s$ and ends at $t$ we have by definition   
  \begin{align*}
  \dir_{\Gamma}(e^\star, P, e)&=\rot_{\Gamma}(e^\star + P + e),\\
  \dir_{\mirror{\Gamma}}(\reverse{e^\star}, P, \reverse{e})&=\rot_{\mirror{\Gamma}}(e^\star + P + e).
  \end{align*}
  As for any path $Q$ we have
  $\rot_{\Gamma}(Q) = -\rot_{\mirror{\Gamma}}(Q)$, we obtain
  $\dir_{\mirror{\Gamma}}(\reverse{e^\star}, P, \reverse{e}) =
  -\dir_{\Gamma}(e^\star, P, e)$.
  By the definition of labels as directions of reference paths we
  directly obtain that
  $\mirror{\ell}_{\reverse{C}}(\reverse{e}) = -\ell_{C}(e)$. In
  particular, if $C$ is increasing (decreasing) in $\Gamma$, then
  $\reverse{C}$ is decreasing (increasing) in $\mirror{\Gamma}$.
\end{proof}

\section{Properties of Labelings}\label{sec:labeling-properties}
In this section we study the properties of labelings in more detail to
derive useful tools for proving Theorem~\ref{thm:main-result:drawable}.
Throughout this section, we are given an instance
$(G,\mathcal E,f_c,f_o)$ with an ortho-radial representation~$\Gamma$
and a reference edge $e^\star$. The following observation follows
immediately from the definition of labels and the fact that the
rotation of any essential cycle is $0$.
\begin{observation}\label{obs:repr:label_difference}  
  Let $C$ be an essential cycle.  Then, for any two edges $e$ and $e'$ on $C$, it
  holds that $\rot(\subpath{C}{e,e'})=\ell_C(e')-\ell_C(e)$.
\end{observation}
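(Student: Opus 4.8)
The plan is to compute $\ell_C(e)$ and $\ell_C(e')$ from reference paths sharing a long common prefix, so that the difference of the two labels collapses to the rotation of the connecting arc of $C$. First I would apply Lemma~\ref{lem:label-simple-path} to get a reference path $P$ from $e^\star$ to $e$ that respects $C$, starts at the target of $e^\star$, and ends at the source of $e$; then $\ell_C(e)=\rot(e^\star+P+e)$.

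Next I would continue $P$ along $C$: let $P'$ be the path that follows $e^\star$, then $P$ to the source of $e$, then the edge $e$, and then $C$ in its own direction up to the source of $e'$, so that $e^\star+P'+e'=e^\star+P+\subpath{C}{e,e'}$. Because $P$ lies in the exterior of $C$ and $C$ itself belongs to its exterior, $P'$ respects $C$; it starts at the target of $e^\star$, ends at the source of $e'$, and can be taken to contain neither $e^\star$ nor $e'$, so it is a reference path from $e^\star$ to $e'$ respecting $C$. Hence $\ell_C(e')=\rot(e^\star+P'+e')=\rot(e^\star+P+\subpath{C}{e,e'})$ by Lemma~\ref{lem:label-simple-path} (equivalently Corollary~\ref{cor:measure-direction}). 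Splitting this path at the edge $e$ with Observation~\ref{obs:rot_splitting_path} gives $\rot(e^\star+P+\subpath{C}{e,e'})=\rot(e^\star+P+e)+\rot(\subpath{C}{e,e'})$, and combining the displays yields $\ell_C(e')=\ell_C(e)+\rot(\subpath{C}{e,e'})$, as claimed. The degenerate case $e=e'$ is immediate since $\subpath{C}{e,e}$ has rotation $0$.

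The only point requiring a little care, and the main (minor) obstacle, is ensuring that the concatenation $P'$ really is a reference path respecting $C$: the stretch of $C$ that $P'$ traverses from $e$ to $e'$ must avoid $e'$ (and $e^\star$, should $e^\star$ lie on $C$). This is where the essential-cycle hypothesis earns its keep: by Lemma~\ref{lem:rotation-cycle} we have $\rot(C)=0$, and since the two arcs of $C$ between $e$ and $e'$ satisfy $\rot(\subpath{C}{e,e'})+\rot(\subpath{C}{e',e})=\rot(C)=0$, routing $P'$ around $C$ either way gives the same value $\ell_C(e')-\ell_C(e)$. Thus one may always pick an admissible arc and the computation goes through. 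I do not anticipate any deeper difficulty here; this is precisely the ``immediate'' consequence of the definition of labels together with $\rot(C)=0$ advertised in the text.
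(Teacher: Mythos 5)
Your proof is correct. The paper gives no proof for this observation---it simply asserts that it follows immediately from the definition of labels and $\rot(C)=0$---and your argument (extend the reference path for $e$ along $C$ up to $e'$, then split the rotation at $e$ using Observation~\ref{obs:rot_splitting_path}) is exactly the intended one. One remark on the edge case you flag: the worry that $P'$ might contain $e'$ is unfounded (essential cycles are simple, so $C[e,\cdot]$ cannot revisit $e'$), but the worry about $e^\star$ is real when $e^\star$ lies on $C$; your fix via $\rot(C)=0$ is sound, and is more precisely stated as ``prove the identity for the ordered pair $(e',e)$ via the complementary arc $C[e',e]$, which avoids $e^\star$, and then negate both sides'' rather than re-routing $P'$ itself.
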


Note that, if an edge $e$ is contained in two essential cycles $C_1$
and~$C_2$, then their labelings may generally differ, i.e.,
$\ell_{C_1}(e) \ne \ell_{C_2}(e)$.  In fact,
Figure~\ref{fig:repr:intersection_cycles} shows that two cycles
$C_1,C_2$ may share two edges $e,e'$ such that
$\ell_{C_1}(e) \ne \ell_{C_2}(e)$ and
$\ell_{C_1}(e') = \ell_{C_2}(e')$.
\begin{figure}[bt]
  \centering
  \begin{subfigure}{.45\textwidth}
    \centering
    \includegraphics{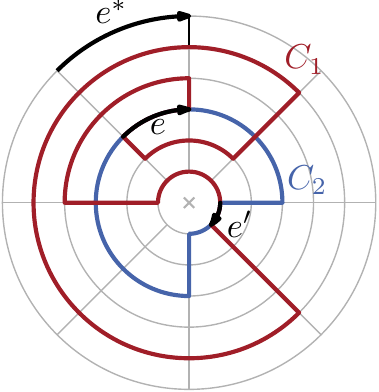}
    \caption{}
    \label{fig:repr:intersection_cycles-complete}
  \end{subfigure}
  \hfil
  \begin{subfigure}{.45\textwidth}
    \centering
    \includegraphics{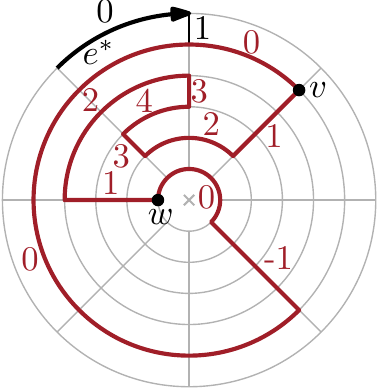}
    \caption{}
    \label{fig:repr:intersection_cycles-C1}
  \end{subfigure}
  \caption{(\protect\subref{fig:repr:intersection_cycles-complete}) Two cycles 
  $C_1$ and $C_2$ may have both common edges with different labels 
  ($\ell_{C_1}(e)=4\neq 0 = \ell_{C_2}(e)$) and ones with equal labels 
  ($\ell_{C_1}(e')=\ell_{C_2}(e')=0$).
  (\protect\subref{fig:repr:intersection_cycles-C1}) All labels of 
  $\subpath{C_1}{v,w}$ are positive, implying that $C_1$ goes down. Note that 
  not all edges of $\subpath{C_1}{v,w}$ point downwards.}
  \label{fig:repr:intersection_cycles}
\end{figure}
The rest of this section is devoted to understanding the relationship
between labelings of essential cycles that share vertices or edges.
The following technical lemma is a key tool in this respect; see also Figure~\ref{lem:two-intersecting-cycles}.

\begin{figure}[t]
  \centering
  \includegraphics{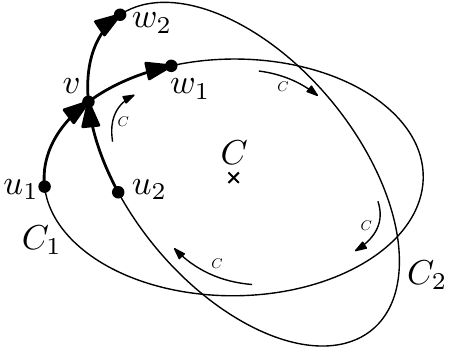}
  \caption{Illustration of proof for
    Lemma~\ref{lem:repr:equal_labels_at_intersection}. The essential
    cycles $C_1$ and $C_2$ intersect each other having a common vertex
    $v$. The essential cycle $C$ bounds the central face of $H$.}
   \label{lem:two-intersecting-cycles}
\end{figure}

\begin{lemma}\label{lem:repr:equal_labels_at_intersection}
  Let $C_1$ and $C_2$ be two essential cycles and let $H=C_1+C_2$ be
  the subgraph of $G$ formed by $C_1$ and~$C_2$. Let $v$ be a common
  vertex of $C_1$ and $C_2$ that is incident to the central face of
  $H$.  For $i=1,2$, let further~$u_i$ and~$w_i$ be the vertices
  preceding and succeeding $v$ on~$C_i$, respectively.  Then
  $\ell_{C_1}(vw_1) = \ell_{C_1}(u_1v) + \rot(u_1vw_1) =
  \ell_{C_2}(u_2v) + \rot(u_2vw_1)$.  Moreover, if $w_1=w_2$, then
  $\ell_{C_1}(vw_1)=\ell_{C_2}(vw_2)$.
\end{lemma}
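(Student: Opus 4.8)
My plan is to deduce the lemma from the labeling of the essential cycle $C$ that bounds the central face of $H=C_1+C_2$, using only Observation~\ref{obs:repr:label_difference} and Corollary~\ref{cor:measure-direction}. The first equality is immediate: since $u_1v$ and $vw_1$ are consecutive edges of the essential cycle $C_1$, the subpath $C_1[u_1v,vw_1]$ is the two-edge path $u_1vw_1$, so Observation~\ref{obs:repr:label_difference} gives $\ell_{C_1}(vw_1)-\ell_{C_1}(u_1v)=\rot(u_1vw_1)$. Applying the same to $C_2$ yields $\ell_{C_2}(vw_2)=\ell_{C_2}(u_2v)+\rot(u_2vw_2)$, a relation I reuse below. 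It therefore suffices to prove the second equality in the equivalent form $\ell_{C_1}(vw_1)=\ell_{C_2}(u_2v)+\rot(u_2vw_1)$.

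I first record properties of $C$. Its interior is the central face of $H$, which contains the central face of $G$ because $H\subseteq G$; hence $C$ is essential. Moreover the central face of $H$, being a single face of $H$, lies entirely on one side of each edge of $C_i\subseteq H$ — the side containing the center, i.e.\ the interior of $C_i$ — so the interior of $C$ is contained in the interior of $C_i$ for $i=1,2$. Consequently every reference path respecting $C_i$ also respects $C$, and therefore any edge $e$ lying on both $C_i$ and $C$ satisfies $\ell_{C_i}(e)=\ell_C(e)$: pick a reference path from $e^\star$ to $e$ respecting $C_i$ (Lemma~\ref{lem:label-simple-path}); it respects $C$ as well, and by Corollary~\ref{cor:measure-direction} both labels equal the direction it induces on $e$. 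Finally, since $v$ is incident to the central face of $H$, the cycle $C$ passes through $v$, and its outgoing edge there is one of $vw_1$, $vw_2$, $\reverse{u_1v}$, $\reverse{u_2v}$. It cannot be $\reverse{u_iv}$: the central face of $H$ is locally on the right of $C$ and hence of $\reverse{u_iv}$, but it also lies in the interior of $C_i$, which is on the right of $C_i$ and hence of the oppositely oriented edge $u_iv$, i.e.\ on the other local side of that edge — a contradiction. So $C$ leaves $v$ via $vw_1$ or via $vw_2$.

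Now the two cases. If $C$ leaves $v$ via $vw_1$, then $vw_1\in C_1\cap C$, so $\ell_{C_1}(vw_1)=\ell_C(vw_1)$; take a reference path $P$ from $e^\star$ to $u_2v$ respecting $C_2$ that starts at the target of $e^\star$ and ends at $u_2$ (Lemma~\ref{lem:label-simple-path}) and put $Q=P+u_2v$. Then $Q$ respects $C$ (as $P$ does, and the edge $u_2v$ of $H$ lies in the exterior of $C$) and ends at $v$, the source of $vw_1$; since $vw_1$ lies on the essential cycle $C$, Corollary~\ref{cor:measure-direction} gives $\ell_C(vw_1)=\dir(e^\star,Q,vw_1)=\rot(e^\star+P+u_2v+vw_1)=\ell_{C_2}(u_2v)+\rot(u_2vw_1)$, as required. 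If instead $C$ leaves $v$ via $vw_2$, the symmetric argument (with $C_1$ in the role of $C_2$) yields $\ell_{C_2}(vw_2)=\ell_{C_1}(u_1v)+\rot(u_1vw_2)$, and combining this with $\ell_{C_1}(vw_1)=\ell_{C_1}(u_1v)+\rot(u_1vw_1)$, with $\ell_{C_2}(vw_2)=\ell_{C_2}(u_2v)+\rot(u_2vw_2)$ from the first paragraph, and with the rotation identity $\rot(u_1vw_1)+\rot(u_2vw_2)=\rot(u_1vw_2)+\rot(u_2vw_1)$, gives the claim. For the ``moreover'' part, if $w_1=w_2$ then $vw_1=vw_2$, and $\ell_{C_2}(vw_2)=\ell_{C_2}(u_2v)+\rot(u_2vw_2)=\ell_{C_2}(u_2v)+\rot(u_2vw_1)=\ell_{C_1}(vw_1)$.

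The rotation identity used above follows from two applications of the identity $\rot(abc)=\rot(abd)+\rot(dbc)-2$ established in the proof of Lemma~\ref{lem:rotation-cycle}, once the cyclic order of the (at most four) edges of $H$ around $v$ is fixed; that order is constrained precisely because $v$ is incident to the central face of $H$, which forces the interiors of $C_1$ and $C_2$ to overlap at $v$. I expect the main effort to be this bookkeeping around $v$, together with two routine technicalities: ensuring $Q$ is a legitimate reference path (it must avoid $e^\star$ and $vw_1$), and handling vertices $v$ of degree less than four in $H$ or with coinciding incident edges, as well as the possibility that $C$ is not simple. All of these can be dispatched by cutting $G$ along $C$, exactly as in the non-simple case in the proof of Lemma~\ref{lem:flip_label}.
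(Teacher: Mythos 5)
Your argument is correct and follows essentially the same route as the paper's proof: both pass through the facial cycle $C$ bounding the central face of $H$ and invoke Corollary~\ref{cor:measure-direction} on reference paths that respect $C$. Your case split on whether $C$ leaves $v$ via $vw_1$ or via $vw_2$ is precisely the paper's split on whether $vw_1$ or $vw_2$ is incident to the central face, and the rotation identity $\rot(u_1vw_1)+\rot(u_2vw_2)=\rot(u_1vw_2)+\rot(u_2vw_1)$ you invoke in the second case is the same fact the paper packages as $\rot(u_ivw_1)=\rot(u_ivw_2)-\alpha$ for a common constant $\alpha$.
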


\begin{proof}
  Let $C$ be the cycle that bounds the central face of $H$.  First
  assume that the edge $vw_1$ is incident to the central face of $H$.

  Let $P$ be a reference path from the reference edge to $vw_1$
  respecting $C_1$. Similarly, let $Q$ be a reference path from the
  reference edge to $vw_2$ respecting $C_2$. By
  Lemma~\ref{lem:label-simple-path} we assume that $P$ and $Q$ start
  at the target of the reference edge and end at $v$.  We observe
  that both $P$ and $Q$ respect the essential cycle $C$.

  Then, Corollary~\ref{cor:measure-direction} applied to $P$ and~$Q$ yields
  \begin{align*}
    \ell_C(vw_1) = \rot(e^\star\join P\join vw_1) = \rot(e^\star\join P) + \rot(u_1vw_1) \\
    \ell_C(vw_1) = \rot(e^\star\join Q\join vw_1) = \rot(e^\star\join Q)+\rot(u_2vw_1)
\end{align*}
By the definition of labelings it is
$\ell_{C_1}(u_1v)=\rot(e^\star\join P)$,
$\ell_{C_2}(u_2v)=\rot(e^\star\join Q)$,
and~$\ell_{C_1}(vw_1) = \ell_{C_1}(u_1v) + \rot(u_1vw_1)$.  Combining
this with the previous equation give the desired result:
\[
 \ell_{C_1}(vw_1) = \ell_{C_1}(u_1v) + \rot(u_1vw_1)= \ell_{C}(vw_1) = \ell_{C_2}(u_2v) + \rot(u_2vw_1).
\]
If $vw_1$ does not lie on $C$, then the edge $vw_2$ does.  By swapping
the roles of~$C_1$ and~$C_2$ and using the same argument as above, we
obtain
\[
  \ell_{C_1}(u_1v) + \rot(u_1vw_2) = \ell_{C_2}(u_2v) + \rot(u_2vw_2).
\]
Since $vw_1$ lies locally to the left of both $u_1vw_2$ and $u_2vw_2$,
it is $\rot(u_ivw_1)=\rot(u_ivw_2)-\alpha$ for $i=1,2$ and the same
constant $\alpha$, which is either $1$ or $2$. Hence, we get
\begin{align*}
  \ell_{C_1}(vw_1) = \ell_{C_1}(u_1v) + \rot(u_1vw_1)
                 &= \ell_{C_1}(u_1v)+ \rot(u_1vw_2)-\alpha\\ &= \ell_{C_2}(u_2v) + \rot(u_2vw_2) -\alpha = \ell_{C_2}(u_2v) + \rot(u_2vw_1).
\end{align*}

Finally, if $w_1=w_2$, i.e., $vw_1$ lies on both $C_1$ and~$C_2$, then
$\ell_{C_1}(vw_1) = \ell_{C_2}(u_2v) + \rot(u_2vw_1) =
\ell_{C_2}(u_2v) + \rot(u_2vw_2) = \ell_{C_2}(vw_2)$.
\end{proof}

\begin{corollary}
  \label{cor:central-face-same-label}
  Let~$C_1$ and~$C_2$ be two essential cycles, let~$H=C_1+C_2$ be the
  subgraph of $G$ formed by $C_1$ and $C_2$, and let~$e$ be an edge
  that lies on both $C_1$ and~$C_2$ and that is incident to the
  central face of $H$.  Then~$\ell_{C_1}(e) = \ell_{C_2}(e)$.
\end{corollary}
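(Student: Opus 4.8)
The corollary is an immediate consequence of Lemma~\ref{lem:repr:equal_labels_at_intersection}, so the plan is simply to set up the notation so that the last sentence of that lemma applies. First I would take the edge $e$ that lies on both $C_1$ and $C_2$ and is incident to the central face of $H=C_1+C_2$, and write $e=vw$ where $v$ is the tail and $w$ the head. Since $e$ lies on both cycles, the vertex $v$ is a common vertex of $C_1$ and $C_2$, and because $e$ is incident to the central face of $H$, so is $v$; hence $v$ satisfies the hypothesis of Lemma~\ref{lem:repr:equal_labels_at_intersection}. Letting $u_i$ be the predecessor of $v$ on $C_i$ for $i=1,2$, the successor of $v$ on \emph{both} cycles is $w$ (since $e=vw$ lies on both), i.e.\ $w_1=w_2=w$ in the notation of the lemma.

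Now I would invoke the ``Moreover'' part of Lemma~\ref{lem:repr:equal_labels_at_intersection}: when $w_1=w_2$ it asserts exactly $\ell_{C_1}(vw_1)=\ell_{C_2}(vw_2)$, which in our notation is $\ell_{C_1}(e)=\ell_{C_2}(e)$. That completes the proof. The only point requiring a word of care is the direction convention: essential cycles are oriented so that the central face lies to their right, and $e$ is a shared directed edge, so it must be traversed in the same direction by $C_1$ and by $C_2$ (otherwise the central face would lie on opposite sides), which is why the head of $e$ is genuinely the common successor $w$ of $v$ on both cycles rather than only the predecessor. I would state this briefly and then apply the lemma.

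There is essentially no obstacle here — the real work was done in Lemma~\ref{lem:repr:equal_labels_at_intersection}; the corollary is just the special case where the shared vertex $v$ is followed by a shared edge on both cycles. If anything, the mild subtlety is making sure $e$ incident to the central face of $H$ forces $v$ to be incident to it as well, which is immediate since a vertex incident to an edge bounding a face is incident to that face.
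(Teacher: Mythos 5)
Your proposal is correct and is exactly the argument the paper intends: the corollary is stated without proof precisely because it is the special case $w_1=w_2$ of the ``Moreover'' clause of Lemma~\ref{lem:repr:equal_labels_at_intersection}. Your additional observation—that $e$ being incident to the central face of $H$ forces $C_1$ and $C_2$ to traverse $e$ in the same direction (since each is oriented with the center to its right), so that the tail $v$ is a shared vertex incident to the central face with a common successor $w$—is the right point of care and makes the implicit step explicit.
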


This allows us to prove an important criterion to exclude strictly monotone
cycles. We call an essential cycle $C$ \emph{horizontal} if
$\ell_C(e) = 0$ for all edge $e$ of $C$. We show that a strictly monotone cycle
and a horizontal cycle cannot share vertices.

\begin{proposition}
\label{prop:horizontal_cycle}
Let $C_2$ be a horizontal cycle and let $C_1$ be an essential cycle
that shares at least one vertex with $C_2$.  Then $C_1$ is not strictly
monotone.
\end{proposition}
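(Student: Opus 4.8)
The plan is a proof by contradiction that reduces every case to one local turn computation on a single cycle. Suppose $C_1$ is strictly monotone. Since the Mirroring Lemma (Lemma~\ref{lem:mirroring_label}) negates all labels, it keeps $C_2$ horizontal and turns an increasing cycle into a decreasing one; replacing $\Gamma$ by $\mirror{\Gamma}$ if necessary we may therefore assume that $C_1$ is \emph{decreasing}, i.e.\ $\ell_{C_1}(e)\ge 0$ for every edge $e$ of $C_1$ and $\ell_{C_1}(e_0)>0$ for at least one edge $e_0$.

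The core is the following self-contained claim, which I would prove first: \emph{if $D$ is a decreasing essential cycle and $E$ a horizontal essential cycle that share a vertex and such that one of $D,E$ bounds the central face of $D+E$, then a contradiction arises.} To see this, note that every edge shared by $D$ and $E$ lies on whichever of $D,E$ is the inner cycle, hence bounds the central face of $D+E$, so by Corollary~\ref{cor:central-face-same-label} it has $\ell_D$-label $0$; thus it points right and is traversed the same way by $D$ and by $E$. Because $E$ is horizontal, its two edges at any common vertex occupy the left and right directions there, and since a vertex has at most one incident edge per direction, a short case distinction forces $D$ to traverse a shared edge at every common vertex and forces its maximal subpaths avoiding $E$ (there is at least one, as $D\ne E$) to begin and end at common vertices. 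If the inner cycle is $D$, such a detour runs strictly inside $E$, so its last edge reaches $E$ pointing up while the next edge of $D$ is a shared edge pointing right; the turn of $D$ at that vertex is a right turn, so the last detour edge gets $\ell_D$-label $0-1=-1$. If the inner cycle is $E$, the detour runs strictly outside $E$, so its first edge leaves $E$ pointing up right after a shared edge pointing right; that turn is a left turn, so the first detour edge gets $\ell_D$-label $0-1=-1$. In both cases this contradicts that $D$ is decreasing. The two situations are handled symmetrically, and keeping track of which of the four directions at a vertex is occupied is, I expect, the only place where real care is needed.

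To finish, I must produce such a configuration. Let $H=C_1+C_2$ and let $C$ be the cycle bounding the central face of $H$; it is essential. Every edge of $C$ lies on $C_1$ or on $C_2$, and since it bounds the central face of $H$ it also bounds the central face of $C+C_1$, resp.\ of $C+C_2$; hence Corollary~\ref{cor:central-face-same-label}, applied to $(C,C_1)$ and to $(C,C_2)$, gives $\ell_C(e)=\ell_{C_1}(e)\ge 0$ when $e$ lies on $C_1$ and $\ell_C(e)=\ell_{C_2}(e)=0$ when $e$ lies on $C_2$. Thus all labels of $C$ are non-negative, so $C$ is horizontal or decreasing.

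Now I distinguish four cases, each reducing to the claim. (i) If $C=C_1$, then $C_1$ bounds the central face of $C_1+C_2$; apply the claim with $D=C_1$, $E=C_2$. (ii) If $C=C_2$, then $C_2$ bounds the central face of $C_1+C_2$; apply it again with $D=C_1$, $E=C_2$. (iii) If $C\notin\{C_1,C_2\}$ and $C$ is decreasing, then $C$ lies in the closed interior of $C_2$, so no edge of $C_2$ lies strictly inside $C$ and hence $C$ bounds the central face of $C+C_2$; moreover $C$ shares an edge with $C_2$ because $C\ne C_2$, so apply the claim with $D=C$, $E=C_2$. (iv) If $C\notin\{C_1,C_2\}$ and $C$ is horizontal, then symmetrically $C$ lies in the closed interior of $C_1$, so $C$ bounds the central face of $C_1+C$ and shares an edge with $C_1$; apply the claim with $D=C_1$, $E=C$. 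In every case we reach a contradiction, so $C_1$ is not strictly monotone. The main obstacle, as noted, is the turn bookkeeping inside the claim; everything else is a combination of Corollary~\ref{cor:central-face-same-label} with the plane structure of $H$.
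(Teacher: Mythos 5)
Your proof is correct, but it takes a genuinely different route from the paper's. The paper works entirely at the level of two consecutive common vertices $v$ and $y$ on the central face of $H=C_1+C_2$: it applies Lemma~\ref{lem:repr:equal_labels_at_intersection} at both (where the horizontal cycle contributes label $0$) to get $\ell_{C_1}(vw_1)=\rot(u_2vw_1)$ and $\ell_{C_1}(x_1y)=-\rot(x_1yz_2)$, and then observes that the two turns have the same sign because the detour of $C_1$ is on a single side of $C_2$; this immediately yields one positive and one negative label on $C_1$, ruling out both increasing and decreasing at once with no normalization. You instead normalize to the decreasing case by mirroring, isolate a self-contained claim about a decreasing $D$ and a horizontal $E$ where one of them bounds the central face of $D+E$ (proved with the nice local picture that all shared edges have label $0$, hence point right, so the vertical entry/exit of a detour creates a label $-1$), and then reduce to the claim by a four-way case split on the cycle $C$ bounding the central face of $H$. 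Your deductions there — all labels of $C$ nonnegative via Corollary~\ref{cor:central-face-same-label}, $C$ essential, and $C$ in the closed interior of whichever of $C_1,C_2$ you pair it with — are correct, though they carry some extra topological bookkeeping ($C$ does not cross $C_1$ or $C_2$, $C$ is simple, etc.) that the paper sidesteps by never naming the boundary cycle and arguing purely at shared vertices. What each buys: the paper's proof is markedly shorter and symmetric in increasing/decreasing, while yours uses only the weaker Corollary~\ref{cor:central-face-same-label} rather than the full Lemma~\ref{lem:repr:equal_labels_at_intersection} and gives a more visualizable local picture (every shared edge is horizontal, every detour enters or leaves vertically), which you correctly identify as the place where the real care lies. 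One small thing to spell out if you write this up: at a common vertex with no shared edge, $D$ would use the up and down slots while $E$ uses left and right, which forces $E$ (or $D$) to have an edge strictly inside the other — this is the ``short case distinction'' that rules out vertex-only contacts and guarantees the detour endpoints do meet a shared, rightward edge.
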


\begin{figure}[t]
  \centering
  \includegraphics[page=2]{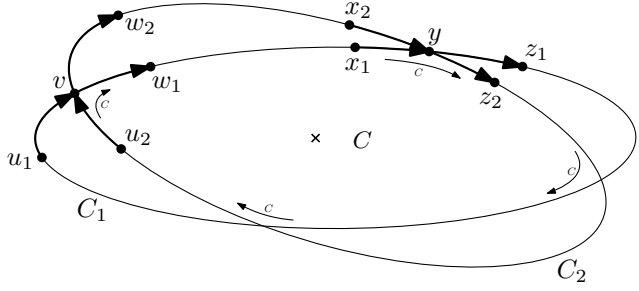}
  \caption{Illustration of proof for
    Proposition~\ref{prop:horizontal_cycle}.  All edges of $C_2$ are
    labeled with 0. In this situation there are edges on $C_1$ with
    labels $-1$ and $1$. Hence, $C_2$ is neither increasing nor
    decreasing.}
   \label{fig:intersecting-horizontal-cycle}
\end{figure}

\begin{proof}
  The situation is illustrated in
  Figure~\ref{fig:intersecting-horizontal-cycle}.

  If the two cycles are equal, the claim clearly holds.  Otherwise, we
  show that one can find two edges on $C_1$ whose labels have opposite
  signs.

  Let $v$ be a shared vertex of $C_1$ and~$C_2$ that is incident to
  the central face $f$ of $H=C_1 + C_2$ and such that the edge $vw_1$
  on $C$ is not incident to $f$.  For $i=1,2$ denote by $u_i$
  and~$w_i$ the vertex preceding and succeeding $v$ on~$C_i$,
  respectively.  By Lemma~\ref{lem:repr:equal_labels_at_intersection}
  it is
\begin{equation}\label{eqn:rect:two_cycles_horizontal:vw}
\ell_{C_1}(vw_1)=\ell_{C_2}(u_2v)+\rot(u_2vw_1) = \rot(u_2vw_1)\,,
\end{equation}
where the second equality follows from the assumption that
$\ell_{C_2}(u_2v)=0$.

Let $y$ be the first common vertex of $C_1$ and $C_2$ on the central
face~$f$ after $v$.  That is, $\subpath{f}{v,y}$ is a part of one of
the cycles $C_1$ and $C_2$, and it intersects the other cycle only at
$v$ and $y$.  For $i=1,2$, we denote by~$x_i$ and~$z_i$ the vertices
preceding and succeeding $y$ on $C_i$.  Again by
Lemma~\ref{lem:repr:equal_labels_at_intersection} (this time swapping
the roles of $C_1$ and~$C_2$), we have
\begin{equation}\label{eqn:rect:two_cycles_horizontal:xy}
  0 = \ell_{C_2}(yz_2) = \ell_{C_1}(x_1y)+\rot(x_1yz_2).
\end{equation}
Overall, we have $\ell_{C_1}(vw_1) = \rot(u_2vw_1)$
and~$\ell_{C_1}(x_1y) = -\rot(x_1yz_2)$.

By construction $vw_1$ and $x_1y$ lie on the same side of
$C_2$. Hence, $u_2vw_1$ and $x_1yz_2$ both make a right turn if $vw_1$
and $x_1y$ lie in the interior of $C_2$ and a left turn otherwise.
Thus, it is $\rot(u_2vw_1)=\rot(x_1yz_2)\neq 0$, and therefore
$\ell_{C_1}(vw_1)$ and $\ell_{C_1}(x_1y)$ have opposite signs.  Hence,
$C_1$ is not strictly monotone.
\end{proof}

In many cases we cannot assume that one of two essential cycles
sharing a vertex is horizontal.  However, we can still draw
conclusions about their intersection behavior from their labelings and
find conditions under which shared edges have the same label on both
cycles.

Intuitively, positive labels can often be interpreted as going
downwards and negative labels as going upwards.  In
Figure~\ref{fig:repr:intersection_cycles-C1} all edges of
$\subpath{C_1}{v,w}$ have positive labels and in total the distance from the center decreases along this path, i.e., the distance of
$v$ from the center is greater than the distance of $w$ from the center.  Yet, the edges on
$\subpath{C_1}{v,w}$ point in all possible directions---even upwards.
One can still interpret a maximal path with positive labels as leading
downwards with the caveat that this is a property of the whole path
and does not impose any restriction on the directions of the
individual edges.

Using this intuition, we expect that a path~$P$ going down cannot
intersect a path~$Q$ going up if $P$ starts below $Q$.  In
Lemma~\ref{lem:repr:illegal_intersection}, we show that this
assumption is correct if we restrict ourselves to intersections on the
central face.

\begin{lemma}\label{lem:repr:illegal_intersection}
  Let $C_1$ and $C_2$ be two simple, essential cycles in $G$ sharing
  at least one vertex.  Let $H = C_1 + C_2$, and denote the central
  face of $H$ by $f$.  Let $v$ be a vertex that is shared by~$C_1$
  and~$C_2$ that is incident to $f$ and, for $i=1,2$, let~$u_i$
  and~$w_i$ be the vertices preceding and succeeding~$v$ on $C_i$,
  respectively.  
  \begin{compactenum}
  \item\label{lem:repr:illegal_intersection-in} If $\ell_{C_1}(u_1v)\geq 0$
    and $\ell_{C_2}(u_2v)\leq 0$, then $u_2v$ lies in the interior of $C_1$.
  \item\label{lem:repr:illegal_intersection-out} If $\ell_{C_1}(vw_1)\geq 0$
    and $\ell_{C_2}(vw_2)\leq 0$, then $vw_2$ lies in the exterior of $C_1$.
  \end{compactenum}
\end{lemma}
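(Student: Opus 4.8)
The plan is to convert both statements into a single inequality between rotations at~$v$ and then read off, from the combinatorial structure at~$v$, on which side of~$C_1$ the relevant edge leaves~$v$. I carry out the argument for part~\ref{lem:repr:illegal_intersection-in} in full; part~\ref{lem:repr:illegal_intersection-out} is entirely analogous, with the incoming edges~$u_iv$ replaced by the outgoing edges~$vw_i$ and ``interior'' by ``exterior''. If the edge~$u_2v$ already lies on~$C_1$ (that is, $u_2\in\{u_1,w_1\}$ via that edge) there is nothing to show, since~$C_1$ belongs to its own interior; so assume~$u_2v\notin C_1$, in which case the three edges incident to~$v$ pointing towards~$u_1$, towards~$w_1$ and towards~$u_2$ have pairwise distinct geometric directions.

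Since~$v$ is incident to the central face of~$H=C_1+C_2$, Lemma~\ref{lem:repr:equal_labels_at_intersection} applies and gives $\ell_{C_1}(u_1v)+\rot(u_1vw_1)=\ell_{C_1}(vw_1)=\ell_{C_2}(u_2v)+\rot(u_2vw_1)$, whence
\[
 \rot(u_2vw_1)-\rot(u_1vw_1)=\ell_{C_1}(u_1v)-\ell_{C_2}(u_2v)\geq 0
\]
by the hypotheses~$\ell_{C_1}(u_1v)\geq 0\geq\ell_{C_2}(u_2v)$. Now, viewed locally around~$v$, the interior of~$C_1$ is the angular sector bounded by the edges~$vu_1$ and~$vw_1$ that lies to the right of the directed path~$u_1vw_1$. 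Hence~$u_2v$ leaves~$v$ into the interior of~$C_1$ unless the edges~$vu_1$,~$vu_2$,~$vw_1$ occur in this clockwise cyclic order around~$v$. Assume, for contradiction, that they do. Then Equation~\eqref{lem:rotation:helping-equation} with $b=v$, $a=u_1$, $c=u_2$, $d=w_1$ gives $\rot(u_1vu_2)=\rot(u_1vw_1)+\rot(w_1vu_2)-2$, and since $\rot(u_2vw_1)=-\rot(w_1vu_2)$ by Observation~\ref{obs:rot_splitting_path}, substitution yields $\rot(u_2vw_1)-\rot(u_1vw_1)=-\rot(u_1vu_2)-2\leq -1<0$, using that $\rot(u_1vu_2)\geq -1$ because~$vu_1$ and~$vu_2$ have different directions. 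This contradicts the displayed inequality, so~$u_2v$ lies in the interior of~$C_1$.

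For part~\ref{lem:repr:illegal_intersection-out} one proceeds symmetrically. If~$vw_2$ lies on~$C_1$ we are done since~$C_1$ belongs to its own exterior. Otherwise, Lemma~\ref{lem:repr:equal_labels_at_intersection} applied with the roles of~$C_1$ and~$C_2$ interchanged gives $\ell_{C_2}(vw_2)=\ell_{C_1}(u_1v)+\rot(u_1vw_2)$, and subtracting $\ell_{C_1}(vw_1)=\ell_{C_1}(u_1v)+\rot(u_1vw_1)$ yields $\rot(u_1vw_2)-\rot(u_1vw_1)=\ell_{C_2}(vw_2)-\ell_{C_1}(vw_1)\leq 0$. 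If~$vw_2$ did not leave~$v$ into the exterior of~$C_1$, the edges~$vu_1$,~$vw_1$,~$vw_2$ would occur in this clockwise cyclic order around~$v$, and Equation~\eqref{lem:rotation:helping-equation} with $b=v$, $a=u_1$, $c=w_1$, $d=w_2$ would give $\rot(u_1vw_1)=\rot(u_1vw_2)+\rot(w_2vw_1)-2\leq\rot(u_1vw_2)-1$, contradicting $\rot(u_1vw_2)\leq\rot(u_1vw_1)$.

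I expect the geometric translation to be the delicate point: identifying purely combinatorially which cyclic arrangement of the (at most four) edges around~$v$ sends~$u_2v$, respectively~$vw_2$, into the interior, respectively exterior, of~$C_1$, and handling the arrangements in which two of these edges coincide; keeping the clockwise/counterclockwise conventions consistent is where mistakes are easy. Beyond that, the proof is a direct combination of Lemma~\ref{lem:repr:equal_labels_at_intersection} with Equation~\eqref{lem:rotation:helping-equation} from the proof of Lemma~\ref{lem:rotation-cycle}.
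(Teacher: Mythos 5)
Your proof is correct, and it rests on the same key tool as the paper's: Lemma~\ref{lem:repr:equal_labels_at_intersection} applied at the shared vertex $v$, followed by reading off the conclusion from the cyclic edge order around~$v$. The differences are in bookkeeping rather than strategy. The paper applies the lemma to the edge $vx$ among $\{vw_1,vw_2\}$ that is incident to the central face, which makes the final step (``$u_2v$ lies to the right of or on $u_1vx$, hence in the interior of $C_1$'') work uniformly whether $x=w_1$ or $x=w_2$, and it disposes of part~\eqref{lem:repr:illegal_intersection-out} by passing to the mirror representation (Lemma~\ref{lem:mirroring_label}). You instead always use $vw_1$ in part~\eqref{lem:repr:illegal_intersection-in} and $vw_2$ in part~\eqref{lem:repr:illegal_intersection-out}, handle the degenerate on-$C_1$ case up front, treat both parts directly, and make the geometric step a proof by contradiction through Equation~\eqref{lem:rotation:helping-equation}. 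Both readings of the cyclic order at $v$ are consistent with the paper's conventions (interior to the right of the clockwise-directed cycle, $\embedding(v)$ counterclockwise), so the argument goes through; your version is somewhat more verbose but spells out the step the paper leaves informal.
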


\begin{figure}[bt]
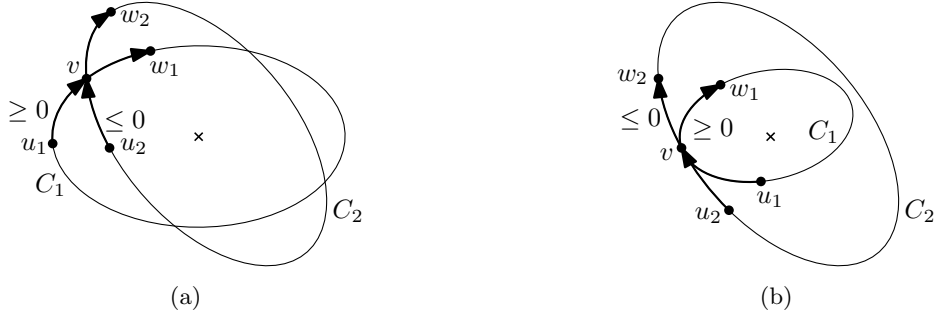

  \centering
  \begin{subfigure}{.45\textwidth}
    \centering
     \includegraphics[page=3]{fig/lemma_two_intersecting_cycles}
     \caption{}
     
    \label{fig:repr:illegal_intersection-in}
  \end{subfigure}
  \hfil
  \begin{subfigure}{.45\textwidth}
    \centering
 \includegraphics[page=4]{fig/lemma_two_intersecting_cycles}
    \caption{}
    \label{fig:repr:illegal_intersection-out}
  \end{subfigure}
  \caption{Possible intersection of two cycles~$C$ and $C'$ at
    $v$. (\subref{fig:repr:illegal_intersection-in})~The labels of the incoming edges satisfy
    $\ell_{C_1}(u_1v)\geq 0$ and $\ell_{C_2}(u_2v)\leq 0$. The edges
    $vw_1$ and $vw_2$ could be exchanged. (\subref{fig:repr:illegal_intersection-out})~The labels of the outgoing edges satisfy
      $\ell_{C_1}(vw_1)\geq 0$ and $\ell_{C_2}(vw_2)\leq 0$. The edges $u_1v$
      and $u_2v$ could be exchanged. }
  \label{fig:repr:illegal_intersection}
\end{figure}

\begin{proof}
  The second case follows from the first by taking the mirror
  representation; this reverses the order on the cycles and changes the
  sign of each label by Lemma~\ref{lem:mirroring_label}, but does not
  change the notion of interior and exterior.  It therefore suffices to 
  consider the first case.

  Since the central face~$f$ lies in the interior of both $C_1$ and $C_2$
  and $v$ lies on the boundary of $f$, one of the edges $vw_1$ and
  $vw_2$ is incident to $f$.  We denote this edge by $vx$ and it is
  either $x=w_1$ (as in Figure~\ref{fig:repr:illegal_intersection-in})
  or $x=w_2$.  By Lemma~\ref{lem:repr:equal_labels_at_intersection},
  we
  have~$\ell_{C_1}(u_1v) + \rot(u_1vx) = \ell_{C_2}(u_2v) +
  \rot(u_2vx)$.  Applying $\ell_{C_1}(u_1v)\geq 0$ and
  $\ell_{C_2}(u_2v)\leq 0$, we obtain $\rot(u_1vx) \leq
  \rot(u_2vx)$. Therefore, $u_2v$ lies to the right of or on $u_1vx$
  and thus in the interior of $C_1$.
\end{proof}

The next lemma is a direct consequence of
Lemma~\ref{lem:repr:illegal_intersection} when applied to decreasing
and increasing cycles.

\begin{lemma}
  \label{lem:increasing_decreasing_disjoint}
  A decreasing and an increasing cycle do not have any common vertex.
\end{lemma}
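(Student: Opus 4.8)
The plan is a proof by contradiction. Suppose a decreasing cycle $C_1$ and an increasing cycle $C_2$ share a vertex. First note that no cycle can be both increasing and decreasing, since that would force all of its labels to be $0$; hence $C_1\neq C_2$. Set $H=C_1+C_2$, let $f$ be the central face of $H$, and let $C$ be the facial walk of $f$; since $f$ contains the center of the grid, $C$ is essential. The key step is to understand how $C$ passes through a common vertex of $C_1$ and $C_2$ that is incident to $f$.

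So let $v$ be such a common vertex, and let $u_i,w_i$ be the predecessor and successor of $v$ on $C_i$. Because $C_1$ is decreasing, $\ell_{C_1}(u_1v)\ge 0$ and $\ell_{C_1}(vw_1)\ge 0$; because $C_2$ is increasing, $\ell_{C_2}(u_2v)\le 0$ and $\ell_{C_2}(vw_2)\le 0$. Hence the first part of Lemma~\ref{lem:repr:illegal_intersection} places $u_2v$ in the interior of $C_1$ and its second part places $vw_2$ in the exterior of $C_1$. If the four edges $u_1v,vw_1,u_2v,vw_2$ are pairwise distinct, this fixes their cyclic order around $v$ (with $u_2v$ on the interior side of $C_1$ and $vw_2$ on the exterior side), and, since $f$ lies in the interior of both $C_1$ and $C_2$, the face $f$ is forced to be the sector bounded by $u_2v$ and $vw_1$: the only other candidate is the sector bounded by $u_1v$ and $u_2v$, and no facial walk can traverse it because both $u_1v$ and $u_2v$ are oriented towards $v$. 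Consequently $C$ enters $v$ along the $C_2$-edge $u_2v$ and leaves $v$ along the $C_1$-edge $vw_1$. The degenerate configurations in which $C_1$ and $C_2$ already share an edge at $v$ are handled separately with Lemma~\ref{lem:repr:equal_labels_at_intersection}, which forces such shared edges to have label $0$ on both cycles.

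This local statement finishes the proof. Along $C$ every switch between an arc of $C_1$ and an arc of $C_2$ is of the form ``from $C_2$ to $C_1$'', so $C$ admits no switch from $C_1$ to $C_2$; as a closed walk that uses arcs of both cycles must have switches of both types, $C$ consists of edges of only one of the cycles, i.e.\ $C=C_1$ or $C=C_2$. (If $C$ were horizontal we would be done immediately, since $C_1$ shares a vertex with $C$ and Proposition~\ref{prop:horizontal_cycle} would contradict that $C_1$ is strictly monotone.) Passing to the mirror representation $\mirror{\Gamma}$ if necessary, which interchanges increasing and decreasing cycles by Lemma~\ref{lem:mirroring_label} but preserves interiors and exteriors, we may assume that $C$ equals the decreasing cycle $C_1$. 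Then no edge of $H$ lies strictly inside $C_1$, so the edge $u_2v$ — which lies in the interior of $C_1$ — actually lies on $C_1$; hence $C_1$ and $C_2$ share an edge at $v$. Tracking this shared edge backwards along $C_2$ and applying the first part of Lemma~\ref{lem:repr:illegal_intersection} repeatedly shows that every edge of $C_2$ lies on $C_1$, so $C_2=C_1$ — contradicting that one is decreasing and the other increasing. The main obstacle is the local case distinction at $v$: ruling out the ``wrong'' sector via edge orientations, and dealing with the degenerate cases where the two cycles already share an edge.
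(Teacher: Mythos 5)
You take a genuinely different route. The paper's proof is a direct crossing argument: it picks a maximal common path $P$ of $C_1$ and $C_2$ on the central face $f$ of $H=C_1+C_2$, uses Lemma~\ref{lem:repr:illegal_intersection} at the two endpoints of $P$ to place the two ends of $C_2\setminus P$ strictly on opposite sides of $C_1$, and lets the first re\-intersection of $C_2$ with $C_1$ on $f$ contradict the lemma. You instead analyse the boundary walk $C$ of $f$ globally: the local observation that $C$ enters every $C_1\cap C_2$-vertex on $f$ along the increasing cycle and leaves along the decreasing cycle is correct, and the conclusion that $C$ can have no $C_1$-to-$C_2$ switch and must therefore coincide with one of the two cycles is a valid (if considerably longer) way to force the contradiction; it echoes the technique the paper uses in Claim~\ref{claim:endpoints-of-path} of Lemma~\ref{lem:horizontal-path}. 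Two points deserve shoring up. The appeal to Lemma~\ref{lem:repr:equal_labels_at_intersection} for the degenerate case is a red herring: knowing a shared edge on $f$ has label~$0$ does not extend your switch classification to the endpoints of a maximal shared segment; what does (and is all that is needed there) is another application of Lemma~\ref{lem:repr:illegal_intersection} at that endpoint, where only three of the relevant edges at $v$ are distinct -- exactly what the paper's maximal common path sets up. And the closing step ``tracking the shared edge backwards along $C_2$'' should be made explicit: since $C=C_1$ bounds the central face of $H$, no edge of $H$ lies strictly inside $C_1$, so the interior edge $u_2v$ coincides with $u_1v$; iterating this at $u_1$, which is again a shared vertex on $C$, marches around $C_2$ and yields $C_2\subseteq C_1$, hence $C_2=C_1$ since both are simple cycles.
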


\begin{figure}[tb]
  \centering
  \includegraphics[page=1]{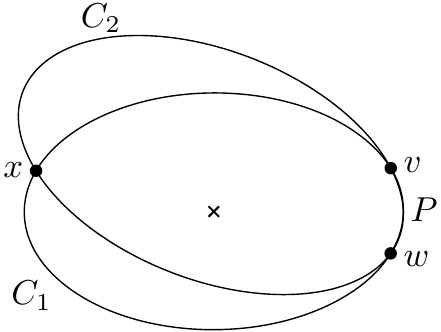}
  \caption{Illustration of proof for
    Lemma~\ref{lem:increasing_decreasing_disjoint}. The cycle $C_1$ is
    increasing and the cycle~$C_2$ is decreasing. The edge of $C_2$
    ending at $v$ strictly lies in the exterior of $C_1$ and the edge
    of $C_2$ starting at $w$ strictly lies in the interior of
    $C_1$. Hence, the edge of $C_2$ ending at $x$ strictly lies in the
    interior of $C_1$, which contradicts
    Lemma~\ref{lem:repr:illegal_intersection}.}
  \label{fig:lem:decreasing_increasing}
\end{figure}

\begin{proof}
Let $C_1$ be an increasing and $C_2$ a decreasing cycle. Assume that they have 
a common vertex. But then there also is a common vertex on the central face~$f$ 
of the subgraph $C_1+C_2$.
Consider any maximal common path~$P$ of $C_1$ and $C_2$ on $f$. We denote the 
start vertex of $P$ by $v$ and the end vertex by $w$. Note that $v$ may equal 
$w$. By Lemma~\ref{lem:repr:illegal_intersection} the edge to $v$ on the 
decreasing cycle $C_2$ lies strictly in the exterior of $C_1$, where the 
strictness follows from the maximality of~$P$. Similarly, the 
edge from $w$ on $C_2$ lies strictly in the interior of $C_1$. Hence, 
$\subpath{C_2}{w,v}$ crosses $C_1$. Let $x$ be the first intersection of $C_1$ 
and $C_2$ on $f$ after $w$. Then the 
edge to $x$ on $C_2$ lies strictly in the interior of $C_1$, contradicting 
Lemma~\ref{lem:repr:illegal_intersection}.
\end{proof}

For the correctness proof in Section~\ref{sec:rectangulation}, a
crucial insight is that for essential cycles using an edge that is
part of a regular face, we can find an alternative cycle without this
edge in a way that preserves labels on the common subpath.

\begin{lemma}\label{lem:reroute-face}
  If an edge $e$ belongs to both a simple essential cycle~$C$ and a
  regular face~$f$, then there exists a simple
  essential cycle~$C'$ that can be decomposed into two paths $P$
  and~$Q$ such that
  \begin{compactenum}[(i)]
  \item $P$ or $\reverse{P}$ lies on $f$,
  \item $Q = C \cap C'$, and
  \item $\ell_C(e) = \ell_{C'}(e)$ for all edges $e$ on~$Q$.
  \end{compactenum}
\end{lemma}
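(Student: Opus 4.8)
The plan is to reroute $C$ through the face $f$: replace the part of $C$ incident to $e$ by a path along the facial walk of $f$, and turn the result into a simple essential cycle. Write $W$ for the facial walk of $f$; since $f$ is regular, $\rot(W)=4$ when $W$ is oriented with $f$ on its right, and $W$ traverses $e$ in the same direction as $C$ or the opposite one according to whether $f$ lies in the interior or the exterior of $C$. The two cases are analogous, so I would treat the case that $f$ lies in the exterior of $C$. Let $\mathbb D$ be the union of the closed interior of $C$ with the closed face $\overline f$. Since $\mathrm{int}(C)$ and $f$ lie on opposite sides of $e$, the region $\mathbb D$ is connected, and since $f$ is regular we have $f\neq f_c$, so the center of the grid still lies in the interior of $\mathbb D$. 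Consequently exactly one boundary component of $\mathbb D$ separates the center from the outer face; let $C'$ be the simple essential cycle obtained from it. As $\partial\mathbb D\subseteq C\cup W$, every edge of $C'$ lies on $C$ or on $W$; set $Q:=C\cap C'$ and $P:=C'\setminus Q$, so that the edges of $P$ lie on $f$. Finally $\mathbb D\supseteq\mathrm{int}(C)$ gives $\mathrm{int}(C)\subseteq\mathrm{int}(C')$, hence $\mathrm{ext}(C')\subseteq\mathrm{ext}(C)$. (In the other case one removes the open face $f$ from $\mathrm{int}(C)$ instead; then $\mathrm{int}(C')\subseteq\mathrm{int}(C)$, it is $\reverse P$ rather than $P$ that lies on $f$, and $\mathrm{ext}(C)\subseteq\mathrm{ext}(C')$.)

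Granting this construction, property~(iii) is immediate. Let $e'$ be an edge on $Q$. Since $e'$ lies on the essential cycle $C'$, Lemma~\ref{lem:label-simple-path} yields a reference path $R$ from $e^\star$ to $e'$ that respects $C'$; by $\mathrm{ext}(C')\subseteq\mathrm{ext}(C)$ it respects $C$ as well (in the other case take $R$ respecting $C$, which then respects $C'$). Thus $\ell_{C'}(e')=\dir(e^\star,R,e')$ by definition of the label, and $\ell_C(e')=\dir(e^\star,R,e')$ by Corollary~\ref{cor:measure-direction}, whence $\ell_C(e')=\ell_{C'}(e')$. Note that this argument does not even use that $Q$ is connected.

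The hard part is making the rerouting precise, namely showing that $C'$ decomposes into the \emph{single} paths $P$ and $Q$, equivalently that $f$ meets $C$ along exactly one arc and that this arc contains $e$. In general $f$ may be incident to $C$ along several arcs with ``bays'' in between, and one must verify that the center-separating boundary component of $\mathbb D$ traverses only one arc of $C$ and one arc of $W$. Here I would work with $Q_0$, the inclusion-maximal subpath of $C$ through $e$ that is also a subpath of $W$: maximality forces the first and last edges of $W\setminus Q_0$ to be off $C$, and since a regular face cannot wind around the center, $C'$ should be confined to the arc of $C$ complementary to the replaced portion. The remaining technicalities---resolving pinch vertices of $\mathbb D$ (which occur at cut vertices of $G$) and handling a non-simple facial walk $W$ by excising sub-loops disjoint from $C$---are routine but absorb most of the care.
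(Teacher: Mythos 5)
Your construction of $C'$ is the same as the paper's: $C'$ is the outer (resp.\ central) boundary of the region obtained by gluing $\overline f$ onto (resp.\ carving it out of) $\mathrm{int}(C)$. The paper states this purely graph-theoretically inside the subgraph $H=C+f$ and decides the case by observing that $e$ cannot be incident to both the outer and the central face of $H$; your region $\mathbb D$ is the topological picture of the same object.

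Where you take a genuinely different route is item~(iii). The paper notes that the edges of~$Q$ lie on the central face of $C+C'$ and then applies Lemma~\ref{lem:repr:equal_labels_at_intersection}, in the form of Corollary~\ref{cor:central-face-same-label}. You instead read the nesting $\mathrm{ext}(C')\subseteq\mathrm{ext}(C)$ (or its reverse) off the construction and feed it directly into Corollary~\ref{cor:measure-direction}: a reference path respecting the outer of the two cycles respects both, hence computes both labels. This is lighter --- it bypasses the intersection lemma entirely and, as you observe, does not even need $Q$ to be a single arc --- but it becomes available only once the topology of $\mathbb D$ is established, which is exactly the step you defer.

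That deferred step --- that $C'$ is a simple cycle splitting into one arc $P$ on $f$ and one arc $Q$ on $C$, equivalently that $\partial f$ meets $C$ along a single contiguous arc containing $e$ --- is also where the paper is tersest, disposing of it in the single sentence ``Since $C$ lies in the exterior of $f$, the intersection of $C$ with $C'$ forms one contiguous path~$Q$.'' Your outline via the maximal common subpath $Q_0$ is workable, and the hazards you name (pinch vertices, a non-simple facial walk) are real, but they afflict the paper's phrasing just as much. In short: same construction, a cleaner self-contained argument for (iii), and an honest flag on the one place where both proofs lean on unstated planarity facts.
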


\begin{figure}[bt]
  \centering
  \begin{subfigure}{.45\textwidth}
    \centering
    \includegraphics[page=1]{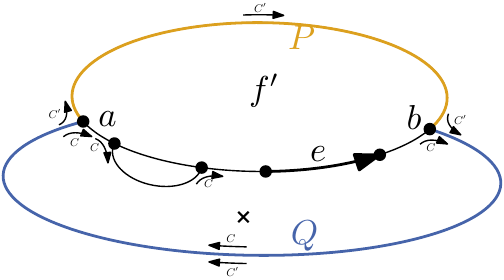}
    \caption{}
    \label{fig:rect:existence_C'-outer}
  \end{subfigure}
  \hfil
  \begin{subfigure}{.45\textwidth}
    \centering
    \includegraphics[page=2]{fig/lemma_reroute_face.pdf}
    \caption{}
    \label{fig:rect:existence_C'-center}    
  \end{subfigure}

  \caption{The edge $e$ cannot lie on both the outer and the central
    face. In both cases $C'$ can be subdivided in two paths $P$ and
    $Q$ on $C$ and $f$, respectively.  Here, these paths are separated
    by the vertices $a$ and
    $b$.~(\subref{fig:rect:existence_C'-outer}) The cycle bounding the
    outer face is $C'$. The edge $e$ does not lie on the outer face,
    and hence the cycle bounding this face is defined as
    $C'$. (\subref{fig:rect:existence_C'-center}) The cycle bounding
    the central face is $C'$. }
  \label{fig:rect:existence_C'}
\end{figure}

\begin{proof}
  Consider the graph $H=C+f$ composed of the essential cycle $C$ and
  the regular face~$f$. Since $e$ is incident to $f$, the edge $e$ cannot lie 
  on both the outer and the central face in $H$.  If $e$ does not lie on the 
  outer face,
  we define $C'$ as the cycle bounding the outer face but directed
  such that it contains the center in its interior; see
  Figure~\ref{fig:rect:existence_C'-outer}.  Otherwise, $C'$ denotes the
  cycle bounding the central face; see
  Figure~\ref{fig:rect:existence_C'-center}.

  Since $C$ lies in the exterior of $f$, the intersection of $C$ with
  $C'$ forms one contiguous path~$Q$.  Setting $P=C'-Q$ yields a path
  that lies completely on $f$ (it is possible though that $P$ and
  $f$ are directed differently).

  By the construction of $C'$ the edges of $Q$ are incident to the
  central face of $C+C'$.  Then
  Lemma~\ref{lem:repr:equal_labels_at_intersection} implies
  that~$\ell_{C}(e) = \ell_{C'}(e)$ for all edges $e$ of $Q$.
\end{proof}

The last lemma of this section shows that we can replace single edges
of an essential cycle~$C$ with complex paths without changing the
labels of the remaining edges on $C$.

\begin{figure}[tb]
  \centering
  \includegraphics[page=1]{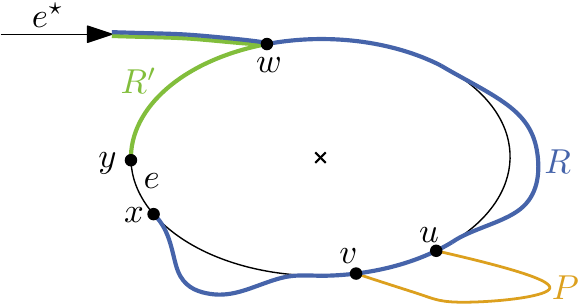}
  \caption{Illustration of the proof for
    Lemma~\ref{lem:same-labels-except-from-edge}.  The reference path
    $R'$ (green) for the edge $xy$ is constructed based on the
    reference path $R$. Replacing the edge $uv$ with a path $P$, does
    not impact $R'$, which implies that the label of $xy$ remains the
    same. }
  \label{fig:lem:same_labels}
\end{figure}

\begin{lemma}\label{lem:same-labels-except-from-edge}
  Let $C$ be an essential cycle in an ortho-radial representation
  $\Gamma$ and let $uv$ be an edge of $C$.  Consider the ortho-radial
  representation $\Gamma'$ that is created by replacing $uv$ with an
  arbitrary path $P$ such that the interior vertices of $P$ do not
  belong to $\Gamma$, i.e., they are newly inserted vertices in
  $\Gamma'$. If the cycle $C'=C[v,u]+P$ is essential, then the labels of
  $C$ and $C'$ coincide on $C[v,u](=C'[v,u])$.
\end{lemma}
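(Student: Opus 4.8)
The plan is to first reduce the statement to a single edge of the common subpath, and then to exhibit, for that edge, a reference path that is simply unaffected by the edge-to-path replacement.

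For the reduction I would note that $C[v,u]=C'[v,u]$ holds as directed paths, so for any two edges $e,e'$ appearing in this order along $C[v,u]$ the subpath $C[e,e']$ is literally the same sequence of edges as $C'[e,e']$, and none of its internal vertices equals $u$ or $v$. Hence the rotation at each of these internal vertices is the same whether computed in $\Gamma$ or in $\Gamma'$, so $\rot(C[e,e'])=\rot(C'[e,e'])$. Feeding this into Observation~\ref{obs:repr:label_difference}, applied once to $C$ in $\Gamma$ and once to $C'$ in $\Gamma'$, yields
\[
  \ell_{C}(e')-\ell_{C}(e)=\rot(C[e,e'])=\rot(C'[e,e'])=\ell_{C'}(e')-\ell_{C'}(e).
\]
Consequently $\ell_{C'}(e)-\ell_{C}(e)$ is the same constant for every edge $e$ on $C[v,u]$, and it suffices to verify $\ell_{C}(e_0)=\ell_{C'}(e_0)$ for one well-chosen edge $e_0$.

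To choose $e_0$ together with its reference path, I would revisit the construction in the proof of Lemma~\ref{lem:label-simple-path}: a reference path from $e^\star$ to an edge of $C$ respecting $C$ first runs along the outermost essential cycle $C_o$ (and, if $C$ and $C_o$ are disjoint, along a connecting path in the exterior of $C$) until it reaches the first vertex $x'$ it has in common with $C$, and only afterwards follows $C$. Let $R_0$ denote exactly this initial portion, which ends at $x'$. Since $x'$ is the first vertex shared with $C$, a short case distinction shows that $R_0$ uses no edge of $C$ at all; in particular $R_0$ avoids $uv$, and $R_0$ respects $C$. Now let $e_0$ be the edge of $C[v,u]$ incident to $x'$ — the edge leaving $x'$ if $x'\neq u$, and the last edge of $C[v,u]$ (the one entering $u$) if $x'=u$. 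Then $R_0$ is a reference path from $e^\star$ to $e_0$ respecting $C$ that contains neither $uv$ nor $e_0$, so $\ell_{C}(e_0)=\dir(e^\star,R_0,e_0)$ by Corollary~\ref{cor:measure-direction}. (If $uv$ itself is the reference edge, I would first move the reference edge to a suitable other edge of $C_o$ using Lemma~\ref{lem:reference-edge:change}.)

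It then remains to observe that replacing $uv$ by $P$ leaves both $R_0$ and the rotation computation untouched. Because $P$ is inserted in place of $uv$ — it becomes the boundary now shared by the two faces that $uv$ used to separate, while no new face is created — the cycles $C$ and $C'$ enclose exactly the same faces, so a path that avoids $uv$ lies in the exterior of $C'$ precisely when it lies in the exterior of $C$; thus $R_0$ is also a reference path from $e^\star$ to $e_0$ in $\Gamma'$ respecting $C'$. Moreover, every vertex traversed by the walk used to evaluate $\dir(e^\star,R_0,e_0)$ is distinct from $u$ and $v$, the single exception being $u$ in the case $x'=u$, where the local rotation involved depends only on edges other than $uv$ and is hence unchanged. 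Therefore $\dir_{\Gamma'}(e^\star,R_0,e_0)=\dir_{\Gamma}(e^\star,R_0,e_0)$, which by Corollary~\ref{cor:measure-direction} gives $\ell_{C'}(e_0)=\ell_{C}(e_0)$; combined with the constant-difference observation this proves Lemma~\ref{lem:same-labels-except-from-edge}. I expect the main obstacle to be precisely this bookkeeping: ensuring the reference path genuinely stays clear of $uv$, remains a valid reference path in both $\Gamma$ and $\Gamma'$, and that the degenerate positions of $x'$ (and the case $uv=e^\star$) really are harmless.
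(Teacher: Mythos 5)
Your proof is correct and takes a genuinely different route from the paper's own argument. The paper fixes an arbitrary edge $e = xy$ on $C[v,u]$, takes an arbitrary reference path $R$ from $e^\star$ to $e$ respecting $C$, and redirects it at its first intersection $w$ with $C$ so that its tail runs along $C$ (avoiding $uv$); it then shows directly that this modified path $R'$ still respects $C'$ in $\Gamma'$ and yields the same label, arguing by contradiction that a vertex at which $R'$ would leave the exterior of $C'$ would have to be $u$ or $v$ (by the freshness of $P$'s interior vertices) and deriving from this a contradiction. You instead (i) use Observation~\ref{obs:repr:label_difference} to show that $\ell_{C'}(e)-\ell_C(e)$ is constant on all of $C[v,u]$ --- the rotation of the common directed subpath is computed only at vertices distinct from $u$ and $v$, hence agrees in $\Gamma$ and $\Gamma'$ --- which reduces the lemma to a single well-chosen edge $e_0$, and (ii) take for $e_0$ the reference path $R_0$ already supplied by the explicit construction in Lemma~\ref{lem:label-simple-path}, namely the portion before the first contact with $C$. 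Since $R_0$ uses no edge of $C$ at all, it trivially avoids $uv$, sidestepping the delicate bookkeeping the paper needs about exactly which $C$-edges its redirected reference path traverses. The one spot where your write-up is weaker than the paper's is the justification that $R_0$ still respects $C'$: you appeal to the topological statement that $C$ and $C'$ ``enclose the same faces'' while the paper gives the cleaner combinatorial argument above. Your claim is correct --- replacing $uv$ by a path whose interior vertices are new (hence of degree two in $\Gamma'$) merely resplits the two faces adjacent to $uv$ and does not alter the interior/exterior status of any edge off $uv\cup P$ --- but it deserves to be spelled out, and you rightly flag this as the main place where care is needed.
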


\begin{proof} For an illustration of the proof see
  Figure~\ref{fig:lem:same_labels}.  Let $e=xy$ be an arbitrary edge
  on $C[v,u]$ and let $R$ be a reference path from the reference edge
  $e^\star$ to $e$ that respects $C$. We first construct a new
  reference path~$R'$ that does not contain $uv$ as follows. Let $w$
  be the first vertex of $R$ that lies on $C$. If $C[w,x]$ does not
  contain $e$, we define $R'=R-R[w,x]+C[w,x]$ and otherwise
  $R'=R-R[w,x]+\reverse{C}[w,y]$.  We observe that $R'$ is again a
  reference path of $e$ that respects $C$. Further, it can be
  partitioned into a prefix that only consists of edges that do not
  belong to $C$ and a suffix that only consists of edges that belong
  to $C[v,u]$.

  We now show that $R'$ is a reference path of $e$ in $\Gamma'$ that
  respects $C'$. As $R'$ does not use $uv$, it is still contained in
  $\Gamma'$ and hence it is a reference path of $e$. So assume that
  $R'$ does not respect~$C'$. Hence, $R'$ and $C'$ have a vertex $z$
  in common such that the outgoing edge~$e'$ of $R'$ at $z$ strictly
  lies in the interior of $C'$. As $R'$ respects $C$, this vertex lies
  on $P$. It cannot be an intermediate vertex of $P$, because these
  are newly inserted in $\Gamma'$. Hence, $z$ is either $u$ or $v$ and
  thus part of $C[v,u]$. In particular, it occurs on $R'$ after $w$, which
  implies that $e'$ belongs to $C$. This contradicts that $e'$
  strictly lies in the interior of $C'$.  Altogether, this shows that
  $R'$ is a reference path of $e$ both in $\Gamma$ and $\Gamma'$ such
  that $C$ and $C'$ are respected. Consequently, $\ell_C(e) = \ell_{C'}(e)$.
\end{proof}

\section{Characterization of Rectangular Ortho-Radial Representations}
\label{sec:characterization-rect}

Throughout this section, assume that $I=(G,\mathcal E,f_c,f_o)$ is an instance with an ortho-radial
representation~$\Gamma$ and a reference edge $e^\star$.  We prove
Theorem~\ref{thm:main-result:drawable} for the case that $\Gamma$ is
\emph{rectangular}. In a \emph{rectangular ortho-radial
  representation} the central face and the outer face are horizontal
cycles, and every regular face is a rectangle, i.e., it has exactly
four right turns, but no left turns.

We first observe that a bend-free ortho-radial drawing $\Delta$ can be
described by an angle assignment together with the lengths of its
vertical edges and the angles of the circular arcs representing the
horizontal edges; we call the angles of the circular arcs
\emph{central angles}.  We define two flow networks that assign
consistent lengths and central angles to the vertical edges and horizontal
edges, respectively.  These networks are straightforward adaptions of
the networks used for drawing rectangular graphs in the
plane~\cite{bett-gdavg-99}.  In the following, \emph{vertex} and
\emph{edge} refer to the vertices and edges of~$G$, whereas
\emph{node} and \emph{arc} are used for the flow networks.

\newcommand{\Nhor}{N_{\mathrm{hor}}}
\newcommand{\Ahor}{A_{\mathrm{hor}}}
\newcommand{\Fhor}{F_{\mathrm{hor}}}

\newcommand{\Nver}{N_{\mathrm{ver}}}
\newcommand{\Aver}{A_{\mathrm{ver}}}
\newcommand{\Fver}{F_{\mathrm{ver}}}

\begin{figure}[t]
  \begin{subfigure}[b]{.45\textwidth}
    \centering
    \includegraphics{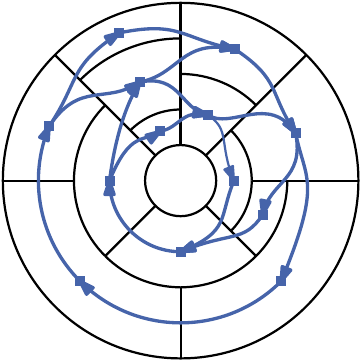}
    \caption{$\Nver$}
    \label{fig:draw:flows-ver}
  \end{subfigure}
  \hfil
  \begin{subfigure}[b]{.45\textwidth}
    \centering
    \includegraphics{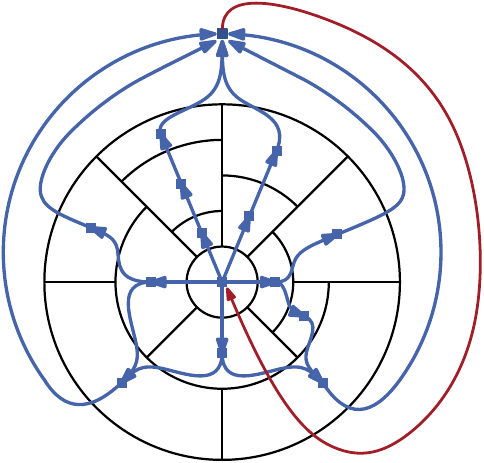}
    \caption{$\Nhor$}
    \label{fig:draw:flows-hor}
  \end{subfigure}
  \caption{Flow networks $\Nver$ and $\Nhor$ (blue arcs) for an example 
  graph~$G$ (black edges). }
  \label{fig:draw:flows}
\end{figure}

The network $\Nver=(\Fver,\Aver)$ with nodes~$\Fver$ and arcs~$\Aver$
for the vertical edges contains one node for each face of $G$ except
for the central and the outer face. All nodes have a demand of~0.  For
each vertical edge $e$ in $G$, which we assume to be directed upwards,
there is an arc $a_e=fg$ in $\Nver$, where $f$ is the face to the left
of $e$ and $g$ the one to its right. The flow on $fg$ has the lower
bound $l(fg)=1$ and upper bound $u(fg)=\infty$. An example of this
flow network is shown in Figure~\ref{fig:draw:flows-ver}.

Similarly, the network~$\Nhor=(\Fhor,\Ahor)$ assigns the central angles of the
horizontal edges. There is a node for each face of $G$, and an arc
$a_e=fg$ for every horizontal edge $e$ in $G$ such that $f$ lies
locally below $e$ and $g$ lies locally above $e$.  Additionally,
$\Nhor$ includes one arc from the outer to the central face.  Again,
all edges require a minimum flow of 1 and have infinite capacity. The
demand of all nodes is 0.  Figure~\ref{fig:draw:flows-hor} shows an
example of such a flow network.  Valid flows in these two flow
networks then yield an ortho-radial drawing of~$\Gamma$.

\begin{lemma}
  \label{thm:rect:flows-to-drawing}
  A pair of valid flows in $\Nhor$ and $\Nver$ corresponds to a
  bend-free ortho-radial drawing of $\Gamma$ and vice versa.
\end{lemma}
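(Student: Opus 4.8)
The statement is a biconditional between (a) a pair of valid flows in $\Nhor$ and $\Nver$ and (b) a bend-free ortho-radial drawing of $\Gamma$. I would prove each direction separately, mirroring the classical argument for rectangular plane graphs~\cite{bett-gdavg-99} but keeping careful track of the cyclic (ortho-radial) structure.

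\textbf{From a drawing to flows.} Given a bend-free ortho-radial drawing $\Delta$ realizing $\Gamma$, I would read off the data: each vertical edge has a positive integer length (number of concentric circles it spans) and each horizontal edge has a positive central angle (the angular extent of its circular arc, measured in units of the smallest grid angle). For $\Nver$, I set the flow on the arc $a_e=fg$ dual to a vertical edge $e$ to the length of $e$ in $\Delta$; I then need flow conservation at each internal node (= regular face) and feasibility of the bounds. The bounds are immediate since every edge length is at least $1$. Conservation at a regular face follows because the face is a rectangle: its left and right vertical sides must have equal total length, and that equality is exactly the statement that incoming flow equals outgoing flow across the node. For $\Nhor$, I assign to each arc dual to a horizontal edge the central angle of that edge, and to the extra outer-to-central arc the ``missing'' angle, so that conservation holds at every node; here the conservation at a regular face says that the angular extents of its top and bottom horizontal sides agree, again because the face is a rectangle. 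The outer and central faces require the extra arc precisely because their rotation is $0$ rather than $4$, so the top/bottom balance is off by a full revolution's worth of arcs, which the extra arc absorbs.

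\textbf{From flows to a drawing.} This is the substantive direction. Given valid flows $\Fver$ and $\Fhor$, I would construct $\Delta$ as follows. First assign $y$-coordinates (radii): fix the reference edge $e^\star$ on the outermost essential cycle and propagate radial coordinates using the flow values on $\Nver$ as vertical edge lengths — a vertex reached by going up along a vertical edge carrying flow $\varphi$ gets radius increased by $\varphi$. I must check this is well-defined, i.e., independent of the path used; this follows from flow conservation around every face (the sum of signed vertical lengths around any face is zero, because horizontal edges contribute nothing and the two vertical sides of a rectangle cancel — this is exactly the conservation law). Because the graph is connected and $\Gamma$ is locally consistent, radii are assigned consistently to all vertices. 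Similarly, assign angular coordinates using $\Fhor$: propagate angles along horizontal edges using their central angles; here the extra outer-to-central arc guarantees that when one travels around an essential cycle the total angular displacement is exactly $2\pi$ (one full turn), while around a non-essential cycle or a regular face it is $0$. I would then verify that placing each vertex at its (radius, angle) and drawing each edge as the corresponding radial segment or circular arc yields a valid plane drawing: edges do not overlap or cross because the embedding $\embedding$ is respected, all radii are positive (the center is excluded) since radii only increase as one moves outward from the reference structure and the lower bound $1$ on vertical flows keeps distinct circles separated, and the angular coordinates are consistent modulo $2\pi$. Finally I check that the angles of $\Delta$ at each vertex match $\Gamma$: this is forced by the rectangularity of faces together with the local consistency conditions of Definition~\ref{def:local-conditions}.

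\textbf{Main obstacle.} The hard part is the well-definedness of the angular coordinates and, in particular, showing that essential cycles close up with angular displacement exactly $2\pi$ rather than some other multiple. One must argue that the single extra arc from the outer to the central face in $\Nhor$, together with flow conservation, forces exactly one full revolution — this is where the ortho-radial setting genuinely differs from the plane. Concretely, summing the angular contributions around any essential cycle $C$ and using $\rot(C)=0$ (Lemma~\ref{lem:rotation-cycle}) together with the conservation law at the faces enclosed by $C$ should telescope to show the displacement equals the flow on the extra arc, which in turn is pinned down to correspond to $2\pi$ by conservation at the central face. I would handle this by a careful induction on the faces in the interior of $C$, peeling them off one at a time, exactly as in the proof of Lemma~\ref{lem:rotation-cycle}. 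Once angular and radial consistency are established, the remaining verifications (positivity of radii, planarity, matching of angles) are routine.
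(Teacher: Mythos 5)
Your proposal follows the same construction as the paper: in one direction, vertical edge lengths are read as $\Nver$ flows and horizontal-arc central angles as $\Nhor$ flows (scaled so that the flow leaving the central face corresponds to $2\pi$), with the lower bound $1$ coming from positivity; in the other direction, flow conservation at each regular face forces the two opposite sides of each rectangle to have equal length/angle, which is exactly what makes the coordinate propagation well-defined. Your worry about a separate induction to close essential cycles to $2\pi$ is largely superfluous: once the central-face outflow is normalized to $2\pi$, face-by-face flow conservation already forces every concentric circle's total angular extent to equal $2\pi$, which is the paper's (implicit) argument.
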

\begin{proof}
  Given a feasible flow $\varphi_{\mathrm{ver}}$ in $\Nver$, we set
  the length of each vertical edge~$e$ of $G$ to the flow
  $\varphi_{\mathrm{ver}}(a_e)$ on the arc $a_e$ that crosses $e$. For
  each face $f$ of $G$, the total length of its left side is equal to
  the total amount of flow entering $f$.  Similarly, the length of the
  right side is equal to the amount of flow leaving $f$.  As the flow
  is preserved at all nodes of $\Nver$, the left and right sides of
  $f$ have the same length.  The central angles of the horizontal
  edges are obtained from a flow~$\varphi_{\mathrm{hor}}$ in $\Nhor$.
  Let $\Phi$ be the total amount of flow that leaves the central
  face. Then for each horizontal edge~$e$ we set its central angle to
  $2\pi\varphi_{\mathrm{hor}}(a_e)/\Phi $, where
  $\varphi_{\mathrm{hor}}(a_e)$ is the amount of flow of the arc
  $a_e$ that connects the two adjacent faces of $e$. As the flow is
  preserved at all nodes of $\Nhor$, the top and bottom sides of each
  face have the same central angle.
  
  Conversely, given a bend-free ortho-radial drawing~$\Delta$ of
  $\Gamma$, we can extract flows in the two networks.  For each
  vertical edge $e$ we set the flow~$\varphi_{\mathrm{ver}}(a_e)$ of the
  corresponding arc $a_e$ to $l_e/l_{\min}$, where $l_e$ is the
  length of $e$ in $\Delta$ and $l_{\min}$ is the length of the
  shortest edge in $\Delta$.  With the scaling, we ensure that the
  flow of each arc is at least~$1$.  Similarly, for the horizontal
  edges we assign to each arc $a_e$ of each horizontal edge $e$ the
  flow $\varphi_{\mathrm{hor}}(a_e) = \alpha_e/\alpha_{\min}$,
  where $\alpha_e$ is the central angle of $e$ in $\Delta$ and
  $\alpha_{\min}$ is the smallest central angle of any horizontal
  edge in $\Delta$.  Again, the scaling ensures that each arc has flow
  at least~$1$.  Since the opposing sides of the regular faces have
  the same lengths and central angles, the flow is preserved at all
  nodes.
\end{proof}

Using this correspondence of drawings and feasible flows, we show the
characterization of rectangular graphs.

\begin{thm}\label{thm:draw:rectangle_drawing}
  Let $\Gamma$ be a rectangular ortho-radial representation and
  let $\Nhor=(\Fhor, \Ahor)$ and $N_\text{ver}=(\Fver, \Aver)$ be the flow networks as
  defined above.  The following statements are equivalent:
  \begin{compactenum}[(i)]
  \item\label{item:draw:rectangle_drawing:drawing} $\Gamma$ is drawable.
  \item\label{item:draw:rectangle_drawing:valid} $\Gamma$ is valid.
  \item\label{item:draw:rectangle_drawing:set} For every subset $S\subseteq \Fver$ 
    such that there is an arc from $\Fver \setminus S$ to $S$ in 
  $\Nver$, there is also an arc from $S$ to $\Fver\setminus S$.

  \end{compactenum}  
\end{thm}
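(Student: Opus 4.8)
The plan is to route everything through the two flow networks. By Lemma~\ref{thm:rect:flows-to-drawing}, $\Gamma$ is drawable if and only if both $\Nhor$ and $\Nver$ admit feasible flows, so I would first dispose of $\Nhor$ by showing it is always feasible when $\Gamma$ is rectangular. Since the outer and the central face are horizontal cycles, the ``radially above'' relation that the arcs of $\Nhor$ encode has $f_c$ as its unique minimum and $f_o$ as its unique maximum, so from every face one can follow arcs up to $f_o$ and down to $f_c$; adding the special arc from $f_o$ to $f_c$ then makes $\Nhor$ strongly connected. A strongly connected network with unit lower bounds and infinite capacities has a feasible circulation (route a large amount of flow around a cycle through each arc; equivalently, no cut is violated in Hoffman's circulation theorem). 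Hence feasibility of the pair is equivalent to feasibility of $\Nver$ alone, and (i) reduces to the feasibility of $\Nver$.

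Next I would identify feasibility of $\Nver$ with condition (iii) via Hoffman's circulation theorem: since every arc of $\Nver$ has lower bound $1$ and upper bound $\infty$, a cut $(\overline S, S)$ is violated precisely when $S$ receives an arc but emits none, which is exactly the negation of (iii). This gives (i) $\Leftrightarrow$ (iii). For (i) $\Rightarrow$ (ii) I would invoke the easy necessity direction: in a bend-free drawing every essential cycle is a simple closed curve winding once around the center, so the radial coordinate returns to its start along the cycle; an edge pointing up (down) strictly changes this coordinate, so unless the cycle is horizontal it must contain both an up-edge and a down-edge, which precludes strict monotonicity. Thus drawable implies valid, and it remains to prove $\neg$(iii) $\Rightarrow\ \neg$(ii) to close the loop (i) $\Rightarrow$ (ii) $\Rightarrow$ (iii) $\Rightarrow$ (i).

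The crux is therefore: a bad cut $S\subseteq\Fver$ yields a strictly monotone cycle. Set $\mathcal R=\bigcup_{f\in S}\overline{f}$ and orient $\partial\mathcal R$ so that $\mathcal R$ lies to its right. The ``no outgoing arc'' condition forces every vertical edge of $\partial\mathcal R$ to point up in this orientation, and the ``incoming arc'' condition guarantees at least one such edge. Because every face meeting $\partial\mathcal R$ from the right lies in $S$ and hence is a rectangle with no reflex angle, the angle of $\partial\mathcal R$ on its right side at each vertex is $90^\circ$, $180^\circ$, or $270^\circ$, so the rotation of $\partial\mathcal R$ at each vertex is in $\{-1,0,1\}$; consequently edge directions change by at most one per step, and along an essential component so do the labels (Observation~\ref{obs:repr:label_difference}). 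A non-essential component $D$ would have $\rot(D)=4$ by Lemma~\ref{lem:rotation-cycle}, so its edge directions would wind once through all four values and include a down-edge, contradicting the previous sentence; hence every component of $\partial\mathcal R$ is essential. Let $W$ be a component carrying a vertical edge and let $W_0\in\{W,\reverse{W}\}$ be its orientation with the center on the right (the rotation bound transfers to $W_0$ because reversing a path negates each vertex rotation). All vertical edges of $W_0$ point the same way, so no label of $W_0$ is congruent mod $4$ to the opposite vertical value; since consecutive labels differ by at most one, they are trapped in a single length-three gap between consecutive forbidden residues, so all labels of $W_0$ have the same sign, with the label of the vertical edge nonzero. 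Thus $W_0$ is strictly monotone, contradicting (ii); combined with (i) $\Leftrightarrow$ (iii) and (i) $\Rightarrow$ (ii) this finishes the equivalence.

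The main obstacle is exactly this last step, and a few points there need care. First, $\partial\mathcal R$ need not be a disjoint union of \emph{simple} cycles: it may pinch at a vertex all but a proper arc of whose incident faces lie in $S$, so one must split it into simple cycles before applying the rotation bound (which uses that the right-side angle of a simple cycle is strictly below $360^\circ$). Second, one must check that the direction/label step bound survives passage to the orientation $W_0$, which it does by the path-reversal identity. Third, the residue-gap argument rests squarely on the rectangularity hypothesis (no reflex angles in the faces of $S$, forcing rotations in $\{-1,0,1\}$), which is precisely why the statement is confined to rectangular representations; the general case is then obtained in later sections by reducing an arbitrary valid ortho-radial representation to a rectangular one.
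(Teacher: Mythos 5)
Your plan tracks the paper's proof quite closely: establish that feasibility of the vertical flow network is equivalent to condition~(iii), infer (iii)~$\Leftrightarrow$~(i) via Lemma~\ref{thm:rect:flows-to-drawing} once $N_{\mathrm{hor}}$ is shown always feasible, argue (i)~$\Rightarrow$~(ii) directly from a drawing, and close the loop with $\neg$(iii)~$\Rightarrow$~$\neg$(ii) by examining the boundary of $\bigcup_{f\in S}\overline f$. The Hoffman-circulation phrasing is a tidy repackaging of the paper's ``route one unit around a cycle through each arc'' argument; the caveat about splitting $\partial\mathcal R$ into simple cycles before applying Lemma~\ref{lem:rotation-cycle} is well taken; and your residue window is in fact the correct one (with ``up''~$\equiv 3$ and ``down''~$\equiv 1$, a cycle with up-edges but no down-edges has labels confined to $\{4k+2,4k+3,4k+4\}$, not the paper's printed $L_k=\{4k,4k+1,4k+2\}$, which appears to be a typo).

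The genuine gap is in (i)~$\Rightarrow$~(ii). Having both an up-edge and a down-edge on an essential cycle $C$ does \emph{not} by itself preclude strict monotonicity of its labeling: labels encode direction only modulo~$4$, and an abstract label sequence such as $3,2,1,2,3,\dots$ contains an up-edge (label~$3$) and a down-edge (label~$1$) yet is everywhere positive, hence decreasing. To rule this out you must anchor the labeling at a specific value somewhere on $C$, which is precisely what the paper's proof does: it constructs the reference path $P$ by walking backwards from the first vertex $v$ of the topmost horizontal subpath $Q$ of $C$, at each step taking the edge going up if one exists and otherwise the one going left. Rectangularity guarantees this walk is well defined and terminates at $e^\star$, and it respects $C$, so it pins $\ell_C(vv')=0$; the edges immediately before and after $Q$ then get labels $-1$ and $+1$, producing labels of both signs unless $C$ is horizontal. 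Your sketch of (i)~$\Rightarrow$~(ii) needs this anchoring argument---without it the implication does not follow merely from the observation that $C$ contains edges of both vertical directions.
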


\begin{proof}
  ``\eqref{item:draw:rectangle_drawing:drawing} $\Rightarrow$
  \eqref{item:draw:rectangle_drawing:valid}'': Let $\Delta$ be a
  bend-free ortho-radial drawing of $G$ preserving the embedding described by
  $\Gamma$ and let $C$ be an essential cycle.  Our goal is to show that
  $C$ is not strictly monotone.  To this end, we construct a path $P$
  from the reference edge of $\Gamma$ to a vertex on $C$ such that either the
  labeling of $C$ induced by $P$ attains both positive and negative
  values or it is 0 everywhere.

  In $\Delta$ either all vertices of $C$ lie on the same concentric
  circle, or there is a maximal subpath $Q$ of $C$ whose vertices all
  have maximum distance to the center of the ortho-radial grid among
  all vertices of $C$.  In the first case, we may choose the
  endpoint~$v$ of the path~$P$ arbitrarily, whereas in the second case
  we select the first vertex of $Q$ as $v$; for an example see
  Figure~\ref{fig:draw:drawing_to_representation}.

\begin{figure}[bt]
  \centering
  \includegraphics{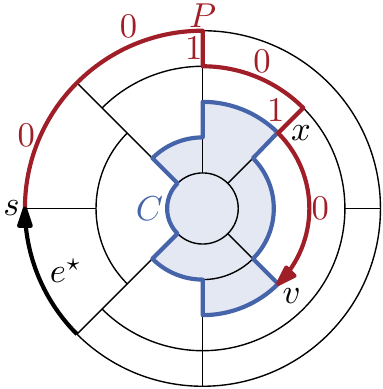}
  \caption{The path $P$ from $s$ to $v$---constructed backwards by going only 
  up or left---does not intersect the interior of $C$. The rotations of the 
  edges on $P$ relative to $e^\star$ are 0 or 1.}
  \label{fig:draw:drawing_to_representation}
\end{figure}

We construct the path $P$ backwards (i.e., the construction yields
$\reverse{P}$) as follows: Starting at $v$ we choose the edge going
upwards from $v$, if it exists, or the one leading left. Since all
faces of $\Gamma$ are rectangles, at least one of these always exists. This
procedure is repeated until the target~$s$ of the reference edge is
reached.  

To show that this algorithm terminates, we assume that this was not
the case.  As $G$ is finite, there must be a first time a vertex~$w$
is visited twice. Hence, there is a cycle~$C'$ in $\Delta$ containing
$w$ that contains only edges going left or up.  As all drawable
essential cycles with edges leading upwards must also have edges that
go down~\cite{hht-orthoradial-09}, all edges of $C'$ are
horizontal. By construction, there is no edge incident to a vertex of
$C'$ that leads upwards. The only cycle with this property, however,
is the one enclosing the outer face because $G$ is connected. But
this cycle contains the reference edge, and therefore the algorithm halts.

This not only shows that the construction of $P$ ends, but also that
$P$ is a path (i.e., the construction does not visit a vertex
twice). Thus, $P$ is a reference path from the reference edge
$e^\star$ to the edge $vv'$, where $v'$ is the vertex following $v$ on
$C$.  Further, $P$ respects $C$ as $\reverse{P}$ starts at the
outermost circle of the ortho-radial grid that is used by $C$ and as
by construction all edges of $\reverse{P}$ point left or upwards.

By the construction of $P$, the label of $vv'$ induced by $P$ is $0$.
If all edges of $C$ are horizontal, this implies $\ell_C(e)=0$ for all
edges $e$ of $C$, which shows that $C$ is not strictly monotone.
Otherwise, we claim that the edges $e_-=uv$ and $e_+=wx$ directly
before and after $Q$ on $C$ have labels $-1$ and $+1$, respectively.
Since all edges on $Q$ are horizontal and $e_-$ goes down, we have
$\rot(\subpath{C}{v,x})=1$ and therefore $\ell_C(e_+)=1$. Similarly,
$\rot(uvv')=1$ implies that $\ell_C(uv)=\ell_C(vv') - \rot(uvv')=-1$.

``\eqref{item:draw:rectangle_drawing:valid} $\Rightarrow$
\eqref{item:draw:rectangle_drawing:set}'': Instead of proving this
implication directly, we show the contrapositive. That is, we assume
that there is a set $S\subsetneq F_\text{ver}$ of nodes in $N_\text{ver}$ such 
that
$S$ has no outgoing but at least one incoming arc.  From this
assumption we derive that $\Gamma$ is not valid, as we find a strictly
monotone cycle.

Let $\Nver[S]$ denote the node-induced subgraph of
$\Nver$ induced by the set $S$. Without loss of generality, $S$
can be chosen such that $\Nver[S]$ is weakly connected, i.e.,
the underlying undirected graph is connected. If $\Nver$ is not weakly connected, at
least one weakly-connected component of $\Nver[S]$ possesses an
incoming arc but no outgoing arc, and we can work with this component
instead.

As each node of $S$ corresponds to a face of $G$, $S$ can also be
considered as a collection of faces of $G$. To distinguish the two
interpretations of $S$, we refer to this collection of faces by
$\mathcal{S}$.  Our goal is to show that the innermost or the
outermost boundary of $\mathcal{S}$ forms a strictly monotone cycle in $\Gamma$.
\begin{figure}[bt]
 \centering
 \includegraphics{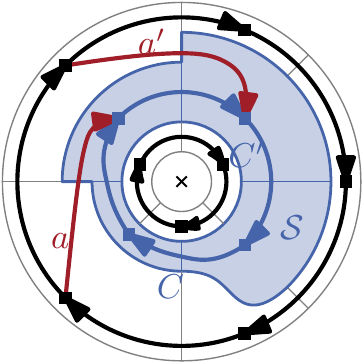}
 \caption{A set~$S$ of nodes in a graph~$G$ such that $N_\text{ver}[S]$ has no 
 outgoing but two incoming arcs~$a$ and $a'$.
 The set of faces~$\mathcal{S}$ corresponding to the nodes in $S$ are shaded 
 with blue.
 The outermost boundary of $\mathcal{S}$ forms an increasing cycle~$C$. The 
 edges on this cycle with label $-1$ are exactly those that are crossed by $a$ 
 or $a'$. All other edges on $C$ are labeled with 0.
 Note that the edge on $C$ at the bottom is curved because $G$ does not admit 
 an ortho-radial drawing.}
 \label{fig:draw:impossible_cycle}
\end{figure}
Figure~\ref{fig:draw:impossible_cycle} shows an example of such a set
$S$ of nodes. Here, the arcs $a$ and $a'$ lead from a node outside of
$S$ to one in $S$. These arcs cross edges on the outer boundary of
$\mathcal{S}$, which point upwards.

Let $\mathcal F$ be the set of faces of $\mathcal E$ (including the
central and outer face). Let $Z$ be a connected component of
$\mathcal F\setminus \mathcal S$ such that there exists an arc from
$Z$ to $S$ in $\Nver$ and let $C$ be the cycle in $G$
that separates $Z$ from $\mathcal S$. If $C$ were non-essential,
then $\rot(C)=4$ and $C$ would therefore contain an upward and a
downward edge. One of these edges would correspond to an incoming arc of $S$ 
and the other edge to an outgoing arc of $S$, contradicting the choice of $S$. 
Thus, $C$ is essential.

As usual we consider $C$ in clockwise direction. We may assume without loss of
generality that $C$ contains $\mathcal S$ in its interior; otherwise, we
consider $\flip{\Gamma}$ and $\reverse C$. Note that for each edge of $C$ 
the face locally to the right belongs to $\mathcal S$ whereas the face locally
to the left does not. Hence, upward edges of $C$ correspond to incoming arcs of
$S$ and downward edges to outgoing arcs.
Since there is an arc from $Z$ to $S$ but not vice versa, the cycle $C$
contains at least one upward but no downward edge.
Hence, there is some integer $k$ such that all labels of $C$ belong to 
$L_k=\{4k, 4k+1, 4k+2\}$. Since the numbers in $L_k$ are either all 
non-negative (if $k\geq 0$) or all negative (if $k<0$), the cycle $C$ is 
monotone. Moreover, $C$ is not horizontal because it has an upward edge.

``\eqref{item:draw:rectangle_drawing:set} $\Rightarrow$
\eqref{item:draw:rectangle_drawing:drawing}'':
By Lemma~\ref{thm:rect:flows-to-drawing} the existence of a drawing is 
equivalent to the existence of feasible flows in $\Nhor$ and 
$\Nver$.
If a flow network $N$ contains for each arc $a$ a cycle $C_a$ that contains 
$a$, then routing one unit flow along each of these cycles $C_a$ and adding all 
flows gives a circulation in $N$ where at least one unit flows along each arc.
Hence, it suffices to prove that in $\Nhor$ and in $\Nver$ each 
arc is contained in a cycle.

Note that $\Nhor$ without the arc from the outer face~$g$ to the central
face~$f$ is a directed acyclic graph with $f$ as its only source and
$g$ as its only sink.  For each arc $a\neq gf$ in $\Nhor$ there
is a directed path~$P_a$ from $f$ to $g$ via $a$.  Adding the
arc~$gf$, we obtain the cycle~$C_a=P_a\join gf$.

For $\Nver$ we consider an arc~$a=fg$, and we
define the set $S_g$ of all nodes $h$ for which there exists a
directed path from $g$ to $h$ in $\Nver$. By definition, there
is no arc from a vertex in $S_g$ to a vertex not in $S_g$.  As
$\Nver$ satisfies~\ref{item:draw:rectangle_drawing:set}, $S_g$
does not have any incoming arcs either. Hence, $f\in S_g$ and there is
a directed path~$P_a$ from $g$ to $f$. Then $C_a=P_a\join fg$ is the desired 
cycle.
\end{proof}

By \cite{hht-orthoradial-09} an ortho-radial drawing of a graph is locally 
consistent. Therefore,
Theorem~\ref{thm:draw:rectangle_drawing} implies the characterization of 
ortho-radial drawings for rectangular graphs.

\begin{corollary}[Theorem~\ref{thm:main-result:drawable} for Rectangular 
Ortho-Radial Representations]\label{cor:draw:characterization}
  A rectangular ortho-radial representation is
  drawable if and only if it is valid.
\end{corollary}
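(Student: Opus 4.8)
The plan is to obtain this corollary as an immediate consequence of Theorem~\ref{thm:draw:rectangle_drawing}. That theorem already asserts, for a rectangular ortho-radial representation $\Gamma$, the equivalence of the three conditions ``$\Gamma$ is drawable'', ``$\Gamma$ is valid'', and the cut condition on $\Nver$; in particular it yields ``drawable $\Leftrightarrow$ valid'', which is precisely the statement of the corollary once one recalls that being rectangular is exactly the standing hypothesis of the theorem. So the core of the argument is a single invocation of Theorem~\ref{thm:draw:rectangle_drawing}, with no new combinatorics required.

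The one point I would spell out is the passage from \emph{representations} to \emph{drawings of the graph}. A priori Theorem~\ref{thm:draw:rectangle_drawing} characterizes which rectangular ortho-radial \emph{representations} admit a drawing; to read it as a statement about rectangular ortho-radial \emph{drawings} one needs that the class of representations is not artificially narrow, i.e., that every bend-free ortho-radial drawing induces an angle assignment satisfying Conditions~\ref{cond:repr:sum_of_angles} and~\ref{cond:repr:rotation_faces} of Definition~\ref{def:local-conditions}, and hence is an ortho-radial representation. This is exactly the local-consistency fact of~\cite{hht-orthoradial-09} quoted just above the corollary. Combining the two observations gives the desired equivalence: a rectangular ortho-radial drawing (with the prescribed embedding and central/outer faces) exists if and only if the induced rectangular ortho-radial representation is valid, which by Theorem~\ref{thm:draw:rectangle_drawing} is equivalent to its being drawable.

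For completeness I would also indicate where the actual work sits, in case a self-contained proof were wanted instead of a corollary: entirely inside Theorem~\ref{thm:draw:rectangle_drawing}. There, the direction ``drawable $\Rightarrow$ valid'' is the routine one --- construct a reference path backwards from a highest vertex of a putative monotone essential cycle $C$ by repeatedly taking the ``up, else left'' edge (always possible since every regular face is a rectangle), argue that this process terminates at the reference edge, and conclude that $C$ has labels of both signs or all zero. The substantive steps are ``valid $\Rightarrow$ cut condition'' (from a violating face set $S$ extract an essential, monotone, non-horizontal boundary cycle) and ``cut condition $\Rightarrow$ drawable'' (produce a circulation with at least one unit of flow on every arc of both $\Nhor$ and $\Nver$, and invoke Lemma~\ref{thm:rect:flows-to-drawing}). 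Since Theorem~\ref{thm:draw:rectangle_drawing} already supplies all three, the corollary itself needs no further argument, and I anticipate no genuine obstacle beyond correctly matching up the definitions of drawability used in the theorem and in the corollary.
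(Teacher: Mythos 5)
Your proposal is correct and takes the same route as the paper: the corollary is obtained as an immediate specialization of the equivalence (i) $\Leftrightarrow$ (ii) in Theorem~\ref{thm:draw:rectangle_drawing}, with the accompanying remark (citing~\cite{hht-orthoradial-09}) that every ortho-radial drawing induces a locally consistent angle assignment, so the notion of ``ortho-radial representation'' captures all bend-free drawings. No further argument is required.
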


We note that we can construct the flows in $\Nhor$ and $\Nver$ using
standard techniques based on flows in planar graphs with multiple
sinks and sources~\cite{millerN95}. With this a drawing can be computed in $O(n^{\frac{3}{2}})$ 
time.

\section{Drawable Representations of Planar
  4-Graphs}\label{sec:rectangulation}

In the previous section we proved that a rectangular ortho-radial
representation is drawable if and only if it is valid. We extend this
result to general ortho-radial representations by reduction to the
rectangular case.  In Section~\ref{sec:rect:algorithm} we present a procedure that augments a given instance such that all faces become
rectangles. For readability we defer some of the proofs to
Section~\ref{sec:rectangulation-correctness}. In
Section~\ref{sec:rect:main-theorem} we use the rectangulation
procedure and Corollary~\ref{cor:draw:characterization} to show
Theorem~\ref{thm:main-result:drawable}.  We remark that all our proofs are
constructive, but make use of tests whether certain modified
ortho-radial representations are valid.  We develop an efficient
testing algorithm for this in
Section~\ref{sec:finding_monotone_cycles}.

\subsection{Rectangulation Procedure}
\label{sec:rect:algorithm}

Throughout this section, we are given an instance
$I=(G,\mathcal E,f_c,f_o)$ with a valid
ortho-radial
representation~$\Gamma$ and a reference edge $e^\star$.
The core of the argument is a \emph{rectangulation procedure} that
successively augments $G$ with new vertices and edges to a graph
$G'$ along with a valid rectangular ortho-radial representation $\Gamma'$.
Then,~$\Gamma'$ is
drawable by Corollary~\ref{cor:draw:characterization}, and removing
the augmented parts yields a drawing of~$\Gamma$.

\begin{figure}[t]
  \centering
     \begin{minipage}[b]{0.48\textwidth}
    \begin{subfigure}[b]{\textwidth}
      \centering
      \includegraphics[]{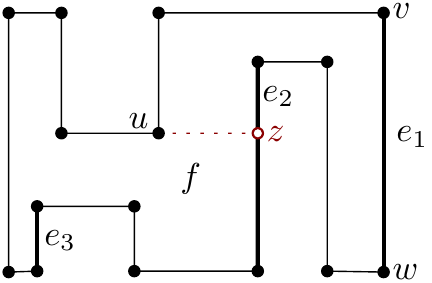}
      \subcaption{Multiple edge candidates.}
      \label{fig:hshape}
    \end{subfigure}    
  \end{minipage}
    \begin{minipage}[b]{0.25\textwidth}
    \begin{subfigure}[b]{\textwidth}
      \centering
      \includegraphics[page=2,scale=1]{fig/hshape.pdf}
      \subcaption{Vertical. }
      \label{fig:aug-vert}
    \end{subfigure}
    \begin{subfigure}[b]{\textwidth}
      \centering
      \includegraphics[page=3]{fig/hshape.pdf}
      \subcaption{Horizontal. }
      \label{fig:aug-horz-decreasing}
    \end{subfigure}
    \end{minipage}
    \begin{minipage}[b]{0.25\textwidth}
    \begin{subfigure}[b]{\textwidth}
      \centering
      \includegraphics[page=4]{fig/hshape.pdf}
      \subcaption{Horizontal. }
      \label{fig:aug-horz-valid}
    \end{subfigure}
    \begin{subfigure}[b]{\textwidth}
      \centering
      \includegraphics[page=5]{fig/hshape.pdf}
      \subcaption{Horizontal. }
      \label{fig:aug-path}
    \end{subfigure}
   \end{minipage}
  
  \caption{Examples of augmentations.
    (\protect\subref{fig:hshape})~The candidate edges of $u$ are $e_1$, $e_2$ 
    and $e_3$.
    (\protect\subref{fig:aug-vert})~Insertion of vertical edge $uz$. 
    (\protect\subref{fig:aug-horz-decreasing})~$\Gamma^{u}_{vw}$ contains a 
    decreasing cycle.
    (\protect\subref{fig:aug-horz-valid})~$\Gamma^{u}_{vw}$ is
    valid.
    (\protect\subref{fig:aug-path})~Insertion of horizontal edge 
    $uw_{i}$ because there is a horizontal path from $w_i$ to $u$.}
    \label{fig:augmentation}
\end{figure}

The rectangulation procedure works by augmenting non-rectangular faces
one by one, thereby successively removing concave angles at the
vertices until all faces are rectangles.  Traversing the boundary of a
face in clockwise direction yields a sequence of left and right turns,
where a degree-1 vertex contributes two left turns.
Note that concave angles correspond exactly to left turns in this
sequence.  Consider a face $f$ with a left turn (i.e., a concave
angle) at $u$ such that the following two turns when walking along $f$
(in clockwise direction) are right turns; see
Figure~\ref{fig:augmentation}.  We call $u$ a \emph{port} of $f$.  We
define a set of \emph{candidate edges} that contains precisely those
edges $vw$ of $f$, for which $\rot(\subpath{f}{u, vw}) = 2$; see
Figure~\ref{fig:hshape}.  We treat this set as a sequence, where the
edges appear in the same order as in $f$, beginning with the first
candidate after $u$.  The \emph{augmentation} $\Gamma^u_{vw}$ with
respect to a candidate edge $vw$ is obtained by splitting the edge
$vw$ into the edges $vz$ and $zw$, where $z$ is a new vertex, and
adding the edge $uz$ in the interior of $f$ such that the angle formed
by $zu$ and the edge following $u$ on $f$ is $90\degree$.  The
direction of the new edge $uz$ in $\Gamma^{u}_{vw}$ is the same for
all candidate edges.  If this direction is vertical, we call $u$ a
\emph{vertical port} and otherwise a \emph{horizontal port}. We note
that any vertex with a concave angle in a face becomes a port during
the augmentation process.  For regular faces Tamassia~\cite{t-emn-87}
shows that they always contain a port.  Moreover, the following
observation can be proven analogously.

\begin{observation}\label{obs:rect:augmentation_conditions_1_to_4}
  If $u$ is a port of a face $f$ and~$vw$ is a candidate edge for $u$,
  then $\Gamma^u_{vw}$ is an ortho-radial representation.
\end{observation}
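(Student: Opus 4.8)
The plan is to write down the angle assignment of~$\Gamma^u_{vw}$ explicitly and then verify the two conditions of Definition~\ref{def:local-conditions} locally, just as in Tamassia's orthogonal argument~\cite{t-emn-87} but with the extra bookkeeping for the central face. Subdividing~$vw$ by the new vertex~$z$ only inserts a degree-$2$ vertex with a straight angle (rotation~$0$) on each side; this changes no vertex sum and, since a rotation-$0$ vertex on a facial walk does not alter the rotation of that walk, no face rotation either. When the edge~$uz$ is then inserted into~$f$, the remaining angles are forced: the three angles around~$z$ are the straight angle on the far side of~$vw$ together with the two angles on the $f$-side created by~$uz$, which must sum to $2(\deg(z)-2)=2$ and are each at most~$1$, hence both equal~$1$; and the concave angle of~$f$ at~$u$---which has rotation exactly~$-1$, since at a port $\deg(u)\le 3$ and a rotation of~$-2$ inside a face can only occur at a degree-$1$ vertex---splits into the angle between the edge preceding~$u$ on~$f$ and~$uz$ and the angle between~$uz$ and the edge following~$u$ on~$f$; the latter is prescribed to be~$90\degree$, i.e.\ rotation~$1$, and the former must be~$0$ so that the sum at~$u$ grows by~$2$ to match $\deg(u)\mapsto\deg(u)+1$. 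All of these rotations lie in $\{-2,-1,0,1\}$, so Condition~\ref{cond:repr:sum_of_angles} holds at~$u$ and~$z$ and is untouched at every other vertex.

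For Condition~\ref{cond:repr:rotation_faces}, the only affected faces are~$f$ and the two faces~$f_1,f_2$ into which~$uz$ splits~$f$. Take~$f_1$ to be the one lying on the side of the two right turns that follow~$u$, so that~$f_1$ is bounded by the boundary arc of~$f$ from~$u$ to~$v$, followed by~$vz$ and the new edge~$zu$. Because~$vz$ has the same geometric direction as~$vw$, the rotation of that arc equals $\rot(\subpath{f}{u, vw})$, and hence $\rot(f_1)=\rot(\subpath{f}{u, vw})+1+1=4$, where the two extra~$1$'s are the right angles just created at~$z$ and~$u$ and where we used the candidate property $\rot(\subpath{f}{u, vw})=2$. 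By the edge-split identity underlying Equation~\eqref{lem:rotation:helping-equation2} in the proof of Lemma~\ref{lem:rotation-cycle}, $\rot(f_1)+\rot(f_2)=\rot(f)+4$, hence $\rot(f_2)=\rot(f)$ (a direct computation of~$\rot(f_2)$ like the one for~$\rot(f_1)$ gives the same). It remains to match these values to the face types: $f_1$ is a bounded corner region that does not contain the grid center, hence a regular face with prescribed rotation~$4$; and~$f_2$ is the face retaining the unbounded part and/or the center of~$f$, hence inherits the type (regular, outer, or central) of~$f$, so $\rot(f_2)=\rot(f)$ is its prescribed rotation. Since we work under $f_c\neq f_o$, the augmented instance again has distinct central and outer faces, so the~$-4$ case never occurs; as every other face is untouched, Condition~\ref{cond:repr:rotation_faces} holds, and~$\Gamma^u_{vw}$ is an ortho-radial representation.

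Two routine points are cleared on the way: that~$u\notin\{v,w\}$ (otherwise~$vw$ could not be a candidate edge in an ortho-radial representation, or~$uz$ would be a parallel edge), so that~$uz$ is genuinely new; and that subdividing~$vw$ leaves the rotation of the face on the other side of~$vw$ unchanged. The step I expect to need the most care is the identification of which of~$f_1,f_2$ plays which role---in particular, that the pinched-off corner~$f_1$ is never the central face when~$f=f_c$ and never the outer face when~$f=f_o$. This is precisely the point with no counterpart in Tamassia's orthogonal setting; the intended argument is that~$f_1$ is a bounded corner region cut off at~$u$ by the chord~$uz$ and therefore contains neither the grid center nor the point at infinity, consistently with the computed value $\rot(f_1)=4$.
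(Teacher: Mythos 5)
Your verification is correct and carries out in full the Tamassia-style local check that the paper itself delegates to the reader with ``can be proven analogously'': the angle bookkeeping at $u$ and $z$, the computation $\rot(f_1)=\rot(\subpath{f}{u,vw})+2=4$, and $\rot(f_2)=\rot(f)$ via the edge-split identity are exactly the intended argument, with the face-type assignment handled by designating the $\rot=4$ piece regular and letting the other piece inherit the type of $f$. One small slip in your justification that the port's concave angle has rotation $-1$ and not $-2$: the stated reason (``$\deg(u)\le 3$ and $-2$ requires degree $1$'') does not actually exclude $\deg(u)=1$; the real reason is that a degree-$1$ vertex contributes two \emph{consecutive} left turns to the facial walk, so its left turn cannot be immediately followed by two right turns as the port definition demands (and in fact a port has degree exactly $2$, since a $270\degree$ angle leaves only $90\degree$ for the single remaining angle at $u$).
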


However, an augmentation~$\Gamma^u_{vw}$ is not necessarily valid.  We
prove that we can always find an augmentation that is valid.  The
crucial ingredient is the following proposition.

\begin{proposition}
  \label{prop:augmentation-properties}
  Let $G$ be a planar 4-graph with valid ortho-radial representation
  $\Gamma$, let $f$ be a regular face of $G$ and let~$u$ be a port of
  $f$ with candidate edges $e_1=v_1w_1,\dots,e_k=v_kw_k$.  Then the following
  facts hold:
  \begin{compactenum}
  \item If $u$ is a vertical port, then~$\Gamma^u_{e_1}$ is a valid
    ortho-radial representation; see Figure~\ref{fig:aug-vert}\label{prop:augmentation-properties:fact:vertical-port}
  \item If $u$ is a horizontal port, then $\Gamma^u_{e_1}$ does not
    contain an increasing cycle and~$\Gamma^u_{e_k}$ does not
    contain a decreasing cycle; see Figure~\ref{fig:aug-horz-decreasing}--\subref{fig:aug-horz-valid}. \label{prop:augmentation-properties:fact:monotone-cycle}
    
  \item   Let~$P_i$ be the maximal path that contains the vertex $w_i$ of the
  candidate edge $e_i=v_iw_i$ and that consists of only horizontal edges.
  If $u$ is a horizontal port and~$\Gamma^u_{e_i}$ contains a
  decreasing cycle and~$\Gamma^u_{e_{i+1}}$ contains an increasing
  cycle, then $u$ is an endpoint of $P_i$ and adding the horizontal
  edge $uz$ to the other endpoint $z$ of $P_i$ yields a horizontal
  cycle; see Figure~\ref{fig:aug-path}.  In particular, $\Gamma + uz$ is valid.
    \label{prop:augmentation-properties:fact:horizontal-cycle}
  \end{compactenum}
\end{proposition}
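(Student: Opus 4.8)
The plan is to reduce everything to the validity of $\Gamma$ by a single rerouting trick. First record three preparatory facts, all routine from the definitions and Observation~\ref{obs:rect:augmentation_conditions_1_to_4}: (a)~subdividing an edge changes no label of any essential cycle (Lemma~\ref{lem:same-labels-except-from-edge}) and touches no other edge, so every strictly monotone cycle of an augmentation $\Gamma^u_{e_j}$ must traverse the new chord $uz_j$; (b)~the chord $uz_j$ always points in the same direction $\alpha$, namely the direction of the edge of $f$ entering $u$, and $\alpha$ is vertical precisely when $u$ is a vertical port; (c)~for the \emph{first} candidate $e_1$ the face on the convex side of $uz_1$ is a rectangle $\tilde f_1$ whose boundary is $z_1u$ together with the path $u x_1 x_2 z_1$ carrying the two right turns of the port.

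The core is the following rerouting step. Let $C$ be a strictly monotone cycle of $\Gamma^u_{e_1}$ and put $\lambda=\ell_C(uz_1)$, the label read in whichever direction $C$ uses the chord. Applying Lemma~\ref{lem:reroute-face} to the regular face $\tilde f_1$ and the edge $uz_1\in C$ yields a simple essential cycle $C'$ that avoids $uz_1$, agrees with $C$ on $Q=C\cap C'$, and whose remaining part is a subpath of $u x_1 x_2 z_1$; since $\tilde f_1$ is a rectangle the rotation increases in unit steps along that subpath, so the labels of $C'$ on the new part lie in $\{\lambda-1,\lambda,\lambda+1\}$ and meet the labels of $C$ on $Q$ consistently. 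As $C'$ omits the chord it is a cycle of $\Gamma$ once the subdivision of $e_1$ is undone, and $\ell_{C'}=\ell_C$ on $Q$. Suppose first $C$ is \emph{increasing}, so $\lambda\le 0$; by~(b) $\lambda\equiv\alpha\pmod 4$, hence $\lambda\le -1$ unless $\alpha$ is ``right'' and $\lambda=0$. In the former case every label of $C'$ is $\le 0$, so $C'$ is either increasing --- contradicting validity of $\Gamma$ --- or horizontal, in which case $C$, sharing the vertices of $Q$ with the horizontal cycle $C'$, is not strictly monotone by Proposition~\ref{prop:horizontal_cycle}; either way a contradiction. The corner case, in which $\alpha$ is ``right'' and $\lambda=0$, only occurs when $C$ leaves $z_1$ towards $w_1$, and then $\ell_C(z_1w_1)=\lambda+\rot(uz_1w_1)=0+1=1>0$, contradicting that $C$ is increasing. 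Thus $\Gamma^u_{e_1}$ has no increasing cycle; the claim that $\Gamma^u_{e_k}$ has no decreasing cycle is dual and is proved the same way, rerouting across the face on the other side of the chord, whose boundary behaves like a rectangle's past the last candidate (equivalently, by applying the $e_1$-statement to $\mirror{\Gamma}$, using Lemma~\ref{lem:mirroring_label}). This is Fact~\ref{prop:augmentation-properties:fact:monotone-cycle}. For Fact~\ref{prop:augmentation-properties:fact:vertical-port} note that a vertical port has $\alpha$ vertical, so no corner case can arise, and the argument above --- run once as written and once with all inequalities reversed --- rules out both increasing and decreasing cycles; hence $\Gamma^u_{e_1}$ is valid.

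It remains to prove Fact~\ref{prop:augmentation-properties:fact:horizontal-cycle}. Let $u$ be a horizontal port, let $D$ be a decreasing cycle of $\Gamma^u_{e_i}$ and $I$ an increasing cycle of $\Gamma^u_{e_{i+1}}$; both use their chords. Running the rerouting step for $D$ across the rectangular side of $uz_i$ would, exactly as above, produce a decreasing or horizontal cycle of $\Gamma$ --- contradicting validity --- \emph{unless} the corner case occurs, and the corner case forces $\ell_D(uz_i)=0$ together with the maximal horizontal path $P_i$ through $w_i$ reaching $u$. A symmetric analysis of $I$ forces $\ell_I(uz_{i+1})=0$ and again that $P_i$ reaches $u$. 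The two constraints pin $u$ down as an endpoint of $P_i$ with the chords $uz_i$ and $uz_{i+1}$ emanating on the two sides of the port; writing $z$ for the other endpoint of $P_i$ and inserting the edge $uz$, the cycle $C_0=P_i+uz$ in $\Gamma+uz$ has all labels $\ge 0$ (it lies ``below'' $D$) and all labels $\le 0$ (it lies ``above'' $I$), hence is horizontal, and $uz$ is horizontal as claimed. Finally, in $\Gamma+uz$ every essential cycle either shares a vertex with the horizontal cycle $C_0$ --- and then is not strictly monotone by Proposition~\ref{prop:horizontal_cycle} --- or avoids $C_0$, in particular avoids the new edge $uz$, and is thus an essential cycle of the valid representation $\Gamma$; in both cases it is not strictly monotone, so $\Gamma+uz$ is valid.

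I expect the main obstacle to be the precise label bookkeeping through the rerouting: verifying that Lemma~\ref{lem:reroute-face} really returns a simple cycle omitting the chord and isolating exactly which configurations make the naive reroute fail --- these are precisely the degenerate horizontal ones underlying Fact~\ref{prop:augmentation-properties:fact:horizontal-cycle}. The flip/mirror reductions also need care, since labels depend on the chosen reference edge (Lemmas~\ref{lem:flip_label} and~\ref{lem:mirroring_label}) and one must check that $u$ remains a port of the relevant face with the candidate list reversed.
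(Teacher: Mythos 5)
Your overall plan---reduce each assertion to the validity of $\Gamma$ by rerouting a putative strictly monotone cycle across a face of the augmentation using Lemma~\ref{lem:reroute-face}, with Proposition~\ref{prop:horizontal_cycle} catching the horizontal degeneracy---is the paper's approach for Lemmas~\ref{lem:vertical-edge} and~\ref{lem:horizontal-first-candidate}. For the vertical port and for the first-candidate half of Fact~\ref{prop:augmentation-properties:fact:monotone-cycle} your argument, once the case analysis of which rectangle side carries the positive label is spelled out (as the paper does), is essentially correct.

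The real gaps are in the other two parts. For the statement that $\Gamma^u_{e_k}$ has no decreasing cycle, neither of your two suggested routes is sound. The face $f''$ on the $w_k$-side of the chord is \emph{not} a rectangle and its boundary past $e_k$ is not ``rectangle-like'': the only constraint after the last candidate is $\rot(\subpath{f}{u,e})\ge 3$, which still permits arbitrarily many reflex corners, so a one-line label bound on $P$ as in the rectangular case is not available. The mirror reduction also fails in a specific way: $\mirror{\Gamma}$ reverses the rotation order around every vertex, so the two convex corners that qualified $u$ as a port now precede $u$ along the face rather than follow it; $u$ is no longer a port, and the candidate list of the mirrored instance is not simply $e_k,\dots,e_1$. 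The paper's Lemma~\ref{lem:horizontal-last-candidate} needs a genuinely different argument: after rerouting across $f''$ it identifies an edge $xy$ of negative label on $P$, and then uses the second statement of Lemma~\ref{lem:measure-direction} to turn $\ell_{C'}(xy)<0$ into a rotation bound $\rot(\subpath{f}{u,yx})\le 2$, which would yield a further candidate after $e_k$---a contradiction. None of that is captured by your sketch.

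Fact~\ref{prop:augmentation-properties:fact:horizontal-cycle} is likewise under-argued. You assert that the ``corner case'' of the reroute forces $\ell_D(uz_i)=0$ and forces $P_i$ to run from $w_i$ all the way to $u$, and symmetrically for the increasing cycle, but this does not follow from a single rerouting step applied separately to $D$ and to $I$. The paper's Lemma~\ref{lem:horizontal-path} is forced to build an auxiliary representation $\tilde\Gamma$ that realizes both chords simultaneously, and then runs a chain of claims about how the decreasing cycle from $\Gamma^u_{e_i}$ and the increasing cycle from $\Gamma^u_{e_{i+1}}$ must overlap along the central face of their union: this is what localizes the horizontal path $P_i$, identifies both its endpoints, and certifies that closing it with $uz$ yields a horizontal cycle. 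Your argument also quietly assumes that the all-horizontal cycle $C_0=P_i+uz$ has all labels equal to $0$; that every edge of $C_0$ is horizontal only gives labels $\equiv 0\pmod 4$, and pinning them to $0$ is again done through the careful label analysis of the claims in Lemma~\ref{lem:horizontal-path}, not by inspection.
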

To increase the readability we split the proof of
Proposition~\ref{prop:augmentation-properties} into the separate
Lemmas~\ref{lem:vertical-edge}--~\ref{lem:horizontal-path}, which we defer to Section~\ref{sec:rectangulation-correctness}.

We are now ready describe the rectangulation procedure. Let~$G$ be a
planar 4-graph with valid ortho-radial representation~$\Gamma$.
Without loss of generality, we can assume that~$G$ is connected,
otherwise we can treat the connected components separately.  We
further insert triangles in both the central and outer face and
suitably connect these to the original graph; see
Figure~\ref{fig:rect:outer_central_face}. Namely, for the central
face~$g$ we identify an edge $e$ on the simple cycle~$C$ bounding $g$
such that $\ell_C(e)=0$. Since $\Gamma$ is valid and $C$ is an
essential cycle, such an edge exists.  We then insert a new cycle~$C'$
of length~3 inside $g$ and connect one of its vertices to a new vertex
on $e$. The new cycle~$C'$ now forms the boundary of the central
face. Analogously, we insert into the outer face a cycle~$C_o$ of
length $3$ which contains $G$ and is connected to the reference
edge~$e^\star$. We choose an arbitrary edge $e^{\star\star}$ on $C_o$
as new reference edge. We observe that there is a path~$P$ from
$e^{\star\star}$ to $e^{\star}$ with rotation $0$. Hence, each
reference path from $e^\star$ to an essential cycle in $\Gamma$ can be
extended by $P$ such that the new path is a reference path with
respect to $e^{\star\star}$ and has the same rotation.

\begin{figure}[tb]
  \centering
  \includegraphics{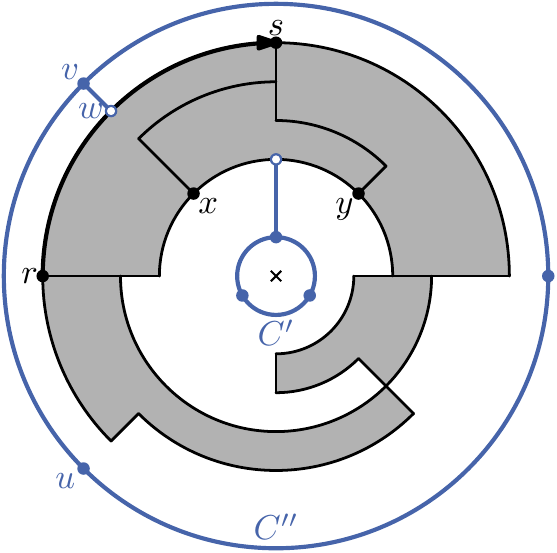}
  \caption{The outer and the central face are rectangulated by adding
    cycles of length~3.  The cycle~$C'$ is connected to an arbitrary
    edge~$xy$ that has label~0 and $C''$ is connected to a new vertex
    on the old reference edge~$rs$.  The edge~$uv$ is selected as the
    new reference edge.}
  \label{fig:rect:outer_central_face}
\end{figure}

After this preprocessing any face $f$ that is not a rectangle is
regular, and it therefore contains a port $u$.  If any of the
candidate augmentations~$\Gamma^u_{e_i}$ is valid,
then~$\Gamma^u_{e_i}$ has fewer concave corners than~$\Gamma$, and we
continue the augmentation procedure with~$\Gamma^u_{e_i}$.  On the
other hand, if none of these augmentations is valid, then
each~$\Gamma^u_{e_i}$ contains a strictly monotone cycle.
Let $i$ be the
smallest index such that $\Gamma^u_{e_i}$ contains an increasing cycle
and note that such an index $i$ exists and that $i>1$ by property 2 of
Proposition~\ref{prop:augmentation-properties}.  Then, by definition
of~$i$, $\Gamma^u_{e_{i-1}}$ contains a decreasing cycle
and~$\Gamma^u_{e_i}$ contains an increasing cycle.  But then property
3 of Proposition~\ref{prop:augmentation-properties} guarantees the
existence of a vertex $z$ such that~$\Gamma+uz$ is valid and has fewer
concave corners than~$\Gamma$.  Using this procedure we can iteratively
augment~$\Gamma$ to a rectangular ortho-radial
representation that contains a subdivision of the
ortho-radial representation~$\Gamma$.

\subsection{Proof of Proposition~\ref{prop:augmentation-properties}}
\label{sec:rectangulation-correctness}

Throughout this section, we assume that we are in the situation
described by Proposition~\ref{prop:augmentation-properties}.  That is,
$G$ is a planar 4-graph with valid ortho-radial
representation~$\Gamma$, and~$u$ is a port of a regular
face~$f$ with candidate edges~$e_1=v_1w_1,\dots,e_k=v_kw_k$.  After
possibly replacing $\Gamma$ with $\flip{\Gamma}$, we may
assume that the edge $uz$ resulting from an augmentation with a
candidate is directed to the right or up. By Lemma~\ref{lem:flip_label} there is a one-to-one correspondence between increasing (decreasing) cycles in $\Gamma$ and increasing (decreasing) cycles in $\flip{\Gamma}$. 

\begin{lemma}\label{lem:vertical-edge}
  If~$u$ is a vertical port, then~$\Gamma^u_{e_1}$ is a valid ortho-radial 
  representation.
\end{lemma}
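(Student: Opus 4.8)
The plan is to argue by contradiction: assume that $u$ is a vertical port but that $\Gamma^u_{e_1}$ is not valid, so it contains a strictly monotone cycle $C$. Since the augmentation splits the candidate edge $e_1 = v_1w_1$ into $v_1z$ and $zw_1$ and adds the new edge $uz$, and since Lemma~\ref{lem:same-labels-except-from-edge} tells us that subdividing $v_1w_1$ does not affect the labels of the essential cycles that run along $C[v_1,w_1]$, the only genuinely new essential cycles in $\Gamma^u_{e_1}$ are those that use the added edge $uz$. Hence $C$ must contain $uz$. First I would set up notation: let $e_0$ be the edge of $f$ that ends at $u$ and $e_0'$ the edge of $f$ that starts at $u$ (so the turn at $u$ along $f$ is a left turn), and recall that by the definition of a port the two turns of $f$ after $u$ are right turns, and that $\rot(f[u, v_1w_1]) = 2$ by the definition of a candidate edge. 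Because $u$ is a \emph{vertical} port, the new edge $uz$ is directed up (after possibly replacing $\Gamma$ with $\flip{\Gamma}$), and the edge $e_0'$ following $u$ on $f$ — which makes a $90\degree$ angle with $zu$ — is then horizontal.

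The key geometric observation I would isolate is this: since $uz$ points up, the direction of the edge $zw_1$ (which is a sub-edge of the original $v_1w_1$, possibly turned at $z$ by the rotations $\rot(uz, zw_1)$ and $\rot(v_1z, zu)$ forced by the candidate edge being the \emph{first} one) is constrained, and more importantly the new cycle $C$ through $uz$ can be compared to a cycle of $\Gamma$ (or to the boundary cycle of $f$) using Lemma~\ref{lem:direction_cycle} or Lemma~\ref{lem:repr:equal_labels_at_intersection}. Concretely, I would take the essential cycle $C$ in $\Gamma^u_{e_1}$, split it at $u$ and $z$ into the edge $uz$ and the remaining path $C[z,u]$, and observe that $C[z,u]$ together with the sub-path $f[u, v_1w_1]$ (followed to $z$) bounds a region; since $\rot(f[u,v_1w_1]) = 2$ and the candidate is the \emph{first} one, the path $f[u, v_1 w_1] + v_1 z$ stays "inside" in a controlled way, and one can replace $C[z,u]$ by a path of $\Gamma$ (removing the subdivision vertex $z$) to obtain an essential cycle $\widehat C$ of $\Gamma$ whose labels differ from those of $C$ in a bounded, sign-preserving way. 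Using Observation~\ref{obs:repr:label_difference} and the fact that $uz$ points up (so $\ell(uz)$ is $\equiv 3 \pmod 4$, in particular non-zero and "negative-ish"), I would show that $\widehat C$ must already contain both a positively-labeled and a negatively-labeled edge — i.e.\ $\widehat C$ is a strictly monotone cycle in $\Gamma$, contradicting the validity of $\Gamma$. The choice of $e_1$ (the first candidate) is essential here: it guarantees that the detour through $u$ and $z$ adds exactly the rotation needed so that a decreasing (resp.\ increasing) cycle through $uz$ forces a decreasing (resp.\ increasing) cycle already present in $\Gamma$, whereas for later candidates $e_i$ this is no longer true, which is precisely why only $\Gamma^u_{e_1}$ is claimed valid.

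The main obstacle I anticipate is the careful bookkeeping of rotations and labels at $u$ and $z$ — specifically, showing that the "correction term" when one removes the detour $u\to z$ and replaces it with the original edge $v_1w_1$ (suitably) is sign-consistent with the monotonicity type of $C$, so that a strictly monotone $C$ genuinely produces a strictly monotone cycle in $\Gamma$ rather than merely a monotone-but-not-strictly-monotone one. Handling this cleanly will likely require invoking Lemma~\ref{lem:repr:equal_labels_at_intersection} at the shared vertices of $C$ and the face boundary, together with the fact (from the definition of a port and of candidate edges) that all the turns along $f[u, v_1w_1]$ except at the two designated right turns are right turns, so that the path $f[u,v_1w_1]$ never "wraps around" in a way that would spoil the essentiality or the label comparison. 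I would also need to check the degenerate possibilities (e.g.\ $z$ coinciding in the embedding with other vertices of $C$, or $C$ sharing a long sub-path with $f$), but these reduce to the generic case by the same subdivision/rerouting arguments used in Lemmas~\ref{lem:reroute-face} and~\ref{lem:same-labels-except-from-edge}.
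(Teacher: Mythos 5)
Your overall strategy — argue by contradiction, observe that any strictly monotone cycle~$C$ in $\Gamma^u_{e_1}$ must use the new edge $uz$, and then reroute $C$ to obtain a strictly monotone cycle already present in $\Gamma$ — is the same as the paper's. However, there are two genuine problems in how you carry it out.

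First, the central move is misdescribed. You speak of ``replacing $C[z,u]$ by a path of $\Gamma$ (removing the subdivision vertex $z$),'' which does not directly make sense: $C[z,u]$ is almost all of $C$, and one cannot simply undo the subdivision, because the offending edge is $uz$, not the subdivided candidate. The tool the paper actually uses is Lemma~\ref{lem:reroute-face} applied to $C$ and the \emph{new rectangular face} $f'$ (the small rectangle bounded by $uz$, $v_1z$, and the first few edges of $f$ after $u$). That lemma produces an essential cycle $C'$ in $\Gamma^u_{e_1}$ that avoids $uz$ (hence already existed in $\Gamma$) and can be decomposed into a path $Q\subseteq C$ with preserved labels and a path $P$ on $f'$. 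The monotonicity of $C'$ is then read off from the labels on $Q$ plus the fact that $f'$ is a rectangle, so the edges of $P$ have labels within $\pm1$ of $\ell_C(uz)$. Your proposal alludes to ``using Lemma~\ref{lem:direction_cycle} or Lemma~\ref{lem:repr:equal_labels_at_intersection},'' but never isolates the step that actually closes the argument, namely rerouting along $f'$ so as to avoid $uz$ while keeping a controlled label change; without that, there is no cycle of $\Gamma$ to exhibit.

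Second, the conclusion you state is internally inconsistent: you write that $\widehat C$ ``must already contain both a positively-labeled and a negatively-labeled edge --- i.e.\ $\widehat C$ is a strictly monotone cycle.'' Having labels of both signs is precisely the \emph{negation} of monotonicity, so as written this would give no contradiction with the validity of $\Gamma$. What you actually need (and what the paper proves, and what you later correctly describe) is that $\widehat C$ has all labels of the \emph{same} sign and at least one non-zero label; that is what makes it strictly monotone. Related to this, the claim that $uz$ pointing up forces $\ell_C(uz)$ to be ``negative-ish'' is not justified by $\ell_C(uz)\equiv 3 \pmod 4$ alone (this residue class also contains $3$ and $7$); the paper combines the congruence with the monotonicity hypothesis on $C$ (e.g., labels~$\le 0$ on an increasing cycle) to pin down the sign, and also treats the two orientations $uz$ and $zu$ separately. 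These pieces of bookkeeping are exactly the gaps you flag as ``the main obstacle,'' but they are not addressed, so the proof is incomplete.
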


\begin{figure}[bt]
    \centering
    \includegraphics{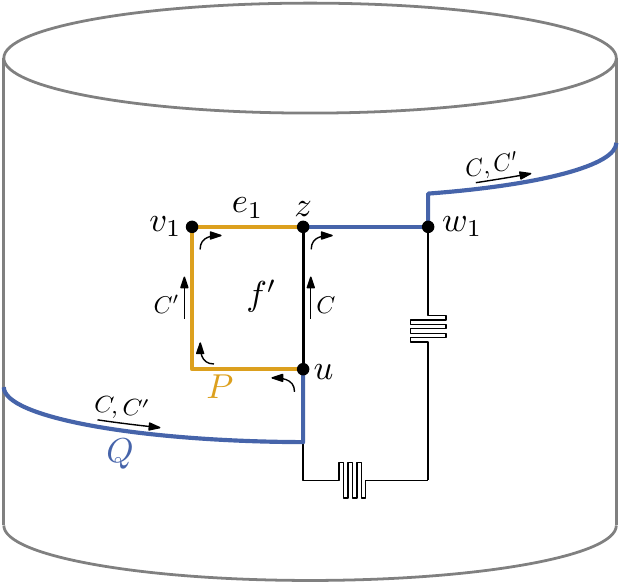}
    \caption{Illustration of proof for
      Lemma~\ref{lem:vertical-edge}. In this illustration it is
      assumed that inserting the edge $uz$ into a valid ortho-radial
      representation creates an increasing cycle $C$ that uses
      $uz$. However, then there is cycle $C'$ not using $uz$ that is
      also increasing. }
    \label{fig:rect:vertical-port}
\end{figure}

\begin{proof}
  Assume for the sake of contradiction that $\Gamma^u_{e_1}$ contains a 
  strictly monotone
  cycle $C$. As $\Gamma$ is valid, $C$
  must contain the new edge $uz$ in either direction (i.e., $uz$ or
  $zu$).  Let $f'$ be the new rectangular face of $G+uz$ containing
  $u$, $v_1$ and $z$, and consider the subgraph $H=C+f'$ of $G+uz$.
  According to Lemma~\ref{lem:reroute-face} there exists an 
  essential cycle $C'$ that does not contain $uz$. Moreover, $C'$ can
  be decomposed into paths $P$ and $Q$ such that $P$ lies on $f'$ and
  $Q$ is a part of~$C$; see Figure~\ref{fig:rect:vertical-port}.

  The goal is to show that $C'$ is increasing or decreasing. We
  present a proof only for the case that $C$ is an increasing
  cycle. The proof for decreasing cycles can be obtained by flipping
  all inequalities.

  For each edge $e$ on $Q$ the labels $\ell_C(e)$ and $\ell_{C'}(e)$
  are equal by Lemma~\ref{lem:reroute-face}, and
  hence $\ell_{C'}(e)\leq 0$.  For an edge $e\in P$, there are two
  possible cases: $e$ either lies on the side of $f'$ parallel to $uz$
  or on one of the two other sides.  In the first case, the label of
  $e$ is equal to the label $\ell_C(uz)$ ($\ell_C(zu)$ if $C$ contains
  $zu$ instead of $uz$). In particular the label is negative.

  In the second case, we first note that $\ell_{C'}(e)$ is even, since
  $e$ points left or right.  Assume that $\ell_{C'}(e)$ was positive
  and therefore at least $2$. Then, let $e'$ be the first edge on $C'$
  after $e$ that points to a different direction. Such an edge exists,
  since otherwise $C'$ would be an essential cycle whose edges all
  point to the right, but they are not labeled with 0.
  This edge $e'$ lies on $Q$ or is parallel to $uz$. Hence, the
  argument above implies that $\ell_{C'}(e')\leq 0$. However,
  $\ell_{C'}(e')$ differs from $\ell_{C'}(e)$ by at most $1$, which
  requires $\ell_{C'}(e')\geq 1$.  Therefore, $\ell_{C'}(e)$ cannot be
  positive.

  We conclude that all edges of $C'$ have a non-positive label. If all
  labels were $0$, $C$ would not be an increasing cycle by
  Proposition~\ref{prop:horizontal_cycle}.  Thus, there exists an edge
  on $C'$ with a negative label and $C'$ is an increasing cycle in
  $\Gamma$. But as $\Gamma$ is valid, such a cycle does not exist, and
  therefore $C$ does not exist either. Hence, $\Gamma^u_{e_1}$ is
  valid.
\end{proof}

  \begin{figure}[bt]
    \centering
    \includegraphics{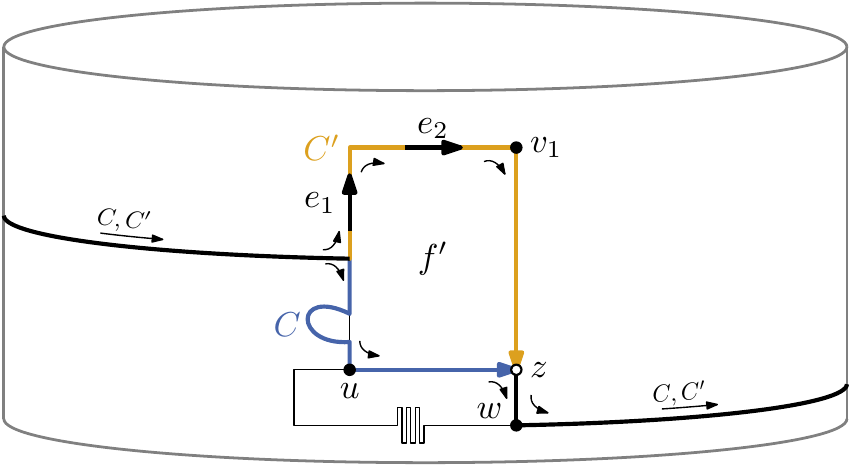}
    \caption{The increasing cycle $C$ contains $uz$. There are three
      possibilities for edges on $C'$ that lie not on $C$: They lie on
      the left side of $f'$ (like $e_1$), on the top (like $e_2$), or
      on the right side formed by only the edge $vz$.}
    \label{fig:rect:first_candidate}
  \end{figure}

\begin{lemma}\label{lem:horizontal-first-candidate}
  If $u$ is a horizontal port, then $\Gamma^u_{e_1}$ contains no
  increasing cycle.
\end{lemma}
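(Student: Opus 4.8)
The plan is to mirror the proof of Lemma~\ref{lem:vertical-edge}, the only structural change being that the inserted edge $uz$ is now horizontal rather than vertical. Assume for contradiction that $\Gamma^u_{e_1}$ contains an increasing cycle $C$. Since $\Gamma$ is valid, $C$ cannot avoid the new edge, hence it runs through $uz$ or through $zu$. Recall that after the possible replacement of $\Gamma$ by $\flip{\Gamma}$ at the beginning of Section~\ref{sec:rectangulation-correctness} the edge $uz$ points to the right, and since $u$ is a horizontal port $uz$ is horizontal. Let $f'$ be the new rectangular face of $G+uz$ that has $uz$ on its boundary and contains $u$, $v_1$, and $z$; two of its four sides are horizontal ($uz$ and the side opposite to it) and two are vertical.

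First I would apply Lemma~\ref{lem:reroute-face} to $H=C+f'$, with $f'$ as the regular face and $uz$ as the common edge. This produces a simple essential cycle $C'$ that avoids $uz$, decomposes as $C'=P+Q$ with $Q=C\cap C'$ and with $P$ (or $\reverse{P}$) on $\partial f'$, and satisfies $\ell_C(e)=\ell_{C'}(e)$ for every edge $e$ of $Q$. The core step is to prove that no edge of $C'$ has a positive label. For $e$ on $Q$ this is immediate from $\ell_{C'}(e)=\ell_C(e)\le 0$. An edge $e$ of $P$ lies on one of the three sides of $f'$ other than $uz$: if it is the horizontal side opposite $uz$, then, as in Lemma~\ref{lem:vertical-edge}, $\ell_{C'}(e)$ equals the label of the new edge on $C$ (that is, $\ell_C(uz)$ or $\ell_C(zu)$), which is $\le 0$; if it is a vertical side, then $\ell_{C'}(e)$ is odd, and supposing it positive I would trace $C'$ forward from $e$ to the first edge $e'$ of a different geometric direction, noting that such $e'$ exists (else $C'$ uses only up-edges or only down-edges and is a strictly monotone cycle of $\Gamma$), that the edges between $e$ and $e'$ all share the label of $e$, and that $e'$ lies on $Q$ or is the horizontal side opposite $uz$, whence $\ell_{C'}(e')\le 0$; a short comparison of $\ell_{C'}(e)$ and $\ell_{C'}(e')$ across the single turn at their meeting vertex then yields a contradiction, exactly as in the vertical-port case. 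Consequently all labels of $C'$ are non-positive.

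Finally, since $C'$ avoids $uz$ and un-subdividing $v_1w_1$ does not change labels, $C'$ is an essential cycle of $\Gamma$. If $C'$ has a strictly negative label it is an increasing cycle of $\Gamma$, contradicting validity; otherwise all its labels vanish, so $C'$ is horizontal, and since $C'$ shares the vertices of the non-empty path $Q$ with $C$, Proposition~\ref{prop:horizontal_cycle} shows $C$ is not strictly monotone, contradicting the assumption. Either way we obtain a contradiction, so $\Gamma^u_{e_1}$ contains no increasing cycle.

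I expect the main obstacle to be the label bookkeeping around the rectangle $f'$: one must show that rerouting $C'$ around $f'$ rather than through $uz$ cannot create a positive label, and---in contrast to Lemma~\ref{lem:vertical-edge}, where the vertical edge $uz$ carries a strictly negative label on any increasing cycle---here $\ell_C(uz)$ may equal $0$, so the case analysis only delivers non-positive labels and one genuinely needs Proposition~\ref{prop:horizontal_cycle} to exclude the horizontal case. A further point requiring care is that the flip normalizes $uz$ to point right rather than up, so the horizontal and vertical sides of $f'$ swap roles relative to the vertical-port lemma and the trace-forward step must be re-checked in that configuration.
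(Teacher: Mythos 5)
Your overall plan coincides with the paper's up to the point where labels on $P$ must be bounded: invoke Lemma~\ref{lem:reroute-face}, argue all labels of $C'$ are non-positive, and finish with Proposition~\ref{prop:horizontal_cycle}. The treatment of $Q$ and the horizontal side of $f'$ match. But for the two vertical sides you take a genuinely different route, and it breaks down at exactly the step you flag as ``must be re-checked in that configuration.'' The paper does not trace forward at all: it splits on whether $C$ contains $uz$ or $zu$, and in the $zu$ case uses that $\ell_C(zu)\equiv 2\pmod 4$ together with $\ell_C(zu)\le 0$ forces $\ell_C(zu)\le -2$, so every $P$-edge, each within $1$ of $\ell_C(zu)$, has non-positive label without any tracing; in the $uz$ case it reads off explicit values on the left, top and right sides, for the right side using that if $v_1z$ lies on $C'$ then $C$ must continue $z\to w_1$ at $z$.

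The gap in your version: the vertical sides of $f'$ carry odd labels, so ``positive'' only means $\ell_{C'}(e)\ge 1$, not $\ge 2$ as in Lemma~\ref{lem:vertical-edge}, where the offending $P$-edges are horizontal with even labels. Tracing forward from $e$ to the first edge $e'$ of a different geometric direction gives $\ell_{C'}(e')\ge \ell_{C'}(e)-1\ge 0$, and your separate bound gives $\ell_{C'}(e')\le 0$; these are compatible with $\ell_{C'}(e')=0$, so no contradiction results. The one unit of slack that makes the vertical-port trace close does not exist here. Concretely, the case you have not excluded is $\ell_C(uz)=0$, which is perfectly possible for an increasing cycle $C$ ($uz$ is horizontal, so $\ell_C(uz)\equiv 0\pmod 4$ and $\le 0$): then the down-pointing right side of $f'$ receives label $1$ on $C'$, your trace terminates at $e'=zw_1$ with $\ell_{C'}(e')=0$, and nothing contradicts anything. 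This parity asymmetry between horizontal and vertical ports is exactly why the paper replaces the trace-forward with the explicit case split on $uz$ versus $zu$ and its $\le -2$ bound.
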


\begin{proof}
  Let $f'$ be the new rectangular face of
  $\Gamma^u_{e_1}$ containing $u$, $v_1$ and $z$, and assume for the sake of
  contradiction that there
  is an increasing cycle $C$ in $\Gamma^u_{e_1}$.  This cycle must use
  either $uz$ or $zu$.  Similar to the proof of
  Lemma~\ref{lem:vertical-edge}, we find an increasing cycle~$C'$ in
  $\Gamma$, contradicting the validity of $\Gamma$.

  Applying Lemma~\ref{lem:reroute-face} to $C$ and $f'$ yields an
  essential cycle $C'$ without $uz$ and $zu$ that can be decomposed
  into a path $P$ on $f'$ and a path $Q\subseteq C\setminus f'$ such
  that all edges of $Q$ have non-positive labels.  We show in the
  following that the edges of $P$ also have non-positive labels.

  If $C$ contains $uz$, there are three possibilities for an edge $e$
  of $P$, which are illustrated in
  Figure~\ref{fig:rect:first_candidate}:
  The edge~$e$ lies on the left
  side of $f'$ and points up, $e$ is parallel to $uz$, or $e=v_1z$.  In
  the first case $\ell_{C'}(e)=\ell_{C}(uz)-1<0$ and in the second
  case $\ell_{C'}(e)=\ell_C(uz)\leq 0$. If $e=v_1z$, $C$ cannot contain
  $zv_1$ and therefore $zw_1\in C$.  Then, $\ell_{C'}(e)=\ell_C(zw_1)<
  0$. In all three cases the label of $e$ is at most 0.

  If $C$ contains $zu$, the label of $zu$ has to leave a remainder of
  2 when it is divided by 4 since $zu$ points to the left.  As the
  label is also at most $0$, we conclude $\ell_C(zu)\leq -2$. The
  edges of $P$ lie either on the left, top or right of
  $f'$. Therefore, the label of any edge $e$ on $P$ differs by at most
  1 from $\ell_{C}(zu)$, and thus we get $\ell_{C'}(e)\leq 0$.

  Summarizing the results above, we see that all edges on $C'$ are
  labeled with non-positive numbers. The case that all labels of $C'$
  are equal to 0 can be excluded, since $C$ would not be an increasing
  cycle by Proposition~\ref{prop:horizontal_cycle}.  Hence, $C'$
  is an increasing cycle, which was already present in $\Gamma$,
  contradicting the validity of $\Gamma$.
\end{proof}

\begin{lemma}\label{lem:horizontal-last-candidate}
  If $u$ is a horizontal port, then~$\Gamma^u_{e_k}$ contains no
  decreasing cycle.
\end{lemma}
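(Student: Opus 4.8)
The plan is to obtain this statement from Lemma~\ref{lem:horizontal-first-candidate} by a symmetry argument, rather than by repeating the case analysis. The natural symmetry to use is mirroring, as formalized in Lemma~\ref{lem:mirroring_label}. Recall that $u$ is a horizontal port, so after the possible flip made at the start of Section~\ref{sec:rectangulation-correctness} the new edge $uz$ of any augmentation points to the right. When we pass to the mirror representation $\mirror{\Gamma}$, the order of edges around every vertex is reversed, so the port $u$ of the face $f$ in $\Gamma$ becomes a port of the mirrored face $\reverse f$ in $\mirror{\Gamma}$, and crucially the cyclic sequence of candidate edges is traversed in the opposite order: the candidate edges $e_1,\dots,e_k$ for $u$ in $\Gamma$ correspond, in reverse, to the candidate edges for $u$ in $\mirror{\Gamma}$, so $e_k$ becomes the \emph{first} candidate in $\mirror{\Gamma}$. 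Since mirroring sends edges pointing right to edges pointing right (it exchanges left/right but the edge $uz$ after flipping points right, hence $\reverse{uz}$ — no: more carefully, $uz$ points right, and its mirror image points right again only after we also note the reversal of direction; in any case $u$ remains a \emph{horizontal} port in $\mirror{\Gamma}$, which is all we need).

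The key step is then to check that the augmentation operation commutes with mirroring in the appropriate sense: the augmentation $\Gamma^u_{e_k}$ of $\Gamma$, after mirroring, is exactly the augmentation of $\mirror{\Gamma}$ with respect to the candidate edge corresponding to $e_k$, which is the first candidate edge of $u$ in $\mirror{\Gamma}$. That is, $\mirror{(\Gamma^u_{e_k})} = (\mirror{\Gamma})^u_{e'_1}$ where $e'_1$ denotes the first candidate of $u$ in $\mirror{\Gamma}$. This holds because the augmentation only subdivides one edge and inserts one new edge with locally prescribed rotations inside the face $f$, and these local choices are preserved (with the sign conventions of Lemma~\ref{lem:mirroring_label}) under mirroring; one should double-check that the inserted edge, which is horizontal, maps to a horizontal edge, so that $u$ is again a horizontal port of the mirrored configuration. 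Granting this, Lemma~\ref{lem:horizontal-first-candidate} applied to $\mirror{\Gamma}$ says that $(\mirror{\Gamma})^u_{e'_1}$ contains no increasing cycle. By Lemma~\ref{lem:mirroring_label}, increasing cycles of a representation correspond bijectively to decreasing cycles of its mirror, so $\mirror{(\Gamma^u_{e_k})}$ contains no increasing cycle exactly means $\Gamma^u_{e_k}$ contains no decreasing cycle, which is the claim.

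The main obstacle I expect is bookkeeping, not mathematics: one has to verify precisely that mirroring reverses the order of the candidate-edge sequence (so that ``last'' becomes ``first''), and that the flip normalization made at the beginning of the section is compatible with — or can be redone after — mirroring, so that $u$ still counts as a horizontal port in $\mirror{\Gamma}$ and the inserted edge still points right. Both facts are immediate from the definitions of $\flip{\Gamma}$, $\mirror{\Gamma}$ and of candidate edges, but they must be spelled out carefully because the direction conventions and the reference-edge dependence of labelings make naive symmetry arguments error-prone. Once these compatibilities are in place, the proof is a one-line invocation of Lemma~\ref{lem:horizontal-first-candidate} and Lemma~\ref{lem:mirroring_label}.
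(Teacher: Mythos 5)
Your proposal does not work, and the obstruction is one the authors themselves note later in the paper. The definition of a \emph{port} is direction-sensitive: $u$ is a port of $f$ if the left turn at $u$ is \emph{followed} (in the clockwise facial walk of $f$) by two right turns. Under mirroring, the facial walk of $f$ is reversed: on $\reverse f$ the two right turns that came after $u$ now come \emph{before} $u$, and the two turns that now follow $u$ on $\reverse f$ are whatever preceded $u$ on $f$, about which nothing is known. So $u$ need in general not be a port of $\reverse f$ in $\mirror{\Gamma}$ at all, and Lemma~\ref{lem:horizontal-first-candidate} cannot be invoked there. The paper says exactly this in Section~\ref{sec:faster-validity-test:first}: ``It is tempting to use the mirror symmetry (Lemma~\ref{lem:mirroring_label}) to exchange increasing and decreasing cycles \dots\ However, this fails as mirroring invalidates the property that $u$ is followed by two right turns in clockwise direction,'' and Figure~\ref{fig:increasing_cycle_negative} gives a concrete failure. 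Your remark that ``$u$ remains a horizontal port in $\mirror{\Gamma}$, which is all we need'' is precisely the step that is false, and the subsequent claim that the augmentation $\Gamma^u_{e_k}$ mirrors to $(\mirror\Gamma)^u_{e'_1}$ has no meaning if $u$ is not a port. Relatedly, even the assertion that candidate edges map to candidate edges in reverse order does not follow: an edge $vw$ is a candidate because $\rot(\subpath{f}{u,vw})=2$, and computing the corresponding quantity for $\reverse{vw}$ on $\reverse f$ in $\mirror\Gamma$ yields $\rot_\Gamma(\subpath{f}{vw,u})$, which is not constrained to be $2$ by the candidate condition on $vw$.

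The paper's own proof of Lemma~\ref{lem:horizontal-last-candidate} is not a mirror image of Lemma~\ref{lem:horizontal-first-candidate} but a genuinely different argument: it assumes a decreasing cycle $C$ in $\Gamma^u_{e_k}$, reroutes it via Lemma~\ref{lem:reroute-face} through the new face $f''$ to obtain a cycle $C'$ of $\Gamma$, finds an edge $xy$ on $C'\cap f''$ with $\ell_{C'}(xy)<0$ (using that $C'$ is neither decreasing nor horizontal in the valid $\Gamma$), and then uses Lemma~\ref{lem:measure-direction} to show $\rot(\subpath{f}{u,yx})\le 2$, which by a rotation count forces the existence of a candidate edge after $e_k$ on $f$ --- contradicting that $e_k$ is the last one. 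That contradiction mechanism exploits ``lastness'' directly rather than deriving it from ``firstness'' by symmetry, and this is unavoidable given the asymmetry of the port definition.
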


  \begin{proof}
  Let $uz$ be the new edge inserted in~$\Gamma^u_{e_k}$.
  In~$\Gamma^u_{e_k}$, the face $f$ is split in two parts.  Let
  $f'$ be the face containing $v_k$ and $f''$ the one containing $w_k$.
  Assume for the sake of contradiction that there is a decreasing
  cycle~$C$ in $\Gamma^u_{e_k}$.  Then, either $uz$ or $zu$ lies on
  $C$.  By Lemma~\ref{lem:reroute-face} there exists an essential
  cycle~$C'$ that can be decomposed into a path $P$ on $f''$ and
  $Q=C \cap C'$; see Figure~\ref{fig:rect:last_candidate:a}.  For all
  edges $e\in E(Q)$, we have $\ell_C(e)=\ell_{C'}(e)\geq 0$ by
  Lemma~\ref{lem:reroute-face}.  Since~$C'$ is already present
  in~$\Gamma$ and~$\Gamma$ is valid, $C'$ cannot be decreasing.
  Moreover, since $C$ and~$C'$ intersect, $C'$ cannot be horizontal by
  Proposition~\ref{prop:horizontal_cycle}.  Therefore, $C'$ must
  contain an edge $xy$ with~$\ell_{C'}(xy) < 0$, which hence has to
  lie on $P$.

  \begin{figure}[bt]
    \centering
    \begin{subfigure}{0.45\textwidth}
      \centering
      \includegraphics{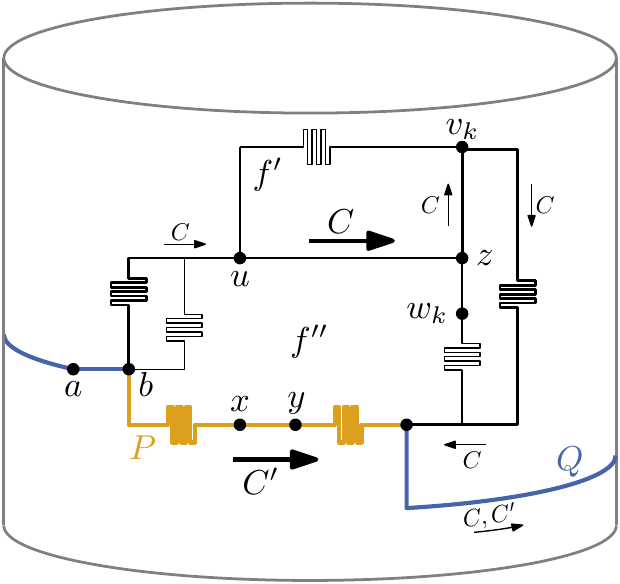}
      \caption{}
      \label{fig:rect:last_candidate:a}  
    \end{subfigure}
    \hfill
    \begin{subfigure}{0.45\textwidth}
      \centering
      \includegraphics[page=2]{fig/last_candidate.pdf}
      \caption{}
      \label{fig:rect:last_candidate:b}  
    \end{subfigure}
    \caption{The situation in the proof of
      Lemma~\ref{lem:horizontal-last-candidate}.  The cycle $C$ is
      decreasing and it is assumed that $\ell_{C'}(xy)<0$. (a)~The 
      decomposition of $C'$ into $P$ and $Q$. (b)~The reference paths $R$ and 
      $\subpath{C'}{b,x}$ from $ab$ to $xy$.}
    \label{fig:rect:last_candidate}
  \end{figure}
  
  Our goal is to show that there must be a candidate on $f$ after $y$
  and in particular after the last candidate $e_k$---a contradiction.
  The following claim gives a sufficient condition for the existence
  of such a candidate.

  \begin{Claim}
    \label{claim:candidate}
    If~$\rot(\subpath{f}{u,yx}) \le 2$, then there is a candidate on
    $\subpath{f}{yx, u}$.
  \end{Claim}
  To prove Claim~\ref{claim:candidate}, we determine for each edge $e$ on $f$
  the value $r(e):=\rot(\subpath{f}{u,e})$.  By assumption, it is $r(yx)\leq
  2$. For the last edge $e_{\text{last}}$ on $\subpath{f}{yx, u}$ it
  is $r(e_{\text{last}})=\rot(f)-\rot(tuv) = 5$, where $t$ and $v$
  are the preceding and succeeding vertices of $u$ on $f$,
  respectively.  Here, we use that $f$ is a regular face (i.e.,
  $\rot(f)=4$) and $\rot(tuv)=-1$ since $u$ is a port.

  Note that for two consecutive edges $e,e'$ on the boundary of $f$,
  it is~$r(e') \le r(e)+1$.  Therefore, there exists an edge~$e$ that
  lies between $yx$ and~$e_{\text{last}}$ on the boundary of $f$ that
  satisfies $r(e) = 2$.  Hence, $e$ is a candidate that lies after $yx$
  on the boundary of $f$.  \hfill ${\diamond}$ %

  To finish the proof of the lemma it hence suffices to show
  that~$\rot(\subpath{f}{u,yx}) \le 2$. 
  As $e_k$ is a candidate, we have $\rot(\subpath{f}{u, zw_k}) = 2$ and 
  therefore 
  \begin{equation*}
    \rot(\subpath{f}{u, yx})
    = \rot(\subpath{f}{u, zw_k}) + \rot(\subpath{f}{zw_k, yx})
    = \rot(\subpath{f}{zw_k, yx}) + 2.
  \end{equation*}
  Thus, it suffices to show $\rot(\subpath{f}{zw_k, yx}) \leq 0$. 
  
  We present a detailed argument for the case that $C$ uses $uz$ as
  illustrated in Figure~\ref{fig:rect:last_candidate}. At the end of
  the proof, we briefly outline how the argument can be adapted if $C$
  uses $zu$.
 
  If $C$ uses $uz$, then $P$ is directed such that $f''$
  lies to the left of $P$.
  Thus, $C'$ lies in the interior of~$C$.
  Let now $ab$ be the last edge of $Q$, and let $R$ be the path defined by 
  $\subpath{C}{b, uz} + \subpath{f}{zw_k,y}$; see 
  Figure~\ref{fig:rect:last_candidate:b}.
  Both $R$ and $\subpath{C'}{b,x}$ are reference paths from $ab$ to $xy$ that 
  lie in the interior of $C$ and in the exterior of $C'$. Applying the second 
  statement of Lemma~\ref{lem:measure-direction} hence gives
  \begin{equation}
    \dir(ab, R, xy) = \dir(ab, \subpath{C'}{b,x}, xy). 
    \label{eqn:rect:last_candidate:equal_dirs}
  \end{equation}
  The direction along~$R$ is defined as
  \begin{align*}
    \dir(ab, R, xy)
    &= \rot(ab + R + yx) - 2 \\
    &= \rot(ab + \subpath{C}{b, uz} + \subpath{f}{zw_k,y} + yx) - 2\\
    &= \rot(\subpath{C}{ab, uz} + \subpath{f}{zw_k,yx}) - 2\\
    &= \rot(\subpath{C}{ab, uz}) + \rot(uzw_k) + \rot(\subpath{f}{zw_k,yx}) - 2\\
    &= \ell_C(uz) - \ell_C(ab) +  \rot(\subpath{f}{zw_k,yx}) - 1.    \label{eqn:rect:last_candidate:dir_R}
  \end{align*}
  The last step uses
  $\rot(\subpath{C}{ab, uz}) = \ell_C(uz) - \ell_C(ab)$ and
  $\rot(uzw_k)=1$.

  The rotation along~$\subpath{C'}{b,x}$ is defined as
  \begin{align*}
    \dir(ab, \subpath{C'}{b,x}, xy)
    &= \rot(ab + \subpath{C'}{b, x} + xy)\\
    &= \rot(\subpath{C'}{ab, xy})\\
    &= \ell_{C'}(xy) - \ell_{C'}(ab)\\
     &= \ell_{C'}(xy) - \ell_{C}(ab).
  \end{align*}
  The last step uses that $\ell_C(ab) = 
  \ell_{C'}(ab)$ by Lemma~\ref{lem:reroute-face}.
  Altogether, we obtain  
   \begin{equation*}
    \ell_{C}(uz) - \ell_{C}(ab) + \rot(\subpath{f}{zw_k, yx}) - 1 = 
    \ell_{C'}(xy) - \ell_{C}(ab),
  \end{equation*}
  which can be rearranged to
  \begin{equation*}
    \rot(\subpath{f}{zw_k, yx}) = \ell_{C'}(xy) - \ell_{C}(uz) + 1.
  \end{equation*}
  With $\ell_{C'}(xy) \leq -1$ and $\ell_C(uz) \geq 0$ we obtain 
  $\rot(\subpath{f}{zw_k,yx}) \leq 0$. This completes the proof 
  for the case 
  that $uz$ lies on $C$.
  
  If $zu$ lies on 
  $C$, we consider the flipped representation~$\flip{\Gamma^u_{e_k}}$. In 
  $\flip{\Gamma^u_{e_k}}$ the cycle $\reverse{C}$ is decreasing and contains 
  the edge~$uz$. The cycle~$\reverse{C'}$ is not decreasing and contains the 
  edge~$yx$ with label $\reverse{\ell}_{\reverse{C'}}(yx) = \ell_{C'}(xy) < 0$.
  Moreover, the cycle $\reverse{C}$ contains $\reverse{C'}$ in its interior. 
  Thus, the argument above can be applied to $\reverse{C}$, $\reverse{C'}$, and 
  $yx$ instead of $C$, $C'$, and $xy$.
  \end{proof}

\begin{lemma}
  \label{lem:horizontal-path}
  Let~$P_i$ be the maximal path that contains the vertex $w_i$ of the
  candidate edge $e_i=v_iw_i$ and that consists of only horizontal edges.
  If $u$ is a horizontal port and~$\Gamma^u_{e_i}$ contains a
  decreasing cycle and~$\Gamma^u_{e_{i+1}}$ contains an increasing
  cycle, then $u$ is an endpoint of $P_i$ and adding the horizontal
  edge $uz$ to the other endpoint $z$ of $P_i$ yields a horizontal
  cycle.  In particular, $\Gamma + uz$ is valid.
\end{lemma}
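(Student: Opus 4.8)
The plan is to use the two strictly monotone cycles provided by the hypothesis to pin down the local structure at $u$, and then to derive validity of $\Gamma+uz$ from Proposition~\ref{prop:horizontal_cycle}. After possibly replacing $\Gamma$ by $\flip{\Gamma}$, which by Lemma~\ref{lem:flip_label} preserves increasing and decreasing cycles, I may assume that the edge inserted at the horizontal port~$u$ by every augmentation points to the right. For $j\in\{i,i+1\}$ let $z_j$ denote the subdivision vertex of $e_j$ in $\Gamma^u_{e_j}$. Since $\Gamma$ is valid, the decreasing cycle $C$ in $\Gamma^u_{e_i}$ and the increasing cycle $D$ in $\Gamma^u_{e_{i+1}}$ must each contain their respective new edge; in particular both pass through~$u$.

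The first step is to reroute $C$ back into $\Gamma$, exactly as in the proof of Lemma~\ref{lem:horizontal-last-candidate}. The new edge of $\Gamma^u_{e_i}$ is incident to the two regular faces into which $f$ is split, so Lemma~\ref{lem:reroute-face} yields an essential cycle $C'$ of $\Gamma$ that avoids the new edge, decomposes as $C'=P\cup Q$ with $Q=C\cap C'$ and $P$ lying on $\partial f$, and satisfies $\ell_{C'}(e)=\ell_C(e)$ for all $e$ on $Q$; hence every edge of $Q$ has a non-negative label. If every edge of $P$ had a non-negative label too, then $C'$ would be monotone, and therefore either decreasing --- impossible since $\Gamma$ is valid --- or horizontal; but $C$ and $C'$ intersect, so Proposition~\ref{prop:horizontal_cycle} would then contradict that $C$ is decreasing. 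Thus $P$ contains an edge $xy$ of $\partial f$ with $\ell_{C'}(xy)\le -1$. Repeating the direction computation of Lemma~\ref{lem:horizontal-last-candidate} --- comparing, via the second part of Lemma~\ref{lem:measure-direction}, the two reference paths from the last common edge of $C$ and $C'$ to $xy$ --- yields an identity of the shape $\rot(\subpath{f}{z_i w_i, yx}) = \ell_{C'}(xy) - \ell_C(uz_i) + 1$, which is $\le 0$ because $\ell_{C'}(xy)\le -1$ and $\ell_C(uz_i)\ge 0$; hence $\rot(\subpath{f}{u, yx}) \le 2$, which constrains the position of $xy$ on $\partial f$ relative to the candidate $e_i$.

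Carrying out the same argument inside $\mirror{\Gamma}$ --- where, by Lemma~\ref{lem:mirroring_label}, the increasing cycle $D$ becomes decreasing and all labels change sign --- produces, symmetrically, an essential cycle $D'$ of $\Gamma$, an edge $x'y'$ of $\partial f$ with $\ell_{D'}(x'y')\ge 1$, and a symmetric rotation bound placing $x'y'$ relative to $e_{i+1}$. It remains to combine the two. Since no edge of $\partial f$ strictly between the consecutive candidates $e_i$ and $e_{i+1}$ is a candidate, $\rot(\subpath{f}{u,\cdot})$ never equals $2$ there; together with the two rotation bounds this forces $xy$ and $x'y'$ onto the candidate-free stretch of $\partial f$ between $e_i$ and $e_{i+1}$, and then, using additivity of $\rot(\subpath{f}{\cdot})$ along $\partial f$, pins the rotation of the subpath of $\partial f$ joining them to $0$. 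A short label-and-parity argument now forces every edge of that subpath to be horizontal with label $0$, and tracing this back along $C$ and $D$ shows that the maximal horizontal path $P_i$ through $w_i$ has $u$ as one of its endpoints. Consequently all labels of $P_i$ are $0$, and adding the rightward edge $uz$ to the other endpoint $z$ of $P_i$ closes $P_i$ into a cycle all of whose labels equal $0$, i.e., an essential horizontal cycle.

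Finally, $\Gamma+uz$ is valid: a strictly monotone cycle of $\Gamma+uz$ is essential and, not being present in the valid representation $\Gamma$, must use the new edge $uz$; hence it contains $u$ and therefore shares a vertex with the horizontal cycle $P_i+uz$, contradicting Proposition~\ref{prop:horizontal_cycle}. I expect the main difficulty to lie in the third paragraph: turning the two one-sided rotation bounds into the statement that the relevant stretch of $\partial f$ is horizontal with label $0$, and carrying that conclusion from $\partial f$ back onto the cycles $C$ and $D$ --- in particular handling the two directions in which each of $C$ and $D$ can traverse its new edge --- is where the genuine case analysis sits.
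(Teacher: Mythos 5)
Your proposal takes a genuinely different route from the paper, and that route has a gap at the crucial combining step.

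The paper does not reroute either monotone cycle back into $\Gamma$. Instead, it builds a single auxiliary representation $\tilde\Gamma$ in which both augmentations coexist: the new vertices $z_i$ and $z_{i+1}$ are inserted, $u$ is joined to $z_i$ by a path $u\,x\,z_i$ and to $z_{i+1}$ by a path $u\,x\,y\,z_{i+1}$, and the decreasing cycle and increasing cycle are lifted to cycles $C_1$ and $C_2$ in $\tilde\Gamma$ (using Lemma~\ref{lem:same-labels-except-from-edge} to preserve labels). Because these two cycles now live in the same representation and must meet near $u$, Lemma~\ref{lem:increasing_decreasing_disjoint} forces $C_2$ to fail to be increasing there, which pins down the direction of traversal and yields $\ell_{C_1}(ux)=\ell_{C_2}(ux)=0$. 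Everything else is then a local propagation of that equality along the common part of the two cycles and along $f$, carried out in four claims. The deliberate construction of $\tilde\Gamma$ is exactly what supplies the interaction between the decreasing and increasing cycle worlds that the whole argument turns on.

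Your plan, by contrast, uses Lemma~\ref{lem:reroute-face} twice to produce two separate cycles $C'$, $D'$ of $\Gamma$ (one from the decreasing cycle, one from the mirrored increasing cycle) and then hopes to combine them via rotation bounds on $\partial f$. This is where I think the argument cannot be completed:

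\begin{itemize}
\item The bound you obtain from the Lemma~\ref{lem:horizontal-last-candidate}-style calculation, namely $\rot(\subpath{f}{u,yx})\le 2$, only says that $yx$ sits somewhere with $r$-value $\le 2$. For the \emph{last} candidate $e_k$ this creates a contradiction because no candidate can come after $yx$; for an interior candidate $e_i$ it is no contradiction, and it does not place $yx$ between $e_i$ and $e_{i+1}$. In fact, between two consecutive candidates the cumulative rotation $r(\cdot)=\rot(\subpath{f}{u,\cdot})$ is either strictly above $2$ or strictly below $2$ on the whole open interval (since it changes by at most $+1$ per step and returns to $2$ at $e_{i+1}$). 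In the first case the bound $r(yx)\le 2$ actually \emph{excludes} $yx$ from that interval, which is the opposite of what your combining step needs.
\item Even if both $xy$ and $x'y'$ did land in the candidate-free stretch, the two rerouted cycles $C'$ and $D'$ need not share any vertex or edge in $\Gamma$, so Lemma~\ref{lem:repr:equal_labels_at_intersection} gives you no handle to compare their labels. The statement that ``the relevant stretch of $\partial f$ is horizontal with label~$0$'' does not follow from the rotation of a subpath being $0$; a subpath of a face boundary with net rotation $0$ can contain up and down edges.
\item The two possible traversal directions of the augmenting edges (and the case $m=i+1$, i.e.\ $e_{i+1}$ immediately succeeding $e_i$) are not incidental technicalities here; the paper's Claims~2 and~3 handle precisely these cases, and they do genuine work.
\end{itemize}

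So the approach as proposed does not close the gap you flagged yourself in the last paragraph; the gap is real and, I believe, cannot be repaired without something like the paper's shared-representation construction that forces the decreasing and the increasing cycle to interact.
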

\begin{figure}[bt]
  \centering
  \begin{subfigure}{.3\textwidth}
    \includegraphics{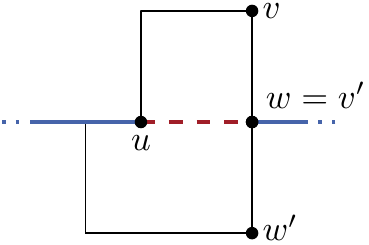}
    \caption{}
    \label{fig:rect:horizontal_between-neighbors}
  \end{subfigure}
  \hfil
  \begin{subfigure}{.3\textwidth}
    \includegraphics{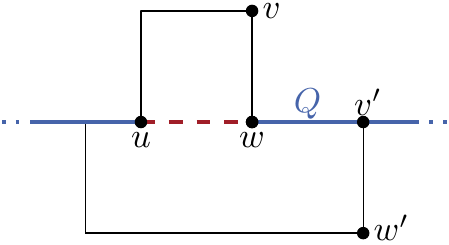}
    \caption{}
    \label{fig:rect:horizontal_between-w}
  \end{subfigure}
  \hfil
    \begin{subfigure}{.3\textwidth}
    \includegraphics{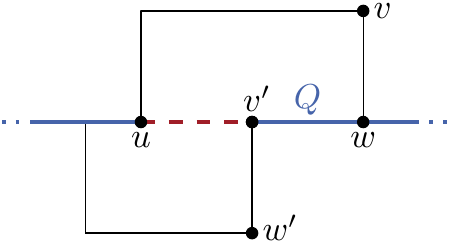}
    \caption{}
    \label{fig:rect:horizontal_between-v}
  \end{subfigure}
  \caption{Three possibilities how the path between $w$ and $v'$ can look 
  like: \protect\subref{fig:rect:horizontal_between-neighbors}~$w=v'$, 
  \protect\subref{fig:rect:horizontal_between-w}~all edges point right, and 
  \protect\subref{fig:rect:horizontal_between-v} all edges point left. In the 
  first two cases the edge $uw$ is inserted and in (c) $uv'$ is added.}
 \label{fig:rect:horizontal_between}
\end{figure}
\begin{proof}

\begin{figure}[tb]
  \centering
  \begin{subfigure}{.3\textwidth}
    \centering
    \includegraphics[page=1]{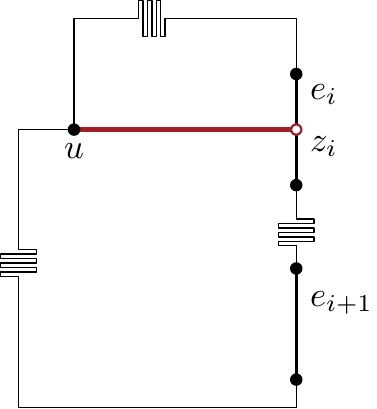}
    \caption{}
    \label{fig:rect:between_structure:a}
  \end{subfigure}
  \begin{subfigure}{.3\textwidth}
 \centering
    \includegraphics[page=2]{fig/between_structure.pdf}
    \caption{}
    \label{fig:rect:between_structure:b}
  \end{subfigure}
   \begin{subfigure}{.3\textwidth}
 \centering
     \includegraphics[page=3]{fig/between_structure.pdf}
     \caption{}
     \label{fig:rect:between_structure:c}
   \end{subfigure}
   \caption{Illustration of proof for Lemma~\ref{lem:horizontal-path}. (\subref{fig:rect:between_structure:a})~$\Gamma^u_{e_i}$ is obtained by inserting the edge $uz_{i}$ into $\Gamma$. (\subref{fig:rect:between_structure:b})~$\Gamma^u_{e_{i+1}}$ is obtained by inserting the edge $uz_{i+1}$ into $\Gamma$. (\subref{fig:rect:between_structure:c})~$\tilde{\Gamma}$ is obtained by combining $\Gamma^u_{e_{i}}$ and $\Gamma^u_{e_{i+1}}$.}
\label{fig:rect:between_structure}
\end{figure}

Let $z_{i}$ be the new vertex inserted in $\Gamma^u_{e_i}$ and
$z_{i+1}$ the one in $\Gamma^u_{e_{i+1}}$.  Since both $uz_{i}$ and
$uz_{i+1}$ point to the right, there is no augmentation of $\Gamma$
containing both edges.  We compare $\Gamma^u_{e_i}$ and
$\Gamma^u_{e_{i+1}}$ by the following construction (see also
Figure~\ref{fig:rect:between_structure}), which models all important
aspects of both representations: Starting from $\Gamma$ we insert new
vertices $z_{i}$ on $e_i$ and $z_{i+1}$ on $e_{i+1}$.  We connect $u$
and $z_i$ by a path of length 2 that points to the right and denote
its internal vertex by $x$.  Furthermore, a path of length~2 from $x$
via a new vertex $y$ to $z_{i+1}$ is added.  The edge $xy$ points down
and $yz_{i+1}$ to the right. In the resulting ortho-radial
representation $\tilde{\Gamma}$ the edge $uz_i$ in $\Gamma^u_{e_i}$ is
modeled by the path $uxz_i$. Similarly, the edge $uz_{i+1}$ in
$\Gamma^u_{e_{i+1}}$ is modeled by the path $uxyz_{i+1}$ in
$\tilde{\Gamma}$.  

Take any decreasing cycle in $\Gamma^u_{e_i}$. As $\Gamma$ is valid, 
this cycle must contain either $uz_i$ or $z_iu$.
We obtain a cycle~$C_1$ in $\tilde\Gamma$ by replacing $uz_i$ with $uxz_i$ (or 
$z_iu$ 
with $z_ixu$).
Note that $ux$ and $xz_i$ have the same label as $uz_i$, and the labels of all 
other edges on the cycles stay the same by 
Lemma~\ref{lem:same-labels-except-from-edge}. Therefore, $C_1$ is a decreasing 
cycle.

Similarly, there exists an increasing cycle~$C'_1$ in $\Gamma^u_{e_{i+1}}$,
which contains $uz_{i+1}$ or $z_{i+1}u$.  Replacing $uz_{i+1}$ with
$uxyz_{i+1}$ (or $z_{i+1}u$ with $z_{i+1}yxu$) we get a cycle $C_2$ in
$\tilde\Gamma$. By Lemma~\ref{lem:same-labels-except-from-edge} the labels of 
$C_2$ are non-positive outside of $uxyz_{i+1}$ (or $z_{i+1}yxu$) and there is 
an edge with negative label.
Consequently, the only edge of $C_2$ that may have a positive label is the 
edge~$e$ between $x$ and $y$.
Since $C_1$ and $C_2$ intersect, Lemma~\ref{lem:increasing_decreasing_disjoint}
implies that $C_2$ is not increasing.
Thus, the label of $e$ is positive. If $e=yx$, then
$\ell_{C_2}(e)\geq 3$ and its succeeding edge has a positive label as
well. Hence, the cycle $C_2$ contains the edge $xy$ and consequently
the edge $ux$.

Using this construction we show that one endpoint of $P_i$ is $u$ and
the other is another vertex of $f$. To that end, we prove the following claims. 

\begin{Claim}\label{claim:same-labels}
  The cycles $C_1$ and $C_2$ both contain the edge $ux$ and it is  
  $\ell_{C_1}(ux)=\ell_{C_2}(ux)=0$. 
\end{Claim}

\begin{Claim}\label{claim:vertices}
  The vertices $w_i$ and $v_{i+1}$ have a degree of at least 2.
  Further, $C_1$ contains $w_i$ and $C_2$ contains $v_{i+1}$.
\end{Claim}

\begin{Claim}\label{claim:part-of-path}
  The edges of $f[w_i,v_{i+1}]$ are part of $P_i$. In particular, $C_2$ contains $w_i$ or $C_1$ contains $v_{i+1}$.
\end{Claim}

\begin{Claim}\label{claim:endpoints-of-path}
  The right endpoint of $P_i$ is $u$ and the left endpoint~$z$ lies on $f$. 
\end{Claim}

Using these claims we prove the lemma as follows. 
Due to Claim~\ref{claim:endpoints-of-path} we can insert the
horizontal edge $uz$ into $\Gamma$ obtaining a horizontal cycle
$P_i+uz$. The resulting ortho-radial representation is valid, because
any strictly monotone cycle necessarily contains $uz$ and hence shares
a vertex with a horizontal cycle, contradicting Lemma~\ref{prop:horizontal_cycle}. This
finishes the proof. In the following we prove the claims.

\begin{figure}[t]
  \centering
  \includegraphics[page=2]{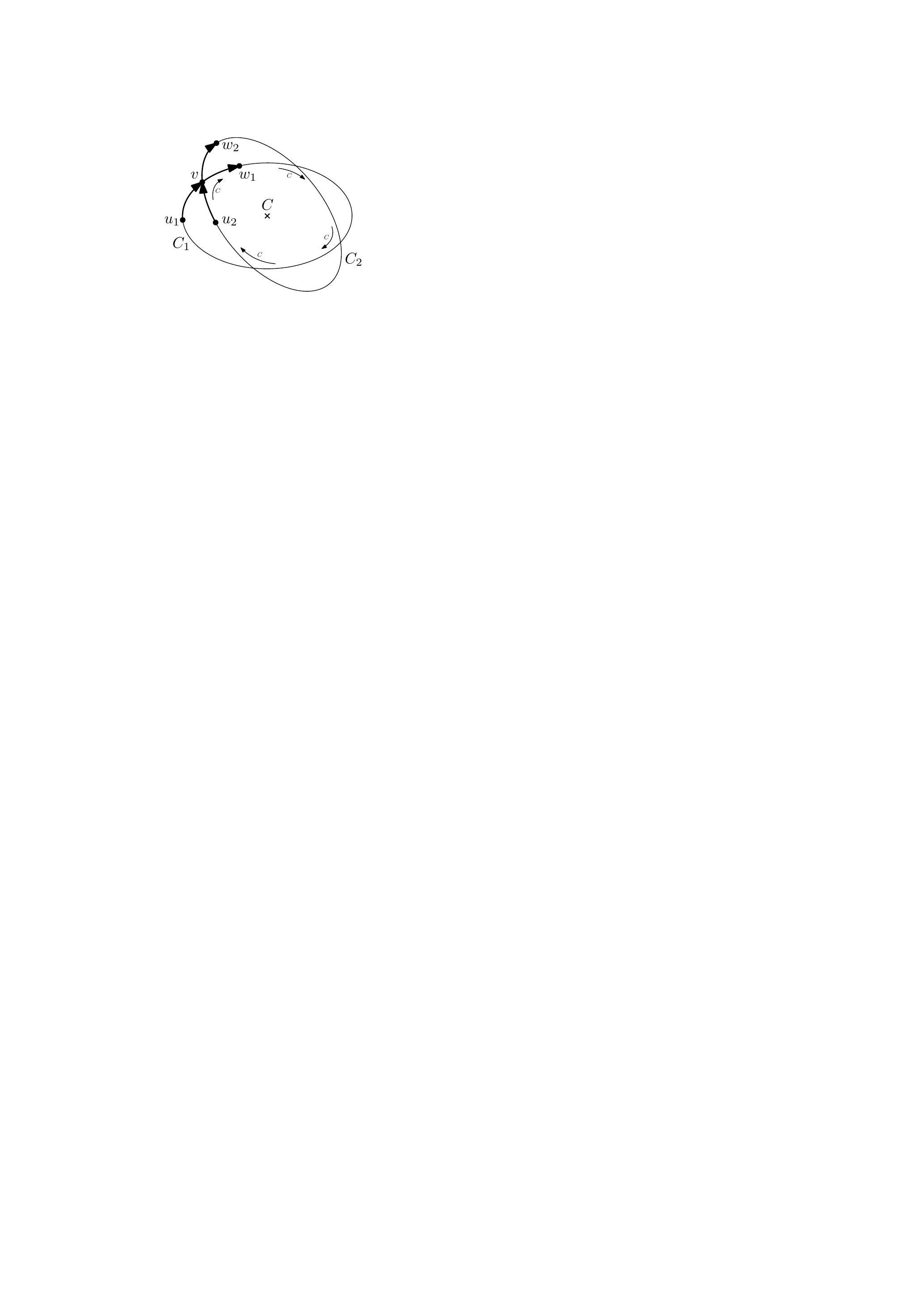}
  \caption{Illustration of proof for
    Claim~\ref{claim:same-labels} of Lemma~\ref{lem:horizontal-path}.}
   \label{fig:lemma-augmentation-horizontal-cycle}
\end{figure}

\textit{Claim~\ref{claim:same-labels}.}  Let $H=C_1+C_2$ be the graph
formed by the two cycles $C_1$ and $C_2$ and let $\tilde f$ be its
central face. Assume that the edge $ux$ is not incident to $\tilde
f$. As both $C_1$ and $C_2$ contain $ux$ (or $xu$),
the face $\tilde f$ consists of edges of both $C_1$ and $C_2$. In
particular, there is an edge $e$ of $C_2\setminus C_1$ on $\tilde f$
whose source $v$ lies on $C_1$; see Figure~\ref{fig:lemma-augmentation-horizontal-cycle}. Let $x_1$ and $x_2$
be the succeeding vertices of $v$ on $C_1$ and $C_2$, respectively.
As $C_1$ is a decreasing cycle, we have $\ell_{C_1}(vx_1)\geq 0$. If
$\ell_{C_2}(vx_2)\leq 0$, then by
Lemma~\ref{lem:repr:illegal_intersection} the edge $e=vx_2$ lies in
the exterior of $C_1$, which contradicts the choice of $e$. Hence, we
have $\ell_{C_2}(vx_2) > 0$, which implies that $e = xy$. Further, the
cyclic order of the vertex $x$ implies that $ux$ is the predecessor of
$xy$ on the central face, which contradicts the assumption that $ux$
is not incident to $\tilde f$.

Since $C_2$ contains $ux$, the central face $\tilde f$ lies to the
right of $ux$. Since $C_1$ is also directed such that $\tilde f$ lies
to the right of $C_1$, it cannot contain $xu$, but it contains $ux$.
By Lemma~\ref{lem:repr:equal_labels_at_intersection} we further obtain that $\ell_{C_1}(ux)=\ell_{C_2}(ux)=0$.

\textit{Claim~\ref{claim:vertices}.} We consider the setting in
$\tilde\Gamma$; see Figure~\ref{fig:rect:between_structure:c}. From
Claim~\ref{claim:same-labels} we obtain
$\ell_{C_1}(xz_i)=0$. Consequently, since $C_1$ is a decreasing cycle
and since $\rot(xz_iw_i)=1$ and $\rot(xz_iv_i)=-1$, the cycle contains
the edge $z_iw_i$ but not the edge $z_iv_i$. In particular, this
implies that $w_i$ has a degree of at least 2, as otherwise $C_1$
would not be simple. Similarly, from Claim~\ref{claim:same-labels} we
obtain $\ell_{C_1}(yz_{i+1})=0$. Consequently, since $C_2$ has only
non-positive labels except on $xy$ and since
$\rot(xz_{i+1}v_{i+1})=-1$ and $\rot(xz_{i+1}w_{i+1})=1$, the cycle
contains the edge $z_{i+1}v_{i+1}$ but not the edge
$z_{i+1}w_{i+1}$. In particular, this implies that $v_{i+1}$ has a
degree of at least 2, as otherwise $C_2$ would not be simple.

\textit{Claim~\ref{claim:part-of-path}.} Let $e_s$ be the direct
successor of $e_i$ and let $e_p$ be the direct predecessor of
$e_{i+1}$ on the boundary of $f$.  In order to show the claim, we do a
case distinction on the rotations $\rot(e_i,e_s)$ and
$\rot(e_p,e_{i+1})$. From Claim~\ref{claim:vertices} it follows that
$\rot(e_i,e_s)\in\{-1,0,1\}$ and $\rot(e_p,e_{i+1}) \in \{-1,0,1\}$.
Further, it cannot be both $\rot(e_i,e_f)= 1$ and
$\rot(e_p,e_{i+1})= 1$. Otherwise, since $\rot(f[u,e_i])=2$ and
$\rot(f[u,e_{i+1}])=2$, it would hold $\rot(f[u,e_s])=3$ and
$\rot(f[u,e_p])=1$. In that case, there would be a further candidate
edge $e$ with $\rot(f[u,e])=2$ between $e_i$ and $e_{i+1}$.  Thus, we
obtain $\rot(e_i,e_s)\in\{-1,0\}$ or $\rot(e_p,e_{i+1})\in\{-1,0\}$,

First, assume that $\rot(e_i,e_s)=0$. We obtain $e_s=e_{i+1}$ and hence
$e_p=e_i$. Otherwise, $e_s$ would be a candidate edge between $e_i$ and
$e_{i+1}$. Consequently, $f[w_i,v_{i+1}]$, which consist of the single
vertex $v=w_{i}=v_{i+1}$, trivially belongs to $P_i$. Moreover, by
Claim~\ref{claim:same-labels} the cycle $C_2$ then also contains $w_i$
and $C_1$ contains $v_{i+1}$. The very same argument can be applied
for $rot(e_{p},e_{i+1})=0$, as we also obtain $e_s=e_{i+1}$ and
$e_p=e_i$. In particular, we obtain $\rot(e_i,e_s)=0$ if and only if
$\rot(e_p,e_{i+1})=0$.

\begin{figure}[tb]
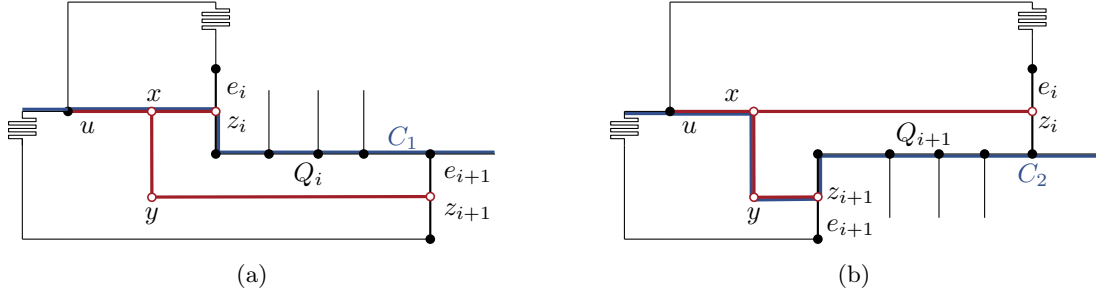

  \centering
  \begin{subfigure}{.49\textwidth}
    \centering
    \includegraphics[page=4]{fig/between_structure.pdf}
    \caption{}
    \label{fig:rect:horz-path:a}
  \end{subfigure}
  \begin{subfigure}{.49\textwidth}
 \centering
    \includegraphics[page=5]{fig/between_structure.pdf}
    \caption{}
    \label{fig:rect:horz-path:b}
  \end{subfigure}
   \caption{Illustration of proof for
    Claim~\ref{claim:part-of-path} of Lemma~\ref{lem:horizontal-path}.}
\label{fig:rect:horz-path}
\end{figure}

Now, assume that $\rot(e_i,e_s)=-1$ or $\rot(e_p,e_{i+1})=-1$. We
first consider the case that $\rot(e_i,e_s)=-1$; see
Figure~\ref{fig:rect:horz-path:a}. We observe that $C_1$ contains
$e_s$ as it also contains the vertex $w_i$ by
Claim~\ref{claim:same-labels}. Let $Q_i$ be the maximal horizontal
path on $f$ that starts at $w_i$. We observe that $Q_i$ is a prefix of
$P_i$, and the cycle $C_1$ shares at least $e_s$ with $Q_i$. Since by
Claim~\ref{claim:same-labels} the cycle $C_1$ has label $0$ on $ux$,
it also has label $0$ on any of its edges that lie on $Q_i$.  In fact,
this implies that $C_1$ completely contains the path $Q_i$. Otherwise,
it would contain an edge that does not belong $Q_i$, but starts at an
intermediate vertex of $Q_i$. By the choice of $Q_i$ such an edge has
a negative label since its preceding edge has label $0$, which
contradicts that $C_1$ is decreasing. For the same reason, the
edge~$e$ that succeeds $Q_i$ on $f$ has a negative label. Hence, the edge $e$ is the candidate $e_{i+1}$. Consequently,
$C_1$ contains $v_{i+1}$ and $f[w_i,v_{i+1}]=Q_i$, which shows Claim~\ref{claim:part-of-path}. For the case that
$\rot(e_l,e_{i+1})=-1$ we use similar arguments; see
Figure~\ref{fig:rect:horz-path:b}. Let $Q_{i+1}$ be the maximal 
horizontal path on $f$ that starts at $v_{i+1}$ when going along $f$
counterclockwise. The first edge of $Q_{i+1}$ is $e_p$, which also
belongs to $C_2$. Further, $C_2$ has label $0$ on any of its edges
that belong to $Q_{i+1}$. Consequently, $C_2$ contains $Q_{i+1}$
completely, as otherwise it would contain an edge that points
downwards, contradicting that $C_{2}$ has only non-positive labels
except for the edge $xy$. Hence, the edge of $f$ that is incident with
the endpoint of $Q_{i+1}$ and does not belong to $Q_{i+1}$ is the
candidate edge $e_i$, which concludes the proof.

\textit{Claim~\ref{claim:endpoints-of-path}.} We first show that $u$
is the right endpoint of $P_i$. To that end, let $H=C_1+C_2$ be the
graph formed by the two cycles $C_1$ and $C_2$ and let $\tilde f$ be
its central face. From the proof of Claim~\ref{claim:same-labels} we
know that $ux$ is incident to the central face.  Further, from
Claim~\ref{claim:vertices} and Claim~\ref{claim:part-of-path} it
follows that $C_1$ and $C_2$ have the vertex $w_{i}$ or the vertex
$v_{i+1}$ in common; see Figure~\ref{fig:rect:horz-path:a} and
Figure~\ref{fig:rect:horz-path:b}, respectively. Let $v=w_i$ in the
former case and let $v=v_{i+1}$ in the latter case.

We first show that $C_1[v,x]=C_2[v,x]$. Assume that this is not
the case. Consider the maximal common prefix of $C_1[v,x]$ and
$C_2[v,x]$, and let $w$ be the endpoint of that prefix.  As $v$ is
incident to $\tilde f$, this also applies to $w$. Further, by
Lemma~\ref{lem:repr:illegal_intersection} the outgoing edge of $w$ on
$C_2$ lies in the exterior of $C_1$. Let $t$ be the first intersection
of $C_1$ and $C_2$ on $\tilde f$ after $w$. Then the edge to $t$ on $C_2$
lies strictly in the exterior of $C_1$ contradicting
Lemma~\ref{lem:repr:illegal_intersection}. 

Hence, we have $C_1[v,x]=C_2[v,x]$; we denote that path
by $P$. Since $ux$ is incident to $\tilde f$, all edges of $P$ are
incident to $\tilde f$ as well. Hence, by
Corollary~\ref{cor:central-face-same-label} all edges of $P$ have
label 0 on $C_1$ and $C_2$ and are therefore horizontal. As
$f[w_i,v_{i+1}]$ also belongs to $P_i$, we conclude that $u$ is the
right endpoint of $P_i$.

\begin{figure}[tb]
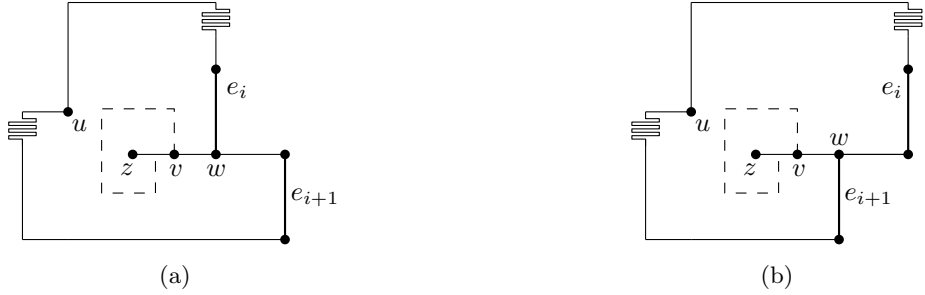

  \centering
  \begin{subfigure}{.49\textwidth}
    \centering
    \includegraphics[page=6]{fig/between_structure.pdf}
    \caption{}
    \label{fig:rect:left-endpoint:a}
  \end{subfigure}
  \begin{subfigure}{.49\textwidth}
 \centering
    \includegraphics[page=7]{fig/between_structure.pdf}
    \caption{}
    \label{fig:rect:left-endpoint:b}
  \end{subfigure}
   \caption{Illustration of proof for
    Claim~\ref{claim:endpoints-of-path} of Lemma~\ref{lem:horizontal-path}.}
\label{fig:rect:left-endpoint}
\end{figure}

Finally, we argue that $z$ belongs to the face $f$. As by
Claim~\ref{claim:part-of-path} $f[w_i,v_{i+1}]$ belongs to $P_i$, the
vertex $z$ belongs to a prefix $P$ of $P_i$ that does not contain any
edge of $f[w_i,v_{i+1}]$ and ends either at $w_i$ or at $v_{i+1}$; see
Figure~\ref{fig:rect:left-endpoint:a} and
Figure~\ref{fig:rect:left-endpoint:b}, respectively. If $P$ is empty,
$w_i$ or $v_{i+1}$ is $z$, respectively, which shows the claim.  So
assume that $P$ is not empty. Let $vw$ be the last edge of $P$, i.e.,
depending on the considered case we either have $w=w_i$ or
$w=v_{i+1}$. By the local ordering of the incident edges of $w$, the
path $P$ lies in the interior of $f$. Further, $vw$ is incident to $f$
on both sides, as otherwise the two other incident edges of $w$ could
not be incident to $f$ as well. Hence, we obtain $\rot(f[u,wv])=3$ and
$\rot(f[u,vw])=0$. Thus, the path $f[wv,vw]$  of $f$ consists only of
horizontal edges, as otherwise there would be an edge $e$ on that path
with $\rot(f[u,e])=2$, which contradicts the choice of $e_i$ and
$e_{i+1}$. Altogether, the path $f[wv,vw]$ is $\reverse{P}+P$,
which proves that $z$ belongs to $f$.
\end{proof}

\subsection{Proof of the Main Theorem}\label{sec:rect:main-theorem}

We are now ready to prove the characterization of drawable ortho-radial representations.

\mrDrawable

\begin{proof}
\begin{figure}
  \centering
  \begin{subfigure}{.45\textwidth}
    \centering
    \includegraphics[page=1,width=\textwidth]{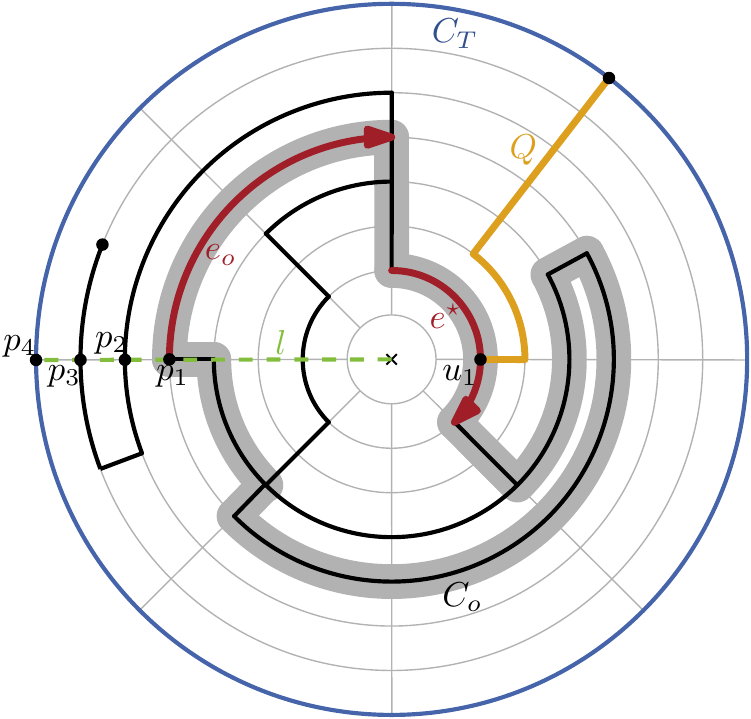}
    \caption{}
    \label{fig:rect:geometric-rectangulation:setting}
  \end{subfigure}\hfill
    \begin{subfigure}{.45\textwidth}
    \centering
    \includegraphics[page=2,width=\textwidth]{fig/geometric_rectangulation_complete.pdf}
    \caption{}
    \label{fig:rect:geometric-rectangulation:cycles}
  \end{subfigure}
      \begin{subfigure}{.45\textwidth}
    \centering
    \includegraphics[page=3,width=\textwidth]{fig/geometric_rectangulation_complete.pdf}
    \caption{}
    \label{fig:rect:geometric-rectangulation:rect}
  \end{subfigure}
 \caption{Illustration of proof for Theorem~\ref{thm:main-result:drawable}.}
  \end{figure}

If $\Gamma$ is a valid ortho-radial representation, then using the
rectangulation procedure of Section~\ref{sec:rect:algorithm} we obtain a
valid rectangular ortho-radial representation~$\Gamma'$ that contains a
subdivision of $\Gamma$.  Then~$\Gamma'$ has a
drawing~$\Delta'$ by Theorem~\ref{cor:draw:characterization}, and
a drawing~$\Delta$ of~$\Gamma$ is obtained from~$\Delta'$ by
removing edges that are in~$\Gamma'$ but not in~$\Gamma$, and by
undoing subdivisions. It remains to show that the reference edge $e^\star$ of $\Gamma$ is an outlying edge in $\Delta$, which then implies that $\Gamma$ is drawable.

Let $C_o$ be the essential cycle that is part of the outer face of
$\Gamma$ and let $e_o$ be an outermost horizontal edge of $C_o$ in
$\Delta$, i.e., an horizontal edge of $C_o$ with maximum distance to
the center; see Figure~\ref{fig:rect:geometric-rectangulation:setting}. If $e_o$ can be chosen as the reference edge $e^\star$,
then $e^\star$ is an outlying edge of $\Delta$. Hence, assume that $e_o$ cannot be chosen as $e^\star$.

Let $C_T$ be the essential cycle that forms the outer face of
$\Gamma'$. Further, let $l$ be the ray that emanates from the center
of $\Delta'$ and goes through the source vertex~$v$ of $e_o$. Let $p_1,\dots,p_k$ be the intersection points of $l$
with $\Delta'$ from $v$ on. We observe that $p_1=v$ and $p_k$ lies on
$C_T$. For each intersection point $p_i$ with $1\leq i \leq k$ we insert
a vertex $v_i$ at $p_i$ if there is no vertex so far. Further, we add
vertical edges to $\Delta'$ such that the new drawing contains the
path $P=v_1\dots v_k$. We denote the resulting drawing by $\Delta''$;
see Figure~\ref{fig:rect:geometric-rectangulation:cycles}.

By the construction of the rectangulation the drawing $\Delta''$ further contains
a path~$Q$ that connects a subdivision vertex $u_1$
on $e^\star$ with a vertex~$u_2$ on $C_T$. We observe that the cycle
$C=C_o[v_1,u_1]+Q+T[u_2,v_k]+\reverse{P}$ is essential. Moreover, by
construction the path $C[u_1,v_1]=Q+C_T[u_2,v_k]+\reverse{P}$ has
rotation 2. Since $C$ has a left bend at both $v_1$ and $u_1$ it
follows that $C_o[v_1,u_1]$ has rotation 0. Consequently, $e^\star$ is
an outlying edge in $\Delta$. Altogether, we obtain that $\Gamma$ is
drawable.

Conversely, assume that $\Gamma$ is drawable. Hence, $\Gamma$ has a drawing $\Delta$ in which the reference edge $e^\star$ of $\Gamma$ is an outlying edge.
By Lemma~\ref{lem:reference-edge:change} we can assume without loss of generality that $e^\star$ lies  on
the outermost circle that is used by an essential cycle of $\Delta$.
By~\cite{hht-orthoradial-09} the representation~$\Gamma$ satisfies
Conditions~\ref{cond:repr:sum_of_angles}
and~\ref{cond:repr:rotation_faces} of
Definition~\ref{def:local-conditions}.  To prove that
$\Gamma$ is valid, we show
how to reduce the general case to the more restricted one, where all
faces are rectangles.  By Corollary~\ref{cor:draw:characterization}
the existence of a drawing and the validity of the ortho-radial
representation are equivalent.

Given the drawing~$\Delta$, we augment it such that all faces are
rectangles. This rectangulation is similar to the one described in
Section~\ref{sec:rect:algorithm} but works with a drawing and not only
with a representation.  We first insert the missing parts of the
innermost and outermost circle that are used by $\Delta$ such that the
outer and the central face are already rectangles.  For each left turn
on a face~$f$ at a vertex $u$, we then cast a ray from $v$ in $f$ in
the direction in which the incoming edge of $u$ points.  This ray
intersects another edge in $\Delta$. Say the first intersection occurs
at the point~$p$. Either there already is a vertex~$z$ drawn at $p$ or
$p$ lies on an edge. In the latter case, we insert a new vertex, which
we call~$z$, at $p$.  We then insert the edge $uz$ in $G$ and update
$\Delta$ and $\Gamma$ accordingly.

Repeating this step for all left turns, we obtain a drawing~$\Delta'$
and an ortho-radial representation~$\Gamma'$ of the augmented
graph~$G'$; see Figure~\ref{fig:rect:geometric-rectangulation:rect}
for an example of $\Delta'$.  As the labelings of essential cycles
are unchanged by the addition of edges elsewhere in the graph, any
increasing or decreasing cycle in $\Gamma$ would also appear in
$\Gamma'$.  But by Corollary~\ref{cor:draw:characterization} $\Gamma'$
is valid, and hence neither $\Gamma$ nor $\Gamma'$ contain increasing
or decreasing cycles. Thus, $\Gamma$ is valid.
\end{proof}

We remark that the proof of Theorem~\ref{thm:main-result:drawable} is in
fact constructive and it can easily be implemented in polynomial time,
provided that we can check in polynomial time whether a given
ortho-radial representation is valid.  We develop such an algorithm
with running time $\O(n^2)$ in
Section~\ref{sec:finding_monotone_cycles}. In
Section~\ref{sec:efficient_rectangulation} we show how compute a rectangulation
of~$\Gamma$ in time $\O(n^2)$.

\section{Validity Testing}\label{sec:finding_monotone_cycles}
In this section, we show how to check whether a given ortho-radial
representation is valid in polynomial time, which yields the following statement. 

\mrValidity

The two conditions for ortho-radial representations are local and
checking them can easily be done in linear time. Throughout this
section, we therefore assume that we are given an instance
$I=(G,\mathcal E,f_c,f_o)$ with an ortho-radial
representation~$\Gamma$ and a reference edge $e^\star$.  The condition
for validity however references all essential cycles of which there
may be exponentially many. We present an algorithm that checks whether
$\Gamma$ contains a strictly monotone cycle and computes such a cycle
if one exists.
The main difficulty is that the labels on a decreasing cycle
$C$ depend on a reference path $P$ that runs from $e^\star$ to $C$ and
respects $C$.  However, we know neither the path $P$ nor the cycle $C$
in advance, and choosing a specific cycle $C$ may rule out certain
paths $P$ and vice versa.

We only describe how to search for decreasing cycles; increasing
cycles can be found by searching for decreasing cycles in the mirrored
representation by Lemma~\ref{lem:mirroring_label}.  A decreasing cycle
$C$ is \emph{outermost} if it is not contained in the interior of any
other decreasing cycle.  Clearly, if $\Gamma$ contains a decreasing
cycle, then it also has an outermost one.  We first show that in this
case this cycle is uniquely determined.

\begin{lemma}
  \label{lem:outermost_decreasing_cycle}
  If $\Gamma$ contains a decreasing cycle, there is a unique outermost
  decreasing cycle.
\end{lemma}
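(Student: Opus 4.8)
The plan is to show that if $C_1$ and $C_2$ are two outermost decreasing cycles, then $C_1 = C_2$. Suppose for contradiction that $C_1 \neq C_2$. Since both are outermost, neither lies in the interior of the other. The key case is when $C_1$ and $C_2$ share at least one vertex: then there is a shared vertex incident to the central face $f$ of $H = C_1 + C_2$, and I want to produce from $C_1$ and $C_2$ a new decreasing cycle that contains both in its interior (or at least properly contains one of them), contradicting outermostness. First I would handle the case that $C_1$ and $C_2$ are vertex-disjoint: two disjoint essential cycles are nested (one lies in the interior of the other, since both enclose the central face), which immediately contradicts that both are outermost. So from now on $C_1$ and $C_2$ share a vertex.

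The main tool is to build the "outer envelope" of $C_1$ and $C_2$. Consider $H = C_1 + C_2$ and let $C$ be the cycle bounding the outer face of $H$ (directed so the center is in its interior); this is an essential cycle, and both $C_1$ and $C_2$ lie in the closed interior of $C$. Since $C_1 \neq C_2$ and both are outermost, $C$ must differ from at least one of them, say $C \neq C_1$, so $C_1$ lies in the interior of $C$ with $C_1 \neq C$. Now I need to show $C$ is decreasing, which then contradicts the outermostness of $C_1$. To do this, I decompose $C$ into maximal subpaths, each of which is a subpath of $C_1$ or of $C_2$; every edge of $C$ is incident to the central face of $C + C_i$ for the appropriate $i$ (since $C$ is the outer boundary of $H$ and the edge in question comes from $C_i$). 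By Lemma~\ref{lem:repr:equal_labels_at_intersection} (via Corollary~\ref{cor:central-face-same-label} applied to shared edges, and the label-transfer relation at shared vertices), the label of each such edge $e$ on $C$ equals its label on the originating cycle $C_i$, hence is $\ge 0$. Therefore all labels of $C$ are non-negative, so $C$ is monotone. It remains to exclude that $C$ is horizontal: if $C$ had all labels $0$, then by Proposition~\ref{prop:horizontal_cycle} the cycle $C_1$, which shares a vertex with $C$, could not be strictly monotone — contradicting that $C_1$ is decreasing. Hence $C$ is decreasing, and since $C_1$ lies strictly in its interior this contradicts the outermostness of $C_1$. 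Symmetrically if instead $C = C_1$ we get $C = C_2$ too (as $C$ contains $C_2$ in its closed interior and $C$ is decreasing), forcing $C_1 = C_2$.

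The step I expect to be the main obstacle is the label-transfer argument: verifying carefully that every edge of the outer boundary $C$ of $H$ inherits its label from whichever of $C_1, C_2$ it came from. Lemma~\ref{lem:repr:equal_labels_at_intersection} and Corollary~\ref{cor:central-face-same-label} are stated for edges/vertices incident to the \emph{central} face of $C_1 + C_2$, whereas here I am working with the \emph{outer} face of $H$; so I would instead apply these results to the pair $C$ and $C_i$ (where $C$ is the outer envelope and $C_i$ the contributing cycle), whose common edges lie on the central face of $C + C_i$ by construction — this is the place where one must be precise about orientations and which face is "central". A secondary subtlety is the possibility that $C$ as defined is not simple or that $C_1$ and $C_2$ overlap in complicated ways along $f$; I would argue $C$ is simple because it bounds a face of the plane graph $H$, and reduce the general intersection pattern to considering one maximal common subpath at a time, exactly as in the proof of Lemma~\ref{lem:increasing_decreasing_disjoint}.
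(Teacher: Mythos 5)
Your proof is correct and shares the paper's overall structure: assume $C_1\neq C_2$ are both outermost, let $C$ be the cycle bounding the outer face of $H=C_1+C_2$, show $C$ is decreasing by transferring labels from $C_1$ and $C_2$ to $C$ and invoking Proposition~\ref{prop:horizontal_cycle}, and conclude that $C_1$ (or $C_2$) was not outermost after all. The place where you diverge from the paper is the label-transfer step. You apply Corollary~\ref{cor:central-face-same-label} directly to the pair $(C,C_i)$, arguing that an edge shared by $C$ and $C_i$ is incident to the central face of the subgraph $C+C_i$. The paper instead restricts $\Gamma$ to $H$, flips the cylinder with Lemma~\ref{lem:flip_label} so that the outer face of $H$ becomes central, applies Corollary~\ref{cor:central-face-same-label} in the flipped picture, and flips back. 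Both routes are correct; yours avoids the flip machinery entirely, which is a genuine (if modest) simplification. The price you pay is that the claim you wave at as holding ``by construction'' deserves a sentence of justification: since $C$ is the outer boundary of $H$, no edge of $C$ lies strictly in the open interior of $C_i$ (such an edge would be in the interior of $C_i\subseteq$ closed interior of $C$, hence not on the outer face of $H$), so the open interior of $C_i$ is a single face of $C+C_i$ and is therefore its central face; every edge on $C_i$, and in particular every edge shared with $C$, bounds it. Two minor remarks: (a) your separate treatment of the vertex-disjoint case is correct but unnecessary, since if $C_1$ and $C_2$ are disjoint then $C=C_1$ or $C=C_2$ and the main argument already yields the contradiction; (b) as you note, the subtlety in Lemma~\ref{lem:repr:equal_labels_at_intersection}/Corollary~\ref{cor:central-face-same-label} is that they are stated for the central face of the union of the two cycles being compared — your fix is exactly to compare $C$ with $C_i$ rather than $C_1$ with $C_2$, and that is precisely what makes the flip unnecessary.
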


\begin{proof}
  Assume that $\Gamma$ has two outermost decreasing cycles $C_1$ and
  $C_2$, i.e., $C_1$ does not lie in the interior of $C_2$ and vice
  versa.  Let $C$ be the cycle bounding the outer face of the subgraph
  $H=C_1+C_2$ that is formed by the two decreasing cycles. By
  construction, $C_1$ and $C_2$ lie in the interior of $C$, and we
  claim that $C$ is a decreasing cycle contradicting that $C_1$ and
  $C_2$ are outermost. To that end, we show that
  $\ell_C(e)=\ell_{C_1}(e)$ for any edge~$e$ that belongs to both $C$
  and $C_1$, and $\ell_C(e)=\ell_{C_2}(e)$ for any edge $e$ that
  belongs to both $C$ and $C_2$. Hence, all edges of $C$ have a
  non-negative label since $C_1$ and~$C_2$ are decreasing. By
  Proposition~\ref{prop:horizontal_cycle} there is at least
  one label of $C$ that is positive, and hence $C$ is a decreasing
  cycle.

  It remains to show that $\ell_C(e)=\ell_{C_1}(e)$ for any edge~$e$
  that belongs to both $C$ and $C_1$; the case that $e$ belongs to
  both $C$ and $C_2$ can be handled analogously.  Let $\Gamma_H$ be
  the ortho-radial representation~$\Gamma$ restricted to $H$.  We flip
  the cylinder to exchange the outer face with the central face and
  vice versa. More precisely, Lemma~\ref{lem:flip_label} implies that
  the reverse edge $\reverse{e}$ of $e$ lies on the central face of
  the flipped representation $\flip{\Gamma_H}$ of $\Gamma_H$. Further,
  it proves that
  $\reverse{\ell}_{\reverse{C}}(\reverse{e})=\ell_{C}(e)$ and
  $\reverse{\ell}_{\reverse{C_1}}(\reverse{e})=\ell_{C_1}(e)$, where
  $\reverse{\ell}$ is the labeling in $\reverse{\Gamma_H}$.  Hence, by
  Corollary~\ref{cor:central-face-same-label} we obtain
  $\reverse{\ell}_{\reverse{C}}(\reverse{e})=
  \reverse{\ell}_{\reverse{C_1}}(\reverse{e})$.  Flipping back the cylinder, again by
  Lemma~\ref{lem:flip_label} we obtain $\ell_C(e)=\ell_{C_1}(e)$. %
\end{proof}

The core of our algorithm is an adapted left-first DFS. Given a
directed edge~$e$ it determines the outermost decreasing cycle $C$ in
$\Gamma$ such that $C$ contains $e$ in the given direction and $e$ has
the smallest label among all edges on $C$, if such a cycle exists.  By
running this test for each directed edge of $G$ as the start edge, we
find a decreasing cycle if one exists.

\newcommand{\reference}{\mathrm{ref}}
More precisely, the DFS visits each vertex at most once and it maintains for each visited vertex $v$ a reference
edge $\reference(v)$, the edge of the search tree via which $v$ was
visited. Whenever it has a choice which vertex to visit next, it
picks the first outgoing edge in clockwise direction after the
reference edge that leads to an unvisited vertex.  In addition to
that, we employ a filter that ignores certain outgoing edges during
the search.  To that end, we define for all outgoing edges $e$
incident to a visited vertex $v$ a \emph{search label}~$\tilde\ell(e)$
by setting
$\tilde\ell(e) = \tilde\ell(\reference(v)) + \rot(\reference(v) + e)$
for each outgoing edge $e$ of $v$.  In our search we ignore edges with
negative search labels.  For a given directed edge $vw$ in $G$ we 
initialize the search by
setting $\reference(w) = vw$, $\tilde\ell(vw) = 0$ and then start
searching from $w$.

Let $T$ denote the directed search tree with root $w$ constructed by
the DFS in this fashion.  If $T$ contains $v$, then this determines a
\emph{candidate cycle} $C$ containing the edge $vw$. If $C$ is a
decreasing cycle, which we can easily check by determining a
reference path, we report it.
Otherwise, we show that there is no outermost decreasing cycle $C$
such that $vw$ lies on $C$ and has the smallest label among
all edges on $C$.

\begin{figure}[t]
  \centering
  \includegraphics[]{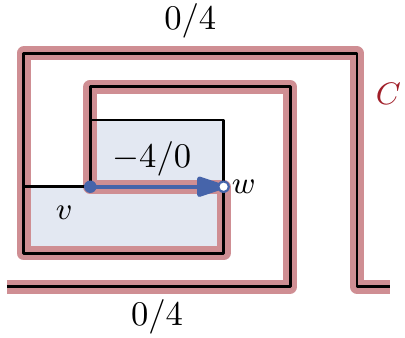}
  \caption{The search from $vw$ finds the non-decreasing cycle $C$.  Edges are
  labeled $\ell_{C}(e) / \tilde\ell(e)$.}
  \label{fig:nondecreasing_cycle_found}
\end{figure}

It is necessary to check that $C$ is essential and
decreasing.  For example the cycle in
Figure~\ref{fig:nondecreasing_cycle_found} is found by the search and
though it is essential, it is non-decreasing.  This is caused by the
fact that the label of $vw$ is actually $-4$ on this cycle but the
search assumes it to be $0$.

\begin{figure}
  \centering
    \includegraphics{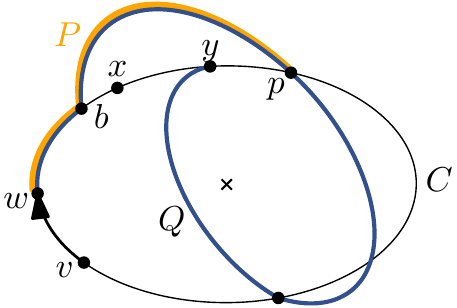}
        \caption{Path $Q$ and its prefix $P$ that leaves $C$ once and ends 
      at a vertex $p$ of $C$.}
    \label{fig:prefix}
\end{figure}

\begin{lemma}\label{lem:dfs_correctness}
  Assume that $\Gamma$ contains a decreasing cycle.  Let $C$ be the
  outermost decreasing cycle of $\Gamma$ and let $vw$ be an edge on
  $C$ with the minimum label, i.e., $\ell_C(vw) \leq \ell_C(e)$ for
  all edges~$e$ of~$C$.  Then the left-first DFS from $vw$ finds $C$.
\end{lemma}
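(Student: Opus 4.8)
The goal is to show that the modified left-first DFS initialized at the minimum-label edge $vw$ of the outermost decreasing cycle $C$ reconstructs $C$ exactly. The plan is to prove that, under the hypotheses, the DFS tree $T$ rooted at $w$ contains $v$, and that the unique $w$-to-$v$ path $C^\star$ in $T$ together with $vw$ equals $C$. I would argue this by a careful invariant: the search labels $\tilde\ell$ agree with the labels $\ell_C$ along $C$ and, more importantly, along the actual path the DFS follows; and the left-first rule together with the nonnegativity filter forces the DFS to ``hug'' the outermost decreasing cycle.

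**Key steps.** First I would establish the basic consistency of search labels: for any path $R$ in $T$ from $w$ to a vertex $x$, the quantity $\tilde\ell$ of the last edge equals $\rot(vw + R)$ (up to the initialization $\tilde\ell(vw)=0$), which follows by telescoping the defining recurrence $\tilde\ell(e) = \tilde\ell(\reference(v)) + \rot(\reference(v)+e)$ and Observation~\ref{obs:rot_splitting_path}. Second, I would show that along $C$ itself the search labels match the true labels: since $vw$ has minimum label on $C$ and $\rot(C[vw,e]) = \ell_C(e) - \ell_C(vw) \ge 0$ for every edge $e$ of $C$ by Observation~\ref{obs:repr:label_difference}, the filter never discards an edge of $C$ when the DFS walks along $C$ starting from $w$. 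Third — the heart of the argument — I would show the DFS actually stays on $C$: suppose at some vertex $p$ on $C$ the DFS takes an outgoing edge $e$ that leaves $C$ toward the interior (the left-first rule means it can only deviate ``to the left'' of $C$, i.e., into the interior). Follow the resulting search path $Q$ until it first returns to $C$ (it must, or one derives a cycle in $T$, contradicting that $T$ is a tree and that the DFS visits each vertex once; alternatively one gets a decreasing cycle from a sub-path of $Q$ with nonnegative search labels, contradicting outermostness or validity). Using Lemma~\ref{lem:repr:illegal_intersection} and Lemma~\ref{lem:repr:equal_labels_at_intersection} applied to $C$ and the cycle formed by the deviation $Q$ together with a piece of $C$, I would derive that such a deviation would either produce an essential decreasing cycle strictly outside $C$ (contradicting that $C$ is outermost) or violate the sign conditions forced by the filter. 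Hence no such deviation occurs, the DFS traverses $C$ edge by edge, reaches $v$, and the candidate cycle it reports is exactly $C$.

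**Main obstacle.** The delicate point is the third step: controlling what happens when the left-first DFS, which is greedy and myopic, momentarily deviates into the interior of $C$. One has to rule out that it gets ``lost'' inside and returns to $C$ at a different point, shortcutting $C$ and producing a candidate cycle $C' \ne C$ that might still be decreasing but smaller. The resolution hinges on the interplay between (a) the nonnegativity filter, which guarantees every prefix of the search path has $\tilde\ell \ge 0$, hence encodes a ``going down or staying level'' path, and (b) Lemma~\ref{lem:repr:illegal_intersection}, which says a down-path cannot cross an up-path from below on the central face. I expect to need to carefully set up the relevant subgraph $H$ (the deviation path plus the relevant arc of $C$), identify its central face, and check that $vw$ has minimum label also on the shortcut cycle — using that $\tilde\ell(vw)=0$ was the assumed minimum — to get a contradiction with outermostness of $C$. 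The flip lemma (Lemma~\ref{lem:flip_label}) and Corollary~\ref{cor:central-face-same-label}, already used in Lemma~\ref{lem:outermost_decreasing_cycle}, will likely be reused to transfer the label-agreement statements to the configuration at hand.

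**Remaining bookkeeping.** Finally I would verify the easy directions: the DFS visits each vertex at most once so it terminates; if it reaches $v$ the candidate cycle is well-defined and simple; and the reported cycle, being shown to equal $C$, is indeed decreasing, so the algorithm's check ``is $C$ decreasing?'' succeeds. Conversely the contrapositive form needed for the overall algorithm — that if the search from $vw$ reports a non-decreasing candidate then $vw$ is not the minimum-label edge of the outermost decreasing cycle — is just the converse reading of this lemma and needs no separate work.
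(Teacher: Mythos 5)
Your plan is in the right spirit --- argue by contradiction from a deviation of the search path off $C$, use the minimum-label hypothesis to show the nonnegativity filter never discards an edge of $C$ (via Observation~\ref{obs:repr:label_difference}), and exploit the label-comparison machinery to contradict outermostness --- but there is a concrete directional error at the heart of your Step 3, and it masks the real difficulty. You claim the left-first rule ``can only deviate to the left of $C$, i.e., into the interior.'' Since an essential cycle is oriented with its \emph{interior on the right}, a left turn off $C$ lands in the \emph{exterior}, not the interior; the search path $P$ (the prefix of the deviating branch up to its first return $p$ to $C$) lives in the exterior of $C$. That is exactly why the paper's argument closes: in both relevant cases the resulting cycle $C'$ contains $C$ in its interior, which is what contradicts outermostness. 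Your gloss gets the conclusion (``a decreasing cycle strictly outside $C$'') right but contradicts your own stated premise, which suggests the geometric picture isn't fully under control.

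More importantly, placing $P$ in the exterior of $C$ does not yet determine the situation: the theta graph $H=P+C$ still has two possible configurations, depending on whether the edge $vw$ lies on the outer face of $H$ or not (equivalently, whether the outer face is bounded by $\reverse{C}[b,p]+\reverse{P}[p,b]$ or by $\reverse{C}[p,b]+P[b,p]$), and you do not distinguish them. The first case is the pleasant one you seem to have in mind: there the labels on $C'\cap C$ transfer via Lemma~\ref{lem:repr:equal_labels_at_intersection} and the labels on $C'\cap P$ are $\ell_C(vw)+\tilde\ell\ge 0$ directly from the filter. The second case is where the work is: there $C'$ traverses $P$ backwards, so the search labels do not directly give nonnegativity, and Lemma~\ref{lem:repr:illegal_intersection} does not substitute for it either --- the paper instead pins down $\ell_{C'}$ on $\reverse P$ by routing two reference paths to the common edge $bc$ just after $b$ on $C$, combining $\ell_C(bc)=\ell_{C'}(bc)$ with the search-label telescoping to derive $\ell_{C'}(xy)\ge\rot(abc)+\rot(dba)+\rot(cbd)=2$. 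Without anticipating this second case and its explicit label calculation, your proof has a genuine gap: you would not be able to rule out the search shortcutting $C$ in a way that revisits a vertex of $C$ from the ``wrong'' side of $P$.
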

\begin{proof}
  Assume for the sake of contradiction 
  that the search does not find
  $C$.  Let $T$ be the tree formed by the edges visited by the search.
  Since the search does not find $C$ by assumption, a part of
  $\subpath{C}{w,v}$ does not belong to $T$.  Let $xy$ be the first
  edge on $\subpath{C}{w,v}$ that is not visited, i.e.,
  $\subpath{C}{w,x}$ is a part of $T$ but $xy\not\in T$.  There are
  two possible reasons for this.  Either $\tilde\ell(xy) < 0$ or $y$
  has already been visited before via another path $Q$ from $w$ with
  $Q\neq \subpath{C}{w,y}$.

  The case $\tilde{\ell}(xy)<0$ can be
  excluded as follows.  By the construction of the labels
  $\tilde\ell$, for any path $P$ from $w$ to a vertex $z$ in $T$ and any edge
  $e'$ incident to $z$ we have $\tilde\ell(e') = \rot(vw+P+e')$.  In
  particular, $\tilde{\ell}(xy) = \rot(\subpath{C}{vw, xy}) = \ell_C(xy) -
  \ell_C(vw) \geq 0$ since the rotation can be rewritten as a label difference 
  (see Observation~\ref{obs:repr:label_difference}) and $vw$ has the smallest label 
  on $C$.
  
  Hence, $T$ contains a path $Q$ from $w$ to $x$ that was found by the
  search before and $Q$ does not completely lie on $C$. There is a
  prefix of $Q$ (possibly of length $0$) lying on $C$ followed by a
  subpath not on $C$ until the first vertex $p$ of $Q$ that again
  belongs to $C$; see Figure~\ref{fig:prefix}.  We set $P=\subpath{Q}{w, p}$  
  and denote the vertex where $P$ leaves $C$
  by $b$.  By construction the edge $vw$ lies on $\subpath{C}{p,b}$.
  The subgraph $H=P+C$ that is formed by the decreasing cycle $C$ and
  the path $P$ consists of the three internally
  vertex-disjoint paths $\subpath{P}{b,p}$, $\subpath{C}{b,p}$ and
  $\subpath{\reverse{C}}{b,p}$ between $b$ and $p$.  Since edges that
  are further left are preferred during the search, the clockwise
  order of these paths around $b$ and $p$ is fixed as in Figure~\ref{fig:prefix}.  In $H$ there are
  three faces, bounded by $C$,
  $\subpath{\reverse{C}}{b,p}+\subpath{\reverse{P}}{p,b}$ and
  $\subpath{P}{b,p}+\subpath{\reverse{C}}{p,b}$, respectively.  Since
  $C$ is an essential cycle and it bounds a face in $H$, it bounds the central face
  and one of the two other faces is the outer face. These two
  possibilities are shown in Figure~\ref{fig:search_path}.  We denote
  the cycle bounding the outer face but in which the edges are
  directed such that the outer face lies locally to the left by
  $C'$. That is, the boundary of the outer face is $\reverse{C'}$.  We
  distinguish cases based on which of the two possible cycles
  constitutes $\reverse{C'}$.

    \begin{figure}
    \centering
    \begin{subfigure}[b]{0.43\textwidth}
      \centering
      \includegraphics[]{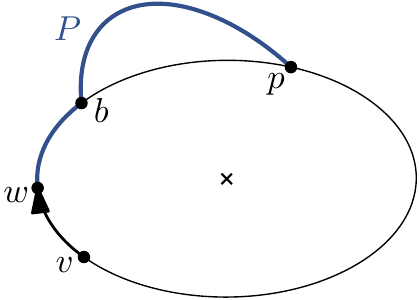}
      \subcaption{The edge $vw$ lies on the outer face of~$H$.}
      \label{fig:search_path-forward}
    \end{subfigure}
    \hspace{1em}
    \begin{subfigure}[b]{0.52\textwidth}
      \centering
      \includegraphics[page=2]{fig/path_forward.pdf}
      \subcaption{The edge $vw$ does not lie on the outer face of $H$.}
      \label{fig:search_path-backward}
    \end{subfigure} 
    \caption{The two possible embeddings of the subgraph formed by the  
    decreasing cycle $C$ and the path $P$, which was found by the  
    search.}
    \label{fig:search_path}
  \end{figure}
  
  If
  $\reverse{C'}=\subpath{\reverse{C}}{b,p}+\subpath{\reverse{P}}{p,b}$
  forms the outer face of $H$, $vw$ lies on $C'$ as illustrated in
  Figure~\ref{fig:search_path-forward} and we show that $C'$ is a
  decreasing cycle, which contradicts the assumption that $C$ is the
  outermost decreasing cycle.  Since $P$ is simple and lies in the
  exterior of $C$, the path $P$ is contained in $C'$, which
  means $\subpath{C'}{w, p}= P$. The other part of $C'$ is formed by
  $\subpath{C}{p, w}$. Since $C$ forms the central face of $H$, the
  labels of the edges on $\subpath{C}{p,w}$ are the same for $C$ and
  $C'$ by Proposition~\ref{lem:repr:equal_labels_at_intersection}. In
  particular, $\ell_C(vw) = \ell_{C'}(vw)$ and all the labels of edges
  on $\subpath{C}{p,w}$ are non-negative because $C$ is
  decreasing. The label of any edge $e$ on both $C'$ and $P$ is
  $\ell_{C'}(e) = \ell_{C'}(vw) + \rot(vw + \subpath{P}{w, e}) =
  \ell_C(vw) + \tilde{\ell}(e) \geq 0$.  Thus, the labeling of $C'$ is
  non-negative. Further, not all labels of $C'$ are $0$ since
  otherwise $C$ would not be a decreasing cycle by
  Proposition~\ref{prop:horizontal_cycle}.  Hence, $C'$
  is decreasing and contains $C$ in its interior, a contradiction.
  
  If $\reverse{C'} = \subpath{\reverse{C}}{p,b}+\subpath{P}{b,p}$, 
  the edge $vw$ does not lie on $C'$; see 
  Figure~\ref{fig:search_path-backward}.
  We show that $C'$ is a decreasing cycle containing $C$ in 
  its interior, again contradicting the choice of $C$.
  As above, Proposition~\ref{lem:repr:equal_labels_at_intersection}
  implies that the common edges of $C$ and $C'$ have the same labels
  on both cycles.  It remains to show that all edges $xy$ on
  $\subpath{\reverse{P}}{p,b}$ have non-negative labels.  To establish
  this we use paths to the edge that follows $b$ on $C$.
  This edge $bc$ has the same label on both cycles and thus provides a handle
  on $\ell_{C'}(xy)$.  We make use of the following equations, which
  follow immediately from the definition of the (search) labels.
  \begin{align*}
    \ell_{C'}(bc) &= \ell_{C'}(xy) + \rot(\subpath{\reverse{P}}{xy,db}) + \rot(dbc) = \ell_{C'}(xy) - \rot(\subpath{P}{bd,yx}) - \rot(cbd)\\
    \ell_{C}(bc)  &= \ell_{C}(vw)  + \rot(\subpath{C}{vw,ab})           + \rot(abc)\\
    \tilde\ell(yx) &= \rot(\subpath{C}{vw,ab}) + \rot(abd) + \rot(\subpath{P}{bd,yx})
  \end{align*}
  Since $\ell_C(bc) = \ell_{C'}(bc)$ and $\rot(abd) = -\rot(dba)$, we
  thus get
  \begin{align*}
    \ell_{C'}(xy) &= \ell_{C}(vw) + \rot(\subpath{C}{vw,ab}) + \rot(abc) +\rot(\subpath{P}{bd,yx}) + \rot(cbd)\\
                 &= \ell_{C}(vw) + \tilde\ell(yx) + \rot(dba) + \rot(abc) + \rot(cbd).
  \end{align*}
  Since $\ell_C(vw) \ge 0$ and $\tilde\ell(yx) \ge 0$ (as $yx$ was not filtered 
  out), it follows that
  $\ell_{C'}(xy) \ge \rot(abc) + \rot(dba) + \rot(cbd) = 2$ as this is
  the sum of clockwise rotations around a degree-3 vertex.  Hence,
  $C'$ is decreasing and contains $C$ in its interior, a
  contradiction.
  Since both embeddings of $H$ lead to a contradiction, we obtain a
  contradiction to our initial assumption that the search fails to
  find $C$.
\end{proof}

  The left-first DFS clearly runs in $\O(n)$ time.  In order to
  guarantee that the search finds a decreasing cycle if one exists, we
  run it for each of the $\O(n)$ directed edges of $G$.  Since some
  edge must have the lowest label on the outermost decreasing cycle,
  Lemma~\ref{lem:dfs_correctness} guarantees that we eventually find a
  decreasing cycle if one exists.  Increasing cycles can be found by
  searching for decreasing cycles in the mirror representation
  $\mirror{\Gamma}$ (Lemma~\ref{lem:mirroring_label}).
  Altogether, this proves Theorem~\ref{thm:main-result:validity}.

\section{Efficient Rectangulation Procedure}\label{sec:efficient_rectangulation}

Let~$G$ be a planar 4-graph with valid ortho-radial
representation~$\Gamma$.  By Theorem~\ref{thm:main-result:drawable}
$\Gamma$ is drawable.  The proof of Theorem~\ref{thm:main-result:drawable}
is constructive and shows how to augment~$\Gamma$ to a valid
rectangular ortho-radial representation~$\Gamma^\star$.  Then a
drawing $\Delta^\star$ of~$\Gamma^\star$ can be computed by
determining flows in two flow networks by
Theorem~\ref{thm:rect:flows-to-drawing}.  A drawing $\Delta$
of~$\Gamma$ can be finally obtained by undoing augmentation steps.
Hence, it remains to show how a valid ortho-radial representation can
be rectangulated efficiently.

To follow the construction of the proof of Theorem~\ref{thm:main-result:drawable}, one needs
to determine efficiently whether an augmentation yields a valid
ortho-radial representation~$\Gamma^u_{vw}$.  We call such an
augmentation a \emph{valid augmentation}.
Since each valid augmentation reduces the number of concave angles, we
obtain a rectangulation after $\O(n)$ valid augmentations.  Moreover,
there are $\O(n)$ candidates for each augmentation, each of which can
be tested for validity (and increasing/decreasing cycles can be
detected) in $\O(n^2)$ time by Theorem~\ref{thm:main-result:validity}. Thus, the
rectangulation algorithm can be implemented to run in $\O(n^{4})$ time.

In the remainder of this section we present an improvement to
$\O(n^{2})$ time, which is achieved in three steps.  First, we show
that, due to the nature of augmentations, each validity test can be done
in $\O(n)$ time.  This reduces the running time of the augmentation
procedure to $\O(n^3)$; see Section~\ref{sec:faster-validity-test:first}.  Second, we show how to find a valid
augmentation for a given port $u$ using only $\O(\log n)$ validity
tests, thus improving the running time to $\O(n^2 \log n)$; see Section~\ref{sec:faster-validity-test:second}.  Finally,
we design an algorithm that can be used to post-process a valid
augmentation and which reduces the number of validity tests to $\O(n)$ and the
running time to $\O(n^2)$ in total; see
Section~\ref{sec:faster-validity-test:third}.
Altogether, this proves our third main result.

\mrDrawing

\subsection{1st Improvement -- Faster Validity Test}
\label{sec:faster-validity-test:first}

The general test for strictly monotone cycles performs one left-first
DFS per edge and runs in $\O(n^2)$ time. However, we can
exploit the special structure of the augmentation to reduce the
running time to $\O(n)$. For the proof we restrict
ourselves to the case that the inserted edge $uz$ points to the right.
The case that it points left can be handled by flipping the
representation using Lemma~\ref{lem:flip_label}.
 
The key result is that in any decreasing cycle of an augmentation the new
edge $uz$ has the minimum label.  Thus, performing only one left-first DFS 
starting at $uz$ is sufficient.  For increasing
cycles the arguments do not hold, but in a second step we show that
the test for increasing cycles can be replaced by a simple test for
horizontal paths.

Recall that the augmentations $\Gamma^u_{vw}$ that are tested during
the rectangulation are built by adding one edge $uz$ to a valid
representation~$\Gamma$. Hence, any strictly monotone cycle in
$\Gamma^u_{vw}$ contains the edge $uz$.
 
We first show that the new edge $uz$ has label $0$ on any
decreasing cycle in the augmentation $\Gamma^u_{vw}$ if $vw$ is the first
candidate. We extend this result afterwards to augmentations to all
candidates. Since the label of edges on decreasing cycles is
non-negative, this implies in particular that the label of $uz$ is
minimum, which is sufficient for the left-first DFS to succeed (see
Lemma~\ref{lem:dfs_correctness}).

\begin{figure}[t]
    \centering
    \includegraphics{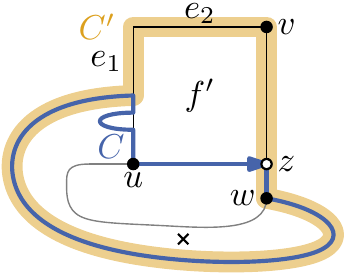}
    \caption{A decreasing cycle $C$ that uses $uz$ and an essential 
      cycle $C'$ derived from $C$.}
    \label{fig:first_candidate}
\end{figure}

\begin{lemma}
  \label{lem:label_first_candidate}
  Let $\Gamma$ be a valid ortho-radial representation and let $u$ be a
  horizontal port on $f$ with first candidate $vw$. If $\Gamma^u_{vw}$
  contains a decreasing cycle~$C$, then $C$ contains $uz$ in this
  direction and $\ell_C(uz) = 0$.
\end{lemma}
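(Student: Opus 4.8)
The plan is to follow the argument in the proof of Lemma~\ref{lem:horizontal-first-candidate} (which shows $\Gamma^u_{vw}$ has no increasing cycle when $vw$ is the first candidate) but to track the \emph{exact} label of the new edge rather than only its sign. Assume $\Gamma^u_{vw}$ with first candidate $vw$ contains a decreasing cycle~$C$. Since $\Gamma$ is valid, $C$ must use the new edge, i.e., $C$ contains $uz$ or $zu$. Recall that, because $vw$ is the \emph{first} candidate, the new face $f'$ incident to $u$, $v$ and $z$ is the rectangle $u\to v'\to v\to z\to u$, where $v'$ is the vertex succeeding $u$ on $f$, and each of its four turns equals~$1$. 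I would apply Lemma~\ref{lem:reroute-face} to $C$ and the regular face $f'$ to obtain a simple essential cycle $C'$ of $\Gamma$ that avoids both $uz$ and $zu$, that decomposes as $C'=P+Q$ with $Q=C\cap C'$ and $P$ (or $\reverse P$) a sub-path of the boundary of $f'$, and that satisfies $\ell_{C'}(e)=\ell_C(e)$ for every edge $e$ of $Q$. Since $\Gamma$ is valid, $C'$ is not strictly monotone, and since $C'$ shares the non-empty path $Q$ with the decreasing cycle~$C$, Proposition~\ref{prop:horizontal_cycle} shows that $C'$ is not horizontal either. Hence $C'$ has an edge with a negative label; as the edges of $Q$ inherit their non-negative labels from $C$, this edge lies on $P\subseteq\partial f'$.

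Next I would compute the labels of the edges of $P$ on $C'$. Because $C'$ replaces the new edge by a walk along $\partial f'$, the path $P$ is traversed from $u$ to $z$ as $u\to v'\to v\to z$ when $uz\in C$, and from $z$ to $u$ as $z\to v\to v'\to u$ when $zu\in C$. In the first case write $\lambda=\ell_C(uz)$; since $C$ is decreasing and $uz$ points right, $\lambda\ge 0$ and $\lambda\equiv 0\pmod 4$. Using the helper identity~\eqref{lem:rotation:helping-equation} at $u$ (to express $\rot(au\,v')$ in terms of $\rot(auz)$, where $au$ is the last edge of $Q$ before $P$), together with $\rot(zuv')=\rot(uv'v)=\rot(v'vz)=1$ (the right turns of the rectangle $f'$) and $\ell_{C'}(au)=\ell_C(au)$, I get
\[
\ell_{C'}(uv')=\lambda-1,\qquad \ell_{C'}(v'v)=\lambda,\qquad \ell_{C'}(vz)=\lambda+1 .
\]
If $\lambda\ge 4$, then every edge of $C'$ has a non-negative label and some edge of $P$ has a positive one, so $C'$ is a decreasing cycle already present in the valid representation~$\Gamma$, a contradiction. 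Hence $\lambda<4$, i.e.\ $\ell_C(uz)=0$, which is the claim. In the second case, writing $\lambda'=\ell_C(zu)$, which is non-negative and $\equiv 2\pmod 4$ and hence $\ge 2$, the symmetric computation gives $\ell_{C'}(zv)=\lambda'+1$, $\ell_{C'}(vv')=\lambda'$ and $\ell_{C'}(v'u)=\lambda'-1$, all of which are positive; so $C'$ is again a decreasing cycle in $\Gamma$, a contradiction. Thus $zu\notin C$, so $C$ contains $uz$, and $\ell_C(uz)=0$.

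Finally, I would dispose of the degenerate configurations in which $C$ meets $\partial f'$ in more than just the edge $uz$ (respectively $zu$), so that $P$ is a proper sub-path of $\partial f'$. Here I would argue exactly as in the proof of Lemma~\ref{lem:horizontal-first-candidate}, which classifies an edge of $P$ as lying on the side of $f'$ through $u$, on the side parallel to $uz$, or equal to $vz$: each such edge still receives on $C'$ a label that is non-negative unless $\ell_C(uz)=0$, and one checks that the assumption $zu\in C$ leads to a decreasing cycle in $\Gamma$ in every such configuration as well. I expect the main obstacle to be precisely this bookkeeping: keeping the orientation of $C'$ consistent with that of $\partial f'$ (Lemma~\ref{lem:reroute-face} permits $P$ and $f'$ to be directed oppositely) and treating the short-$P$ sub-cases, which are routine but tedious. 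The conceptual content is the single identity $\ell_{C'}(uv')=\ell_C(uz)-1$, which converts ``$\ell_C(uz)\neq 0$'' into ``$\Gamma$ contains a decreasing cycle'', contradicting validity --- and this is also consistent with Proposition~\ref{prop:augmentation-properties}, which does \emph{not} exclude decreasing cycles for the first candidate, only for the last.
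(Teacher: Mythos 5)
Your proof is correct and follows essentially the same route as the paper's: apply Lemma~\ref{lem:reroute-face} to reroute $C$ around the new rectangular face $f'$, note that labels on $Q=C\cap C'$ are preserved and non-negative, bound the labels on the new path $P\subseteq\partial f'$ in terms of $\ell_C(uz)$ (the paper states the bound ``differ by at most $1$'' without spelling out the values $\lambda-1,\lambda,\lambda+1$ you compute), and conclude that $\ell_C(uz)\geq 4$ or $zu\in C$ would yield a decreasing cycle $C'$ in the valid representation $\Gamma$. Two minor remarks: your assertion that $f'$ is literally the $4$-cycle $u\to v'\to v\to z\to u$ only holds when no degree-two straight vertices separate $u$ from the two right turns --- in general $f'$ may have more vertices, though its four turns still bound the labels on each side by $\lambda\pm 1$, so your conclusion is unaffected; and the intermediate step invoking Proposition~\ref{prop:horizontal_cycle} to place a negative-labeled edge on $P$ is not needed, since the contradiction ``all labels on $C'$ non-negative with one positive'' is derived directly from $\lambda\geq 4$.
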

  
\begin{proof}
  We first consider the case that $C$ uses $uz$ (and not $zu$) and
  assume for the sake of contradiction that $\ell_C(uz)\neq 0$; see
  Figure~\ref{fig:first_candidate}. Since $uz$ points right,
  $\ell_C(uz)$ is divisible by $4$. Together with $\ell_C(uz) \ge 0$
  because $C$ is decreasing, we obtain $\ell_C(uz) \geq 4$. By
  Lemma~\ref{lem:reroute-face} there is an essential cycle $C'$
  without $uz$ in the subgraph $H$ that is formed by the new
  rectangular face $f'$ and $C$.  We show that $C'$ is a decreasing
  cycle. We observe that each edge of $C'$ either lies on $f'$ or is
  an edge of $C$.  For any $e \in C' \cap C$ Lemma~\ref{lem:reroute-face} gives
  $\ell_{C'}(e) = \ell_C(e) \geq 0$.
  Since $f'$ is rectangular, the labels of edges in $C'\cap f'$ differ by
  at most $1$ from $\ell_C(uz)$. By assumption it is
  $\ell_C(uz) \geq 4$ and therefore $\ell_{C'}(e)\geq 3$ for all edges
  $e\in C'\cap f'$.  Hence, $C'$ is a decreasing cycle in $G$
  contradicting the validity of $\Gamma$.
  
  If $zu\in C$, it is $\ell_C(zu)\geq 2$ and a similar argument yields a 
  decreasing cycle in~$\Gamma$.
\end{proof}

While the same statement does not generally hold for all candidates,
it does hold if the first candidate creates a decreasing cycle.

\begin{figure}[t]
  \centering
  \includegraphics{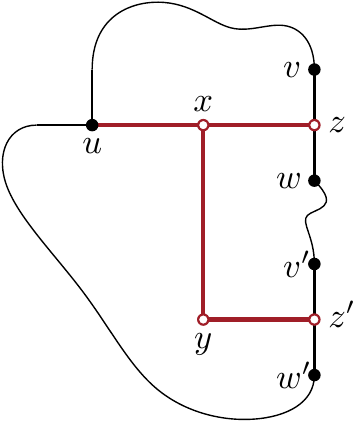}
  \caption{The structure used to simulate the simultaneous insertion of $uz$ to 
    $vw$ and $uz'$ to $v'w'$.}
  \label{fig:structure}
\end{figure}

\begin{lemma}
  \label{lem:label_any_candidate}
  Let $\Gamma$ be a valid ortho-radial representation and let $u$ be a
  horizontal port on $f$ with first candidate $vw$. Further, let
  $v'w'$ be another candidate and denote the edge inserted in
  $\Gamma^u_{v'w'}$ by $uz'$.  If $\Gamma^u_{vw}$ contains a
  decreasing cycle, any decreasing cycle $C'$ in $\Gamma^u_{v'w'}$
  uses $uz'$ in this direction and $\ell_{C'}(uz') = 0$.
\end{lemma}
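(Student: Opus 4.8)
We sketch the argument. The plan is to argue by contradiction, comparing the two augmentations $\Gamma^u_{vw}$ and $\Gamma^u_{v'w'}$ through a common refinement. Suppose $\Gamma^u_{v'w'}$ contains a decreasing cycle $C'$ that either traverses the new edge as $z'u$, or traverses it as $uz'$ with $\ell_{C'}(uz')\neq 0$. Since $C'$ is decreasing, $\ell_{C'}(uz')$ is a non-negative multiple of $4$ and $\ell_{C'}(z'u)$ is non-negative and congruent to $2$ modulo $4$; hence in the first case $\ell_{C'}(z'u)\geq 2$ and in the second $\ell_{C'}(uz')\geq 4$. On the other hand, since $vw$ is the first candidate, Lemma~\ref{lem:label_first_candidate} provides a decreasing cycle $C$ in $\Gamma^u_{vw}$ with $uz\in C$ and $\ell_C(uz)=0$.

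Next I would reuse the auxiliary representation $\tilde\Gamma$ from the proof of Lemma~\ref{lem:horizontal-path} (see Figure~\ref{fig:structure}): subdivide $vw$ by $z$ and $v'w'$ by $z'$, insert the straight rightward path $uxz$, and insert the path from $x$ via a new vertex $y$ to $z'$ with $xy$ pointing down and $yz'$ pointing right; then $\tilde\Gamma$ is an ortho-radial representation in which $uz$ is modelled by $uxz$ and $uz'$ by $uxyz'$. Replacing $uz$ by $uxz$ in $C$ gives, by Lemma~\ref{lem:same-labels-except-from-edge} together with $\rot(uxz)=0$, a decreasing cycle $C_1$ of $\tilde\Gamma$ with $\ell_{C_1}(ux)=\ell_{C_1}(xz)=0$. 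Replacing $uz'$ by $uxyz'$ (respectively $z'u$ by $z'yxu$) in $C'$ gives a cycle $C_2$ of $\tilde\Gamma$; using $\rot(uxy)=1$ and $\rot(xyz')=-1$ (respectively the reversed turns) and Lemma~\ref{lem:same-labels-except-from-edge}, every edge of the inserted subpath carries on $C_2$ a label at least $\ell_{C'}(uz')\geq 4$ (respectively at least $\ell_{C'}(z'u)\geq 2$), so $C_2$ is decreasing in $\tilde\Gamma$ and the label of the edge $\{u,x\}$ on $C_2$ is positive.

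Now $C_1$ and $C_2$ both pass through $x$. The plan is to show that $x$ is incident to the central face $\tilde f$ of $H=C_1+C_2$, and then to finish by applying Lemma~\ref{lem:repr:equal_labels_at_intersection} at $x$: writing $u_2$ for the predecessor of $x$ on $C_2$, the lemma gives $\ell_{C_1}(xz)=\ell_{C_1}(ux)+\rot(uxz)=\ell_{C_2}(u_2x)+\rot(u_2xz)$, whose left side is $0$. If $C'$ used $uz'$ then $u_2=u$ and $\rot(uxz)=0$, forcing $\ell_{C_2}(ux)=0$, contradicting $\ell_{C_2}(ux)=\ell_{C'}(uz')\geq 4$; if $C'$ used $z'u$ then $u_2=y$, $\rot(yxz)=1$, and $\ell_{C_2}(yx)=\ell_{C'}(z'u)+1\geq 3$, so the right side is at least $4$, again a contradiction. (Alternatively, in the first case one may invoke Corollary~\ref{cor:central-face-same-label} directly.) Hence no such $C'$ exists, which is the assertion of the lemma.

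The main obstacle is the step that $\{u,x\}$ (equivalently $x$) is incident to $\tilde f$; this is the analogue of Claim~\ref{claim:same-labels} in the proof of Lemma~\ref{lem:horizontal-path}, but that argument relies on $C_2$ having non-positive labels off a single edge, whereas here $C_2$ is decreasing, so it must be adapted. If the boundary of $\tilde f$ consists only of edges of $C_1$ (respectively of $C_2$), then $\tilde f$ is the whole interior of $C_1$ (respectively $C_2$) and the shared edge $\{u,x\}$ lies on its boundary, as wanted. Otherwise the boundary of $\tilde f$ alternates between $C_1$ and $C_2$; at a transition vertex one reads off, using that $\tilde f$ lies in the interior of both decreasing cycles and that at the degree-three vertex $x$ the edge $ux$ is the predecessor of $xy$, a configuration of the kind forbidden by Lemma~\ref{lem:repr:illegal_intersection} unless $\{u,x\}$ itself lies on the boundary of $\tilde f$. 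Making this last case precise in the all-decreasing setting is the delicate point of the proof; the remaining work — checking that $\tilde\Gamma$ is an ortho-radial representation and the label bookkeeping above — is routine.
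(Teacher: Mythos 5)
Your setup mirrors the paper's: you use the auxiliary representation $\tilde\Gamma$, invoke Lemma~\ref{lem:label_first_candidate} to get the cycle $C_1$ with $\ell_{C_1}(ux)=0$, lift $C'$ to $C_2$ via Lemma~\ref{lem:same-labels-except-from-edge}, and aim to apply Lemma~\ref{lem:repr:equal_labels_at_intersection} at $x$. Your label bookkeeping for the lifted cycles and the final contradiction at $x$ is correct. The difference from the paper is only one of framing (you argue by contradiction assuming $\ell_{C'}(uz')\neq 0$ or $z'u\in C'$; the paper computes $\ell_{C'}(uz')$ directly and refutes $z'u\in C'$), which is inessential.

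The genuine gap is the one you flagged yourself: showing that the new path, and in particular the edge $ux$, lies on the central face $\tilde f$ of $H=C_1+C_2$. Your sketch for this --- alternating transitions on $\tilde f$ between $C_1$ and $C_2$, ruled out by Lemma~\ref{lem:repr:illegal_intersection} --- cannot be made to work in this setting. Lemma~\ref{lem:repr:illegal_intersection} needs incoming labels of opposite sign (one $\geq 0$, one $\leq 0$) at a common vertex on the central face to conclude anything about which side an edge lies on. Here both $C_1$ and $C_2$ are decreasing, so every label is non-negative, and the lemma's hypothesis is generically unsatisfied at transition vertices; you cannot forbid the alternation this way. The analogous Claim in the proof of Lemma~\ref{lem:horizontal-path} works precisely because there $C_2$ has non-positive labels except on a single edge, which gives the sign condition at transitions --- a structural feature that is absent here.

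The paper closes the gap with a different and shorter observation: by Lemma~\ref{lem:repr:equal_labels_at_intersection} the cycle $\tilde f$ bounding the central face of $H$ inherits the (non-negative) labels of $C_1$ and $C_2$, and by Proposition~\ref{prop:horizontal_cycle} it is not horizontal, so $\tilde f$ is itself a \emph{decreasing} cycle. Since $\Gamma$ is valid, this decreasing cycle cannot consist solely of edges already present in $G$; it must therefore use some of the newly inserted edges, and a local look at the ``Y'' structure at $u$ and $x$ then forces the whole path $uxyz'$ (respectively $z'yxu$) onto $\tilde f$. This is the idea missing from your proposal. Once that is in place, the rest of your argument goes through.
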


\begin{proof}
  In order to simulate the simultaneous insertion of 
  two new edges to both $vw$ and $v'w'$ we use the structure from
  the proof of Lemma~\ref{lem:horizontal-path}; see 
  Figure~\ref{fig:structure}. We denote the resulting 
  augmented representation by $\tilde\Gamma$.
  There is a one-to-one correspondence between decreasing cycles in 
  $\Gamma^u_{vw}$ and decreasing cycles in $\tilde{\Gamma}$ containing $uxz$. 
  Let $C$ be a decreasing cycle in $\tilde{\Gamma}$ containing $uxz$. By 
  Lemma~\ref{lem:label_first_candidate} the cycle $C$ contains $uxz$ in this 
  direction, and 
  we have $\ell_{C}(ux) = 0$.
  
  Similarly, for any decreasing cycle in $\Gamma^u_{v'w'}$ there is a 
  cycle in $\tilde{\Gamma}$ where $uz'$ ($z'u$) is replaced by the path $uxyz'$ 
  ($z'yxu$). Let $\tilde{C'}$ be the cycle in $\tilde{\Gamma}$ that 
  corresponds to the decreasing cycle $C'$ in $\Gamma^u_{v'w'}$.

  Suppose for now that $C'$ uses $uz'$ in this direction, which means that
  $\tilde{C'}$ uses $ux$. In particular, since 
  $\ell_{\tilde{C'}}(ux) = \ell_{C'}(uz')$, it holds $\ell_{\tilde{C'}}(xy)=
  \ell_{\tilde{C'}}(ux)+1\geq 1$.
  Let $\tilde{f}$ be the central face of 
  $H=\tilde{C}+\tilde{C'}$. We note that $\tilde{f}$ is a decreasing cycle by 
  Lemma~\ref{lem:repr:equal_labels_at_intersection} and 
  Proposition~\ref{prop:horizontal_cycle}. Since $\Gamma$ is valid, $\tilde{f}$ 
  is not exclusively formed by edges of~$G$.
  Thus, by the construction of $H$, the path $uxyz'$ lies on~$\tilde{f}$.
  Lemma~\ref{lem:repr:equal_labels_at_intersection} therefore implies that
  $\ell_{\tilde{C'}}(ux)=\ell_{C}(ux)=0$, where the last equality
  follows from Lemma~\ref{lem:label_first_candidate}.
  
  Above we assumed that $C'$ uses $uz'$ in this direction. This is in
  fact the only possibility. Assume for the sake of contradiction that
  $C'$ contains $z'u$ and hence $xu\in\tilde{C'}$. As above we can
  argue that the central face $\tilde{f}$ of $\tilde{C}+\tilde{C}'$ is not exclusively
  formed by edges of $G$ and that the path $z'yxz$ lies on
  $\tilde{f}$. By Lemma~\ref{lem:repr:equal_labels_at_intersection} we
  have
  $\ell_{\tilde{C'}}(yx) = \ell_{C}(xz) - \rot(yxz) = -1$.
  Hence, we obtain
  $\ell_{C'}(z'u) = \ell_{\tilde{C'}}(xu) = \ell_{\tilde{C'}}(yx) +
  \rot(yxu) = -2$, which contradicts that $C'$ is a
  decreasing cycle.
\end{proof}

\begin{figure}[t]
    \centering
    \includegraphics{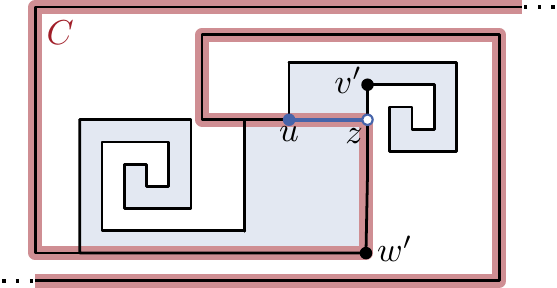}
    \caption{Here, the insertion of the edge~$uz$ to the last 
      candidate~$v'w'$ introduces an increasing cycle~$C$ with $\ell_C(uz) = 
      -4$.}
    \label{fig:increasing_cycle_negative}
\end{figure}

Altogether, we can efficiently test which of the candidates~$e_1,\dots,e_k$ 
produce decreasing cycles as follows. By Lemma~\ref{lem:label_first_candidate}, 
if the first candidate is not valid, then $\Gamma^{u}_{e_{1}}$ has
a decreasing cycle that contains the new edge $uz$ with label~$0$,
which is hence the minimum label for all edges on the cycle.  This can
be tested in $\O(n)$ time by Lemma~\ref{lem:dfs_correctness}.
Fact~\ref{prop:augmentation-properties:fact:monotone-cycle} of Proposition~\ref{prop:augmentation-properties} guarantees that we either find a valid
augmentation or a decreasing cycle.  In the former case we are done,
in the second case Lemma~\ref{lem:label_any_candidate} allows us to
similarly restrict the labels of $uz$ to $0$ for the remaining
candidate edges, thus allowing us to detect decreasing cycles in
$\Gamma^u_{e_{i}}$ in $\O(n)$ time for $i=2,\dots,k$.

It is tempting to use the mirror symmetry
(Lemma~\ref{lem:mirroring_label}) to exchange
increasing and decreasing cycles to deal with increasing cycles in an
analogous fashion.  However, this fails as mirroring invalidates the
property that $u$ is followed by two right turns in clockwise
direction. For example, in Figure~$\ref{fig:increasing_cycle_negative}$ 
inserting the edge to the last candidate introduces an increasing cycle~$C$ 
with $\ell_{C}(uz)=-4$. We therefore give a direct algorithm for
detecting increasing cycles in this case.  %

Let $e_{i}=v_iw_i$ and $e_{i+1}={v_{i+1}w_{i+1}}$ be two consecutive
candidates for $u$ such that $\Gamma^u_{e_{i}}$ contains a decreasing
cycle but $\Gamma^u_{e_{i+1}}$ does not. If $\Gamma^u_{e_{i+1}}$
contains an increasing cycle, then by
Fact~\ref{prop:augmentation-properties:fact:horizontal-cycle} of
Proposition~\ref{prop:augmentation-properties} the vertices $w_{i}$,
$v_{i+1}$ and $u$ lie on a horizontal path that starts at a vertex $z$
incident to $f$ and ends at $u$. The presence of such a horizontal
path $P$ can clearly be checked in linear time, thus allowing us to
also detect increasing cycles provided that the previous candidate
produced a decreasing cycle. If $P$ exists, we insert the edge
$uz$. By Proposition~\ref{prop:horizontal_cycle} this does not produce
strictly monotone cycles. Otherwise, if $P$ does not exist, the
augmentation
$\Gamma^u_{e_{i+1}}$ is valid. In both cases we have resolved the
horizontal port~$u$ successfully. %

Summarizing, the overall algorithm for augmenting from a horizontal
port $u$ now works as follows.  By exploiting
Lemmas~\ref{lem:label_first_candidate}
and~\ref{lem:label_any_candidate}, we test the candidates in the order
as they appear on $f$ until we find the first candidate $e$ for which
$\Gamma^u_{e}$ does not contain a decreasing cycle.  Using
Fact~\ref{prop:augmentation-properties:fact:horizontal-cycle} of Proposition~\ref{prop:augmentation-properties} we either find that $\Gamma^{u}_{e}$
is valid, or we find a horizontal path as described above. In both
cases this allows us to determine an edge whose insertion does not
introduce a strictly monotone cycle.  Since in each test for a
decreasing
cycle the edge $uz$ can be restricted to have label~$0$, each of the
tests takes linear time.  This improves the running time of the
rectangulation algorithm to $\O(n^{3})$.

\subsection{2nd Improvement -- Fewer Validity Tests}\label{sec:faster-validity-test:second}

Instead of linearly searching for a suitable candidate for $u$, we can
employ a binary search on the candidates, which reduces the number of
validity tests for $u$ from linear to logarithmic. To do this
efficiently, we first compute the list of all candidates
$e_1,\dots,e_k$ for $u$ in time linear in the size of $f$. Next, we
test if the augmentation $\Gamma^u_{e_1}$ is valid. If it is, we are
done.

Otherwise, we start the binary search on the list $e_1,\dots,e_k$,
where $k$ is the number of candidates for $u$.  The search maintains a
sublist $e_i,\dots,e_j$ of consecutive candidates such that
$\Gamma^u_{e_i}$ contains a decreasing cycle and $\Gamma^u_{e_j}$ does
not. Note that this invariant holds in the beginning, because we
explicitly test for a decreasing cycle in $\Gamma^u_{e_1}$ and there
is no decreasing cycle in $\Gamma^u_{e_{k}}$ by
Fact~\ref{prop:augmentation-properties:fact:monotone-cycle} of
Proposition~\ref{prop:augmentation-properties}.  If the list consists
of only two consecutive candidates, i.e., $j=i+1$, we stop.
Otherwise, we set $m=\lfloor (i+j)/2\rfloor$ and test if
$\Gamma^u_{e_m}$ contains a decreasing cycle. If it does, we recurse
on $e_m,\dots,e_j$ and otherwise on $e_i,\dots,e_m$.  As the invariant
is preserved, we end up with two consecutive candidates $e_i$ and
$e_{i+1}$ such that $\Gamma^u_{e_i}$ contains a decreasing cycle and
$\Gamma^u_{e_{i+1}}$ does not.  In this situation
Fact~\ref{prop:augmentation-properties:fact:horizontal-cycle} of
Proposition~\ref{prop:augmentation-properties} guarantees that we find
a valid augmentation. Clearly this only requires $O(\log n)$ validity
tests in total. Further, as argued in the previous subsection each of
these tests can be performed in linear time. Altogether, we obtain the
following lemma.

\begin{lemma}\label{lem:binary_search}
  Using binary search we find a valid augmentation for $u$ in $\O(n\log n)$ 
  time.
\end{lemma}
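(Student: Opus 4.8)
The plan is to assemble the lemma from the three facts of Proposition~\ref{prop:augmentation-properties} together with the linear-time validity test of Section~\ref{sec:faster-validity-test:first}; essentially all the substance is already in those results, so the work here is to organize the tests into a bisection and account for the running time. If $u$ is a vertical port, $\Gamma^u_{e_1}$ is valid by Fact~\ref{prop:augmentation-properties:fact:vertical-port} of Proposition~\ref{prop:augmentation-properties} and nothing is to be done, so assume $u$ is a horizontal port. First I would compute the ordered list $e_1,\dots,e_k$ of candidate edges of $u$ in time linear in $|f|$, hence in $\O(n)$, and test whether $\Gamma^u_{e_1}$ contains a decreasing cycle. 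By Fact~\ref{prop:augmentation-properties:fact:monotone-cycle}, $\Gamma^u_{e_1}$ has no increasing cycle, so if it has no decreasing cycle it is already valid. By Lemma~\ref{lem:label_first_candidate} every decreasing cycle of $\Gamma^u_{e_1}$ contains the new edge with label $0$, so this test is a single left-first DFS started at that edge and runs in $\O(n)$ time by Lemma~\ref{lem:dfs_correctness}.

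If $\Gamma^u_{e_1}$ does contain a decreasing cycle, I would run the bisection on $e_1,\dots,e_k$, maintaining the invariant that the current interval $[e_i,e_j]$ satisfies ``$\Gamma^u_{e_i}$ has a decreasing cycle and $\Gamma^u_{e_j}$ has none''. This holds initially for $[e_1,e_k]$ by the case just settled and because $\Gamma^u_{e_k}$ has no decreasing cycle (Fact~\ref{prop:augmentation-properties:fact:monotone-cycle}); testing $\Gamma^u_{e_m}$ at the midpoint and recursing on $[e_m,e_j]$ or $[e_i,e_m]$ accordingly preserves it. Note that \emph{no} monotonicity of the predicate ``contains a decreasing cycle'' along the candidate list is needed: the bisection only has to return some pair of consecutive candidates on which the property flips, which is precisely the hypothesis of Fact~\ref{prop:augmentation-properties:fact:horizontal-cycle}. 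Since $\Gamma^u_{e_1}$ contains a decreasing cycle, Lemma~\ref{lem:label_any_candidate} applies to every midpoint $\Gamma^u_{e_m}$ as well, so again each of these $\O(\log k)=\O(\log n)$ tests is a single $\O(n)$-time DFS.

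When the bisection stops I have consecutive candidates $e_i,e_{i+1}$ with $\Gamma^u_{e_i}$ containing a decreasing cycle and $\Gamma^u_{e_{i+1}}$ containing none. I then check, in $\O(n)$ time, whether there is a horizontal path from a vertex $z$ incident to $f$ through $w_i$ and $v_{i+1}$ ending at $u$. If there is, I add the horizontal edge $uz$: by Fact~\ref{prop:augmentation-properties:fact:horizontal-cycle} this closes a horizontal cycle, and by Proposition~\ref{prop:horizontal_cycle} the result contains no strictly monotone cycle and is therefore valid. If there is no such path, then $\Gamma^u_{e_{i+1}}$ cannot contain an increasing cycle --- otherwise Fact~\ref{prop:augmentation-properties:fact:horizontal-cycle} would produce exactly such a path --- and since it also has no decreasing cycle it is valid, so we take $\Gamma^u_{e_{i+1}}$. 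In all cases $u$ is resolved by a valid augmentation, at a total cost of $\O(n)$ for the candidate list, $\O(\log n)$ tests of $\O(n)$ time each, and a final $\O(n)$ horizontal-path check, i.e.\ $\O(n\log n)$.

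The only delicate point, and the one I would be most careful about, is the terminal case: one must not presume that $\Gamma^u_{e_{i+1}}$ contains an increasing cycle (it may simply be valid), and one must extract the valid augmentation without performing a further, possibly super-linear, validity test. Both are handled by the dichotomy ``the horizontal path to $u$ exists or not'', so that the only tests ever run are the linear-time decreasing-cycle DFS of Section~\ref{sec:faster-validity-test:first} and a linear-time search for that horizontal path; everything else is the bookkeeping that turns $\O(\log n)$ linear-time steps into the claimed bound.
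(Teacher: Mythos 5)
Your proposal is correct and follows the same approach as the paper: compute the candidate list in linear time, test $\Gamma^u_{e_1}$, then bisect on the predicate ``contains a decreasing cycle'' (which need not be monotone, only bracketed by the endpoints) until two consecutive candidates remain, and finish with a linear-time horizontal-path check. Your explicit handling of the vertical-port case and the observation that the terminal $\Gamma^u_{e_{i+1}}$ may simply be valid (rather than necessarily contain an increasing cycle) are details the paper leaves implicit, but they match the logic of Section~\ref{sec:faster-validity-test:first} exactly.
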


Since there are $\O(n)$ ports to remove, we obtain that any
planar $4$-graph with valid ortho-radial representation can be
rectangulated in $\O(n^2\log n)$ time.

\begin{thm}
  \label{thm:augmentation-binary-search}
  Given a valid ortho-radial representation $\Gamma$, a
  corresponding rectangulation can be computed in $\O(n^2 \log n)$
  time.
\end{thm}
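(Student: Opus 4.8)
The plan is to combine the three ingredients that have just been established: (i) every rectangulation of a valid ortho-radial representation terminates after $\O(n)$ valid augmentations, since each valid augmentation removes at least one concave angle and a rectangular representation has none; (ii) by Lemma~\ref{lem:binary_search} a single valid augmentation for a fixed port $u$ can be found in $\O(n\log n)$ time; and (iii) the bookkeeping needed to locate ports and maintain the face structure between augmentations is cheap. Multiplying (i) and (ii) immediately gives the claimed $\O(n^2\log n)$ bound, provided the per-iteration overhead outside the call to Lemma~\ref{lem:binary_search} is also $\O(n\log n)$ (in fact $\O(n)$ suffices).

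First I would set up the outer loop. After the preprocessing step from Section~\ref{sec:rect:algorithm} (inserting the length-$3$ cycles in the central and outer face and choosing the new reference edge $e^{\star\star}$), every non-rectangular face is regular and hence contains a port by Tamassia's argument~\cite{t-emn-87}. The algorithm repeatedly picks any such port $u$, computes the augmentation prescribed by the analysis --- either a valid $\Gamma^u_{e_i}$ or, in the degenerate case, the edge $uz$ to the far endpoint of the horizontal path $P_i$ from Fact~\ref{prop:augmentation-properties:fact:horizontal-cycle} of Proposition~\ref{prop:augmentation-properties} --- and updates the representation. By Proposition~\ref{prop:augmentation-properties} and Proposition~\ref{prop:horizontal_cycle} the resulting representation is again valid, and it has strictly fewer concave angles, so the loop runs $\O(n)$ times. (One should also note that the augmented graph $G'$ has $\O(n)$ vertices throughout: each augmentation adds one edge and at most one subdivision vertex, and there are $\O(n)$ augmentations, so $|V(G')| = \O(n)$; this is needed so that each $\O(n\log n)$ bound really is in terms of the original $n$.)

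Next I would argue the per-iteration cost. Finding a port $u$ in the current (non-rectangular) face amounts to walking the facial walk once and locating a left turn followed by two right turns; maintaining a list of non-rectangular faces and, for each, its rotation sequence, costs $\O(n)$ per update since an augmentation only affects the one face being split into two new faces (plus one neighboring face that acquires a subdivision vertex on one of its edges). Computing the candidate list $e_1,\dots,e_k$ for $u$ is $\O(|f|)=\O(n)$. The binary search of Lemma~\ref{lem:binary_search} then costs $\O(n\log n)$, which dominates; the subsequent relinking of the embedding is $\O(n)$. So each of the $\O(n)$ iterations costs $\O(n\log n)$, giving $\O(n^2\log n)$ overall, which is the statement.

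The main obstacle is not any single hard lemma --- those are all in place --- but rather making the inductive invariant precise and verifying it is preserved across iterations: namely that at the start of each iteration the current representation is a \emph{valid} ortho-radial representation containing a subdivision of $\Gamma$, so that Proposition~\ref{prop:augmentation-properties} (which is stated for valid $\Gamma$) applies to the next port. This is exactly where one uses that every augmentation performed is a valid augmentation: Facts~2 and~3 of Proposition~\ref{prop:augmentation-properties} guarantee that when no $\Gamma^u_{e_i}$ is valid there is nevertheless a valid edge insertion (the horizontal-path case), so the loop never gets stuck, and the validity of the new representation is precisely what keeps the induction going. A minor secondary point to handle carefully is that Lemma~\ref{lem:binary_search} was stated for a horizontal port; for a vertical port Lemma~\ref{lem:vertical-edge} says $\Gamma^u_{e_1}$ is already valid, so one validity test --- hence $\O(n)$ time by the first improvement --- suffices, and the bound is unaffected.
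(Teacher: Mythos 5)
Your proposal is correct and follows essentially the same route as the paper: Lemma~\ref{lem:binary_search} gives $\O(n\log n)$ per port, there are $\O(n)$ ports to resolve since each valid augmentation removes a concave corner without creating new ones, and multiplying gives $\O(n^2\log n)$. The paper dispatches this in a single sentence following Lemma~\ref{lem:binary_search}; your write-up merely makes explicit the supporting bookkeeping (the $\O(n)$ bound on the augmented graph size, the $\O(n)$ per-iteration overhead, the inductive preservation of validity via Proposition~\ref{prop:augmentation-properties}, and the cheaper vertical-port case), all of which is accurate.
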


\subsection{3rd Improvement -- Linear Number of Validity Tests}
\label{sec:faster-validity-test:third}
\label{sec:2-phase-appendix}

In this section we describe an improvement of our algorithm that
reduces the total number of validity tests to $\O(n)$ such that the
running time of our algorithm becomes $\O(n^2)$. The improvement
adapts the augmentation step for horizontal ports of the
rectangulation procedure. Let $u$ be a horizontal port of a face $f$
and let $e_1,\dots,e_k$ be its candidate edges. The adapted
augmentation step resolves $u$ in two steps. 

\begin{figure}[t]
  \centering
  \begin{subfigure}[b]{0.49\textwidth}
    \centering
    \includegraphics[page=4,width=\textwidth]{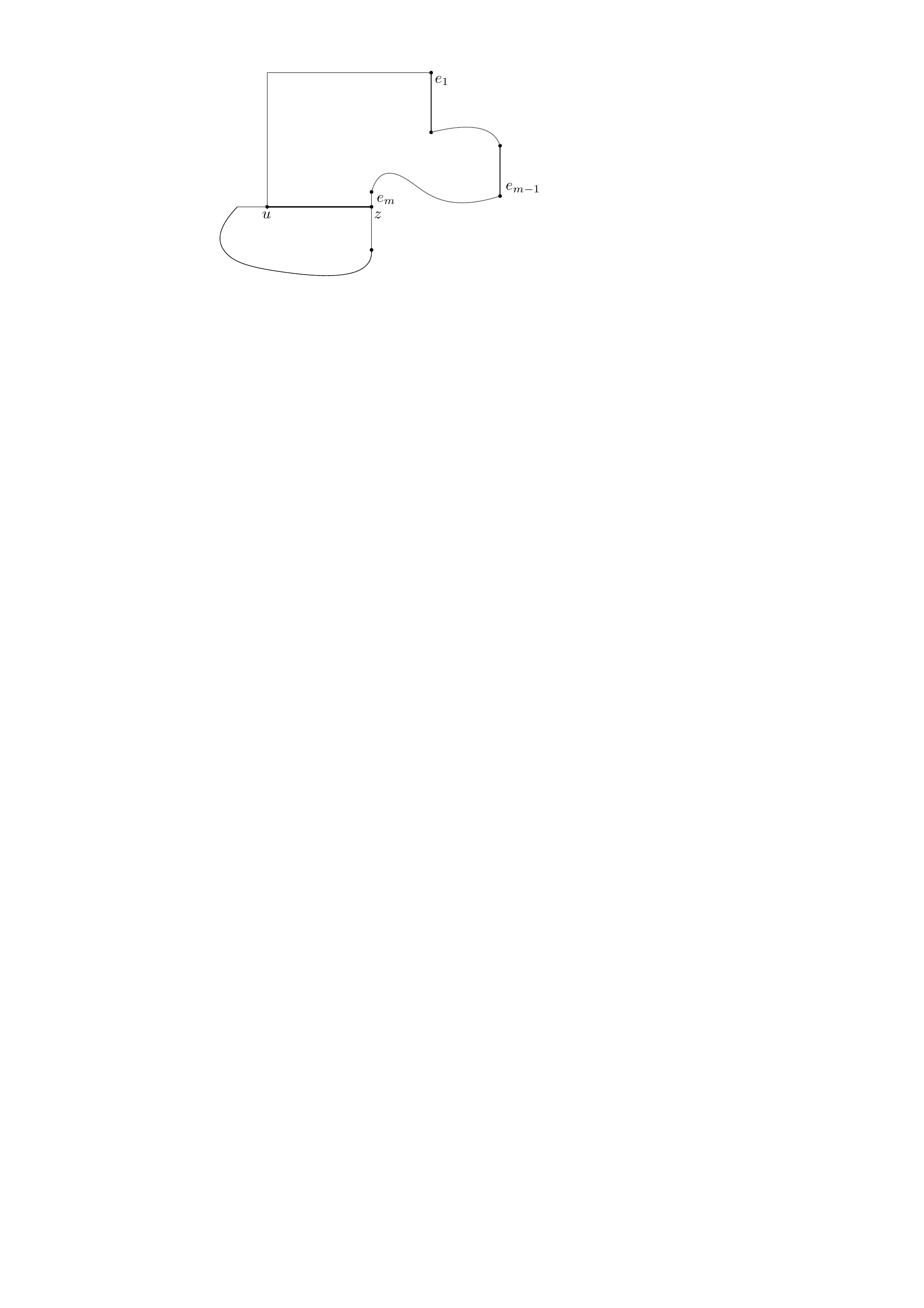}
    \subcaption{$\Gamma_1$: $r_2r_3$ points right.}
    \label{fig:second-phase:step1:right}
  \end{subfigure}
  \begin{subfigure}[b]{0.49\textwidth}
    \centering
    \includegraphics[page=2,width=\textwidth]{fig/second_phase.pdf}
    \subcaption{$\Gamma_1$: $r_2r_3$ points downwards.}
    \label{fig:second-phase:step1:downwards}
  \end{subfigure} 
  \caption{Illustration of Step 1, which inserts $R$, $T$ and $B$
    into $\Gamma_0$. Depending on whether $uz$ lies on a
    horizontal cycle~$C$ in $\Gamma_0$, the edge $r_2r_3$ points
    (\subref{fig:second-phase:step1:right}) to the right or
    (\subref{fig:second-phase:step1:downwards}) downwards.}
  \label{fig:second-phase:step1}
\end{figure}

\begin{compactitem}
\item[\textit{Step 1.}] We do a linear scan on $e_{1},\dots,e_k$ to
  search for the first candidate~$e_m$ of $u$ that gives rise to a
  valid augmentation with an additional edge $uz$. Recall that $z$
  either subdivides $e_m$ or $uz$ completes a horizontal cycle that
  contains the source of $e_m$. We note that we apply $m$ validity
  tests in this step. In case that $m<4$, the augmentation step is
  stopped. Otherwise we continue with the following step.

\item[\textit{Step 2.}]  In the following let $\Gamma_0$ be the valid
  ortho-radial representation that we obtain after Step~1. We observe
  that the inserted edge $uz$ splits the face~$f$ into two smaller
  faces. In Step 2.1 we partition the face that contains the
  edges $e_1,\dots,e_{m-1}$ by inserting two paths $T$ and $B$ of
  constant size that start at subdivision vertices on $uz$ and end at
  subdivision vertices on $e_1$ and $e_{m-1}$, respectively; see
  Figure~\ref{fig:second-phase:step1}. These two paths separate two sub-faces
  $f_1$ and $f_2$ of $f$ that contain the candidate edges
  $e_1,\dots,e_{m-2}$, which are all but a constant number of edges
  for which we have performed validity tests in Step 1. In Step
  2.2. we rectangulate $f_1$ without performing any validity test and
  in Step 2.3 we rectangulate $f_2$ performing a number of validity
  tests that is proportional to the size of $f_2$.  In the following
  we describe the three steps in greater detail.
  \begin{compactitem}
  \item[\textit{Step 2.1.}] We adapt $\Gamma_0$ as follows.  We introduce a
    path $B=b_1\dots b_5$ in $f$ that connects a subdivision
    vertex~$b_1$ on $uz$ with a subdivision vertex on $e_{m-1}$; see
    Figure~\ref{fig:second-phase:step1}. The edge $b_1b_2$ points
    upwards, the edge~$b_3b_4$ points downwards, and the other two
    edges point to the right. The face that lies to the left of $B$ is
    the face~$f'$ we seek to rectangulate. Similarly, we introduce a
    path $T=t_1\dots t_5$ in $f'$ that connects a subdivision
    vertex~$t_1$ on $ub_1$ with a subdivision vertex on $e_{1}$. The
    edge $t_1t_2$ points upwards, the edge~$t_3t_4$ points downwards,
    and the other two edges point to the right.  Finally, we subdivide
    the edge $ut_1$ by two additional vertices. Altogether, the edge
    $uz$ has been replaced by a path $R=r_1\dots r_6$ with $r_1=u$,
    $r_4=t_1$, $r_5=b_1$ and $r_6=z$. As we have obtained the edges of
    $R$ by subdividing $uz$, they all point to the right. In the case
    that $uz$ does not lie on a horizontal cycle, we orient $r_2r_3$
    such that it points downwards.

  We denote the resulting ortho-radial representation by
  $\Gamma_1$. Further, let $f_1$ be the face that lies to the
  right of $T$, and let $f_2$ be the face that lies to the
  left of $T$.

\item[\textit{Step 2.2.}]  We iteratively resolve the ports in $f_1$
  until the face is rectangulated. For each port $u'$ of $f_1$ we
  augment the ortho-radial representation as follows.  If $u'$ is a
  vertical port, we augment with respect to the first candidate edge
  and if $u'$ is a horizontal port we augment with respect to the last
  candidate edge of $u'$.  The procedure stops when $f_1$ is completely
  rectangulated. We denote the resulting ortho-radial representation
  by~$\Gamma_2$. We emphasize that this step does not execute any
  validity test.

\item[\textit{Step 2.3.}] Starting with $\Gamma_2$, we rectangulate
  the face $f_2$, which has a constant number of ports, by iteratively
  applying the original augmentation step. We denote the resulting
  ortho-radial representation by $\Gamma_3$.
\end{compactitem}
\end{compactitem}
In the following we show that the modified rectangulation procedure
runs in $\O(n^2)$ time and yields a valid ortho-radial
representation.

\begin{lemma}\label{lem:second-phase:correctness}
  The modified rectangulation procedure produces a valid, rectangulated ortho-radial
  representation.
\end{lemma}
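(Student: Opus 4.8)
The statement bundles two claims about the output $\Gamma_3$ of the modified augmentation step — that it is \emph{rectangulated} and that it is \emph{valid} — and I would treat them separately, reducing both to the case of a single modified step applied to a valid ortho-radial representation $\Gamma$ (the general claim then follows by induction over the ports removed). Rectangulation is essentially bookkeeping: Step~1 performs one step of the rectangulation procedure of Section~\ref{sec:rect:algorithm} (so an ortho-radial representation by Observation~\ref{obs:rect:augmentation_conditions_1_to_4}); Step~2.1 adds only a constant number of edges, and by construction the strips it cuts off between $R$, $T$, $B$ and the boundary of $f$ are rectangles; Steps~2.2 and~2.3 perform at most one augmentation per concave corner of $f_1$ resp.\ $f_2$, each of which strictly decreases the number of concave corners; and other faces are untouched. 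Since this is done for every port of every face, no concave corner survives, so $\Gamma_3$ is rectangulated. The substance of the lemma is therefore that \emph{validity is preserved}.

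For \textbf{Step~1} there is nothing to add: it is a step of the rectangulation procedure of Section~\ref{sec:rect:algorithm}, so Proposition~\ref{prop:augmentation-properties} already guarantees that $\Gamma_0$ is a valid ortho-radial representation, and if $m<4$ we stop here. I would, however, record the structural facts needed later. By Fact~\ref{prop:augmentation-properties:fact:monotone-cycle} of Proposition~\ref{prop:augmentation-properties} the augmentation $\Gamma^u_{e_1}$ contains no increasing cycle, and since the scan passes a candidate $e_i$ with $i<m$ only after the decreasing-cycle test based on Lemma~\ref{lem:label_any_candidate} has succeeded, each of $\Gamma^u_{e_1},\dots,\Gamma^u_{e_{m-1}}$ contains a \emph{decreasing} cycle; moreover, by Lemmas~\ref{lem:label_first_candidate} and~\ref{lem:label_any_candidate}, every such cycle traverses the inserted right-pointing edge with label $0$, which is hence its smallest label.

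For \textbf{Step~2.1} I would first verify local consistency at the new vertices of $R$, $T$, $B$ and the rotations of the new faces; this is immediate from the prescribed edge directions (the faces cut off by the up--right--down--right bumps $T$ and $B$ are rectangles, and $R$ is a subdivision of $uz$ with one extra bump of net rotation $0$). To rule out a strictly monotone cycle $C$ in $\Gamma_1$, note $C$ must use a new edge. If it uses only edges of $R$, then either $uz$ lies on a horizontal cycle, in which case $R$ is horizontal and Proposition~\ref{prop:horizontal_cycle} forbids $C$ from meeting it, or $uz$ is free, in which case contracting $R$ back to $uz$ via Lemma~\ref{lem:same-labels-except-from-edge} (legitimate precisely because the prescribed orientation of $r_2r_3$ makes the net rotation along $R$ equal to $0$) turns $C$ into a strictly monotone cycle of $\Gamma_0$, contradicting its validity. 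If $C$ uses an edge of $T$ or $B$, I would apply Lemma~\ref{lem:reroute-face} to the rectangular face that staircase cuts off, reroute $C$ off it, and thereby reduce to the previous case. Hence $\Gamma_1$ is valid.

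\textbf{Step~2.3} needs no new idea: $f_2$ has only constantly many ports, and the ordinary, already-proven augmentation step keeps the representation valid by Proposition~\ref{prop:augmentation-properties}. The crux — and the step I expect to be the main obstacle — is \textbf{Step~2.2}, which rectangulates $f_1$ by the fixed rule $\emph{vertical port}\to\emph{first candidate}$, $\emph{horizontal port}\to\emph{last candidate}$ and performs \emph{no} validity test, so its correctness must be purely structural. I would establish an invariant on the face currently being processed inside $f_1$ that generalises the settings of Lemmas~\ref{lem:horizontal-first-candidate} and~\ref{lem:horizontal-last-candidate}: along the staircase $T$ (oriented as constructed) the labels are controlled so tightly that a last-candidate augmentation of a horizontal port cannot create a decreasing cycle, and by Fact~\ref{prop:augmentation-properties:fact:monotone-cycle} it then cannot create an increasing one either, while a first-candidate augmentation of a vertical port is safe by Fact~\ref{prop:augmentation-properties:fact:vertical-port}; and each such augmentation restores the invariant for the two smaller faces it produces, the label information being transported across the new edge by Lemma~\ref{lem:same-labels-except-from-edge} and the disjointness of increasing and decreasing cycles (Lemma~\ref{lem:increasing_decreasing_disjoint}) used as in Step~2.1. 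Pinning down exactly which label pattern along $T$ the facts recorded in Step~1 provide, and checking that it is self-reproducing under both augmentation types, is the technical heart of the argument; with it in place $\Gamma_2$, and hence $\Gamma_3$, is valid, which together with the bookkeeping above proves Lemma~\ref{lem:second-phase:correctness}.
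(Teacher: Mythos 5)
Your high-level plan --- induct on augmentation steps, treat rectangulation as bookkeeping, and verify validity separately after Step~1, Step~2.1, Step~2.2 and Step~2.3 --- matches the structure of the paper's proof, and your handling of Step~1 (via Proposition~\ref{prop:augmentation-properties} together with Lemmas~\ref{lem:label_first_candidate} and~\ref{lem:label_any_candidate}) and of Step~2.3 (constant number of ports, ordinary augmentation) is correct.

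There is, however, a genuine gap starting at Step~2.1 and widening at Step~2.2. For Step~2.1 your argument for a cycle $C$ through $T$ or $B$ is to apply Lemma~\ref{lem:reroute-face} to the face the staircase ``cuts off'' and reduce to the $R$-only case. But the faces bordering $T$ and $B$ in $\Gamma_1$ are $f_1$, $f_2$ and the face containing $e_{m-1}$, and these are \emph{not} rectangles --- they are precisely the faces Steps~2.2 and~2.3 are about to rectangulate. Lemma~\ref{lem:reroute-face} only guarantees that the rerouted cycle $C'$ has a piece on that face and agrees with $C$ on $Q=C\cap C'$; on a non-rectangular face the new piece $P$ can carry arbitrary labels, so $C'$ need not be strictly monotone, and it need not avoid $T$ and $B$ either. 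The ``reduce to the $R$-only case'' step therefore does not go through. This is where the paper introduces an extra tool: it builds specific \emph{cascading cycles} $C_{\mathrm T}$ and $C_{\mathrm B}$ from the outermost decreasing cycles in $\Gamma^{u}_{e_1}$ and $\Gamma^{u}_{e_{m-1}}$ (Lemmas~\ref{lem:second-phase:cascading-cycles-CT}--\ref{lem:second-phase:cascading-cycles-CB}), shows a cascading cycle cannot be crossed by an increasing cycle (Lemma~\ref{lem:cascading_and_increasing_cycles}), and uses a removal lemma for cascading cycles (Lemma~\ref{lem:second-phase:no-decreasing-cycle-on-T}) to conclude that adding $T$ or $B$ cannot create a strictly monotone cycle.

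For Step~2.2 you explicitly defer ``the technical heart'' to an unspecified ``self-reproducing label invariant along $T$'', so this step is not proven; and the plan you sketch also does not match the mechanism the paper actually uses. The paper's invariant is global and lives on the two cascading cycles, not on a local label pattern along $T$: during the test-free rectangulation of $f_1$, $C_{\mathrm T}$ and $C_{\mathrm B}$ remain an outer and an inner cascading cycle (new edges are inserted strictly inside $C_{\mathrm T}$ and strictly outside $C_{\mathrm B}$), so a last-candidate augmentation of a horizontal port in $f_1$ can only create an increasing cycle (Fact~\ref{prop:augmentation-properties:fact:monotone-cycle}); by Lemma~\ref{lem:cascading_and_increasing_cycles} such a cycle would have to lie between $C_{\mathrm T}$ and $C_{\mathrm B}$, hence would have to contain $\subpath{R}{r_1,r_4}$, and the fixed orientation of $r_2r_3$ (right on a horizontal cycle, or down with label $1$) then forces a contradiction. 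Without introducing the cascading-cycle machinery, or some equivalent device that controls labels across the as-yet non-rectangular face $f_1$, neither Step~2.1 nor Step~2.2 closes.
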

Since the proof is rather technical, we defer it to Section~\ref{lem:second-phase:correctness}. In
the following we argue the running time.  To that end, we first prove
that the output rectangulation has linear size.

\begin{lemma}\label{lem:linear-output-size}
  The rectangulated ortho-radial representation produced by the modified
  rectangulation procedure has size $\O(n)$.
\end{lemma}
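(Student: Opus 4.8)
The plan is to bound separately the two sources of growth in the modified procedure: the \emph{elementary augmentations} (each inserting one edge together with one subdivision vertex, exactly as in the basic procedure of Section~\ref{sec:rect:algorithm}), and the \emph{gadget insertions} of Step~2.1 (each adding the constant-length paths $R$, $T$, $B$ together with a constant number of subdivision vertices). Since every such operation adds only $\O(1)$ vertices and edges, it suffices to prove that the number $A$ of elementary augmentations and the number $I$ of gadget insertions are both $\O(n)$; the claim then follows because the output has $n+\O(A)+\O(I)=\O(n)$ vertices and edges.

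First I would set up a potential argument with $\Phi$ the number of concave angles along face boundaries, a degree-one vertex counting as two. After the constant-size preprocessing of the central and outer face we have $\Phi = \O(n)$ (each vertex has degree at most four, so there are $\O(n)$ vertex--face incidences), and a rectangulation has $\Phi=0$. As in the analysis of the basic rectangulation, a single elementary augmentation removes the concave angle at the port it resolves and creates no new concave angle, since the new degree-three subdivision vertex carries only right angles; the same holds, with an even larger drop, for the horizontal-cycle-completing augmentations of Fact~\ref{prop:augmentation-properties:fact:horizontal-cycle} of Proposition~\ref{prop:augmentation-properties}. Hence every elementary augmentation decreases $\Phi$ by at least one. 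A gadget insertion, on the other hand, replaces one edge by the constant-length paths $R$, $T$, $B$, so it changes $\Phi$ by at most a universal constant $c_0$. Consequently $A \le \Phi_0 + c_0 I$, and it remains to bound $I$.

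The crux is to show $I=\O(n)$, for which I would prove that every macro-step (one invocation of the possibly adapted augmentation step on a single port) decreases $\Phi$ by at least one \emph{in total}, including the modified step for a horizontal port $u$ with $m\ge 4$ candidates. For that step one argues: Step~1 resolves the concave angle at $u$; Step~2.1 creates at most $c_0$ new concave angles, all lying on the boundaries of $f_1$, of the constant-size face $f_2$, or of the constant-size auxiliary face to the right of $B$; and Steps~2.2 and~2.3 rectangulate $f_1$ and $f_2$ completely, so that by the end of the macro-step all of these at most $c_0$ newly created angles, together with the angle originally at $u$, have been removed. Here I would invoke the structural description underlying Lemma~\ref{lem:second-phase:correctness} to certify that $f_1$, $f_2$, and this auxiliary face are indeed the only faces carrying extra concave angles after Step~2.1 and that all of them become rectangles within the same macro-step. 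Combining this with the observation that every non-modified macro-step (a vertical port, or a horizontal port with $m<4$) is itself a single elementary augmentation and hence also decreases $\Phi$ by at least one, we conclude that the total number of macro-steps, and in particular $I$, is at most $\Phi_0 = \O(n)$.

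Putting the two bounds together gives $A \le \Phi_0 + c_0 I = \O(n)$ and $I = \O(n)$, so the rectangulated ortho-radial representation produced by the modified procedure has $\O(n)$ vertices and edges. I expect the second-to-last step to be the main obstacle: carefully accounting for the concave angles introduced by the gadget paths $R$, $T$, $B$ and by the extra subdivision vertices on $uz$, $e_1$, $e_{m-1}$, and verifying that none of them survives the macro-step — this bookkeeping is exactly where the detailed geometry of the adapted augmentation step and its correctness (Lemma~\ref{lem:second-phase:correctness}) are needed. If one prefers not to reprove net progress directly, an alternative is an amortized charging argument, attributing the $\O(1)$ cost of each gadget insertion to the first candidate edge $e_1$, which after the macro-step becomes incident to a rectangle on the side of $f$ and can therefore trigger a gadget insertion only a bounded number of times.
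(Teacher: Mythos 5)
Your approach shares the paper's overall strategy — a potential argument over concave corners with each operation adding $\O(1)$ vertices — but your \emph{unweighted} potential is not the right invariant, and the bookkeeping obstacle you yourself anticipate is exactly where the argument fails. The claim that every macro-step decreases the plain count of concave angles by at least one is false: when $r_2r_3$ points downward, the reflex angle created at $r_3$ lies on the face on the \emph{other} side of $R$ (the one carrying the candidates $e_m,\dots,e_k$), which Steps~2.2 and~2.3 never touch; and the reflex at $b_4$ lies on the face to the right of $B$, which the procedure does \emph{not} rectangulate — only $f_1$ and $f_2$ are. You can read both facts off the paper's own vertex count $\Delta V\le 19+k_1+k_2$: it accounts for one added vertex per resolved port at $t_4,b_2,b_3$ (in $f_1$) and at $t_2,t_3,r_2$ (in $f_2$), but none at $r_3$ or $b_4$, which survive the macro-step as fresh vertical ports. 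Hence if the stretch of $\partial f$ that becomes $f_1\cup f_2$ carries no concave corner of its own (the paper's $k_1=k_2=0$, which happens e.g.\ when $e_1,\dots,e_{m-1}$ sit on a long straight segment far from $u$), your unweighted potential changes by $-1$ (for $u$) $+2$ (for $r_3,b_4$) $- 0 = +1$: it \emph{increases}, so $I\le\Phi_0$ does not follow. The fallback amortized charge to $e_1$ has the same hole: $r_3$ and $b_4$ live on faces that $e_1$ is not incident to, so they have no charge target.

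The paper's fix is to weight corners by the \emph{type} of port they will become, taking $\Phi = 3\cdot(\text{number of horizontal corners}) + (\text{number of vertical corners})$. Resolving the horizontal port $u$ then contributes $-3$, which strictly dominates the $+2$ from the two persisting \emph{vertical} ports at $r_3$ and $b_4$, yielding $\Delta\Phi\ge 1+3k_1+k_2\ge 1$ per macro-step together with $\Delta V\le 19\,\Delta\Phi$, so the final size is at most $n+19\Phi_0=\O(n)$. This $3{:}1$ weighting is precisely the missing idea in your proposal, and because the $+2$ genuinely persists past the macro-step, it cannot be dispensed with by any purely local cancellation argument of the kind you sketch.
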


\begin{proof}
  For the proof we define the potential function
\[\Phi=3\cdot \text{horizontal corners of }\Gamma + \text{vertical
    corners of }\Gamma,\] where a \emph{horizontal (vertical)} corner
is a concave corner that becomes a horizontal (vertical) port of a
face during the rectangulation procedure.  At the beginning of the
rectangulation procedure it holds $\Phi \leq 4 n$, because each vertex
with degree greater than 1 can be either a horizontal or a vertical
corner, but not both. Further, each vertex with degree 1 is both a
horizontal and vertical corner.  We show that for each augmentation
step of the rectangulation procedure the potential $\Phi$ decreases by
some value $\Delta\Phi\geq 1$ and that the number~$\Delta V$ of
inserted vertices is proportional to $\Delta\Phi$. Since the
rectangulation procedure terminates with $\Phi=0$,
Lemma~\ref{lem:linear-output-size} follows.

In case that the augmentation step handles a vertical port, this
vertical corner is resolved but no new corner is created. Hence,
$\Phi$ decreases by $\Delta\Phi=1$. Moreover, $\Delta V = 1$.
For the case that the augmentation step handles a horizontal port, we
distinguish two sub-cases. If $m<4$, this horizontal corner is resolved
but no new corner is created. Hence, $\Phi$ decreases by
$\Delta\Phi=3$. Moreover, $\Delta V \leq 1$  (in
case that the augmentation step closes a horizontal cycle we have
$\Delta V = 0$). Now consider the case that
$m\geq 4$. Let $k_1$ and $k_2$ be the number of the horizontal and
vertical corners, respectively, that are resolved during the
rectangulation of $f_1$ and $f_2$ in Step 2.2 and Step 2.3, excluding
those that lie on $R$, $T$ and $B$.

Due to the insertion of the vertices $r_3$ and $b_4$ the potential
increases by at most 2. Further, by resolving the horizontal corner
$u$ the potential decreases by $3$. Further, rectangulating $f_1$ and
$f_2$ decreases the potential by $3k_1 + k_2$. Altogether, we obtain
$\Delta\Phi \geq (3 - 2) + 3k_1 + k_2\geq 1 + k_1 + k_2$.  Moreover,
in Step 2.1 we add at most $13$ vertices. In Step 2.2 we add $k_1+k_2$
vertices for the corners not on $T$ and $B$, and we add $3$ vertices
for $t_4$, $b_2$, and $b_3$.  In Step 2.3 we add at most $3$ vertices
for $t_2$, $t_3$, and possibly $r_2$.  In total, we add 
$\Delta V \leq 19 + k_1 + k_2$ vertices.

In all cases $\Delta V \leq 19 \Delta \Phi$.  Altogether, we obtain
that the rectangulation procedure terminates (since $\Phi$ decreases
in every rectangulation step) and the resulting rectangulation has
$\O(n)$ vertices, and therefore $\O(n)$ edges.
\end{proof}

\begin{lemma}
  The modified rectangulation procedure applies $\O(n)$ validity tests
  and runs in $\O(n^2)$ time.
\end{lemma}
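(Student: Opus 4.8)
The plan is to prove two facts: (a) the procedure performs only $\O(n)$ validity tests, and (b) each validity test, as well as all remaining work of a single augmentation step, runs in $\O(n)$ time. By Lemma~\ref{lem:linear-output-size} the procedure inserts only $\O(n)$ vertices and edges in total, so it performs $\O(n)$ elementary augmentation steps, and at every point of its execution the current graph has $\O(n)$ vertices and edges. Hence (a) and (b) together yield the claimed $\O(n^2)$ running time and the $\O(n)$ bound on the number of validity tests.

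For (b), I would first argue that a single validity test costs $\O(n)$. Since the current graph always has $\O(n)$ vertices, and by the first improvement (Section~\ref{sec:faster-validity-test:first}) a test of an augmentation $\Gamma^u_{e}$ reduces to one left-first DFS started at the new edge $uz$ -- which is sound because $\ell(uz)$ is minimum on any decreasing cycle by Lemmas~\ref{lem:label_first_candidate} and~\ref{lem:label_any_candidate} -- together with a linear-time search for a horizontal path, each test runs in $\O(n)$ time. The remaining per-step work -- computing the candidate list along the boundary of a face ($\O(n)$), inserting the $\O(1)$ new edges and the paths $R$, $T$, $B$ of Steps~2.1--2.3, and performing the test-free augmentations of Step~2.2 (each of which is itself one of the $\O(n)$ elementary augmentations, hence counted separately) -- is $\O(n)$ per step.

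For (a), I would count validity tests per resolved port. A vertical port is augmented with its first candidate, which is valid by Lemma~\ref{lem:vertical-edge}, so it incurs no test, and a horizontal port with $m < 4$ incurs only $\O(1)$ tests. For a horizontal port with $m \geq 4$, Step~1 runs the $m$ validity tests on $\Gamma^u_{e_1},\dots,\Gamma^u_{e_m}$, Step~2.2 runs none, and Step~2.3 runs $\O(1)$ tests because the face $f_2$ has only $\O(1)$ ports. The key point is that these $m$ Step-1 tests can be charged to the edges $e_1,\dots,e_m$ so that no edge is charged more than once, up to $\O(1)$ exceptions: after Step~2.1 the edges $e_1,\dots,e_{m-2}$ lie on the boundary of the face $f_1$, which is rectangulated in Step~2.2 \emph{without} any further validity test and whose parts all become rectangles, so none of these edges is ever Step-1-tested again; the remaining $\O(1)$ edges $e_{m-1},e_m$ end up inside $f_2$, which is rectangulated with only $\O(1)$ tests. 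Consequently $\sum_u m_u$ is bounded by the number of edges that ever occur during the procedure, which is $\O(n)$ since subdivisions only increase the edge count and the final count is $\O(n)$ by Lemma~\ref{lem:linear-output-size}. Adding the $\O(1)$ tests contributed by each of the $\O(n)$ horizontal ports (from Step~2.3, or from the $m<4$ case), the total number of validity tests is $\O(n)$, establishing (a).

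Combining (a) and (b), the validity tests cost $\O(n)\cdot\O(n)=\O(n^2)$ in total, and the remaining work costs $\O(n)$ per step over $\O(n)$ elementary augmentations, so the modified rectangulation procedure performs $\O(n)$ validity tests and runs in $\O(n^2)$ time. The main obstacle is making the charging argument of the previous paragraph rigorous, i.e., verifying the structural claims underlying Step~2.1 -- that after inserting $T$ and $B$ the candidates $e_1,\dots,e_{m-2}$ indeed lie inside $f_1$ and that $f_2$ has only a constant number of ports -- which rests on the correctness analysis of the modified procedure (Lemma~\ref{lem:second-phase:correctness}).
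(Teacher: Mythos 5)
Your overall approach is the same as the paper's---a charging argument that attributes validity tests and running time to the output structure and combines it with Lemma~\ref{lem:linear-output-size}---but there is a genuine gap in your Step~2.3 accounting.

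You assert that ``Step~2.3 runs $\O(1)$ tests because the face $f_2$ has only $\O(1)$ ports.'' This inference is not valid: having $\O(1)$ ports does not bound the number of \emph{candidates} per port. A single port can have a number of candidate edges proportional to the size of $f_2$ (consecutive horizontal edges along a side of the face are all candidates with the same running rotation), and the linear scan in the ``original augmentation step'' may have to run a validity test for each of them. Indeed, the algorithm description in Section~\ref{sec:2-phase-appendix} explicitly states that Step~2.3 performs ``a number of validity tests that is \emph{proportional to the size of $f_2$},'' not $\O(1)$. Consequently your per-port sum $\sum_u\bigl(m_u + \O(1)\bigr)$ is replaced by $\sum_u\bigl(m_u + \O(|f_2^{(u)}|)\bigr)$, and your argument never bounds the second sum. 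The paper closes this gap by applying the same edge-charging technique to Step~2.3 that you already applied to Step~1: each Step~2.3 test is charged to its candidate edge, each edge of $f_2$ receives $\O(1)$ charges because $f_2$ has $\O(1)$ ports, and after Step~2.3 these edges lie on rectangles on that side and are never charged from it again, giving $\sum_u |f_2^{(u)}|=\O(n)$. You need to add this observation to make claim~(a) go through.

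A smaller imprecision, which does not affect the asymptotics but should be fixed: you say ``no edge is charged more than once'' in the Step-1 accounting. Each edge has two sides and can become a candidate for a port on either of its incident faces, so the correct statement is that each edge is charged at most twice (once from each side), as the paper notes.
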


\begin{proof}
We now show that by replacing the original augmentation step with this
adaption, the number of validity tests is linear and the
rectangulation procedure runs in $\O(n^2)$ time.  We use a charging
argument that assigns to each vertex and to each edge of the output
rectangulation a constant number of the validity tests that have been
applied during the rectangulation procedure. Further, we distribute
the total running time such that running time linear in the output size is
assigned to each vertex and each edge. Hence, by
Lemma~\ref{lem:linear-output-size} the rectangulation procedure
applies $\O(n)$ validity tests and runs in $\O(n^2)$ time. 

Let $u$ be a port that is considered in a rectangulation step. If $u$
is a vertical port, we determine its first candidate edge in $\O(n)$
time, which we charge on $u$. Further, we do not apply a validity
test.

If $u$ is a horizontal port, we determine its candidate edges in
$\O(n)$ time, which we charge on $u$. Further, we apply $m$ validity
tests in Step 1. If $m<4$, only a constant number of validity tests is
applied, which we charge on $u$. Otherwise, we charge the validity
tests of $e_1$, $e_{m-1}$ and $e_{m}$ on $u$. We observe that after
the augmentation step, the horizontal port $u$ is resolved and no
further validity tests can be charged on $u$. Further, we charge the
validity test of $e_i$ with $2\leq i \leq m-2$ on $e_i$. Hence, since
by construction the candidates
$e_2,\dots,e_{m-2}$ belong to rectangles after the augmentation step,
each edge can only be charged twice (once from each side).

Step 2.1 has constant running time and applies no validity tests. We
charge the running time on $u$. Step~2.2 requires no validity tests
and resolves each concave corner~$v$ on $f_1$ in $\O(n)$ time, which
we charge on $v$. Afterwards, the face $f_1$ is rectangulated and $v$
cannot be charged again. In Step 2.3 we need $\O(n)$ time for each
concave corner $v$ to identify the candidate edges, which we charge on
$v$. Further, when applying a validity test in this step, we charge it
on the corresponding candidate edge. As $f_2$ has only a constant
number of concave corners, each edge of $f_2$ is charged with at most
a constant number of validity tests. Further, since $f_2$ is
rectangulated afterwards its edges cannot be charged again (from the
side of $f_2$).

Hence, over all rectangulation steps we obtain $\O(n)$
validity tests and running time $\O(n^2)$.
\end{proof}

Altogether, we obtain that any planar $4$-graph with valid
ortho-radial representation can be rectangulated in $\O(n^2)$
time.  Using Corollary~\ref{cor:draw:characterization} this further
implies that a corresponding bend-free ortho-radial drawing can be computed in
$\O(n^2)$ time.

\begin{thm}
  \label{thm:augmentation-third-improvement}
  Given a valid ortho-radial representation $\Gamma$, a corresponding
  rectangulation can be computed in $\O(n^2)$ time.
\end{thm}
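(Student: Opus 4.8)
The plan is to assemble the machinery built up in Section~\ref{sec:faster-validity-test:third}. Concretely, I would run the modified two-phase rectangulation procedure on the given valid ortho-radial representation~$\Gamma$. Correctness is immediate from Lemma~\ref{lem:second-phase:correctness}: that lemma states precisely that the modified procedure terminates with a valid, rectangulated ortho-radial representation~$\Gamma^\star$ (which, by the nature of the augmentation steps, contains a subdivision of~$\Gamma$). So everything reduces to bounding the running time.

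For the running time I would combine three ingredients. First, every validity test executed during the procedure is of the restricted form analyzed in Section~\ref{sec:faster-validity-test:first}: it only has to decide whether a single augmentation $\Gamma^u_{vw}$ contains a \emph{decreasing} cycle, and by Lemmas~\ref{lem:label_first_candidate} and~\ref{lem:label_any_candidate} the newly inserted edge $uz$ then carries label~$0$, hence the minimum label, on any such cycle. By Lemma~\ref{lem:dfs_correctness} a single left-first DFS started at $uz$ therefore finds it, so each test runs in $\O(n)$ time; the complementary case of increasing cycles is resolved not by a validity test at all but by the linear-time search for the horizontal path furnished by Fact~\ref{prop:augmentation-properties:fact:horizontal-cycle} of Proposition~\ref{prop:augmentation-properties}. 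Second, the charging argument of the lemma stating that the modified rectangulation procedure applies $\O(n)$ validity tests and runs in $\O(n^2)$ time assigns to each vertex and each edge of $\Gamma^\star$ only a constant number of validity tests and $\O(n)$ units of running time. Third, Lemma~\ref{lem:linear-output-size} guarantees that $\Gamma^\star$ has $\O(n)$ vertices and edges. Multiplying $\O(n)$ output objects by $\O(n)$ charged work per object gives the claimed $\O(n^2)$ bound, which also dominates the $\O(\log n)$ candidate-list handling at each of the $\O(n)$ ports.

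The real work — and the reason the argument is packaged into separate lemmas — lies in the deferred proof of Lemma~\ref{lem:second-phase:correctness}. The subtle step is Step~2.2, where the sub-face $f_1$ is rectangulated \emph{without performing any validity test}, simply by augmenting each vertical port at its first candidate and each horizontal port at its last candidate. One must show that this never introduces a strictly monotone cycle. The intended mechanism is that the constant-size paths $R$, $T$, and $B$ inserted in Step~2.1 are oriented (with a case split according to whether the edge $uz$ lies on a horizontal cycle in $\Gamma_0$; cf.\ Figure~\ref{fig:second-phase:step1}) so that $f_1$ becomes bounded by horizontal pieces, whence any strictly monotone cycle created inside $f_1$ would have to share a vertex with a horizontal cycle, which is impossible by Proposition~\ref{prop:horizontal_cycle}. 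Carrying this out cleanly, and separately handling the constantly many ports of $f_2$ in Step~2.3 via Facts~\ref{prop:augmentation-properties:fact:vertical-port} and~\ref{prop:augmentation-properties:fact:horizontal-cycle} of Proposition~\ref{prop:augmentation-properties}, is exactly the technical content that the present theorem takes on faith; with it in hand, the $\O(n^2)$ bound follows by the bookkeeping sketched above.
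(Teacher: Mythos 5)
Your proposal is correct and follows the paper's argument exactly: the theorem there is the wrap-up of the three results you invoke (Lemma~\ref{lem:second-phase:correctness} for correctness, Lemma~\ref{lem:linear-output-size} for $\O(n)$ output size, and the charging lemma for $\O(n)$ validity tests and $\O(n^2)$ total time, with each test made linear via Lemmas~\ref{lem:label_first_candidate}, \ref{lem:label_any_candidate}, and \ref{lem:dfs_correctness}). Two minor imprecisions that do not affect your proof, since you correctly defer them: the third improvement replaces binary search by a linear scan whose cost is absorbed by the charging argument, so there is no residual $\O(\log n)$ per-port term; and the internal mechanism of Step~2.2 is the cascading-cycle sandwich (Lemma~\ref{lem:cascading_and_increasing_cycles} applied to $C_\T$ and $C_\B$, forcing any monotone cycle to use $R[r_1,r_4]$) rather than a direct bounding by horizontal cycles.
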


\subsubsection{Proof of Lemma~\ref{lem:second-phase:correctness}}
We prove Lemma~\ref{lem:second-phase:correctness} by showing that each
rectangulation step yields a valid ortho-radial representation.  By
Proposition~\ref{prop:augmentation-properties} Step 1 yields a valid
ortho-radial representation $\Gamma_0$. If Step 2 is not considered,
$\Gamma_0$ is the output of the rectangulation step. So assume that
Step 2 is executed.  We use the same notation as in the description of
the algorithm. In particular, Step 2.1, Step 2.2 and Step 2.3 produce
the ortho-radial representations $\Gamma_1$, $\Gamma_2$ and $\Gamma_3$
from the ortho-radial representation of the preceding step. In the
following we consider the sub-steps of Step 2 separately and show that
$\Gamma_1$, $\Gamma_2$ and $\Gamma_3$ are valid.

\paragraph{Correctness of Step 2.1.}
In order to show the correctness of the first step we successively add
the paths $R$, $T$ and $B$ to $\Gamma_0$ and prove the validity of
each created ortho-radial representation. To that end, let
$\Gamma_R=\Gamma_0-uz+R$, $\Gamma_T=\Gamma_R+T$ and
$\Gamma_B=\Gamma_T+B=\Gamma_1$.

\begin{lemma}\label{lem:second-phase:no-monotone-cycle-on-R}
  The ortho-radial representation $\Gamma_R$ is valid.
\end{lemma}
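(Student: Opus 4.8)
The plan is to show that a strictly monotone cycle in $\Gamma_R$ would transport to a strictly monotone cycle in $\Gamma_0$, contradicting that $\Gamma_0$ is valid (which holds because $\Gamma_0$ was produced by a valid augmentation in Step~1, by Proposition~\ref{prop:augmentation-properties}). First I would check that $\Gamma_R$ is an ortho-radial representation at all: the edge $uz$ is replaced by a path whose first and last edge point right, so nothing changes locally at $u$ and $z$, and the interior vertices $r_2,\dots,r_5$ are degree-$2$ bends whose rotations sum to $0$ (trivially for a straight subdivision, and for the zig-zag because the right turn entering $r_2r_3$ is cancelled by the left turn leaving it); hence the face rotations of $\Gamma_R$ are inherited from $\Gamma_0$, in particular $\rot(R)=\rot(\reverse R)=0$.

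Now suppose, for contradiction, that $C$ is a strictly monotone (hence essential) cycle in $\Gamma_R$. Since $r_2,\dots,r_5$ have degree $2$, either $C$ avoids the interior of $R$ entirely, in which case $C$ is a cycle of $\Gamma_0-uz\subseteq\Gamma_0$; its labels are unchanged because a reference path to any edge of $C$ can be routed around $uz$ (and $\Gamma_0-uz$ is still connected), so $C$ would be strictly monotone in $\Gamma_0$, a contradiction. Otherwise $C$ traverses all of $R$ in one of its two directions, so $C=R^{(\pm)}\join Q$ for a simple path $Q$ between $u$ and $z$ in $G$ that avoids $\{u,z\}$ internally. Let $C_0$ be the cycle obtained from $C$ by replacing $R^{(\pm)}$ with the single edge between $u$ and $z$, taken in the matching direction; then $C_0$ is a simple cycle of $\Gamma_0$. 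Because $\rot(R)=\rot(\reverse R)=0$ and the turns of $C$ at $u$ and $z$ coincide with those of $C_0$ (the replacement preserves the rotation system at $u$ and $z$ and the direction of the incident edge), we get $\rot(C)=\rot(C_0)$, so $C_0$ is essential by Lemma~\ref{lem:rotation-cycle}.

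The key step is to pin down the labels of $C$ in terms of those of $C_0$. By Lemma~\ref{lem:same-labels-except-from-edge}, $\ell_C(e)=\ell_{C_0}(e)$ for every edge $e$ on $Q$. For the edges of $R$: a reference path ending at $u$ (respectively $z$) that avoids $uz$ and the interior of $R$ respects both $C$ and $C_0$, so the edge of $R$ incident to $u$ (respectively $z$) carries the label $\ell_{C_0}(uz)$ (respectively $\ell_{C_0}(zu)$), and Observation~\ref{obs:repr:label_difference} then determines the remaining labels on $R$ from the rotations along $R$. If $R$ is a straight subdivision, every edge of $R$ carries that common label, so the label set of $C$ equals that of $C_0$; if $R$ contains the downward edge $r_2r_3$, the two compensating turns at $r_2$ and $r_3$ introduce exactly one additional label value, one larger than $\ell_{C_0}(uz)$ (respectively $\ell_{C_0}(zu)$). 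In all cases the label set of $C$ \emph{contains} the label set of $C_0$.

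Finally I would conclude. Since $C_0$ is essential and $\Gamma_0$ is valid, $C_0$ is not strictly monotone, so either $C_0$ is horizontal or $C_0$ has a strictly positive and a strictly negative label. A horizontal cycle has all edges pointing right (label congruent to $0$), so if $C$ runs through $R$ against the direction $r_1\to r_6$, then $C_0$ contains the leftward edge $zu$ and cannot be horizontal; and if $C$ runs through $R$ in the direction $r_1\to r_6$, then by construction the bend $r_2r_3$ is present, which is precisely the case in which $uz$ lies on no horizontal cycle, so $C_0\ni uz$ is again not horizontal. Hence $C_0$ has labels of both signs, and therefore so does $C$ (its label set contains that of $C_0$), contradicting that $C$ is monotone. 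Thus $\Gamma_R$ contains no strictly monotone cycle, i.e., $\Gamma_R$ is valid. I expect the main obstacle to be exactly the middle step: correctly computing the five labels along $R$ on a cycle that uses it, and checking that the degenerate sub-case ``$C_0$ horizontal'' is ruled out precisely in the regime where it could otherwise cause trouble — which is the reason the construction bends $r_2r_3$ downward exactly when $uz$ lies on no horizontal cycle.
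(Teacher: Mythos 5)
Your overall plan is the right one and most of the work is sound, but the final case distinction contains a genuine error. You assert that ``if $C$ runs through $R$ in the direction $r_1\to r_6$, then by construction the bend $r_2r_3$ is present.'' That does not follow: whether $r_2r_3$ is bent downward is fixed once and for all when $\Gamma_R$ is constructed, based solely on whether $uz$ lies on a horizontal cycle of $\Gamma_0$, independently of how any later cycle $C$ happens to traverse $R$. So the subcase ``$C$ traverses $R$ forward and $R$ is a straight subdivision'' is left unhandled, and in that subcase $C_0$ can perfectly well be horizontal (indeed, precisely the horizontal cycle through $uz$ whose existence caused $R$ to be left straight). Your horizontality argument therefore does not close the proof.

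The fix is easy and already latent in your middle paragraph: when $R$ is a straight subdivision you observed that the label set of $C$ \emph{equals} that of $C_0$, so $C$ is strictly monotone if and only if $C_0$ is, and the latter is excluded by the validity of $\Gamma_0$ --- this handles both traversal directions at once with no appeal to horizontality. When $R$ is bent, the construction guarantees that $uz$ lies on no horizontal cycle of $\Gamma_0$, so $C_0$ (which contains $uz$ or $zu$) is not horizontal, and your ``both signs transfer to $C$'' argument goes through. For comparison, the paper disposes of the straight case even more directly: if $r_2r_3$ points right then $R$ lies on a horizontal cycle of $\Gamma_R$, and Proposition~\ref{prop:horizontal_cycle} immediately rules out any strictly monotone essential cycle meeting $R$, without ever forming $C_0$; it builds $C_0$ only in the bent case. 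Either repair makes your proof correct.
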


\begin{proof}
  Let $C$ be an essential cycle in $\Gamma_R$. If $C$ does not contain $R$, it
  is contained in $\Gamma_0$. Since $\Gamma_0$ is valid, $C$ is not strictly monotone.
  It remains to consider the case that $C$ uses $R$.

  If the edge $r_2r_3$ of $R$ points to the right, $R$ lies on a
  horizontal cycle by construction. By
  Proposition~\ref{prop:horizontal_cycle} $C$ is not strictly
  monotone.  If $r_2r_3$ points downwards, the cycle $C$ corresponds
  to an essential cycle~$C'=C[z,u]+uz$ in $\Gamma_0$. By
  Lemma~\ref{lem:same-labels-except-from-edge} the labels of $C$ and
  $C'$ coincide on $C[z,u]$. By construction $C'$ is not
  horizontal. Hence, it contains edges $e$ and $e'$ with
  $\ell_{C'}(e)=1$ and $\ell_{C'}(e') = -1$.  Since $uz$ is a
  horizontal edge, we have $e\neq uz \neq e'$. Therefore,
  $\ell_{C}(e) = 1$ and $\ell_{C}(e') = -1$, that is, $C$ is not
  strictly monotone.
\end{proof}

Next, we prove that $\Gamma_T$ and $\Gamma_B$ are valid. To that end,
we introduce the following definition. A \emph{cascading cycle} is a
non-monotone essential cycle that can be partitioned into two paths
$P$ and $Q$ such that the labels on $P$ are $-1$ and the labels on $Q$
are non-negative.  We further require that the edges incident to the
internal vertices of $P$ either all lie in the interior of $C$ or they
all lie in the exterior of $C$. In the first case we call $C$ an
\emph{outer} cascading cycle and in the second case an \emph{inner}
cascading cycle. The path $P$ is the \emph{negative path} of the
cycle.  We first prove the following two general lemmata on cascading
cycles. The first lemma shows that a cascading cycle cannot be
\emph{crossed} by an increasing cycle.

\begin{lemma}\label{lem:cascading_and_increasing_cycles}
  Let $C_1$ be a cascading cycle and $C_2$ an increasing cycle. Either
  $C_1$ lies in the interior of $C_2$ or vice versa.
\end{lemma}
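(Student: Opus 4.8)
The plan is to mimic the proof of Lemma~\ref{lem:increasing_decreasing_disjoint} closely, replacing the decreasing cycle there by a cascading cycle. Assume for contradiction that $C_1$ and $C_2$ cross, i.e.\ neither lies in the interior of the other. Then the subgraph $H = C_1 + C_2$ has a well-defined central face $f$, and since the cycles cross there is a vertex shared by $C_1$ and $C_2$ that is incident to $f$. The key structural fact about a cascading cycle that I would exploit is that on its negative path $P$ the labels are $-1$ (so $\le 0$), while on $Q$ the labels are non-negative; hence along all of $C_1$ the labels are of \emph{one sign pattern} compatible with the hypotheses of Lemma~\ref{lem:repr:illegal_intersection}. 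In particular, at any vertex $v$ of $C_1$, the incoming label is either $-1$ (if $v$ is an internal vertex of $P$) or $\ge 0$ (otherwise), with the boundary vertices of $P$ being the delicate cases.

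Concretely, I would take a maximal common subpath of $C_1$ and $C_2$ on $f$, with start vertex $v$ and end vertex $w$ (possibly $v=w$), and apply Lemma~\ref{lem:repr:illegal_intersection} at both ends. At $v$: since $C_2$ is increasing, $\ell_{C_2}$ of the edge of $C_2$ ending at $v$ is $\le 0$; and the edge of $C_1$ ending at $v$ has label $\ge 0$ \emph{unless} $v$ is an internal vertex of the negative path $P$ of $C_1$, where the label is $-1$. This is exactly where the extra structural condition on cascading cycles — that the edges incident to internal vertices of $P$ all lie on the same side of $C_1$ — has to be used: for an inner (resp.\ outer) cascading cycle the edge of $C_2$ at $v$ would be forced to lie in the interior (resp.\ exterior) of $C_1$, and maximality of the common path forces strictness, so $C_2$ would have to cross back. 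Running the analogous argument at $w$ (using part~\ref{lem:repr:illegal_intersection-out} of Lemma~\ref{lem:repr:illegal_intersection}, with the outgoing edges, and the fact that $C_2$ increasing means outgoing labels on $C_2$ are $\le 0$), one concludes that $C_2$ must re-enter $C_1$'s interior at the first subsequent intersection $x$ on $f$, and then the edge of $C_2$ ending at $x$ lies strictly in the interior of $C_1$ while the corresponding edge of $C_1$ has a non-negative label — contradicting Lemma~\ref{lem:repr:illegal_intersection} once more.

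The main obstacle I anticipate is the careful bookkeeping around the two endpoints $a$ and $b$ of the negative path $P$ of the cascading cycle: at $a$ the label jumps from $-1$ on $P$ to possibly $0$ or $+1$ on $Q$, and at $b$ symmetrically, so these vertices do not cleanly satisfy either hypothesis of Lemma~\ref{lem:repr:illegal_intersection}. I would handle this by observing that, since the intersection vertices with $C_2$ that matter are incident to the central face $f$ of $H$, and since the ``all incident edges of internal vertices of $P$ on one side'' condition pins down which side of $C_1$ the path $P$ bulges toward, the crossing point $x$ can be chosen to avoid being one of $a$, $b$ — or, if it cannot, the bulge direction of $P$ itself already contradicts $C_2$ being able to separate $a$ from $b$ the required way. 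An alternative, cleaner route for the obstacle: reduce to Lemma~\ref{lem:increasing_decreasing_disjoint} by replacing the portion of $C_1$ along $P$ using Lemma~\ref{lem:same-labels-except-from-edge} or Lemma~\ref{lem:reroute-face}, turning the cascading cycle into an actual decreasing (or horizontal) cycle on the relevant common subpath without changing labels there, and then invoke Lemma~\ref{lem:increasing_decreasing_disjoint} or Proposition~\ref{prop:horizontal_cycle} directly. I would try the direct argument first and fall back on this reduction if the endpoint analysis becomes unwieldy.
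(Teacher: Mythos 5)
The direct ``crossing'' argument you are modelling on Lemma~\ref{lem:increasing_decreasing_disjoint} has a genuine gap precisely at the point you flag, and your two proposed workarounds do not close it. For an outer cascading cycle $C_1$, the definition constrains only the edges incident to \emph{internal} vertices of the negative path $P$; at the two \emph{endpoints} of $P$ the incoming $C_1$-edge has label $-1$ while the structural ``all on one side'' condition gives you nothing. So if a maximal shared segment of $C_1$ and $C_2$ on the central face of $C_1+C_2$ begins or ends at such an endpoint, you can apply neither Lemma~\ref{lem:repr:illegal_intersection} (hypothesis $\ell_{C_1}\ge 0$ fails) nor the cascading side-condition. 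Your fallback of ``rerouting $P$ via Lemma~\ref{lem:same-labels-except-from-edge} or Lemma~\ref{lem:reroute-face} to turn $C_1$ into a decreasing or horizontal cycle'' is not a well-defined operation: Lemma~\ref{lem:same-labels-except-from-edge} replaces a single edge with new degree-2 vertices and Lemma~\ref{lem:reroute-face} requires a regular face adjacent to the cycle; neither manufactures an essential cycle that agrees with $C_1$ on $Q$ yet has all labels $\ge 0$, so nothing reduces cleanly to Lemma~\ref{lem:increasing_decreasing_disjoint} or Proposition~\ref{prop:horizontal_cycle}.

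The paper's proof is structurally different and avoids this problem. It does not chase a first re-entry crossing at all. Instead it fixes, on the central face $g$ of $H=C_1+C_2$, vertices $v,v'$ bounding a shared segment $g[v,v']$ (the edge before $v$ on $g$ not in $C_2$, the edge after $v'$ on $g$ not in $C_1$), and uses Lemma~\ref{lem:repr:equal_labels_at_intersection} together with the fact that the off-$g$ incoming edges lie strictly outside $g$ to obtain \emph{exact} equalities, namely $\ell_{C_1}(u_1v)=-1$, $\ell_{C_2}(u_2v)=0$ and the analogous pair at $v'$. The outer-cascading side-condition then pins $v$ as the endpoint of $P$ while the analysis at $v'$ shows $v'$ is an internal vertex of $P$; hence $v\neq v'$ and the entire non-negative part $Q$ of $C_1$ is contained in $g[v,v']$, which is shared with $C_2$. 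Since $Q$ carries an edge of strictly positive label and, by Corollary~\ref{cor:central-face-same-label}, that label is identical on $C_2$, this contradicts $C_2$ being increasing. So the contradiction is not ``$C_2$ ends up on the wrong side at a re-crossing'' but ``$C_2$ is forced to share a strictly positive-label edge with $C_1$.'' You should adopt that equality-plus-containment line of reasoning; the inequality-only crossing argument cannot be pushed past the endpoints of~$P$.
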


  \begin{figure}
    \centering
    \includegraphics{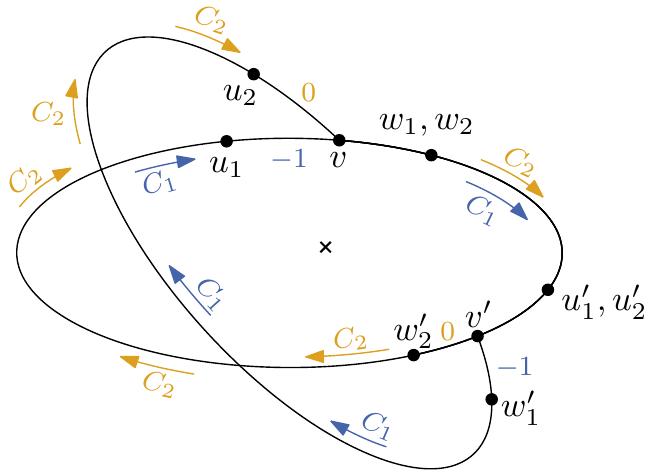}
    \caption{Illustration of proof for
      Lemma~\ref{lem:cascading_and_increasing_cycles}. It is assumed
      that the central face $g$ is neither the outer cascading cycle
      $C_1$ nor the increasing cycle $C_1$. Further, $C_1$ and $C_2$
      have the edges $g[v,v']$ in common.  It is proven that any edge of $C_1[v,v']$ 
      has label $0$ and any edge of $C_1[v',v]$ has label
      $-1$, which contradicts that $C_1$ is a cascading cycle.  }
    \label{fig:lem:cascading-increasing-cycles}
  \end{figure}

\begin{proof}
  We assume without loss of generality that $C$ is an outer cascading cycle. 
  The 
  case that it is an inner cascading cycle can be handled by flipping the 
  cylinder, which exchanges the exterior and interior of essential cycles but 
  keeps the labels.

  Let $g$ be the central face of the subgraph formed by the cycles
  $C_1$ and $C_2$. Assume that $g$ is neither $C_1$ nor $C_2$. Hence,
  there exist (not necessarily distinct) vertices $v$ and $v'$ on $g$
  such that $g[v,v']$ belongs to both $C_1$ and $C_2$, the edge that
  precedes $v$ on $g$ does not belong to $C_2$, and the edge that
  succeeds $v'$ on $g$ does not belong to $C_1$; see Figure~\ref{fig:lem:cascading-increasing-cycles}.
  
  For $i=1,2$ let $u_i$ and $w_i$ be the vertices of $C_i$ that
  precede and succeed $v$, respectively. Further, let $u'_i$ and
  $w'_i$ be the vertices of $C_i$ that precede and succeed $v'$,
  respectively. Since $u_2v$ strictly lies in the exterior of $g$, we
  obtain $\rot(u_2vw_1) < \rot(u_1vw_1)$. Hence, Lemma~\ref{lem:repr:equal_labels_at_intersection} gives
  $\ell_{C_1}(u_1v)<\ell_{C_2}(u_2v)$. Further, by definition of $C_1$
  and $C_2$ it holds $\ell_{C_1}(u_1v)\geq -1$ and
  $\ell_{C_2}(u_2v)\leq 0$. Thus, we obtain $\ell_{C_1}(u_1v)=-1$ and
  $\ell_{C_2}(u_2v)=0$. Since $C_1$ is an outer cascading cycle this
  also implies that $v$ is the endpoint of the negative path $P$ of
  $C_1$.

  Since $v'w'_1$ strictly lies in the exterior of $g$, we obtain
  $\rot(u'_2v'w'_2)>\rot(u'_2v'w'_1)$. Further, by the definition of
  labels we obtain
  $\ell_{C_2}(u'_2v')=\ell_{C_2}(v'w'_2)-\rot(u'_2v'w'_2)$. Applying
  this in Lemma~\ref{lem:repr:equal_labels_at_intersection} gives
  $\ell_{C_1}(v'w'_1)=\ell_{C_2}(v'w'_2)-\rot(u'_2v'w'_2)+\rot(u'_2v'w'_1)$. Thus,
  we obtain $\ell_{C_1}(v'w'_1) < \ell_{C_2}(v'w'_2)$.  Analogously to
  the arguments above, $\ell_{C_2}(v'w'_2) = 0$ and
  $\ell_{C_1}(v'w'_1)=-1$. Thus, $v'$ lies on $P$ and it is not the
  endpoint of $P$. It follows $v\neq v'$ and $C_1-P$ is contained in
  $g[v,v']$. Hence, there is an edge $e$ on $g[v,v']$ with label
  $\ell_{C_1}(e)>0$. By Corollary~\ref{cor:central-face-same-label}
  $\ell_{C_2}(e)=\ell_{C_1}(e)$, which contradicts that $C_2$ is an
  increasing cycle. 
\end{proof}

We use the next lemma to show that the introduced paths $T$ and $B$ do
not impact the validity of the ortho-radial representation.

\begin{lemma}\label{lem:second-phase:no-decreasing-cycle-on-T}
  Let $C$ be a cascading cycle in an ortho-radial
  representation~$\Gamma$ and let $P$ be a sub-path of $C$ such that
  \begin{compactenum}
  \item the intermediate vertices of $P$ have degree 2 in the graph,
  \item $P$ contains the negative path of $C$, and
  \item $P$ contains an edge~$e$ with label $\ell_{C}(e)>0$. 
  \end{compactenum}
  If $\Gamma$ without $P$ is valid, then $\Gamma$ is valid. 
\end{lemma}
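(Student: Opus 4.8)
The plan is a proof by contradiction: assuming that $\Gamma$ contains a strictly monotone cycle $C'$, I would exhibit a strictly monotone cycle in $\Gamma$ without $P$, contradicting the hypothesis. Since the internal vertices of $P$ have degree $2$, any cycle using an edge of $P$ must traverse all of $P$; hence $C'$ contains $P$ or $\reverse P$, and after possibly replacing $\Gamma$ by $\flip{\Gamma}$ (which by Lemma~\ref{lem:flip_label} sends the cascading cycle $C$ to a cascading cycle with the same negative path, keeps the increasing/decreasing type of every cycle, and merely reverses cycle directions) I may assume $C'$ contains $P$ in the same direction as $C$; the opposite case is symmetric. Write $C = P + Q$ and $C' = P + Q'$ as directed cycles. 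The elementary observation driving everything is that $\ell_C(f)\ge 0$ for every edge $f$ of $Q$, since by hypothesis~(ii) the only negatively labeled edges of $C$ lie on the negative path, which is contained in $P$.

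The heart of the argument is to reroute $C'$ along the essential cycle $C$: replace the subpath $P$ of $C'$ by $\reverse Q$, obtaining the closed walk $\hat C := \reverse Q + Q'$, which avoids the internal vertices of $P$ and hence lives in $\Gamma$ without $P$. As in the proof of Lemma~\ref{lem:flip_label} one may assume $\hat C$ is simple (cutting the cylinder along $C$ if necessary), and the case that $\hat C$ is non-essential is excluded separately: such a cycle has rotation $4$, hence carries both an up- and a down-edge, which is incompatible with the label configuration obtained below. Because $\hat C$ and $C'$ differ only along the essential cycle $C$, whose rotation is $0$, the direction machinery transfers labels across: a reference path certifying $\ell_{C'}(f)$ for an edge $f$ of $Q'$ simultaneously respects $\hat C$, so $\ell_{\hat C}$ coincides with $\ell_{C'}$ on $Q'$ (Corollary~\ref{cor:measure-direction}); and, using Observations~\ref{obs:rot_splitting_path} and~\ref{obs:repr:label_difference} together with Lemma~\ref{lem:direction_cycle}, $\ell_{\hat C}$ equals $-\ell_C$ on $\reverse Q$.

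I then split on the type of $C'$. If $C'$ is increasing, its labels on $Q'$ are non-positive, and since $\ell_C\ge 0$ on $Q$ the labels of $\hat C$ on $\reverse Q$ are non-positive too, so $\hat C$ is monotone with only non-positive labels. If some label of $\hat C$ is strictly negative, then $\hat C$ is an increasing cycle of $\Gamma$ without $P$, a contradiction; if all labels of $\hat C$ vanish, then $\hat C$ is horizontal and shares the subpath $Q'$ with $C'$, so by Proposition~\ref{prop:horizontal_cycle} $C'$ would not be strictly monotone, again a contradiction. The decreasing case is the delicate one: here I would first exploit that the edges of the negative path point up (label $\equiv 3 \pmod 4$) to show that the additive shift relating $\ell_C$ and $\ell_{C'}$ on $P$ is divisible by $4$, hence at least $4$; this forces the labels of $C'$ on $P$ --- and in particular on the positively labeled edge provided by hypothesis~(iii) --- to be so large that the analogous rerouting (or an appeal to Lemma~\ref{lem:cascading_and_increasing_cycles} after suitable transformations, to conclude that $C$ and $C'$ are nested) again produces a strictly monotone cycle in $\Gamma$ without $P$.

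I expect the main obstacle to be the precise bookkeeping of the label transfer across $C$ --- in particular confirming that no stray additive constant appears, so that $\hat C$ is provably strictly monotone (or horizontal but meeting $C'$), rather than merely monotone --- together with the decreasing case; this is presumably why the paper defers the proof as ``rather technical''.
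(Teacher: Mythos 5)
The rerouting construction in your proof has a fatal flaw. You set $\hat C := \reverse Q + Q'$ and claim that ``$\ell_{\hat C}$ equals $-\ell_C$ on $\reverse Q$.'' That identity cannot hold. Reversing an edge changes its geometric direction by $2 \pmod 4$, so for any simple essential cycle $D$ through $\reverse e$ one has $\ell_{D}(\reverse e)\equiv \ell_C(e)+2 \pmod 4$; this is incompatible with $\ell_{D}(\reverse e)=-\ell_C(e)$ whenever $\ell_C(e)$ is even (e.g.\ for a horizontal edge with $\ell_C(e)=0$ you would need $\ell_{\hat C}(\reverse e)=0$, but it must be $\equiv 2 \pmod 4$). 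So the whole bookkeeping that follows --- ``$\hat C$ is monotone with only non-positive labels'' in the increasing case, and the ``additive shift divisible by $4$'' in the decreasing case --- collapses. There is also a secondary problem: you only argue that a reference path respecting $C'$ also respects $\hat C$, which is not automatic ($\hat C$ and $C'$ bound different regions), and you have no control over whether $\hat C$ is simple or even essential; the remark that a non-essential $\hat C$ has rotation $4$ applies only to simple cycles, whereas $\reverse Q + Q'$ is in general a closed walk.

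The paper's proof avoids this entirely by never reversing any edges: it forms $H=C+C'$, takes the simple cycle $C_g$ bounding the \emph{central} face of $H$ --- which consists of edges of $C$ and $C'$ in their \emph{forward} directions --- and then applies Corollary~\ref{cor:central-face-same-label} (via Lemma~\ref{lem:repr:equal_labels_at_intersection}) to transfer labels with no sign change. For the decreasing case it splits on whether $P\subseteq g$ or not; if not, $C_g$ is a decreasing cycle living in $\Gamma$ without $P$; if so, the negative path forces $\ell_{C'}=-1$ there. For the increasing case it first pins down nesting via Lemma~\ref{lem:cascading_and_increasing_cycles} and then uses the shared-label argument on the positively-labelled edge in $P$. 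To repair your argument you would have to replace $\hat C$ by $C_g$ (forward edges only) and invoke central-face label equality instead of any $-\ell_C$ identity; at that point you essentially recover the paper's case analysis.
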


\begin{proof}
  First assume that $\Gamma$ contains a decreasing cycle $C'$, which
  implies that $P$ is contained in $C'$ (in either direction).  Let
  $H=C+C'$ be the common sub-graph of $C$ and $C'$, and let $g$ be the
  central face of $H$. We distinguish the following two cases.

  \textit{Case 1, $P$ is part of $g$.}  First assume that $C$ and $C'$
  use $P$ in opposite directions. Since the central face locally lies
  to the right of any essential cycle, this implies that the central
  face lies to the left and right of $P$. Consequently, the central
  face is not simple, which contradicts that $H$ is biconnected. So
  assume that $C$ and $C'$ use $P$ in the same direction. By
  Proposition~\ref{lem:repr:equal_labels_at_intersection} it holds
  $\ell_{C}(e')=\ell_{C'}(e')=-1$ for any edge $e'$ on the negative
  path of $C$. Thus, $C'$ is not a decreasing cycle.

  \textit{Case 2, $P$ is not part of $g$.} Let $C_g$ be the essential
  cycle formed by $g$. Since $C_g$ consists of edges of $C'$ and $C$,
  the corresponding labels of $C'$ and $C$ also apply on $C_g$ by
  Proposition~\ref{lem:repr:equal_labels_at_intersection}. Further,
  since $C'$ is a decreasing cycle and $P$ is the only part of $C$
  that has a negative label on $C$, the cycle $C_g$
  only has non-negative labels. Since $P$ does not lie on $C_g$ but on
  $C$, $C'$ has at least one vertex with $C_g$ in common. By
  Proposition~\ref{prop:horizontal_cycle} the cycle $C_g$ is not
  horizontal. Altogether, $C_g$ is a decreasing cycle that also exists
  in $\Gamma-P$, which contradicts its validity.

  Finally, assume that $\Gamma$ contains an increasing cycle $C'$,
  which implies that $P$ is contained in $C'$ in either direction.
  Lemma~\ref{lem:cascading_and_increasing_cycles} implies that the
  central face $g$ of the subgraph formed by the two essential cycles
  $C$ and $C'$ is either $C$ or $C'$.  In particular, $P$ or
  $\reverse{P}$ lies on $g$. Hence, both $C$ and $C'$ use $P$ in the
  same direction as otherwise $g$ would lie in the exterior of one of
  these cycles. But this would contradict that they are essential.
  Hence, they both contain $P$ in this direction and $P$ also lies on
  $g$.  By Proposition~\ref{lem:repr:equal_labels_at_intersection}
  both cycles have the same labels on $P$. Since $P$ contains the edge
  $e$ with $\ell_{C}(e)>0$ and thus $\ell_{C'}(e)>0$, the cycle $C'$
  is not an increasing cycle.  
\end{proof}

We construct a cascading cycle $C_\T$ in $\Gamma_T$ as follows. Let
$C$ be the outermost decreasing cycle in $\Gamma^{u}_{e_{1}}$ and
let $ut_5$ be the newly inserted edge in $\Gamma^{u}_{e_{1}}$. We
replace $ut_5$ by $R[u,r_4]+T$ obtaining the cycle $C_\T$, which is
well-defined because $C$ uses $ut_5$ in that direction by
Lemma~\ref{lem:label_first_candidate}.

\begin{lemma}\label{lem:second-phase:cascading-cycles-CT}
  $C_\T$ is a cascading cycle with negative path $t_1t_2$ no matter
  whether $r_2r_3$ points to the right or downwards.
\end{lemma}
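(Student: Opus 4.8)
The plan is to verify the three requirements in the definition of a cascading cycle for $C_\T$: that it is essential, that it is non-monotone, and that its edges split into the negative path $t_1t_2$ (on which the label is constantly $-1$) and a complementary path $Q$ on which all labels are non-negative. Note that once the negative path is the single edge $t_1t_2$, the extra condition on ``edges incident to internal vertices of the negative path'' is vacuous, so $C_\T$ qualifies as a cascading cycle regardless. Two preliminary facts are needed. First, since Step~2 is executed we have $m\ge 4$, so $\Gamma^u_{e_1}$ is not valid; by Fact~\ref{prop:augmentation-properties:fact:monotone-cycle} of Proposition~\ref{prop:augmentation-properties} it has no increasing cycle, hence it has a decreasing cycle, and $C$ is its outermost decreasing cycle (unique by Lemma~\ref{lem:outermost_decreasing_cycle}). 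Second, by Lemma~\ref{lem:label_first_candidate} the cycle $C$ uses $ut_5$ in the direction from $u$ to $t_5$ and $\ell_C(ut_5)=0$; in particular every edge of $C[t_5,u]$ has non-negative label, and at least one edge of $C$ (necessarily on $C[t_5,u]$, since $ut_5$ has label $0$) has strictly positive label.

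Next I would show that $C_\T$ is essential and pin down its labeling. For essentiality: $C_\T$ is obtained from the essential cycle $C$ by rerouting the single edge $ut_5$ through the path $R[u,r_4]+T$, whose internal vertices are newly inserted and which runs inside the regular face $f$ (which is neither the central nor the outer face), so the central face remains in the interior and $C_\T$ is essential. For the labels of $C_\T$ on its common part $C[t_5,u]=C_\T[t_5,u]$ with $C$, I would apply Lemma~\ref{lem:same-labels-except-from-edge} in $\Gamma^u_{e_1}$, replacing $ut_5$ by $R[u,r_4]+T$; this gives $\ell_{C_\T}(e)=\ell_C(e)\ge 0$ for every edge $e$ of $C[t_5,u]$, with at least one such label strictly positive. (Here one uses that the rerouted cycle is essential, as just established, and that the additional edges of $\Gamma_T$ not present in $\Gamma^u_{e_1}$ --- the remainder of $R$ and the path $B$ --- are irrelevant, because the labeling of $C_\T$ does not depend on the chosen reference path and a reference path avoiding those edges always exists.)

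It then remains to compute the labels along the new path $R[u,r_4]+T = u\,r_2\,r_3\,r_4\,t_2\,t_3\,t_4\,t_5$ (with $r_4=t_1$). Since $ur_2$ leaves $u$ at the same place in the rotation system as $ut_5$ did in $\Gamma^u_{e_1}$ (both being placed at $90\degree$ to the edge of $f$ following $u$), the rotation at $u$ from the edge of $C$ entering $u$ to $ur_2$ equals the one to $ut_5$; combined with the previous paragraph this gives $\ell_{C_\T}(ur_2)=\ell_C(ut_5)=0$. Walking along the path and using Observation~\ref{obs:repr:label_difference} together with the prescribed directions of its edges ($ur_2$ right; $r_2r_3$ right or down; $r_3r_4$ right; $t_1t_2$ up; $t_2t_3$ right; $t_3t_4$ down; $t_4t_5$ right), the seven labels come out as $0$; $0$ or $1$; $0$; $-1$; $0$; $1$; $0$. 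Crucially these values are the same whether $r_2r_3$ points right or downwards: in the downward case the extra right turn at $r_2$ is cancelled by the extra left turn at $r_3$, so $r_3r_4$ is back to label $0$. Hence the only edge of $C_\T$ with a negative label is $t_1t_2$ (label $-1$), every other edge has a non-negative label, and at least one edge (e.g.\ $t_3t_4$, or a positively-labeled edge of $C[t_5,u]$) has a positive label. Therefore $C_\T$ is non-monotone, it decomposes into the negative path $t_1t_2$ and the complementary path $Q$ with non-negative labels, and (the negative path being a single edge) the internal-vertex condition holds vacuously; so $C_\T$ is a cascading cycle with negative path $t_1t_2$.

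The step I expect to be the main obstacle is the middle one: making the transfer of labels from $C$ in $\Gamma^u_{e_1}$ to $C_\T$ in the genuinely different representation $\Gamma_T$ fully rigorous --- checking that the rerouted cycle is essential and that the extra structure of $\Gamma_T$ (the subdivision of $e_m$, the remainder of $R$, and $B$) does not change the labels of the edges shared with $C$. By contrast the rotation bookkeeping along the zig-zag, including the case distinction for the direction of $r_2r_3$, is routine once the anchor $\ell_{C_\T}(ur_2)=0$ is in place.
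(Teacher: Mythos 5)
Your proof is correct and follows essentially the same route as the paper's: anchor $\ell_{C_\T}(ur_2)=0$ via Lemma~\ref{lem:label_first_candidate}, transfer the labels of $C[t_5,u]$ via Lemma~\ref{lem:same-labels-except-from-edge}, and compute the label sequence $0;\,0\text{ or }1;\,0;\,-1;\,0;\,1;\,0$ along $R[u,r_4]+T$, which is exactly what the paper does (Figure~\ref{fig:second-phase:step1} and the displayed sequences). You add some scrutiny the paper leaves implicit --- essentiality of $C_\T$, and the fact that the extra vertices/edges of $\Gamma_T$ that are not present in the rerouted $\Gamma^u_{e_1}$ do not affect the transferred labels --- but these are clarifications of the same argument, not a different one (minor slip: $\Gamma_T$ does not yet contain $B$, only $R$ and $T$, though this does not affect your reasoning).
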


\begin{proof}
  Let $C$ be the outermost decreasing cycle in
  $\Gamma^{u}_{e_{1}}$. By
  Lemma~\ref{lem:same-labels-except-from-edge} the labels of $C$ and
  $C_\T$ coincide on $C[t_5,u](=C_\T[t_5,u])$. Hence, since $C$ is
  a decreasing cycle all labels on $C_\T[t_5,u]$ are non-negative. Further, by
  Lemma~\ref{lem:label_first_candidate} the edge $ut_5$ has label $0$
  on $C$. If $r_2r_3$ points to the right the sequence of the labels
  on $R[u,r_4]+T$ is therefore $0$, $0$, $0$, $-1$, $0$, $1$, $0$. If
  $r_2r_3$ points downwards the sequence is $0$, $1$, $0$, $-1$, $0$,
  $1$, $0$. In both cases $C_\T$ is a cascading cycle.
\end{proof}

Applying Lemma~\ref{lem:cascading_and_increasing_cycles} to the
situation of $C_\T$ proves that $\Gamma_T$ does not contain any
increasing cycles. Together with
Lemma~\ref{lem:second-phase:no-decreasing-cycle-on-T} this yields that
$\Gamma_T$ is valid.  We analogously prove the validity of $\Gamma_B$
as for $\Gamma_T$.  Let $C$ be the outermost decreasing cycle 
in $\Gamma^{u}_{e_{m-1}}$ and let $ub_5$ be the newly inserted edge in
$\Gamma^{u}_{e_{m-1}}$. We replace $ub_5$ by $R[u,r_5]+B$ obtaining
the cycle $C_\B$, which is well-defined because $C$ uses $ub_5$ in
that direction by Lemma~\ref{lem:label_first_candidate}.

\begin{lemma}\label{lem:second-phase:cascading-cycles-CB}
  $C_\B$ is a cascading cycles no matter whether $r_2r_3$ points to
  the right or downwards. In particular, $b_1b_2$ is the negative
  path of $C_\B$.
\end{lemma}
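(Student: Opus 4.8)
**The plan is to mirror the proof of Lemma~\ref{lem:second-phase:cascading-cycles-CT}, using the flip symmetry to transfer the argument about the top path $T$ to the bottom path $B$.** Recall that $C_\B$ is obtained from the outermost decreasing cycle $C$ in $\Gamma^u_{e_{m-1}}$ by replacing the inserted edge $ub_5$ with the path $R[u,r_5]+B$; this substitution is well-defined because Lemma~\ref{lem:label_first_candidate} guarantees that $C$ traverses $ub_5$ in the direction from $u$ to $b_5$ (and has label $0$ there). I would first invoke Lemma~\ref{lem:same-labels-except-from-edge} to conclude that the labels of $C$ and $C_\B$ agree on the common subpath $C[b_5,u]=C_\B[b_5,u]$, so all labels on that subpath are non-negative since $C$ is decreasing. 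It then remains to compute the labels along the inserted segment $R[u,r_5]+B = r_1r_2r_3r_4r_5+b_1b_2b_3b_4b_5$ (with $r_5=b_1$), starting from the known label $\ell_C(ub_5)=0$ and walking backwards along the path using Observation~\ref{obs:repr:label_difference}.

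**The key computation is a case distinction on the direction of $r_2r_3$.** The edges $r_1r_2$, $r_3r_4$, $r_4r_5$ all point right (they are subdivisions of $uz$), as do $b_2b_3$, $b_4b_5$; the edge $b_1b_2$ points up and $b_3b_4$ points down by construction. Since $b_5=z$-side endpoint has its outgoing-into-$C$ edge at label $0$, and each right-pointing edge preserves the label mod~$4$ while consecutive turns shift it by the rotation, I would read off the label sequence along $b_1b_2b_3b_4b_5$: the up-turn at $b_1$ contributes $-1$, giving label $-1$ on $b_1b_2$, then back to $0$ on $b_2b_3$, then $+1$ on $b_3b_4$ via the down-turn, then back to $0$ on $b_4b_5$. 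Thus the label sequence on $B$ is $-1,0,1,0$. For $R[u,r_5]$: if $r_2r_3$ points right, all of $r_1r_2,\dots,r_4r_5$ have label $0$; if $r_2r_3$ points down, the sequence is $0,1,0,0$ (the down-turn at $r_2$ raising it to $1$, the up-turn at $r_3$ lowering it back). In either case, concatenating, all labels on $R[u,r_5]+B$ are non-negative except for the single edge $b_1b_2$ with label $-1$, and $R[u,r_5]+B$ contains the edge $b_3b_4$ with positive label $1$. Combined with the non-negative labels on $C_\B[b_5,u]$, this shows $C_\B$ is non-monotone and partitions into a negative path $b_1b_2$ (all labels $-1$) and a complementary path with non-negative labels.

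**The last ingredient is the interior/exterior condition in the definition of a cascading cycle.** I must check that the edges incident to the internal vertices of the negative path $P=b_1b_2$ all lie on the same side of $C_\B$. The only internal vertices of $P$ are $b_1(=r_5)$ and $b_2$; since $B$ is inserted into the face $f'$ that lies to the left of $B$, and since $b_1$ lies on $R$ which forms part of the boundary between $f_1/f_2$, the third edges incident to $b_1,b_2$ point into a fixed face, so they all lie in the interior (or all in the exterior) of $C_\B$ depending on the orientation; this is exactly analogous to the corresponding point in the proof of Lemma~\ref{lem:second-phase:cascading-cycles-CT}, which I would reference rather than re-derive in detail.

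**The main obstacle** I anticipate is carefully tracking the orientations — which turns are left and which are right — along $R$ and $B$ in both the ``$r_2r_3$ points right'' and ``$r_2r_3$ points down'' cases, and confirming that the resulting label pattern on $C_\B[b_5,u]$ is indeed non-negative (this rests entirely on Lemma~\ref{lem:label_first_candidate} giving $\ell_C(ub_5)=0$ together with $C$ being decreasing, so it should go through cleanly once set up). A secondary subtlety is making sure $b_1b_2$ is genuinely the \emph{whole} negative path and that no label on $C_\B$ elsewhere accidentally drops below $0$ — this follows because every other edge we examined has label in $\{0,1\}$ and the subpath $C_\B[b_5,u]$ inherits the decreasing-cycle labels of $C$, which are $\ge 0$.
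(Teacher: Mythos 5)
Your argument is substantively correct and mirrors the paper's own proof: use Lemma~\ref{lem:same-labels-except-from-edge} to transfer the non-negative labels from the outermost decreasing cycle $C$ onto $C_\B[b_5,u]$, anchor the inserted path at label $0$ on $ub_5$, and then read off the label sequence $0,0,0,0,-1,0,1,0$ (or $0,1,0,0,-1,0,1,0$ when $r_2r_3$ points down) along $R[u,r_5]+B$ by following the rotations. These sequences match the ones computed in the paper, and the conclusion that $C_\B$ is a cascading cycle with negative path $b_1b_2$ follows.

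One citation to fix: you invoke Lemma~\ref{lem:label_first_candidate} to get that $C$ traverses $ub_5$ in the given direction with label $0$; but $ub_5$ is the edge inserted for candidate $e_{m-1}$, which is \emph{not} the first candidate (recall Step~2 is only entered when $m\ge 4$). The correct reference here is Lemma~\ref{lem:label_any_candidate}, which extends the same conclusion to any non-first candidate once it is known that the first candidate yields a decreasing cycle — exactly the situation at $e_{m-1}$. (The paper's own proof for $C_\B$ cites Lemma~\ref{lem:label_any_candidate}; Lemma~\ref{lem:label_first_candidate} is only the right tool for the $C_\T$ case, where the inserted edge belongs to the genuine first candidate $e_1$.) Also, a small terminological slip: the negative path $P=b_1b_2$ is a single edge and so has no internal vertices (its endpoints $b_1$ and $b_2$ are not internal), which makes the third clause of the cascading-cycle definition vacuous here; you need not, and should not, argue anything at $b_1$ or $b_2$ for that clause.
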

\begin{proof}
  Let $C$ be the outermost decreasing cycle in
  $\Gamma^{u}_{e_{m-1}}$. By
  Lemma~\ref{lem:same-labels-except-from-edge} the labels of $C$ and
  $C_\B$ coincide on $C[b_5,u](=C_\B[b_5,u])$. Hence, since $C$ is
  a decreasing cycle all labels on $C_\B[b_5,u]$ are non-negative. Further, by
  Lemma~\ref{lem:label_any_candidate} the edge $ub_5$ has label $0$
  on $C$. If $r_2r_3$ points to the right the sequence of the labels
  on $R[u,r_4]+B$ is therefore $0$, $0$, $0$, $0$, $-1$, $0$, $1$, $0$. If
  $r_2r_3$ points downwards the sequence is $0$, $1$, $0$, $0$, $-1$, $0$,
  $1$, $0$. In both cases $C_\B$ is a cascading cycle.
\end{proof}

Applying Lemma~\ref{lem:cascading_and_increasing_cycles} to the
situation of $C_\T$ proves that $\Gamma_B$ does not contain any
increasing cycles. Together with
Lemma~\ref{lem:second-phase:no-decreasing-cycle-on-T} this yields that
$\Gamma_B$ is valid. The following lemma summarizes the result.

\begin{lemma}\label{lem:second-phase:no-monotone-cycle-on-B}
 The ortho-radial representation $\Gamma_B=\Gamma_1$ is valid.
\end{lemma}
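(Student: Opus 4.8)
The plan is to reuse, almost verbatim, the mechanism by which $\Gamma_T$ was shown to be valid, now with the path $B$ and the cascading cycle $C_\B$ playing the roles of $T$ and $C_\T$. The proof will consist of three moves: first record that $\Gamma_R$, and hence $\Gamma_T$, are valid; second, verify that $C_\B$ together with a suitable sub-path $P$ satisfies the hypotheses of Lemma~\ref{lem:second-phase:no-decreasing-cycle-on-T} in $\Gamma_B$; third, apply that lemma with input $(C_\B,P)$ to lift the validity of $\Gamma_T = \Gamma_B - P$ to $\Gamma_B$.

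For the second move I would take $P=B=b_1\dots b_5$. Lemma~\ref{lem:second-phase:cascading-cycles-CB} already gives that $C_\B$ is a cascading cycle of $\Gamma_B$ with negative path $b_1b_2$, and its label sequence along $R[u,r_5]+B$ (namely $0,0,0,0,-1,0,1,0$, or $0,1,0,0,-1,0,1,0$ if $r_2r_3$ points downwards) exhibits the edge $b_3b_4$ with $\ell_{C_\B}(b_3b_4)=1>0$. Hence $P$ contains the negative path of $C_\B$ and an edge of positive label, and its interior vertices $b_2,b_3,b_4$ are newly inserted, so they have degree two in the graph. Deleting $P=B$ from $\Gamma_B$ returns $\Gamma_T$, which is valid by the argument preceding this lemma (Lemma~\ref{lem:second-phase:no-monotone-cycle-on-R} gives validity of $\Gamma_R$, and Lemma~\ref{lem:second-phase:no-decreasing-cycle-on-T} applied to $C_\T$ and $T$ then gives validity of $\Gamma_T$). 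Lemma~\ref{lem:second-phase:no-decreasing-cycle-on-T} applied to $C_\B$ and $P=B$ now yields that $\Gamma_B=\Gamma_1$ is valid; recall that the proof of that lemma excludes decreasing cycles by the central-face label argument and increasing cycles via Lemma~\ref{lem:cascading_and_increasing_cycles}, so both cases are covered at once.

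The only step that needs genuine care is the claim that $C_\B$ really is a cycle of $\Gamma_B$ and really is cascading there, since $C_\B$ was defined inside $\Gamma^u_{e_{m-1}}$ rather than inside $\Gamma_B$. I would argue as follows: $C_\B$ arises from the outermost decreasing cycle $C$ of $\Gamma^u_{e_{m-1}}$ by replacing the edge $ub_5$ — which $C$ traverses in that direction with $\ell_C(ub_5)=0$ by Lemmas~\ref{lem:label_first_candidate} and~\ref{lem:label_any_candidate} — with the path $R[u,r_5]+B$. The complementary arc $C[b_5,u]$ lies in $\Gamma$ together with the subdivision of $e_{m-1}$, which is a subgraph of $\Gamma_B$, and $R[u,r_5]+B$ lies in $\Gamma_B$ as well, so $C_\B$ is a cycle of $\Gamma_B$; it is essential, and by Lemma~\ref{lem:same-labels-except-from-edge} its labels agree with those of $C$ on $C[b_5,u]$, which, combined with the prescribed directions of the edges of $R$ and $B$, forces exactly the label sequence above regardless of whether $r_2r_3$ points right or downwards. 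No further case distinction should be needed; the whole difficulty is the bookkeeping identity $\Gamma_B-B=\Gamma_T$ and matching the labels of $C_\B$ to the hypotheses of Lemma~\ref{lem:second-phase:no-decreasing-cycle-on-T}.
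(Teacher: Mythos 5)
Your proposal is correct and follows essentially the same route as the paper: show (Lemma~\ref{lem:second-phase:cascading-cycles-CB}) that $C_\B$ is a cascading cycle of $\Gamma_B$ with negative path $b_1b_2$, then apply Lemma~\ref{lem:second-phase:no-decreasing-cycle-on-T} with $P=B$ to lift the validity of $\Gamma_T$ to $\Gamma_B$. Your explicit verification of the three hypotheses and your note that the well-definedness of $C_\B$ rests on Lemma~\ref{lem:label_any_candidate} (rather than only Lemma~\ref{lem:label_first_candidate}, as the paper's surrounding text somewhat imprecisely states) are both accurate.
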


\paragraph{Correctness of Step 2.2.} By
Lemma~\ref{lem:second-phase:no-monotone-cycle-on-B} the ortho-radial
representation $\Gamma_1$ of \textit{Step 1} is valid. We now prove
that $\Gamma_2$ is a valid ortho-radial representation. We use the
same notation as in the description of the algorithm.

Starting with the valid ortho-radial representation~$\Gamma_1$, the
procedure iteratively resolves ports in the face $f_1$, which locally
lies to the right of $T$. In case that we resolve a vertical port $u'$
in a representation $\Pi$, the resulting ortho-radial
representation~$\Pi^{u'}_{e'_1}$ is valid by
Fact~\ref{prop:augmentation-properties:fact:vertical-port} of
Proposition~\ref{prop:augmentation-properties}, where $e'_1$ is the
first candidate of $u'$.  So assume that $u'$ is a horizontal port. In
that case we take $\Pi^{u'}_{e'_l}$ for the next iteration, where
$e'_l$ is the last candidate of $u'$. We observe that the augmentation
of $f_1$ may subdivide edges on the negative paths of $C_\T$ and
$C_\B$, but the added edges lie in the interior of $C_\T$ and the
exterior of $C_\B$. Hence, $C_\T$ remains an outer cascading cycle and
$C_\B$ an inner cascading cycle.

\begin{lemma}
 The ortho-radial representation $\Pi^{u'}_{e'_l}$ is valid. 
\end{lemma}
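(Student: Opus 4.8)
The plan is to prove that the augmentation $\Pi^{u'}_{e'_l}$ — obtained by inserting the edge $u'z'$ while resolving the horizontal port $u'$ of $f_1$ in the current valid representation $\Pi$ of Step~2.2 — is again valid. First I would note that $\Pi^{u'}_{e'_l}$ is an ortho-radial representation by Observation~\ref{obs:rect:augmentation_conditions_1_to_4}, so every strictly monotone cycle it could contain must use the new edge $u'z'$ (otherwise it would already be a strictly monotone cycle of $\Pi$). Since $u'$ is a horizontal port, $u'z'$ is horizontal, and after possibly passing to $\flip{\Pi}$ (well defined because $\Pi$ is valid), which turns $u'z'$ into a right-pointing edge, swaps the outer cascading cycle $C_\T$ with the inner cascading cycle $C_\B$, and preserves validity and the types of monotone cycles by Lemma~\ref{lem:flip_label}, I may assume $u'z'$ points to the right. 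Then Fact~\ref{prop:augmentation-properties:fact:monotone-cycle} of Proposition~\ref{prop:augmentation-properties}, applied to the regular face $f_1$, the horizontal port $u'$, and its last candidate $e'_l$, already rules out decreasing cycles. So the real work is to exclude increasing cycles.

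To that end I would argue by contradiction: let $C'$ be an increasing cycle, which then traverses $u'z'$, whose relative interior lies in the open face $f_1\subseteq\mathrm{int}(C_\T)\cap\mathrm{ext}(C_\B)$ by the invariant maintained in Step~2.2. Applying Lemma~\ref{lem:cascading_and_increasing_cycles} to $(C_\B,C')$ excludes $C'\subseteq\mathrm{int}(C_\B)$ (the relative interior of $u'z'$ lies in $\mathrm{ext}(C_\B)$), forcing $C_\B\subseteq\mathrm{int}(C')$; applying it to $(C_\T,C')$ excludes $C_\T\subseteq\mathrm{int}(C')$ (that would put $u'z'$ strictly inside $C'$, although $u'z'$ is one of its edges), forcing $C'\subseteq\mathrm{int}(C_\T)$. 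Hence $C_\B\subseteq\mathrm{int}(C')\subseteq\mathrm{int}(C_\T)$, and since $C'\neq C_\T$ the two cycles do not cross, so $C_\T\subseteq\mathrm{ext}(C')$. Now the subpath $R[r_1,r_4]$ of $R$ lies on both $C_\T$ (which uses $R[r_1,r_4]+T$) and $C_\B$ (which uses $R[r_1,r_5]+B$) and is untouched by everything done inside $f_1$; therefore each of its edges lies in $\overline{\mathrm{int}(C')}\cap\overline{\mathrm{ext}(C')}=C'$, and since $C'$ is simple with $C_\T$ nested outside it, $C'$ traverses $R[r_1,r_4]$ in the same direction as $C_\T$.

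Finally I would split into the two cases of the construction in Step~2.1. If $r_2r_3$ points right, then by construction $R$ lies on a horizontal cycle $H$, so $C'$ shares the vertex $r_2$ with $H$ and hence is not strictly monotone by Proposition~\ref{prop:horizontal_cycle} — a contradiction. If $r_2r_3$ points down, then $\ell_{C_\T}(r_2r_3)=1$ by Lemma~\ref{lem:second-phase:cascading-cycles-CT}; since $C'\subseteq\mathrm{int}(C_\T)$ the central face of $C'+C_\T$ is bounded by $C'$, so $r_2r_3$ is incident to it, and Corollary~\ref{cor:central-face-same-label} gives $\ell_{C'}(r_2r_3)=\ell_{C_\T}(r_2r_3)=1>0$, contradicting that $C'$ is increasing. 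In either case $\Pi^{u'}_{e'_l}$ has no strictly monotone cycle and is valid.

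I expect the delicate part to be the planar-topology bookkeeping in the middle paragraph: deducing the nesting $C_\B\subseteq\mathrm{int}(C')\subseteq\mathrm{int}(C_\T)$ from Lemma~\ref{lem:cascading_and_increasing_cycles} and the position of $u'z'$, and then — crucially — arguing that the subpath shared by $C_\T$ and $C_\B$ is literally squeezed onto $C'$ with a forced orientation. This is precisely where the Step~2.1 construction of $R$, $T$, $B$ and the invariant ``every edge inserted while rectangulating $f_1$ stays in $\mathrm{int}(C_\T)\cap\mathrm{ext}(C_\B)$'' are used.
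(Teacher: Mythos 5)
Your proof is correct and follows essentially the same route as the paper: reduce to an increasing cycle using Fact~\ref{prop:augmentation-properties:fact:monotone-cycle} of Proposition~\ref{prop:augmentation-properties}, invoke Lemma~\ref{lem:cascading_and_increasing_cycles} against both $C_\T$ and $C_\B$ to pin down the nesting, force $\subpath{R}{r_1,r_4}$ onto the putative increasing cycle, and finish by a case split on the direction of $r_2r_3$. The only real deviation is the step that places $\subpath{R}{r_1,r_4}$ on $C'$: you get it directly from $\subpath{R}{r_1,r_4}\subseteq C_\T\cap C_\B\subseteq\mathrm{ext}(C')\cap\mathrm{int}(C')=C'$, whereas the paper observes that $\subpath{R}{r_1,r_4}$ is the unique path separating the outer from the central face of the annular subgraph between $C_\T$ and $C_\B$, so deleting it destroys all essential cycles; both are fine, and your set-theoretic version is arguably cleaner. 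The initial $\mathrm{flip}$ to normalize $u'z'$ to point right is redundant (Fact~\ref{prop:augmentation-properties:fact:monotone-cycle} already covers both orientations, as the flipping is done inside the proposition's own proof), but it is not harmful.
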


\begin{proof}
  Assume that $\Pi^{u'}_{e'_l}$ is not valid. Hence, there is a
  strictly monotone cycle $C$ that uses $e=u'z'$, where $z'$ is the vertex
  subdividing $e'_l$. Since $e'_l$ is the last candidate of $u'$, the
  cycle $C$ is increasing by Fact~\ref{prop:augmentation-properties:fact:monotone-cycle} of Proposition~\ref{prop:augmentation-properties}.
  By construction $e$ strictly lies in the interior of $C_\T$ and the exterior 
  of $C_\B$. This implies that $C$ lies in the interior of $C_\T$ and the 
  exterior of $C_\B$ by Lemma~\ref{lem:cascading_and_increasing_cycles}.
  In other words, $C$ is contained in the subgraph $H$ formed by the 
  intersection of the interior of $C_\T$ and the exterior of $C_\B$. As 
  $\subpath{R}{r_1,r_4}$ belongs to both $C_\T$ and $C_\B$, it is incident to 
  the outer and the central face of $H$. Hence, removing $\subpath{R}{r_1,r_4}$ 
  leaves a subgraph without essential cycles. Thus, the essential cycle $C$ 
  includes $\subpath{R}{r_1,r_4}$.
  
  By Proposition~\ref{lem:repr:equal_labels_at_intersection} the
  labels of $C$ and $C_\T$ are the same on $\subpath{R}{r_1,r_4}$. If
  $r_2r_3$ points downwards, its label is $1$, which contradicts that
  $C$ is increasing.  If otherwise $r_2r_3$ points right, it lies on a
  horizontal cycle. But then $C$ is not increasing by
  Proposition~\ref{prop:horizontal_cycle}.
\end{proof}

Altogether, applying the lemma inductively on the inserted edges, we
obtain that $\Gamma_2$ is valid.

\paragraph{Correctness of Step 2.3.}  As we only apply the first phase
of the augmentation step on $f_2$, the resulting ortho-radial
representation~$\Gamma_3$ is also valid due to the correctness of the
first phase. This concludes the correctness proof of the second phase.

\section{Bend Minimization} 
\label{sec:bend-minimization}
We have considered bend-free ortho-radial drawings so
far. In this section we shortly describe how to extend our results to
ortho-radial drawings with bends. In particular, we present complexity
results on bend-minimzation for ortho-radial drawings.

We model bends by subdividing each edge of $G$ with $K$ degree-2
vertices. Let $S$ denote the set of all these subdivision vertices. A
vertex $v\in S$ is a \emph{bend} in an ortho-radial representation if
for its two incident edges $e_1$ and $e_2$ it holds
$\rot(e_1,e_2)\neq 0$. A valid ortho-radial representation of $I$ is
\emph{bend-minimal} if there is no other valid ortho-radial
representation of $I$ that has fewer bends.  Niedermann and
Rutter~\cite{Niedermann2020} recently showed that $K=2n-4$ subdivision vertices per
edge are sufficient for creating bend-minimal valid ortho-radial
representations. The next theorem shows that from a computational
point of view there is a substantial difference between creating
orthogonal drawings and ortho-radial drawings with a minimum number of
bends even if the embedding is prescribed.

\begin{thm}\label{thm:complexity-valid-ortho-radial}
  Given an instance $I=(G,\mathcal E,f_c,f_o)$, finding a bend-minimal
  valid ortho-radial representation of $I$ is $\mathcal{NP}$-hard.
\end{thm}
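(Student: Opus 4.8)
The plan is to prove $\mathcal{NP}$-hardness of computing a bend-minimal valid ortho-radial representation by a polynomial-time reduction from a known $\mathcal{NP}$-hard problem. A natural candidate is one of the classic $\mathcal{NP}$-hard problems about orthogonal drawings or about constrained planar flow/layout problems. Given the structure of the excerpt --- where the key new difficulty compared to the orthogonal case is precisely the non-local validity condition forbidding strictly monotone cycles --- I would look for a reduction that forces the solver to make a global decision encoded as the ``direction of winding'' of certain essential cycles, and that can only be done cheaply (with few bends) when some combinatorial constraint is satisfiable. Concretely, I would aim to reduce from a variant of \textsc{Planar 3-SAT} or, better, from a problem where the hardness already lives in a planar flow network that is structurally similar to the networks $\Nver$ and $\Nhor$ from Section~\ref{sec:characterization-rect}; alternatively, a reduction from a partition-type problem along the central/outer axis could be used to force a particular essential cycle to be horizontal (label~$0$ everywhere) versus strictly monotone.

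First I would fix the skeleton of the construction: build a plane $4$-graph $G$ with a prescribed embedding $\mathcal{E}$, a central face $f_c$ and an outer face $f_o$ (with $f_c \neq f_o$), so that the instance $I=(G,\mathcal E,f_c,f_o)$ has the property that it admits a valid ortho-radial representation with $0$ bends if and only if the source instance is a YES-instance, and otherwise every valid ortho-radial representation needs at least one bend (or at least some threshold number $t$ of bends). The gadgets would consist of: (i) a ``frame'' of essential cycles running around the cylinder that must be drawn consistently, contributing the rigidity; (ii) ``variable gadgets'' that locally have two combinatorial choices of angle assignment, each of which is locally consistent, but where one choice pushes a certain path ``up'' and the other ``down'' in the sense of Section~\ref{sec:labeling-properties}; and (iii) ``clause gadgets'' wired so that the only way to avoid creating a strictly monotone essential cycle without inserting a bend is to choose a satisfying assignment. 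Here I would lean heavily on Proposition~\ref{prop:horizontal_cycle}, Lemma~\ref{lem:increasing_decreasing_disjoint}, and Observation~\ref{obs:repr:label_difference} to argue cleanly that specific cycles are forced to be monotone unless the combinatorial constraint is met, and on the main characterization Theorem~\ref{thm:main-result:drawable} (drawable $\iff$ valid) to translate ``has a drawing'' into ``has a valid representation.''

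The two directions of correctness would be handled as follows. For the ``YES $\Rightarrow$ few bends'' direction I would exhibit an explicit valid ortho-radial representation for a satisfying assignment: set all variable gadgets to the assignment-consistent choice, verify local consistency (Conditions~\ref{cond:repr:sum_of_angles} and~\ref{cond:repr:rotation_faces}) directly, and then verify validity by showing every essential cycle is either horizontal or has both a positive and a negative label, using the gadget wiring; since $K=2n-4$ subdivision vertices per edge suffice by Niedermann and Rutter~\cite{Niedermann2020}, bend count is meaningful and I would arrange the gadgets so this witness has $0$ bends (or exactly $t$ forced bends from unavoidable frame geometry). For the ``few bends $\Rightarrow$ YES'' direction I would argue the contrapositive: if the source instance is a NO-instance, then in every angle assignment with few bends some clause gadget is ``unsatisfied,'' and I would trace out an essential cycle through the frame and that gadget whose labeling is forced to be strictly monotone --- invoking Lemma~\ref{lem:reference-edge:change} to argue robustness of the argument under change of reference edge, and Lemma~\ref{lem:flip_label}/Lemma~\ref{lem:mirroring_label} to reduce increasing/decreasing cases to one another --- so that by Theorem~\ref{thm:main-result:drawable} no drawing exists, contradicting drawability; hence at least one extra bend is required, pushing the bend count above the threshold.

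The main obstacle I anticipate is the clause gadget and, more generally, controlling the \emph{global} labels of essential cycles. Unlike orthogonal representations, here the label of an edge on an essential cycle depends on a reference path winding around the cylinder, so local gadget choices interact at a distance; the delicate part is designing the wiring so that the interactions realize exactly the Boolean structure (an OR over literals) and nothing more --- i.e., ruling out ``cheating'' drawings in which the solver uses a small number of bends to neutralize an otherwise-monotone cycle. I would expect to need the tools of Section~\ref{sec:labeling-properties} (especially Lemma~\ref{lem:repr:equal_labels_at_intersection}, Corollary~\ref{cor:central-face-same-label}, and Lemma~\ref{lem:repr:illegal_intersection}) to pin down how cycles through different gadgets share labels, and to prove a ``bend-accounting'' lemma stating that neutralizing a monotone cycle of the constructed frame costs at least one bend per unsatisfied clause gadget. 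Once that bend-accounting lemma is in place, setting the threshold just below the number of clause gadgets completes the reduction, and a final remark would note that the reduction also works when the embedding is not fixed (as claimed in the introduction), since the frame gadgets can be made rigid enough to force essentially the same embedding.
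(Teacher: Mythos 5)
Your proposal is a research plan, not a proof. You describe frame gadgets, variable gadgets, clause gadgets, and a ``bend-accounting lemma'' that you \emph{would} need, but none of these are actually constructed, and you yourself flag the clause gadget and the exclusion of ``cheating'' drawings as the main open obstacles. In an $\NP$-hardness argument that is precisely where the content lives; without explicit gadgets and a completed correctness argument for both directions, the reduction remains hypothetical. So there is a genuine gap: the plan might be salvageable, but as written it does not establish the theorem.

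The paper's own proof is much shorter and takes an entirely different route: it does not design any new gadgets exploiting strictly monotone cycles. It first notes (as you also observe) that by the TSM framework a bend-minimal valid ortho-radial representation of $I$ yields a bend-minimal ortho-radial drawing in polynomial time, so hardness of the representation problem follows from hardness of the drawing problem. It then cites Garrido and M\'arquez, who already proved that, given an embedding of a graph in a surface of genus $g$, computing an orthogonal grid drawing with the minimum number of bends is $\NP$-hard; for $g=0$ the surface is a sphere, and since their 3SAT reduction never touches the two poles, the hardness transfers verbatim to orthogonal drawings on a cylinder, i.e., to ortho-radial drawings. The key idea you are missing is therefore that the hardness already exists in the orthogonal-on-surface literature --- one need not fight the non-local validity condition at all; one only has to realize that the cylinder sits inside the sphere minus its poles and invoke the known result.
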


\begin{proof}
  Given a valid bend-minimal ortho-radial representation~$\Gamma$, we
  utilize the TSM framework to create a bend-minimal ortho-radial
  drawing~$\Delta$ based on $\Gamma$ in polynomial time. Consequently,
  if $\Gamma$ could be created in polynomial time, we could also
  derive $\Delta$ in polynomial time.  On the other hand, Garrido and
  Marquez~\cite{Garrido1997} proved that given an embedding
  $\mathcal E$ of a graph $G$ in a surface $S$ with genus $g$ it is
  $\mathcal{NP}$-hard to find an orthogonal grid drawing of $\mathcal E$ on $S$
  that minimizes the number of bends. In particular, for $g=0$ the
  surface $S$ corresponds to a sphere. As the constructions used in
  the proposed reduction from 3SAT do not use the poles of the sphere,
  this result directly transfers to orthogonal grid drawings on
  cylinders and hence to ortho-radial drawings.
\end{proof}

Niedermann and Rutter~\cite{Niedermann2020}
recently presented an integer linear programming formulation for
creating valid bend-minimal ortho-radial representations assuming a pre-defined embedding of the graph.

If we do not assume a pre-defined embedding of the graph, finding a
bend-minimal drawing becomes \NP-hard even for the orthogonal case.
Garg and Tamassia~\cite{gt-ccurpt-01} showed that the problem
\textsc{Orthogonal Embeddability} to decide whether a given planar
4-graph admits an orthogonal drawing without bends is \NP-complete.
In the remaining part of this section we study the analogous problem
\textsc{Ortho-radial Embeddability} for ortho-radial drawings and prove
that it is \NP-complete as well.  We say a graph~$G$ admits an
\emph{ortho-radial (or orthogonal) embedding} if there is an embedding
of $G$ such that $G$ can be drawn ortho-radially (or orthogonally)
without bends.

We give a reduction from \textsc{Orthogonal Embeddability}.  To do so,
we note that the reduction by Garg and Tamassia~\cite{gt-ccurpt-01}
actually produces instances $G=(V,E)$ with a fixed edge $e \in E$ such
that it is \NP-complete to decide whether $G$ has an orthogonal
embedding where $e$ is incident to the outer face.

Given such a graph, we build a structure around $G$ that yields a
graph $G'$ such that in any ortho-radial embedding of~$G'$ the induced
representation~$\Gamma$ of $G$ does not contain any essential cycles.
In other words, $\Gamma$ is actually an orthogonal representation of
$G$. Hence, an ortho-radial embedding of $G'$ can only exist if $G$
admits an orthogonal embedding.  We may assume without loss of
generality that $G$ is connected as otherwise, we handle each
component separately.

\begin{figure}[bt]
  \centering
  \begin{subfigure}{.45\textwidth}
    \centering
    \includegraphics{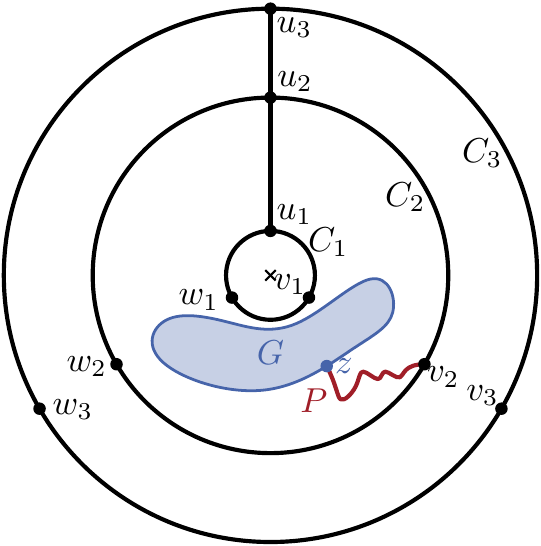}
    \caption{$G$ lies between $C_1$ and $C_2$.}
    \label{fig:embed:embedding-23}
  \end{subfigure}
  \hfil
  \begin{subfigure}{.45\textwidth}
    \centering
    \includegraphics{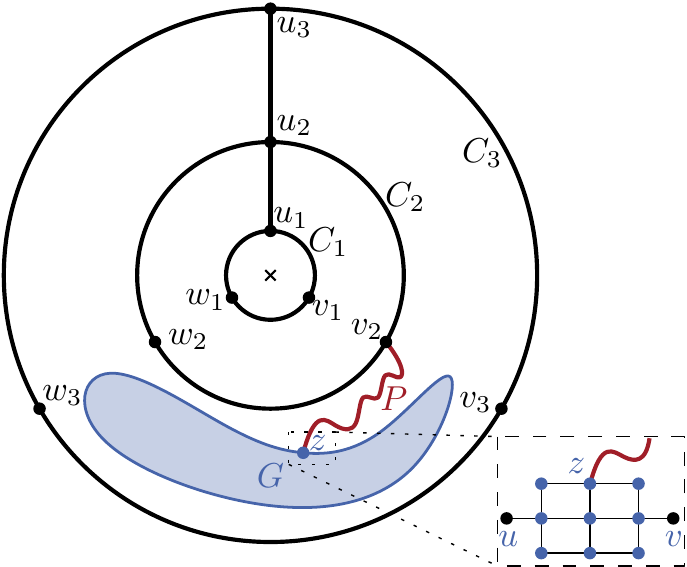}
    \caption{$G$ lies between $C_2$ and $C_3$.}
    \label{fig:embed:embedding-12}
  \end{subfigure}
  \caption{Possible embeddings of $G'$: In both cases $G$ contains no
    essential cycles. The roles of $C_1$ and $C_3$ can be exchanged.}
  \label{fig:embed:embedding}
\end{figure}

The construction of $G'$ from $G$ is based on the fact that there is
only one way to ortho-radially draw a triangle~$C$, i.e., a cycle of
length~3, without bends: as an essential cycle on one circle of the
grid.  We build a graph $H$ consisting of three triangles called
$C_1$, $C_2$ and $C_3$ and denote the vertices on $C_i$ by $u_i$,
$v_i$ and $w_i$.  Furthermore, $H$ contains the edges $u_1u_2$ and
$u_2u_3$.  In Figure~\ref{fig:embed:embedding} $H$ is formed by the
black edges.  To connect $H$ and $G$, we replace the special edge $e=uv$
of $G$ by a $3 \times 3$-grid and connect one of the
degree-3 vertices $z$ of that grid by a path $P$ to $v_2$, where we
choose the length of the path equal to the number of edges in $G$.
The resulting graph is $G'$; see Figure~\ref{fig:embed:embedding}.  The
reduction clearly runs in polynomial time.  Moreover, if~$G'$ admits
an ortho-radial embedding, then the triangles of $H$ must be drawn as
essential cycles, and therefore $G$ must be contained in one of the
two regular faces of $H$, and can hence not contain any essential
cycles.  We therefore find an orthogonal embedding of $G$.
Conversely, if we have an orthogonal embedding of $G$ with $e$ on the
outer face, then it can be inserted into a face of the drawing of $H$
and the path $P$ can be drawn as it is sufficiently long.  We
therefore obtain an ortho-radial embedding of $G'$.  This proves the
following theorem.

\begin{thm}
  \textsc{Ortho-radial Embeddability} is \NP-complete.
\end{thm}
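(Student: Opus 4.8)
The plan is to prove both containment in \NP and \NP-hardness. For membership, I would lean on the drawability characterization (Theorem~\ref{thm:main-result:drawable}) together with the polynomial validity test (Theorem~\ref{thm:main-result:validity}). A certificate that ``$G$ admits an ortho-radial embedding'' is a combinatorial embedding $\embedding$ of $G$, a designated outer face $f_o$ and central face $f_c$, and an angle assignment $\Gamma$; this has polynomial size. To check it, one verifies local consistency (Definition~\ref{def:local-conditions}) in linear time and validity in $\O(n^2)$ time, after which Theorem~\ref{thm:main-result:drawable} yields a corresponding bend-free ortho-radial drawing. Conversely every ortho-radial embedding gives such a certificate, so \textsc{Ortho-radial Embeddability} lies in $\NP$.

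For hardness I would reduce from \textsc{Orthogonal Embeddability}. The input property I would use, which follows from the reduction of Garg and Tamassia~\cite{gt-ccurpt-01}, is that it is already \NP-complete to decide for a connected planar $4$-graph $G$ with a distinguished edge $e$ whether $G$ has an orthogonal embedding in which $e$ is incident to the outer face. From $(G,e)$ I would build the graph $G'$: take three vertex-disjoint triangles $C_1,C_2,C_3$ with $V(C_i)=\{u_i,v_i,w_i\}$, add the ``bridge'' edges $u_1u_2$ and $u_2u_3$ to form $H$, replace the edge $e$ of $G$ by a $3\times 3$ grid, pick a degree-$3$ vertex $z$ of that grid, and join $z$ to $v_2$ by a path $P$ with exactly $|E(G)|$ edges. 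The grid gadget keeps $G'$ a planar $4$-graph and supplies the attachment vertex $z$; the reduction is clearly polynomial.

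The correctness of the reduction rests on the structural fact that a triangle has a \emph{unique} bend-free ortho-radial drawing, namely as an essential cycle lying on a single concentric circle of the grid. I would argue this by observing that a triangle has no degree-$1$ vertex (so no angle has rotation $-2$), that its rotation cannot equal $4$ (three rotations, each at most $1$, sum to at most $3$), and that rotation $0$ together with the requirement that the closed curve returns to its start forces all three edges to point the same way around the cylinder. Granting this, in any bend-free ortho-radial drawing of $G'$ the cycles $C_1,C_2,C_3$ are three nested concentric circles with $C_2$ in the middle, and each bridge edge $u_iu_2$ joins two consecutive circles and hence cuts the annular region between them into a topological disk; thus $H$ has exactly two regular faces, both disks. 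Since $G\cup P$ is attached to $H$ only at $v_2$, the whole subgraph $G$ is drawn inside one of these disk faces, and a graph drawn in a disk not containing the grid centre has no essential cycle. Therefore the ortho-radial representation induced on $G$ is an orthogonal representation, which by Tamassia's characterization~\cite{t-emn-87} yields an orthogonal embedding of $G$; as $z$, and hence the gadget replacing $e$, is incident to the face of $G'$ carrying $P$, the edge $e$ ends up on the outer face. For the converse, given an orthogonal embedding of $G$ with $e$ on the outer face, I would draw $H$ as three nested circles with the two bridges, insert the orthogonal drawing of $G$ (with $e$ expanded into the $3\times 3$ grid) into one disk face, and route $P$ from $z$ to $v_2$ within that face; because $G$ admits a bend-free orthogonal drawing on a grid of polynomial size, the length $|E(G)|$ of $P$ is more than enough to make the connection, giving a bend-free ortho-radial drawing of $G'$.

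The step I expect to be the main obstacle is the topological bookkeeping of the ``only if'' direction: establishing rigorously that the two bridges really turn the two annuli into disks, that the pendant attachment at $v_2$ confines all of $G$ to a single disk face, and that the extracted orthogonal embedding of $G$ genuinely has $e$ on the outer face; this goes hand in hand with a clean, self-contained proof that the bend-free triangle drawing is unique. Once these are in place, \NP-completeness follows by combining membership with \NP-hardness.
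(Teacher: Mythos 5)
Your proof takes essentially the same route as the paper: the identical gadget consisting of three triangles $C_1,C_2,C_3$ joined by bridge edges $u_1u_2, u_2u_3$, the replacement of $e$ by a $3\times3$ grid, the connecting path $P$ of length $|E(G)|$, and the same key observation that a bend-free triangle must be drawn as an essential circle, forcing $G$ into a non-essential disk face of $H$. The only addition is your explicit $\NP$-membership argument via Theorems~\ref{thm:main-result:drawable} and~\ref{thm:main-result:validity}, which the paper leaves implicit; that part is also correct.
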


\section{Conclusion}\label{sec:conclusion}
In this paper we considered orthogonal drawings of graphs on
cylinders. Our main result is a characterization of the plane 4-graphs
that can be drawn bend-free on a cylinder in terms of a combinatorial
description of such drawings. These ortho-radial representations
determine all angles in the drawing without fixing any lengths, and
thus are a natural extension of Tamassia's orthogonal
representations. However, compared to those, the proof that every
valid ortho-radial representation has a corresponding drawing is
significantly more involved. The reason for this is the more global
nature of the additional property required to deal with the cyclic
dimension of the cylinder.

Our ortho-radial representations establish the existence of an
ortho-radial TSM framework in the sense that they are a combinatiorial
description of the graph serving as interface between the ``Shape''
and ``Metrics'' step.

For rectangular plane 4-graphs, we gave an algorithm producing a
drawing from a valid ortho-radial representation. Our proof reducing
the drawing of general plane 4-graphs with a valid ortho-radial
representation to the case of rectangular plane 4-graphs is
constructive. We have described an algorithm that checks the validity
of an ortho-radial representation in $\O(n^{2})$ time. In the positive
case, we can also produce a corresponding drawing in the same running
time, whereas in the negative case we find a strictly monotone cycle.
These algorithms correspond to the ``Metrics'' step in a TSM framework
for ortho-radial drawings.

While bend-minimal orthogonal representations can be created in
polynomial time, we showed that creating valid bend-minimal
ortho-radial representations is \NP-hard. This directly rises the
research question of developing approximation algorithms and fixed
parameter tractable algorithms for such representations. Further,
powerful and fast heuristics may help to carry over the framework into
practice, e.g., for creating ortho-radial drawings of large
transportation networks.

Finally, we want to emphasize that we deem the generalization of
ortho-radial drawings from the cylinder to the torus or even more
complex surfaces an interesting and promising research question. It is
far from clear how to transfer our results to the torus as its two
cyclic dimensions lead to different types of essential cycles.

\paragraph{Acknowledgments}
Lukas Barth was funded by the German Research Foundation (DFG) as
part of the Research Training Group GRK 2153: Energy Status Data --
Informatics Methods for its Collection, Analysis and Exploitation.
Matthias Wolf was funded by the Helmholtz Association Program Storage
and Cross-linked Infrastructures, Topic 6 Superconductivity, Networks
and System Integration and by the German Research Foundation (DFG) as
part of the Research Training Group GRK 2153: Energy Status Data --
Informatics Methods for its Collection, Analysis and Exploitation.

\bibliographystyle{abbrv}      %
\bibliography{paper}              %

\newcommand{\bibsoda}[2]{Proceedings of the #1 Annual ACM-SIAM Symposium on
  Discrete Algorithms (SODA'#2)} \newcommand{\bibgd}[2]{Proceedings of the #1
  International Symposium on Graph Drawing (GD'#2)}
  \newcommand{\bibfocs}[2]{Proceedings of the IEEE #1 Annual Symposium on
  Foundations of Computer Science (FOCS'#2)}
  \newcommand{\bibinfovis}[1]{Proceedings of the IEEE Symposium on Information
  Visualization (InfoVis'#1)} \newcommand{\bibvis}[1]{Proceedings of the IEEE
  Conference on Visualization (Vis'#1)} \newcommand{\bibpvis}[1]{Proceedings of
  the IEEE Pacific Visualisation Symposium (PacificVis'#1)}
  \newcommand{\bibsoftvis}[2]{Proceedings of the #1 ACM Symposium on Software
  Visualization (SoftVis'#2)} \newcommand{\bibeurocg}[2]{Proceedings of the #1
  European Workshop on Computational Geometry (EuroCG'#2)}
  \newcommand{\bibsocg}[2]{Proceedings of the #1 Annual Symposium on
  Computational Geometry (SoCG'#2)} \newcommand{\bibwads}[2]{Proceedings of the
  #1 International Symposium on Algorithms and Data Structures (WADS'#2)}
  \newcommand{\bibwg}[2]{Proceedings of the #1 Workshop on Graph-Theoretic
  Concepts in Computer Science (WG'#2)} \newcommand{\bibgta}{Proceedings of the
  Conference at Graph Theory and Applications}
  \newcommand{\bibisaac}[2]{Proceedings of the #1 International Symposium on
  Algorithms and Computation (ISAAC'#2)} \newcommand{\bibcocoon}[2]{Proceedings
  of the #1 Annual International Conference on Computing and Combinatorics
  (COCOON'#2)} \newcommand{\bibtamc}[2]{Proceedings of the #1 Annual Conference
  on Theory and Applications of Models of Computation (TAMC'#2)}
  \newcommand{\bibicalp}[2]{Proceedings of the #1 Int. Colloquium on Automata,
  Languages and Programming (ICALP'#2)}
\begin{thebibliography}{10}
\providecommand{\url}[1]{{#1}}
\providecommand{\urlprefix}{URL }
\expandafter\ifx\csname urlstyle\endcsname\relax
  \providecommand{\doi}[1]{DOI~\discretionary{}{}{}#1}\else
  \providecommand{\doi}{DOI~\discretionary{}{}{}\begingroup
  \urlstyle{rm}\Url}\fi

\bibitem{Alam2017}
Alam, M.J., Kobourov, S.G., Mondal, D.: Orthogonal layout with optimal face
  complexity.
\newblock Computational Geometry \textbf{63}, 40--52 (2017)

\bibitem{Batini1986}
Batini, C., Nardelli, E., Tamassia, R.: A layout algorithm for data flow
  diagrams.
\newblock IEEE Transactions on Software Engineering \textbf{SE-12}(4), 538--546
  (1986)

\bibitem{Bertolazzi2000}
Bertolazzi, P., Battista, G.D., Didimo, W.: Computing orthogonal drawings with
  the minimum number of bends.
\newblock IEEE Transactions on Computers \textbf{49}(8), 826--840 (2000)

\bibitem{Bhatt1984}
Bhatt, S.N., Leighton, F.T.: A framework for solving {VLSI} graph layout
  problems.
\newblock Journal of Computer and System Sciences \textbf{28}(2), 300--343
  (1984)

\bibitem{Biedl1996}
Biedl, T.: New lower bounds for orthogonal graph drawings.
\newblock In: F.J. Brandenburg (ed.) Graph Drawing (GD'96), Lecture Notes of
  Computer Science, pp. 28--39. Springer Berlin Heidelberg (1996)

\bibitem{Biedl1998}
Biedl, T., Kant, G.: A better heuristic for orthogonal graph drawings.
\newblock Computational Geometry \textbf{9}(3), 159 -- 180 (1998)

\bibitem{Biedl1997}
Biedl, T.C., Madden, B.P., Tollis, I.G.: The three-phase method: A unified
  approach to orthogonal graph drawing.
\newblock In: G.~DiBattista (ed.) Graph Drawing (GD'97), Lecture Notes in
  Computer Science, pp. 391--402. Springer Berlin Heidelberg (1997)

\bibitem{Blasius2016}
Bl{\"a}sius, T., Rutter, I., Wagner, D.: Optimal orthogonal graph drawing with
  convex bend costs.
\newblock ACM Transactions on Algorithms \textbf{12}(3), 33 (2016)

\bibitem{Blasius2016b}
Bläsius, T., Lehmann, S., Rutter, I.: Orthogonal graph drawing with inflexible
  edges.
\newblock Computational Geometry \textbf{55}, 26 -- 40 (2016)

\bibitem{Chang2017}
Chang, Y.J., Yen, H.C.: On bend-minimized orthogonal drawings of planar
  3-graphs.
\newblock In: B.~Aronov, M.J. Katz (eds.) Computational Geometry (SoCG'17),
  \emph{Leibniz International Proceedings in Informatics (LIPIcs)}, vol.~77.
  Schloss Dagstuhl-Leibniz-Zentrum fuer Informatik (2017)

\bibitem{Cornelsen2012}
Cornelsen, S., Karrenbauer, A.: Accelerated bend minimization.
\newblock In: M.~van Kreveld, B.~Speckmann (eds.) Graph Drawing (GD'12),
  Lecture Notes of Computer Science, pp. 111--122. Springer Berlin Heidelberg
  (2012)

\bibitem{bett-gdavg-99}
Di~Battista, G., Eades, P., Tamassia, R., Tollis, I.G.: {Graph Drawing -
  Algorithms for the Visualization of Graphs}.
\newblock Prentice Hall (1999)

\bibitem{Eiglsperger2004}
Eiglsperger, M., Gutwenger, C., Kaufmann, M., Kupke, J., J\"{u}nger, M.,
  Leipert, S., Klein, K., Mutzel, P., Siebenhaller, M.: Automatic layout of uml
  class diagrams in orthogonal style.
\newblock Information Visualization \textbf{3}(3), 189--208 (2004)

\bibitem{Eiglsperger2003}
Eiglsperger, M., Kaufmann, M., Siebenhaller, M.: A topology-shape-metrics
  approach for the automatic layout of uml class diagrams.
\newblock In: Software Visualization (SoftVis'03), pp. 189--ff. ACM (2003)

\bibitem{Felsner2014}
Felsner, S., Kaufmann, M., Valtr, P.: Bend-optimal orthogonal graph drawing in
  the general position model.
\newblock Computational Geometry \textbf{47}(3, Part B), 460--468 (2014).
\newblock Special Issue on the 28th European Workshop on Computational Geometry
  (EuroCG 2012)

\bibitem{Fink2013}
Fink, M., Haverkort, H., N\"{o}llenburg, M., Roberts, M., Schuhmann, J., Wolff,
  A.: Drawing metro maps using b\'{e}zier curves.
\newblock In: W.~Didimo, M.~Patrignani (eds.) Graph Drawing (GD'13), Lecture
  Notes in Computer Science, pp. 463--474. Springer International Publishing
  (2013)

\bibitem{finketal2014concentric}
Fink, M., Lechner, M., Wolff, A.: Concentric metro maps.
\newblock In: Abstracts of the Schematic Mapping Workshop 2014 (2014)

\bibitem{Fossmeier1996}
F{\"o}{\ss}meier, U., Kaufmann, M.: Drawing high degree graphs with low bend
  numbers.
\newblock In: F.J. Brandenburg (ed.) Graph Drawing (GD'96), Lecture Notes in
  Computer Science, pp. 254--266. Springer Berlin Heidelberg (1996)

\bibitem{gt-ccurpt-01}
Garg, A., Tamassia, R.: On the computational complexity of upward and
  rectilinear planarity testing.
\newblock SIAM Journal on Computing \textbf{31}(2), 601--625 (2001)

\bibitem{Garrido1997}
Garrido, M.A., M{\'a}rquez, A.: Embedding a graph in the grid of a surface with
  the minimum number of bends is np-hard.
\newblock In: G.~DiBattista (ed.) Graph Drawing, pp. 124--133. Springer Berlin
  Heidelberg, Berlin, Heidelberg (1997)

\bibitem{Gutwenger2003}
Gutwenger, C., J\"{u}nger, M., Klein, K., Kupke, J., Leipert, S., Mutzel, P.: A
  new approach for visualizing {UML} class diagrams.
\newblock In: Symposium on Software Visualization (SoftVis'03), pp. 179--188.
  ACM, New York, NY, USA (2003)

\bibitem{hhmt-rrdcp-10}
Hasheminezhad, M., Hashemi, S.M., McKay, B.D., Tahmasbi, M.: Rectangular-radial
  drawings of cubic plane graphs.
\newblock Computational Geometry: Theory and Applications \textbf{43}, 767--780
  (2010)

\bibitem{hht-orthoradial-09}
Hasheminezhad, M., Hashemi, S.M., Tahmabasi, M.: Ortho-radial drawings of
  graphs.
\newblock Australasian Journal of Combinatorics \textbf{44}, 171--182 (2009)

\bibitem{Hong2006}
Hong, S.H., Merrick, D., do~Nascimento, H.A.D.: Automatic visualisation of
  metro maps.
\newblock Journal of Visual Languages and Computing \textbf{17}(3), 203--224
  (2006)

\bibitem{Kieffer2016}
Kieffer, S., Dwyer, T., Marriott, K., Wybrow, M.: Hola: Human-like orthogonal
  network layout.
\newblock IEEE Transactions on Visualization and Computer Graphics
  \textbf{22}(1), 349--358 (2016)

\bibitem{millerN95}
Miller, G.L., Naor, J.: Flow in planar graphs with multiple sources and sinks.
\newblock {SIAM} J. Comput. \textbf{24}(5), 1002--1017 (1995).
\newblock \doi{10.1137/S0097539789162997}

\bibitem{Niedermann2020}
Niedermann, B., Rutter, I.: {An Integer-Linear Program for Bend-Minimization in
  Ortho-Radial Drawings}.
\newblock In: D.~Auber, P.~Valtr (eds.) Graph Drawing and Network Visualization
  (GD'20), Lecture Notes in Computer Science. Springer (2020).
\newblock To appear.

\bibitem{Noellenburg2011}
N\"{o}llenburg, M., Wolff, A.: Drawing and labeling high-quality metro maps by
  mixed-integer programming.
\newblock Transactions on Visualization and Computer Graphics \textbf{17}(5),
  626--641 (2011)

\bibitem{Papakostas1998}
Papakostas, A., Tollis, I.G.: Algorithms for area-efficient orthogonal
  drawings.
\newblock Computational Geometry \textbf{9}(1), 83--110 (1998)

\bibitem{Ruegg2014}
R{\"u}egg, U., Kieffer, S., Dwyer, T., Marriott, K., Wybrow, M.:
  Stress-minimizing orthogonal layout of data flow diagrams with ports.
\newblock In: C.~Duncan, A.~Symvonis (eds.) Graph Drawing (GD'14), Lecture
  Notes in Computer Science, pp. 319--330. Springer Berlin Heidelberg (2014)

\bibitem{t-emn-87}
Tamassia, R.: On embedding a graph in the grid with the minimum number of
  bends.
\newblock SIAM Journal on Computing \textbf{16}(3), 421--444 (1987)

\bibitem{Tamassia1988}
Tamassia, R., {Di Battista}, G., Batini, C.: Automatic graph drawing and
  readability of diagrams.
\newblock {IEEE} Transactions on Systems, Man, and Cybernetics \textbf{18}(1),
  61--79 (1988)

\bibitem{Tamassia1991}
Tamassia, R., Tollis, I.G., Vitter, J.S.: Lower bounds for planar orthogonal
  drawings of graphs.
\newblock Information Processing Letters \textbf{39}(1), 35 -- 40 (1991)

\bibitem{Valiant1981}
Valiant, L.G.: Universality considerations in vlsi circuits.
\newblock IEEE Transactions on Computers \textbf{30}(02), 135--140 (1981)

\bibitem{Wang2011}
Wang, Y.S., Chi, M.T.: Focus+context metro maps.
\newblock Transactions on Visualization and Computer Graphics \textbf{17}(12),
  2528--2535 (2011)

\bibitem{Wybrow2010}
Wybrow, M., Marriott, K., Stuckey, P.J.: Orthogonal connector routing.
\newblock In: D.~Eppstein, E.R. Gansner (eds.) Graph Drawing (GD'10), Lecture
  Notes in Computer Science, pp. 219--231. Springer Berlin Heidelberg (2010)

\end{thebibliography}

\end{document}